\newcommand{\N}{\mathbb{N}}
\newcommand{\R}{\mathbb{R}}
\newcommand{\ed}{(-\varepsilon, \varepsilon)^{d+1}}
\newcommand{\diam}{\mathrm{diam}}
\newcommand{\D}{\mathcal{D}}
\newcommand{\ux}{\underline{x}}
\newcommand{\Sd}{\mathbb{S}^{d-1}}
\newcommand{\Mint}{M_{\mathrm{int}}}
\newcommand{\gint}{g_{\mathrm{int}}}
\newcommand{\Mext}{M_{\mathrm{ext}}}
\newcommand{\gext}{g_{\mathrm{ext}}}
\newcommand{\Mmink}{M_{\mathrm{Mink}}}
\newcommand{\Mmax}{M_{\mathrm{max}}}
\newcommand{\gmax}{g_{\mathrm{max}}}
\newcommand{\hext}{h_\mathrm{ext}}
\newcommand{\ovg}{\overline{\gamma}}
\newcommand{\oMe}{\overline{M}_\mathrm{ext}}
\begin{document}

\numberwithin{equation}{section}
\newtheorem{theorem}[equation]{Theorem}
\newtheorem{remark}[equation]{Remark}
\newtheorem{claim}[equation]{Claim}
\newtheorem{lemma}[equation]{Lemma}
\newtheorem{definition}[equation]{Definition}
\newtheorem{premiss}[equation]{Premiss}
\newtheorem{corollary}[equation]{Corollary}
\newtheorem{proposition}[equation]{Proposition}
\newtheorem*{theorem*}{Theorem}

\title{The $C^0$-inextendibility of the Schwarzschild spacetime and the spacelike diameter in Lorentzian geometry}
\author{Jan Sbierski\thanks{Department for Pure Mathematics and Mathematical Statistics, University of Cambridge,
Wilberforce Road,
Cambridge,
CB3 0WA,
United Kingdom}}
\date{\today}

\newgeometry{top=2.0cm, bottom=2.5cm, left=2.5cm, right=2.5cm}
\maketitle

\begin{abstract}
The maximal analytic Schwarzschild spacetime is manifestly inextendible as a Lorentzian manifold with a twice continuously differentiable metric. 
In this paper, we prove the stronger statement that it is even inextendible as a Lorentzian manifold with a continuous metric. To capture the obstruction to continuous extensions through the curvature singularity, we introduce the notion of the \emph{spacelike diameter} of a globally hyperbolic region of a Lorentzian manifold with a merely continuous metric and give a sufficient condition for the spacelike diameter to be finite. 
The investigation of low-regularity inextendibility criteria is motivated by the strong cosmic censorship conjecture.
\end{abstract}

\tableofcontents

\restoregeometry
\section{Introduction}

A connected Lorentzian manifold $(M,g)$ is called $C^k$-inextendible, if there does not exist a connected Lorentzian manifold $(\tilde{M}, \tilde{g})$ (of the same dimension as $M$) with a $C^k$-regular metric $\tilde{g}$ in which $M$ isometrically embeds as a proper subset.
In this paper we prove:
\begin{theorem*}
The $(d+1)$-dimensional maximal analytic Schwarzschild spacetime is $C^0$-inextendible for all $d \geq 3$.
\end{theorem*}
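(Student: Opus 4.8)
The strategy is a proof by contradiction. Assume that $(\tilde M, \tilde g)$ is a $C^0$-extension of the maximal analytic Schwarzschild spacetime $(M,g)$, so that $M$ is isometrically embedded as a proper subset of the connected manifold $\tilde M$; then the topological boundary $\partial M$ of $M$ in $\tilde M$ is non-empty. The first step is a standard construction: one produces a boundary point $p \in \partial M$ together with a future-directed $\tilde g$-timelike curve $\gamma \colon [0,1] \to \tilde M$ with $\gamma(1) = p$ and $\gamma\big([0,1)\big) \subseteq M$, which moreover lies in a coordinate chart around $p$ on which $\tilde g$ is $C^0$-close to the Minkowski metric and in which $\gamma$ is a straight timelike segment. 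In particular $\gamma|_{[0,1)}$ is a future-inextendible timelike curve in $(M,g)$ whose $g$-arclength is finite.

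Next I would use the global causal structure of the maximal analytic Schwarzschild spacetime to show that $r\circ\gamma(s)\to 0$ as $s\to 1$, i.e.\ that $\gamma$ enters the black-hole region and runs into the curvature singularity. A future-inextendible timelike curve of finite $g$-arclength is future-incomplete, hence cannot tend to timelike or null infinity; a timelike curve cannot reach $r = \infty$, nor escape to $|t| = \infty$ at bounded radius, in finite proper time; and once the curve lies inside $\{0 < r < 2m\}$ the coordinate $r$ decreases strictly along it. If $r\circ\gamma$ converged to some $r_\ast \in (0,2m)$ then, using that in the interior metric
\[
 g \;=\; -\Big(\tfrac{2m}{r} - 1\Big)^{-1} dr^2 \;+\; \Big(\tfrac{2m}{r} - 1\Big)\, dt^2 \;+\; r^2\, d\Omega^2
\]
the $t$- and angular variations of a timelike curve are controlled by its variation in $r$, one sees that $\gamma(s)$ would have to converge in $M$, contradicting $\gamma(s)\to p\notin M$. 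Hence $r\circ\gamma(s)\to 0$.

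The decisive step uses the spacelike diameter. Since $\tilde g$ is continuous, after shrinking the chart around $p$ I may choose a globally hyperbolic neighbourhood $\tilde U\ni p$ satisfying the sufficient condition for finiteness of the spacelike diameter, so that $\diam_s(\tilde U) < \infty$. On the other hand $\gamma(s)\in\tilde U$ for $s$ close to $1$, and there the curve is arbitrarily close to the singularity; in the interior metric above the level sets of the Cauchy time function $-r$ carry the Riemannian metric $\big(\tfrac{2m}{r} - 1\big)\,dt^2 + r^2\, d\Omega^2$, whose $\partial_t$-direction is stretched by the factor $\sqrt{\tfrac{2m}{r} - 1}\to\infty$ as $r\to 0$. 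The aim is to exploit this unbounded stretching to exhibit inside $\tilde U$ a globally hyperbolic region whose spacelike diameter exceeds $\diam_s(\tilde U)$ — for instance via a sufficiently long achronal spacelike curve assembled from pieces of the orbits of $\partial_t$ at ever smaller values of $r$ — and thereby reach a contradiction; the argument is uniform in the dimension $d$.

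The main obstacle is precisely this last point: one must understand quantitatively how the part of $(M,g)$ near the singularity $\{r = 0\}$ sits inside the almost-Minkowskian chart around $p$, and in particular rule out that the isometric embedding compresses the $t$-direction near $p$ so severely that no long achronal spacelike curve fits inside $\tilde U$. I expect the heart of the argument to be a careful interplay between the finiteness of the $g$-arclength of $\gamma$, the strict monotonicity of $r$ in the interior, the $C^0$-closeness of $\tilde g$ to the Minkowski metric on $\tilde U$, and the precise definition and basic properties of the spacelike diameter (notably its behaviour under inclusion of causally convex subregions). The remaining ingredients — the curve-extraction lemma and the causal-structure analysis of Schwarzschild — are routine.
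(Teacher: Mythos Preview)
Your reduction to the case $r\circ\gamma(s)\to 0$ contains a genuine error. You claim that a timelike curve cannot reach $r=\infty$ in finite proper time, but this is false: in the Schwarzschild exterior (as in Minkowski) a timelike curve that asymptotes to an outgoing radial null geodesic can have $r\to\infty$, $v^*\to\infty$ with finite Lorentzian length. Concretely, with $t=r^*(r)+f(r)$ and $f'(r)=r^{-3}$ one computes $-g(\dot\gamma,\dot\gamma)\,dr^2 = \big(2f'+D(r)f'^2\big)\,dr^2$, which is positive and has finite integral over $[R,\infty)$. Hence the curve produced by the extension lemma may well leave $\Mmax$ through timelike or null infinity rather than through $\{r=0\}$, and this case must be ruled out separately. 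In the paper this is Theorem~\ref{Exterior}, and its proof uses an ingredient you have not mentioned: the \emph{future one-connectedness} of the exterior. One shows that any timelike curve from $\gamma(-s_0)$ to $\gamma(-s_1)$ in $\Mext$ is timelike homotopic (with fixed endpoints) to $\gamma|_{[-s_0,-s_1]}$; since the latter lies inside the near-Minkowskian neighbourhood $\tilde U$ and $I^+(\tilde\gamma(-s_0),\tilde U)\cap I^-(\tilde\gamma(0),\tilde U)$ has compact closure in $\tilde U$, the homotopy is trapped in $\tilde U$. This forces the timelike diameter of $I^+(\gamma(-s_0),\Mext)\cap I^-(\gamma(-s_1),\Mext)$ to be bounded uniformly in $s_1$, contradicting the fact that $d_{\Mext}(\gamma(-s_0),\gamma(-s_1))\to\infty$ when $v^*\circ\gamma\to\infty$.

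For the interior case your intuition about the stretching of $\partial_t$ is correct, but the mechanism you propose---find a globally hyperbolic $\tilde U$ with finite spacelike diameter and then exhibit a subregion of $\tilde U\cap M$ with larger spacelike diameter---does not match how the argument actually closes. The region $N$ with $\diam_s(N)=\infty$ that one constructs is \emph{spherically symmetric} and therefore \emph{not} contained in $\tilde U$; only an angular sector of it is. The contradiction comes not from comparing $\diam_s(N)$ with $\diam_s(\tilde U)$ but from the sufficient criterion for finiteness: one uses the near-Minkowskian chart to build a regular flow chart for $N$ covering that angular sector, and then rotates it by the isometric $SO(d)$-action to obtain finitely many regular flow charts covering all of $N$, whence $\diam_s(N)<\infty$. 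The hard technical step you identify as the ``main obstacle'' is establishing that $I^+(\gamma(-\mu),\Mint)\subseteq\tilde U$ for some $\mu>0$; this again requires the future one-connectedness of the interior to identify causal diamonds in $\Mint$ with causal diamonds in $\tilde U$, together with the spacelike character of the singular boundary (encoded in Lemma~\ref{BoundsOnReach}).
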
 

The interest of the study of low-regularity inextendibility criteria for Lorentzian manifolds stems from the \emph{strong cosmic censorship conjecture} in general relativity, which was conceived by Penrose and states, in physical terms, that, generically, the theory of general relativity should uniquely predict the fate of all local (classical) observers. A widely accepted mathematical formulation of this conjecture is the following: 
\begin{equation}
\label{SCC}
\parbox{0.75\textwidth}{
For \emph{generic} asymptotically flat initial data for the vacuum Einstein equations $Ric(g) = 0$, the maximal globally hyperbolic development is inextendible as a suitably regular Lorentzian manifold.}
\end{equation} 
However, there is still no consensus on the exact meaning of `suitably regular' (and neither is there on the exact definition of `generic' - but in the following we focus on the question which regularity class one should impose).
In a series of papers \cite{Chris91}, \cite{Chris93}, \cite{Chris99}, Christodoulou proved a corresponding formulation of the strong cosmic censorship conjecture for the spherically symmetric Einstein-scalar field system. He showed that `generically', the maximal globally hyperbolic development is $C^0$-inextendible in the class of \emph{spherically symmetric} Lorentzian manifolds\footnote{However, no argument is given which would show the $C^0$-inextendibility in the class of Lorentzian manifolds without symmetry assumptions.}. This very strong form of inextendibility of the maximal globally hyperbolic development is, however, particular to the spherically symmetric Einstein-scalar field system: in \cite{Daf03}, \cite{Daf05a}, Dafermos showed that for solutions to the spherically symmetric Einstein-Maxwell-scalar field system, arising from initial data close to that of the Reissner-Nordstr\"om solution,  the maximal globally hyperbolic development is in fact $C^0$-\emph{extendible}. Moreover, he showed that `generically', the Hawking mass blows up which implies inextendibility as a \emph{spherically symmetric} Lorentzian manifold with Christoffel symbols locally in $L^2$. 

Recently, Dafermos and Luk announced the $C^0$-stability of the Kerr-Cauchy horizon (without any symmetry assumptions), \cite{DafLuk}. More precisely, they show that for initial data on the event horizon, which approach the geometry of the Kerr event horizon in a manner compatible with the expected behaviour of the Kerr exterior under generic perturbations, the solution exists all the way up to the Cauchy horizon `in a neighborhood of timelike infinity', and, moreover, is continuous and $C^0$-close to the unperturbed Kerr solution.  
This result, together with the expected stability of the Kerr exterior, shows that the strong cosmic censorship conjecture \eqref{SCC} does not hold if one imposes the strong $C^0$-inextendibility on the maximal globally hyperbolic development\footnote{However, the result of Dafermos and Luk does not rule out the possibility that some part of the generic singularity inside a black hole is indeed $C^0$-inextendible.}. So what regularity should one choose in the mathematical formulation of the strong cosmic censorship conjecture \eqref{SCC}?

Let us recall that the physical motivation of the conjecture is the belief in determinism in classical physics. Until one has gained a better understanding for the regime in which the classical theory of gravity needs to be replaced by a quantum theory, it seems sensible, in order to do justice to the physical content of the conjecture, to rule out non-unique extensions of the maximal globally hyperbolic development as, possibly weak, \emph{solutions to the Einstein equations}. Such non-unique extensions could be constructed from a local existence result for the Einstein equations. Hence, this suggests that one should \emph{at least} require inextendibility of the maximal globally hyperbolic development in all regularity classes which admit a local existence result for the Einstein equations (in a weak form). In the light of the recent low-regularity existence results, \cite{KlRodSzef12}, \cite{LukRod12}, this regularity class is, in particular, below $C^2$.   

The currently favoured formulation of the strong cosmic censorship conjecture, introduced by Christodoulou in \cite{Chris09}, is to require generic inextendibility as a Lorentzian manifold with \emph{Christoffel symbols locally in $L^2$}, since this even rules out the mere formulation of the Einstein equations (in a weak form). Moreover, the results \cite{Daf03}, \cite{Daf05a} by Dafermos lend support to the veracity of this version of the strong cosmic censorship conjecture.

It now seems, however, that the methods developed so far for proving inextendibility of Lorentzian manifolds have not penetrated regularities lower than $C^2$ (or $C^{1,1}$).\footnote{That is, if one disregards the spherically symmetric case where one can use the behaviour of the radius and the Hawking mass of the spheres of symmetry in a straightforward way as obstructions.} Most of the known inextendibility results exploit geodesic completeness or the blow-up of curvature scalars as obstructions to $C^2$-extensions. 

The aim of this paper is to initiate the study of low-regularity extensions of Lorentzian manifolds. In particular we show how one can use the infiniteness of the timelike diameter of the exterior of the Schwarzschild spacetime together with the infiniteness of a newly introduced geometric quantity, called the \emph{spacelike diameter}, in the interior, to show that the maximal analytic Schwarzschild spacetime is $C^0$-inextendible. Although, as explained above, the singularity structure of the Schwarzschild solution is not expected to be generic, it is nevertheless an important and instructive illustrative model in general relativity. Furthermore, we feel that the techniques developed in this paper might also be helpful for further physically more relevant investigations.

\subsection{$C^0$-extendibility and $C^2$-inextendibility}

There are various examples of Lorentzian manifolds which are $C^2$-inextendible while being $C^0$-extendible. The physically most relevant example is probably given by the spacetimes constructed by Dafermos in \cite{Daf03}, \cite{Daf05a}. Another, very simple, example is provided by the Lorentzian manifold $\big((0, \infty) \times \R ,g\big)$, where the Lorentzian metric $g$ is given by 
\begin{equation*}
g = e^{2 \sqrt{t}} \big(-\, dt^2 + dx^2\big)\;.
\end{equation*}
Clearly, this Lorentzian manifold can be continuously extended to $[0, \infty) \times \R$. On the other hand, the scalar curvature of $g$ is found to be $R = - \frac{1}{2 e^{2\sqrt{t}}} \cdot \frac{1}{t^{\nicefrac{3}{2}}}$, which shows that no $C^2$-extension to $[0, \infty) \times \R$ is possible.

Let us also briefly remark that there have been attempts to deduce from the strength of a \emph{curvature singularity} whether it forms an obstruction to $C^0$-extensions or not. The example \cite{Ori00} by Ori of a \emph{strong curvature singularity} which admits a $C^0$-extension showed, however, that the relation of these two concepts is not as straightforward as originally expected. 

While there is an abundance of examples of $C^2$-inextendible Lorentzian manifolds, we are not aware, however, of any previous results showing $C^0$-inextendibility of a Lorentzian manifold. In particular, the methods presented in this paper can be used to show that the toy-example $(M,g)$, where $M = \R \times (0, \infty)$ and
\begin{equation*}
g = \frac{1}{h(r)} \, dt^2 - h(r) \, dr^2 \,,
\end{equation*}
is $C^0$-inextendible. Here, $t$ is the coordinate on $\R$, $r$ the coordinate on $(0,\infty)$, and $h(r)$ is a smooth and monotonically increasing function with $h(r) = r$ for $0<r< \frac{1}{2}$ and $h(r) = 1$ for $r \geq 1$.

\subsection{Sketch of the proof}

In the following we give a brief sketch of the proof of the $C^0$-inextendibility of the Schwarzschild spacetime. We refer the reader unfamiliar with any notation or definition in the following overview to Section \ref{Defs} (and Section \ref{SecSchw} for the definition of the Schwarzschild spacetime).

One assumes that there is a $C^0$-extension $\iota : \Mmax \hookrightarrow \tilde{M}$ of the maximal analytic Schwarzschild spacetime $\Mmax$ (in this overview we will not distinguish between $\Mmax$ and $\iota(\Mmax) \subset \tilde{M}$). Lemma \ref{ExtensionLemma} shows that there is then a  timelike curve $\tilde{\gamma}$ in $\tilde{M}$ leaving the Schwarzschild spacetime $\Mmax$. This timelike curve can only `leave through' either the curvature singularity at $\{r=0\}$, or  timelike or null infinity. These two cases are separately discussed and ruled out. Here, an important ingredient is Lemma \ref{NormalForm}, which introduces near-Minkowskian coordinates in a neighbourhood $\tilde{U}$ of the point in the boundary of $\Mmax$, through which $\tilde{\gamma}$ leaves $\Mmax$.  In particular, by estimating the future and pasts of points in $\tilde{U}$ by cones with respect to the chosen coordinates, we can choose $s_0$ such that the closure of $I^+\big(\tilde{\gamma}(s_0), \tilde{U}\big) \cap I^-\big(\tilde{\gamma}(0), \tilde{U}\big)$ in $\tilde{M}$ is contained in $\tilde{U}$. 
\begin{center}
\def\svgwidth{5cm}
\begingroup%
  \makeatletter%
  \providecommand\color[2][]{%
    \errmessage{(Inkscape) Color is used for the text in Inkscape, but the package 'color.sty' is not loaded}%
    \renewcommand\color[2][]{}%
  }%
  \providecommand\transparent[1]{%
    \errmessage{(Inkscape) Transparency is used (non-zero) for the text in Inkscape, but the package 'transparent.sty' is not loaded}%
    \renewcommand\transparent[1]{}%
  }%
  \providecommand\rotatebox[2]{#2}%
  \ifx\svgwidth\undefined%
    \setlength{\unitlength}{487.85375977bp}%
    \ifx\svgscale\undefined%
      \relax%
    \else%
      \setlength{\unitlength}{\unitlength * \real{\svgscale}}%
    \fi%
  \else%
    \setlength{\unitlength}{\svgwidth}%
  \fi%
  \global\let\svgwidth\undefined%
  \global\let\svgscale\undefined%
  \makeatother%
  \begin{picture}(1,0.60299102)%
    \put(0,0){\includegraphics[width=\unitlength]{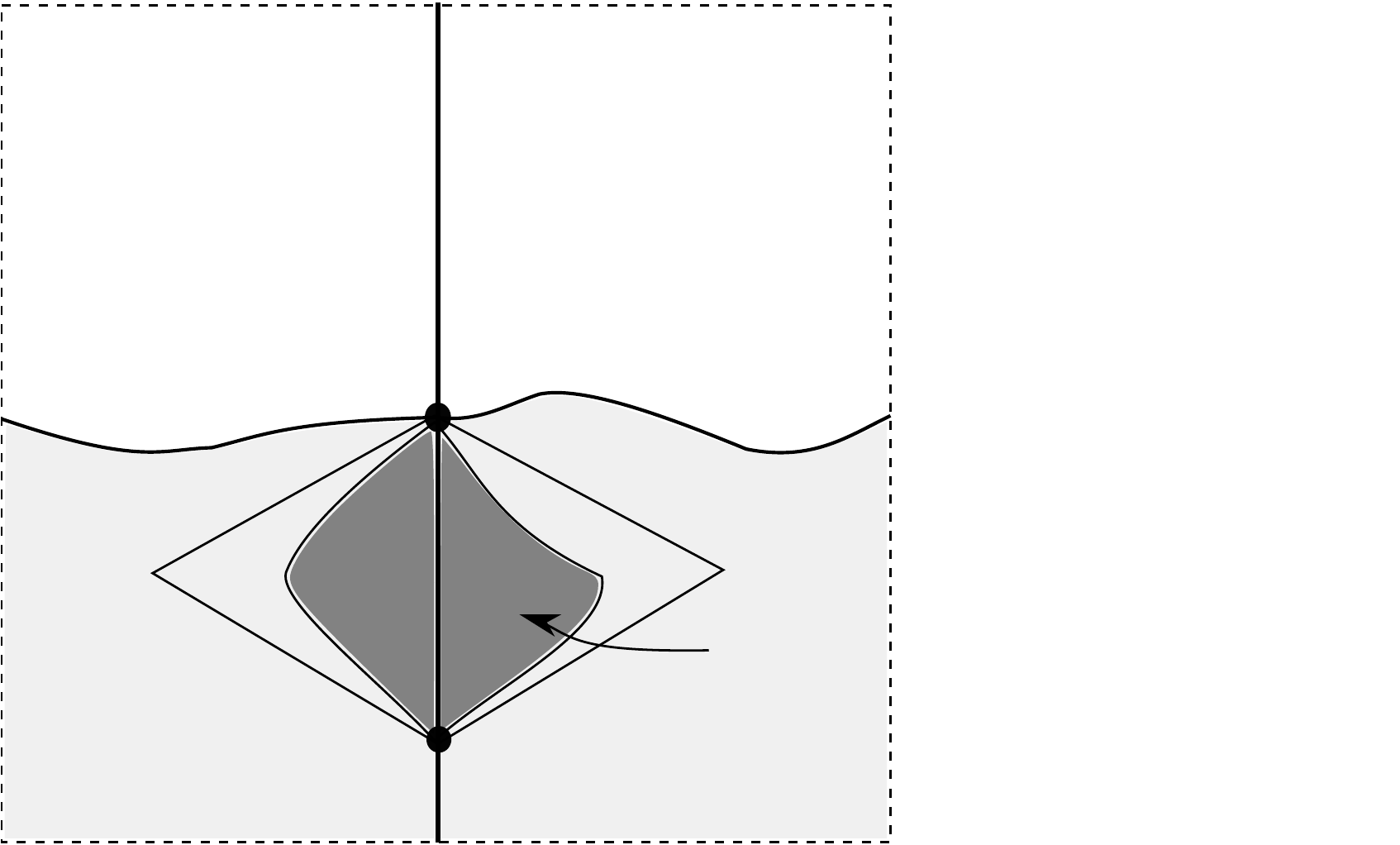}}%
    \put(0.52453607,0.50970472){\color[rgb]{0,0,0}\makebox(0,0)[lb]{\smash{$\tilde{U}$}}}%
    \put(0.15612718,0.33000889){\color[rgb]{0,0,0}\makebox(0,0)[lb]{\smash{$\tilde{\gamma}(0)$}}}%
    \put(0.03163866,0.03798498){\color[rgb]{0,0,0}\makebox(0,0)[lb]{\smash{$\Mmax$}}}%
    \put(0.34134063,0.01908583){\color[rgb]{0,0,0}\makebox(0,0)[lb]{\smash{$\tilde{\gamma}(s_0)$}}}%
    \put(0.51950256,0.12567755){\color[rgb]{0,0,0}\makebox(0,0)[lb]{\smash{$I^+\big(\tilde{\gamma}(s_0), \tilde{U}\big) \cap I^-\big(\tilde{\gamma}(0), \tilde{U}\big)$}}}%
    \put(0.33573868,0.42523386){\color[rgb]{0,0,0}\makebox(0,0)[lb]{\smash{$\tilde{\gamma}$}}}%
  \end{picture}%
\endgroup%

\end{center}
The important next step in deriving a contradiction is to identify subsets of $\tilde{U}$ with subsets in $\Mmax$. Let us remark here, that this step would be much simpler in Riemannian geometry, since one could use the distance function to do so.

We first discuss the case that $\tilde{\gamma}$ leaves $\Mmax$ through timelike or null infinity. Without loss of generality, we can assume that $\tilde{\gamma}$ is future directed in $\Mmax$. Ideally, we would like to have the following identification for all $s_0 < s < 0$:
\begin{equation}
\label{Wish}
I^+\big(\tilde{\gamma}(s_0), \tilde{U}\big) \cap I^-\big(\tilde{\gamma}(s), \tilde{U}\big) = I^+\big(\tilde{\gamma}(s_0), \Mmax\big) \cap I^-\big(\tilde{\gamma}(s), \Mmax\big) \;.
\end{equation}
However, since on the left we consider a causality relation with respect to a small neighbourhood of $\tilde{M}$, and on the right with respect to $\Mmax$, such an identification is not justified in general (and indeed, neither of the two inclusions is)! Here, however, we exploit that the exterior of the Schwarzschild black hole is future one-connected, i.e., in particular, any future directed timelike curve in $\Mmax$ from $\tilde{\gamma}(s_0)$ to $\tilde{\gamma}(s)$ is homotopic to $\tilde{\gamma}|_{[s_0,s]}$ via timelike curves with fixed endpoints. Since the closure of $I^+\big(\tilde{\gamma}(s_0), \tilde{U}\big) \cap I^-\big(\tilde{\gamma}(0), \tilde{U}\big)$ in $\tilde{M}$ is contained in $\tilde{U}$, a homotopy as above cannot escape $I^+\big(\tilde{\gamma}(s_0), \tilde{U}\big) \cap I^-\big(\tilde{\gamma}(0), \tilde{U}\big)$. This yields the relation $``\supseteq"$ in \eqref{Wish}. A contradiction is now obtained since the timelike diameter of $I^+\big(\tilde{\gamma}(s_0), \tilde{U}\big) \cap I^-\big(\tilde{\gamma}(s), \tilde{U}\big)$ is bounded for $s \nearrow 0$, while the timelike diameter of $I^+\big(\tilde{\gamma}(s_0), \Mmax\big) \cap I^-\big(\tilde{\gamma}(s), \Mmax\big)$ grows beyond bound for $s \nearrow 0$.

In order to show that $\tilde{\gamma}$ cannot leave through $\{r=0\}$ either, we introduce the  \emph{spacelike diameter} of a globally hyperbolic  Lorentzian manifold $N$ with a $C^0$-regular metric, defined by
\begin{equation*}
\diam_s(N) := \sup_{\substack{\Sigma \textnormal{ Cauchy} \\ \textnormal{hypersurface of } N}} \diam\, \Sigma \;.
\end{equation*} 
Moreover, we call a chart $\psi : N \supseteq U \to D \subseteq \R^{d=1}$ a \emph{regular flow chart} for the Lorentzian manifold $N$ if the following three properties are satisfied: $i)$ the absolute value of the metric components in this chart are uniformly bounded and, moreover, $g_{00}$ is negative and bounded away from $0$; $ii)$ the domain $D$ is of the form $D = \bigcup_{\ux \in B} I_{\ux} \times \{\ux\}$, where $B \subseteq \R^d$ has finite diameter with respect to the Euclidean metric on $\R^d$ and $I_{\ux}$ is an open and connected interval; $iii)$ the timelike curves $I_{\ux} \ni s \mapsto \psi^{-1}(s, \ux)$ are inextendible in $N$.

We then establish the following
\begin{theorem*}
Let $(N,g)$ be a connected and globally hyperbolic Lorentzian manifold with a $C^0$-regular metric $g$ and let $\psi_k : U_k \to D_k = \bigcup_{\ux \in B_k} I_{\ux} \times \{\ux\}$, $k = 1, \ldots, K$, be a finite collection of regular flow charts for $N$ with $\bigcup_{1 \leq k \leq K} U_k = N$.

Then one has $\diam_s(N) < \infty$.
\end{theorem*}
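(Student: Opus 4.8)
The plan is to fix an arbitrary Cauchy hypersurface $\Sigma$ of $N$ and to bound $\diam\Sigma$ by a quantity that depends only on the finitely many charts $\psi_k$ and the metric $g$ on the $U_k$, but not on $\Sigma$; taking the supremum over all $\Sigma$ then gives the claim. So below ``uniform'' means ``independent of $\Sigma$''.

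\emph{Step 1: $\Sigma$ is a graph in each chart.} By property $(i)$ each curve $I_{\ux}\ni s\mapsto\psi_k^{-1}(s,\ux)$ is timelike (as $g_{00}<0$), and by $(iii)$ it is inextendible in $N$; since $\Sigma$ is a Cauchy hypersurface it is therefore met exactly once by this curve, a second intersection being ruled out by achronality. Hence $\Sigma\cap U_k=\{\psi_k^{-1}(f_k(\ux),\ux):\ux\in B_k\}$ for a function $f_k\colon B_k\to\R$, which is continuous (by invariance of domain applied to the continuous injection $\pi_k|_{\Sigma\cap U_k}$, where $\pi_k$ is $\psi_k$ followed by the projection onto the $\R^d$-factor, noting that $\Sigma\cap U_k$ is a topological hypersurface).

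\emph{Step 2: a uniform Lipschitz bound and a length comparison.} From the bounds $|g_{\mu\nu}|\le\Lambda_k$ and $g_{00}\le-c_k<0$ on $U_k$ one extracts a constant $\kappa_k=\kappa_k(\Lambda_k,c_k,d)$ such that every coordinate vector $v=(v^0,\vec v)$ with $|v^0|>\kappa_k|\vec v|$ is $g$-timelike at every point of $U_k$. Achronality of $\Sigma$ then forces $f_k$ to be locally $\kappa_k$-Lipschitz: a chord of the graph that is too steep would be a timelike segment joining two points of $\Sigma$. Consequently the ``lift'' $B_k\ni\ux\mapsto\psi_k^{-1}(f_k(\ux),\ux)\in\Sigma$ sends a rectifiable curve $\ux(\cdot)$ in $B_k$ to a curve in $\Sigma$ whose $g$-length is at most $C_k$ times the Euclidean length of $\ux(\cdot)$, with $C_k=C_k(\Lambda_k,\kappa_k,d)$ uniform.

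\emph{Step 3: joining two points of $\Sigma$ by a short curve.} Let $p,q\in\Sigma$. Since $\Sigma$ is connected and is covered by the open sets $V_k:=\Sigma\cap U_k$, the nerve of this cover is a connected graph on at most $K$ vertices, so there is a chain $V_{k_1},\dots,V_{k_m}$ without repetitions, with $m\le K$, $p\in V_{k_1}$, $q\in V_{k_m}$, and consecutive members overlapping. Choosing handoff points $p_0=p$, $p_m=q$, $p_j\in V_{k_j}\cap V_{k_{j+1}}$, and joining $p_{j-1}$ to $p_j$ inside $V_{k_j}$ by the lift of a polygonal path in $B_{k_j}$ (of Euclidean length $\ell_j$) produces a curve in $\Sigma$ from $p$ to $q$ of $g$-length at most $\sum_{j=1}^m C_{k_j}\ell_j$. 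The place where global hyperbolicity enters essentially is the uniform bound on the $\ell_j$: although each $B_k$ has finite Euclidean diameter, it is a priori conceivable that its intrinsic (geodesic) diameter is infinite (a ``comb''- or ``spiral''-type domain), and then no such bound exists. To exclude this, one argues that a sequence of points of $B_k$ approaching a putative ``pinch'' of $B_k$ but pairwise far in the intrinsic metric yields, via $(iii)$, a sequence of inextendible timelike curves in $N$ (the vertical curves above them); by the limit curve theorem available for $C^0$ globally hyperbolic spacetimes these subconverge to inextendible timelike curves in $N$, which must again meet every Cauchy hypersurface, and tracing the intersection points back through $\psi_k$ contradicts the pinch. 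Hence each $B_k$ has finite intrinsic diameter $R_k$ (in particular is connected), the polygonal paths may be taken of Euclidean length $\le 2R_k$, and $\diam\Sigma\le 2\sum_{k=1}^K C_kR_k<\infty$, uniformly in $\Sigma$.

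The main obstacle is thus Step 3, specifically the use of global hyperbolicity to rule out the ``pinch'' pathology and thereby obtain a uniform bound on the connecting curves; Steps 1 and 2 are routine $C^0$-causality bookkeeping (the exactly-once intersection property of Cauchy hypersurfaces, invariance of domain, and the cone estimate). Two secondary points must be dealt with along the way: $(a)$ if one prefers to avoid intrinsic-diameter finiteness as an intermediate statement, the same global-hyperbolicity input can instead be phrased as compactness of $\Sigma$ together with a uniform positive lower bound on the Lebesgue number of the cover $\{V_k\}$; and $(b)$ if some $B_k$ is disconnected one first refines the cover to its connected components, the argument above simultaneously guaranteeing that only finitely many of them are needed.
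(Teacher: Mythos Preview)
Your approach is the paper's: $\Sigma\cap U_k$ is the graph of a function $f_k$ over $B_k$ (paper's Step~2.1), achronality of $\Sigma$ gives a uniform slope bound on $f_k$ (Step~2.2), whence the induced metric on $\Sigma\cap U_k$ is uniformly controlled by the Euclidean metric on $B_k$ (Step~2.3), and a chain through the finitely many overlapping pieces bounds $\diam\Sigma$ (paper's Step~1). The one place you diverge is in worrying that the \emph{path} diameter of $B_k$ might be infinite even though $\diam_e(B_k)<\infty$. The paper does not address this: at the end of its Step~2 it simply asserts that any two points of $B_k$ can be joined within $B_k$ by a curve of Euclidean length at most $\diam_e(B_k)+1$, and moves on. In every application in the paper the $B_k$ arise from balls, half-balls, or images of such under maps extending smoothly to the closure, so the issue never bites there.

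Your attempted resolution via a limit-curve argument, however, does not work. If $\ux_n\to\underline{z}\in\partial B_k$ with pairwise intrinsic distances in $B_k$ tending to infinity, any limit of the inextendible curves $s\mapsto\psi_k^{-1}(s,\ux_n)$ lives, morally, over the boundary point $\underline{z}$ and hence outside $U_k$; you cannot ``trace it back through $\psi_k$'', and its intersection with a fixed Cauchy hypersurface says nothing about path distances \emph{inside} $B_k$. Global hyperbolicity of $N$ simply does not constrain the shape of the coordinate domains $B_k\subset\R^d$: one can cover a flat Lorentzian cylinder by regular flow charts one of whose bases is an arbitrarily long thin spiral. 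The same objection applies to your remark (b) about disconnected $B_k$. The honest fix is to strengthen the hypothesis and require each $B_k$ to have finite path diameter (for instance, to be convex); with that in place your Steps~1--3, minus the limit-curve digression, are a complete proof identical in substance to the paper's.
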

For the proof, one first notes that since there are finitely many charts that cover $N$, it suffices to show that for each $k$ the diameter of $\Sigma \cap U_k$ is uniformly bounded for all Cauchy hypersurfaces $\Sigma$ of $N$. To establish the uniform bound, we show that $\psi_k(\Sigma \cap U_k)$ can be written as a graph over $B_k$ with uniformly bounded slope. Together with the uniform boundedness of the metric components, one obtains that the components of the induced metric on $\Sigma \cap U_k$ are uniformly bounded (independently of the Cauchy hypersurface). This, together with the finiteness of the diameter of $B_k$ then implies the uniform bound on the diameter of $\Sigma \cap U_k$.

Assuming now that $\tilde{\gamma}$ leaves through $\{r=0\}$, we consider a near-Minkowskian neighbourhood $\tilde{U}$ as before. One now has to show that there is a $\mu >0$ such that $I^+\big(\tilde{\gamma}(-\mu), \Mmax\big)$ is contained in $\tilde{U}$. This uses crucially the spacelike nature of the boundary depicted below in a Penrose-style representation.
\begin{center}
\def\svgwidth{9cm}
\begingroup%
  \makeatletter%
  \providecommand\color[2][]{%
    \errmessage{(Inkscape) Color is used for the text in Inkscape, but the package 'color.sty' is not loaded}%
    \renewcommand\color[2][]{}%
  }%
  \providecommand\transparent[1]{%
    \errmessage{(Inkscape) Transparency is used (non-zero) for the text in Inkscape, but the package 'transparent.sty' is not loaded}%
    \renewcommand\transparent[1]{}%
  }%
  \providecommand\rotatebox[2]{#2}%
  \ifx\svgwidth\undefined%
    \setlength{\unitlength}{689.56611328bp}%
    \ifx\svgscale\undefined%
      \relax%
    \else%
      \setlength{\unitlength}{\unitlength * \real{\svgscale}}%
    \fi%
  \else%
    \setlength{\unitlength}{\svgwidth}%
  \fi%
  \global\let\svgwidth\undefined%
  \global\let\svgscale\undefined%
  \makeatother%
  \begin{picture}(1,0.37010592)%
    \put(0,0){\includegraphics[width=\unitlength]{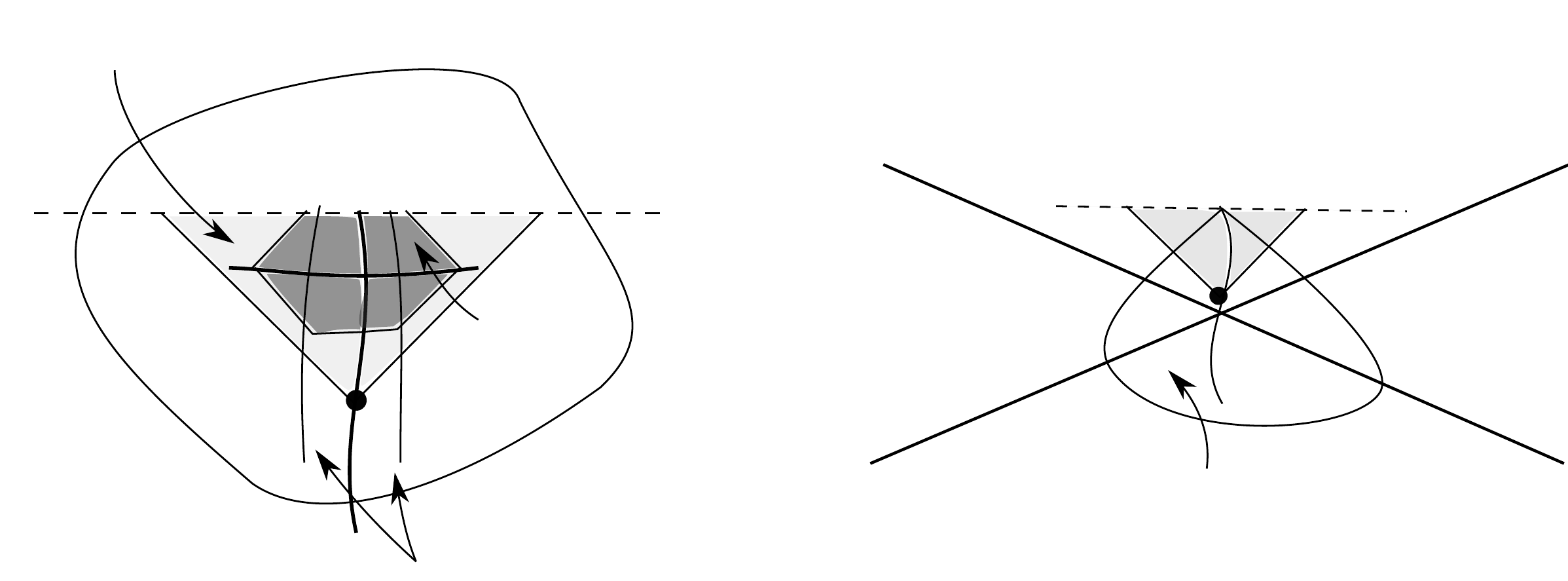}}%
    \put(0.22729099,0.27726291){\color[rgb]{0,0,0}\makebox(0,0)[lb]{\smash{$\tilde{U}$}}}%
    \put(0.02675079,0.08832419){\color[rgb]{0,0,0}\makebox(0,0)[lb]{\smash{$\Mmax$}}}%
    \put(0.31181619,0.14135965){\color[rgb]{0,0,0}\makebox(0,0)[lb]{\smash{$N$}}}%
    \put(-0.00971108,0.34852921){\color[rgb]{0,0,0}\makebox(0,0)[lb]{\smash{$I^+\big(\tilde{\gamma}(-\mu), \Mmax\big)$}}}%
    \put(0.27369698,0.00545632){\color[rgb]{0,0,0}\makebox(0,0)[lb]{\smash{$t= const$}}}%
    \put(0.69798039,0.03528883){\color[rgb]{0,0,0}\makebox(0,0)[lb]{\smash{$\tilde{U} \cap \Mmax$}}}%
  \end{picture}%
\endgroup%

\end{center}
One then considers a spherically symmetric globally hyperbolic region $N$ as depicted above and shows that the spacelike diameter of $N$ is infinite. Here, one exploits that the coefficient of $dt^2$ in the standard form \eqref{SchwarzschildMetricInt} of the Schwarzschild metric diverges for $r \to 0$.  

The contradiction will be obtained from the above theorem. First note, however, that $I^+\big(\tilde{\gamma}(-\mu), \Mmax\big)$ is not spherically symmetric, and hence, $N$ is actually not completely contained in $I^+\big(\tilde{\gamma}(-\mu), \Mmax\big)$. The above representation is, however, accurate if one fixes a point on the sphere and only depicts the corresponding $t,r$-plane. In particular, it holds that an open angular segment of $N$ is completely contained in $I^+\big(\tilde{\gamma}(-\mu), \Mmax\big)$. The near-Minkowskian chart for $\tilde{U}$ can now be used to construct a regular flow chart for $N$ that covers an open angular segment of $N$. Using the spherical symmetry of the Schwarzschild spacetime we can rotate this regular flow chart to obtain a collection of regular flow charts that cover $N$. The above theorem now applies and yields the contradiction.

\subsection{Further applications and open questions}

In the following we collect a few more applications of the techniques developed in this paper and compile some open problems in the realm of $C^0$-extensions of Lorentzian manifolds.

\begin{enumerate}[i)]
\item The timelike diameter can also be used as an obstruction to show that the \emph{de Sitter spacetime} is $C^0$-inextendible. The proof proceeds along the same steps outlined in the first half of the above sketch (and in more detail in Section \ref{SecMink}), where one can exploit, for example, the conformal isometry of de Sitter space with part of the Einstein static universe in order to prove a) the future one-connectedness of de Sitter space and b) the infiniteness of the timelike diameter of the intersection of the past of a future directed timelike curve `leaving' de Sitter space with the future of an arbitrary point along the same curve.

Other straightforward applications of the methods presented in this paper should be to the proof of the $C^0$-inextendibility of the \emph{Schwarzschild-de Sitter spacetime}, \emph{Nariai}, and \emph{Pleba\'{n}ski-Hacyan} - to name a few. 

\item An interesting direction of research is regarding the $C^0$-inextendibility of cosmological spacetimes with a big bang singularity - and in particular the FRW models. A clarification of the obstruction to $C^0$-extensions through the initial singularity would, in particular, shed more light on the structure of the singularity. Here, new techniques are needed. An exception is the Kasner solution with a negative $p_i$, which seems to be amenable to the methods developed in this paper, since the spacelike diameter near the singularity is diverging.

\item Another direction for further research is to leave the class of exact solutions and prove that the spacetimes constructed by Christodoulou in the series of papers \cite{Chris91}, \cite{Chris93}, \cite{Chris99}, and which arise from the generic spherically symmetric collapse of a scalar field, are $C^0$-inextendible \emph{even if one leaves the class of spherically symmetric Lorentzian manifolds}.

\item As already mentioned in the introduction, timelike geodesic completeness implies, in a straightforward way, the $C^2$-inextendibility of the Lorentzian manifold. However, this method of proof does not give $C^0$-inextendibility. An interesting question is whether there are examples of (timelike) geodesically complete Lorentzian manifolds which are $C^0$-extendible. 

\end{enumerate}

\subsection{Outline of the paper}

Section \ref{Defs} collects the notions in causality theory we are using in this paper and presents a few basic, but important results on causality theory for continuous metrics. As a warm-up, we begin by proving the $C^0$-inextendibility of the Minkowski spacetime in Section \ref{SecMink}, before we introduce the Schwarzschild spacetime in Section \ref{SecSchw} and give the first half of the proof of its $C^0$-inextendibility.  Thereafter, we introduce the notion of the spacelike diameter in Section \ref{SecDiam} and give a sufficient criterion for it being finite. This notion is crucial to the second half of the proof of the $C^0$-inextendibility of the maximal analytic Schwarzschild spacetime, which is presented in  Section \ref{SecPf}.

\section{Definitions and aspects of causality theory for Lorentzian manifolds with continuous metrics}
\label{Defs}

In this section, we compile the definitions of concepts in causality theory used in this paper. Moreover, we extend a few standard results in causality theory for a more regular Lorentzian metric to the case of a merely continuous Lorentzian metric.

All manifolds considered in this paper are Hausdorff, second countable, and of dimension $d+1 \geq 2$. Moreover, note that for $M$ to carry a continuous Lorentzian metric, we need to assume that $M$ is at least endowed with a $C^1$ differentiable structure. This, however, implies that one can find a compatible smooth differentiable structure on $M$.\footnote{See for instance \cite{Hirsch12}. In fact, one can even extend any smooth structure (which is compatible with the $C^1$ structure) on a subset of $M$ to a smooth structure (compatible with the $C^1$ structure) on all of $M$.}   Hence, we will assume that all manifolds in this paper are smooth. This is for convenience only and not actually needed anywhere.

\subsection{The timelike future remains open and consequences thereof}

Let $(M,g)$ be a Lorentzian manifold with a continuous metric. Recall that a tangent vector $X \in T_pM$ is called \emph{timelike, null, spacelike} if, and only if,  $g(X,X) < 0$, $g(X,X) = 0$, $g(X,X) >0$, respectively. The set of all timelike vectors in $T_pM$ forms a double cone and, thus, has two connectedness components. One says that $(M,g)$ is \emph{time orientable} if, and only if, one can find a continuous timelike vector field on $M$. A choice of timelike vector field singles out one of the connectedness components and, thus, determines a \emph{time orientation}.

\begin{definition}
Let $(M,g)$ be a Lorentzian manifold with a continuous metric. A piecewise smooth curve $\gamma : I \to M$, where $I \subseteq \R$ is connected, is called a \emph{timelike curve} if, and only if, for all $s \in I$, where $\gamma$ is differentiable, we have $\dot{\gamma}(s)$ is timelike, and at each point of $I$, where the right-sided and left-sided derivative do not coincide, they are still both timelike and lie in the same connectedness component of the timelike double cone in the tangent space. 
\end{definition}

\begin{definition}
Let $(M,g)$ be a time-oriented Lorentzian manifold with a continuous metric.
\begin{enumerate}
\item A timelike curve $\gamma : I \to M$ is called \emph{future (past) directed} if, and only if, its (one-sided) tangent vector is future (past) directed at some point of $I$ (and hence at all points of $I$).
\item For two points $p,q \in M$ we define $p \ll q$ ($p \gg q$) to mean that there exists a future (past) directed timelike curve from $p$ to $q$.
\item For a point $p \in M$, we define the \emph{timelike future $I^+(p, M)$ of $p$ in $M$} by $\{ q \in M \; | \; p \ll q \}$. The \emph{timelike past $I^-(p, M)$ of $p$} is defined analogously.
\end{enumerate}
\end{definition}

\begin{remark}
In the setting of the above definition, where $(M,g)$ is a time-oriented Lorentzian manifold with a continuous metric, consider an open subset $U \subseteq M$. Note that $(U,g|_U)$ is a Lorentzian manifold in its own right, and in particular the future $I^+(p,U)$ of a point $p \in U$ is defined with respect to the causality structure of the manifold $(U,g|_U)$, and \emph{not} with respect to the causality structure of the ambient manifold $M$. Hence, there might be a point $q \in U$ which can be connected to $p$ via a past directed timelike curve lying in $M$, which, however, cannot be connected to $p$ via a past directed timelike curve lying entirely in $U$.
\end{remark}

Although not needed in this paper, let us also remark that we do not make the relation $\ll$ smaller by imposing that a timelike curve should be piecewise \emph{smooth}, since we can always smooth out a piecewise less regular timelike curve and, at the same time, preserve its causal character. However, this smoothing argument does not apply to causal curves! It is an easy exercise to write down a merely continuous Lorentzian metric in $1+1$ dimensions, which does not admit a single smooth null curve. 

Let us also mention at this point that there are results in the causality theory for smooth metrics which do not carry over to the merely continuous case. An instructive example in \cite{ChrusGra12} shows that there are continuous Lorentz metrics for which the light cones are no longer hypersurfaces. In this paper, however, the only result in causality theory for merely continuous metrics we need is that the timelike future and past remains an open set. The proof of this statement is given below.

The following basic, but important, lemma introduces a near-Minkowskian coordinate system adapted to a timelike curve.

\begin{lemma}
\label{NormalForm}
Let $(M,g)$ be a Lorentzian manifold with a continuous metric $g$, and let $\gamma : [-1,0] \to M$ be a timelike curve. After a possible reparametrisation of $\gamma$, we can find for every $\delta >0$ an open neighbourhood $U$ of $\gamma(0)$, an $\varepsilon >0$, and a coordinate chart $\varphi : U \to \ed$ such that
\begin{enumerate}
\item $\varphi\big(\gamma(0)\big) = (0,\ldots, 0)$
\item $(\varphi \circ \gamma) (s) = (s,0, \ldots, 0) $ holds for $s \in (-\varepsilon, 0]$
\item $g_{\mu \nu}(0) = m_{\mu \nu}$
\item $\big| \,g_{\mu \nu}(x) - m_{\mu \nu} \,\big| < \delta $ holds for all $x \in (-\varepsilon , \varepsilon)^{d+1}$ 
\end{enumerate}
are satisfied, where
\begin{equation*}
m_{\mu \nu} = 
\begin{pmatrix}
-1 &    &            & 0      \\
     & 1 &            &         \\
     &    & \ddots &         \\
 0  &    &            & 1
\end{pmatrix}
\end{equation*}
is the Minkowski metric  on $\R^{d+1}$.
\end{lemma}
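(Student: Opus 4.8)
The plan is to build the chart in two stages: first produce a chart adapted to $\gamma$ so that items (1)--(3) hold, and then rescale so that (4) holds as well. For the first stage, fix a point $\gamma(0) = p$ and pick any coordinate chart around $p$; by a linear change of coordinates we may assume $g_{\mu\nu}(p) = m_{\mu\nu}$, which gives (3). The one-sided tangent vector $\dot\gamma(0)$ (say the left-sided one, since $\gamma$ is defined on $[-1,0]$) is timelike; after a further linear (Lorentz) transformation we may assume $\dot\gamma(0)$ is a positive multiple of $\partial_0$, and after reparametrising $\gamma$ we may assume $\dot\gamma(0) = \partial_0$, i.e. the parametrisation is by the ``coordinate time'' near the endpoint. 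This is where the ``after a possible reparametrisation'' in the statement is used: it is harmless because reparametrising a timelike curve preserves its timelike character (and one only needs a reparametrisation near $s=0$, interpolated smoothly with the original parametrisation on the rest of $[-1,0]$).

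Next I would straighten out $\gamma$ near $p$ to achieve (1) and (2) simultaneously. Since $\gamma$ restricted to a small interval $(-\eta,0]$ is a smooth embedded curve (piecewise smooth, but smooth on $(-\eta,0]$ after the reparametrisation, as $\dot\gamma(0)\ne 0$) with $\dot\gamma(0) = \partial_0$, the standard flow-box / straightening argument applies: one chooses new coordinates $\tilde x^0 = x^0$ and $\tilde x^i = x^i - \gamma^i\big(x^0\big)$ for $i=1,\dots,d$, where $\gamma^i$ denotes the $i$-th coordinate of $\gamma$ expressed as a function of its reparametrised time near $0$ (extended smoothly and compactly supported away from $0$ so the substitution is a global diffeomorphism of a small neighbourhood). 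In these coordinates $\gamma(s) = (s,0,\dots,0)$ for $s$ near $0$, so (2) holds; translating we arrange $\varphi(p) = 0$, so (1) holds; and since the Jacobian of this substitution at $p$ is the identity (because $\dot\gamma^i(0) = 0$ for $i\ge 1$ and $\dot\gamma^0(0)=1$), the value $g_{\mu\nu}(p) = m_{\mu\nu}$ is preserved, so (3) still holds.

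Finally, for (4): the metric components $g_{\mu\nu}$ are continuous functions of the coordinates with $g_{\mu\nu}(0) = m_{\mu\nu}$, so given $\delta > 0$ there is an $\varepsilon > 0$ such that $\big|\,g_{\mu\nu}(x) - m_{\mu\nu}\,\big| < \delta$ for all $x$ in the cube $\ed$; one then simply defines $U := \varphi^{-1}\big(\ed\big)$ and restricts $\varphi$ to this set. (One may need to shrink $\varepsilon$ below the earlier $\eta$ so that (2) holds throughout; this only strengthens (4).) I expect the main obstacle to be purely bookkeeping rather than conceptual: one must check carefully that the reparametrisation of $\gamma$ near $s=0$ can be carried out while keeping $\gamma$ a \emph{piecewise smooth} timelike curve on all of $[-1,0]$, and that the coordinate straightening $\tilde x^i = x^i - \gamma^i(x^0)$ is a genuine diffeomorphism on a small enough neighbourhood of $p$ (which it is, since its differential at $p$ is the identity) and does not destroy properties (1) and (3) already secured. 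None of these steps uses more than continuity of $g$ and the implicit/inverse function theorem, so the merely continuous regularity of the metric causes no difficulty here.
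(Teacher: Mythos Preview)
Your proposal is correct and takes essentially the same approach as the paper: straighten the curve near its endpoint, normalise the metric to Minkowski at $p$ by a linear change of coordinates, and shrink the neighbourhood by continuity. The only cosmetic difference is the order of the first two steps --- the paper straightens first and then applies a Gram--Schmidt orthonormalisation fixing $\partial_0$, while you normalise first and then verify that the straightening has identity Jacobian at $p$ so that condition~(3) is preserved.
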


\begin{proof}
Possibly after a linear change of parameter for $\gamma$, we can assume without loss of generality, that
\begin{equation}
\label{DerivativeNorm}
g\big(\dot{\gamma}(0),\dot{\gamma}(0)\big) = -1
\end{equation}
holds. Since $\gamma$ has only finitely many discontinuities, we can find a neighbourhood $U$ of $\gamma(0)$ and a $\varepsilon >0$ such that $\gamma \big((-\varepsilon, 0]\big) \subseteq U$ and $\gamma\big|_{(-\varepsilon,0]}$ is smooth. Choosing $U$ and $\varepsilon$ smaller if necessary, we can find a coordinate chart $\varphi : U \to  (-\varepsilon , \varepsilon)^{d+1}$ such that the first two points of the lemma are satisfied\footnote{More explicitly, we can choose coordinates $\psi$ which are centred at $\gamma(0)$ and such that $\dot{\gamma}_0(0) \neq 0$. Hence, $\gamma^{-1}_0$ exists in a small enough neighbourhood and we set $F(x_0, \ux) = \big(\gamma^{-1}(x_0), \ux - \underline{\gamma}[\gamma_0^{-1}(x_0)]\big)$, which is a diffeomorphism in a small enough neighbourhood of $0$. It then follows that $(F\circ \psi)$ is a chart satisfying the first two points.}.
After a linear change of coordinates, obtained from the Gram-Schmidt orthonormalisation procedure based at the origin, where we keep $\frac{\partial}{\partial x_0}$ fixed (which is normalised by \eqref{DerivativeNorm}), we can moreover arrange that, in addition, the third point holds.  For given $\delta >0$, we can now choose $U$ and $\varepsilon$ even smaller such that the fourth point of the lemma holds as well. This follows from the continuity of the Lorentz metric $g$.
\end{proof}

\begin{proposition}
\label{FutureOpen}
Let $(M,g)$ be a Lorentzian manifold with a continuous metric $g$. For all $p \in M$ the timelike future $I^+(p,M)$ and the timelike past $I^-(p,M)$ of $p$ are open in $M$.
\end{proposition}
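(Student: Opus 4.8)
The plan is to show that $I^+(p,M)$ is open; the statement for $I^-$ follows by reversing the time orientation. Let $q \in I^+(p,M)$, so there is a future directed timelike curve $\gamma$ from $p$ to $q$. The idea is to produce a whole open neighbourhood of $q$ that is reachable from $p$ by first travelling along $\gamma$ to a point $q'$ slightly before $q$, and then noting that every point sufficiently close to $q$ can be reached from $q'$ by a short timelike curve. Concretely, I would reparametrise $\gamma$ so that it is defined on $[-1,0]$ with $\gamma(0) = q$, and apply Lemma \ref{NormalForm} with a sufficiently small $\delta > 0$ (say $\delta = \tfrac{1}{10d}$, or any value small enough that the Minkowski light cones distorted by an error of size $\delta$ in the metric components still contain a fixed genuine cone of the coordinate metric). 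This gives a chart $\varphi : U \to (-\varepsilon,\varepsilon)^{d+1}$ with $\varphi(q) = 0$, $(\varphi\circ\gamma)(s) = (s,0,\ldots,0)$ for $s \in (-\varepsilon,0]$, and $|g_{\mu\nu}(x) - m_{\mu\nu}| < \delta$ throughout the coordinate cube.

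The key step is the following cone comparison: because the metric components are $\delta$-close to the Minkowski ones on all of $(-\varepsilon,\varepsilon)^{d+1}$, any smooth curve in these coordinates whose tangent vector lies strictly inside the \emph{coordinate} (Euclidean-Minkowski) future cone with a definite opening — e.g. a straight segment $t \mapsto \underline{x}_0 + (t - t_0)(1, \underline{v})$ with $|\underline{v}| \le \tfrac12$ and $\dot t > 0$ — has $g$-timelike tangent vector everywhere along it, provided $\delta$ is chosen small enough (an explicit computation: $g(\dot c,\dot c) \le -1 + |\underline{v}|^2 + \delta(1+|\underline{v}|)^2 < 0$). Hence such segments are genuine future directed timelike curves for $g$. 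Now fix $s_0 \in (-\varepsilon, 0)$ and set $q' = \gamma(s_0)$, with $\varphi(q') = (s_0, 0,\ldots,0)$. For any point $y \in U$ with coordinates $(y^0, \underline{y})$ satisfying $y^0 \in (s_0/2, 0]$ — more carefully, $y^0 > s_0$ — and $|\underline{y}| < \tfrac12(y^0 - s_0)$, the straight segment from $(s_0,\underline{0})$ to $(y^0,\underline{y})$ has coordinate velocity with spatial part of norm $< \tfrac12$, so it is a $g$-future directed timelike curve from $q'$ to $y$; concatenating $\gamma|_{[-1,s_0]}$ with this segment yields a future directed timelike curve from $p$ to $y$, so $y \in I^+(p,M)$. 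The set of such $y$ is the $\varphi$-preimage of an open solid cone in $(-\varepsilon,\varepsilon)^{d+1}$ with apex at the origin, which is an open subset of $M$ containing $q = \gamma(0)$ (the apex is a limit point but one checks $q$ itself is interior by allowing $y^0$ strictly between $s_0$ and a small positive number and $\underline y$ in the corresponding ball — in fact take $s_0$ close enough to $0$ and the cone contains a full neighbourhood of $0$, hence of $q$). Thus every $q \in I^+(p,M)$ has an open neighbourhood contained in $I^+(p,M)$, so $I^+(p,M)$ is open.

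One technical point to handle with a little care: Lemma \ref{NormalForm} is stated for a curve on $[-1,0]$, but $\gamma$ above is an arbitrary timelike curve from $p$ to $q$; I would simply restrict attention to a final subinterval of $\gamma$ on which it is smooth (possible since $\gamma$ is piecewise smooth with finitely many breaks) and affinely reparametrise that piece to $[-1,0]$, remembering that $p$ is still joined to the new initial point by the remaining (timelike) part of $\gamma$. A second point is that at the breakpoints of $\gamma$ and at the junction between $\gamma|_{[-1,s_0]}$ and the straight segment, the concatenated curve must have one-sided tangents that are both timelike and in the same cone component; this is immediate here because both pieces are future directed timelike by construction. The main obstacle — and the whole reason the statement is not entirely trivial for continuous metrics — is precisely that one cannot invoke the smooth-metric machinery (normal neighbourhoods, the exponential map, geodesic convexity) to get ``small diamonds'' around $q$; the replacement is the soft cone-comparison estimate made available by Lemma \ref{NormalForm}, and the proof is essentially just the careful bookkeeping of how large a coordinate cone one may safely use given the allowed metric error $\delta$.
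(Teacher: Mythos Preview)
Your proof is correct and follows essentially the same approach as the paper: apply Lemma~\ref{NormalForm} to get near-Minkowskian coordinates at $q$, then connect a slightly earlier point $q' = \gamma(s_0)$ to all nearby points by straight coordinate segments that are $g$-timelike by the cone comparison, and concatenate with $\gamma|_{[-1,s_0]}$. One minor slip: the solid cone you describe has its apex at $(s_0,\underline{0})$, not at the origin, so $q = \varphi^{-1}(0)$ is automatically in its interior (your parenthetical patch is unnecessary); otherwise this is exactly the paper's argument, with your explicit estimate on $g(\dot c,\dot c)$ replacing the paper's more qualitative ``there exists $\tau>0$'' statement.
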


\begin{proof}
We show that $I^+(p,M)$ is open in $M$; the proof of $I^-(p,M)$ being open is analogous. So let $q \in I^+(p,M)$ and let $\gamma : [-1, 0] \to M$ be a future directed timelike curve from $p$ to $q$. By Lemma \ref{NormalForm}, we can find, after a possible reparametrisation of $\gamma$, a chart $\varphi : U \to \ed$ such that 1.\ - 4.\ of Lemma \ref{NormalForm} are satisfied with $\delta = \frac{1}{2}$. It follows that there exists a $\tau >0$ such that 
\begin{equation*}
\frac{\partial}{\partial x_0} + \tau_1 \frac{\partial}{\partial x_1} + \tau_2 \frac{\partial}{\partial x_2} + \ldots + \tau_d \frac{\partial}{\partial x_d}
\end{equation*}
is timelike for $|\tau_i| < \tau$, $1 \leq i \leq d$. Thus, there exists a ball $B_\rho(0)$ such that every point $x \in B_\rho(0)$ can be connected to $(-\frac{\varepsilon}{2}, 0, \ldots, 0)$ by a straight line which is timelike. Concatenating $\gamma|_{[-1, -\frac{\varepsilon}{2}]}$ with the image curve of the straight line under $\varphi$ shows that $\varphi^{-1}\big(B_\rho(0)\big) \subseteq I^+(p,M)$.
\begin{center}
\def\svgwidth{7cm}
\begingroup%
  \makeatletter%
  \providecommand\color[2][]{%
    \errmessage{(Inkscape) Color is used for the text in Inkscape, but the package 'color.sty' is not loaded}%
    \renewcommand\color[2][]{}%
  }%
  \providecommand\transparent[1]{%
    \errmessage{(Inkscape) Transparency is used (non-zero) for the text in Inkscape, but the package 'transparent.sty' is not loaded}%
    \renewcommand\transparent[1]{}%
  }%
  \providecommand\rotatebox[2]{#2}%
  \ifx\svgwidth\undefined%
    \setlength{\unitlength}{427.37993164bp}%
    \ifx\svgscale\undefined%
      \relax%
    \else%
      \setlength{\unitlength}{\unitlength * \real{\svgscale}}%
    \fi%
  \else%
    \setlength{\unitlength}{\svgwidth}%
  \fi%
  \global\let\svgwidth\undefined%
  \global\let\svgscale\undefined%
  \makeatother%
  \begin{picture}(1,0.9379848)%
    \put(0,0){\includegraphics[width=\unitlength]{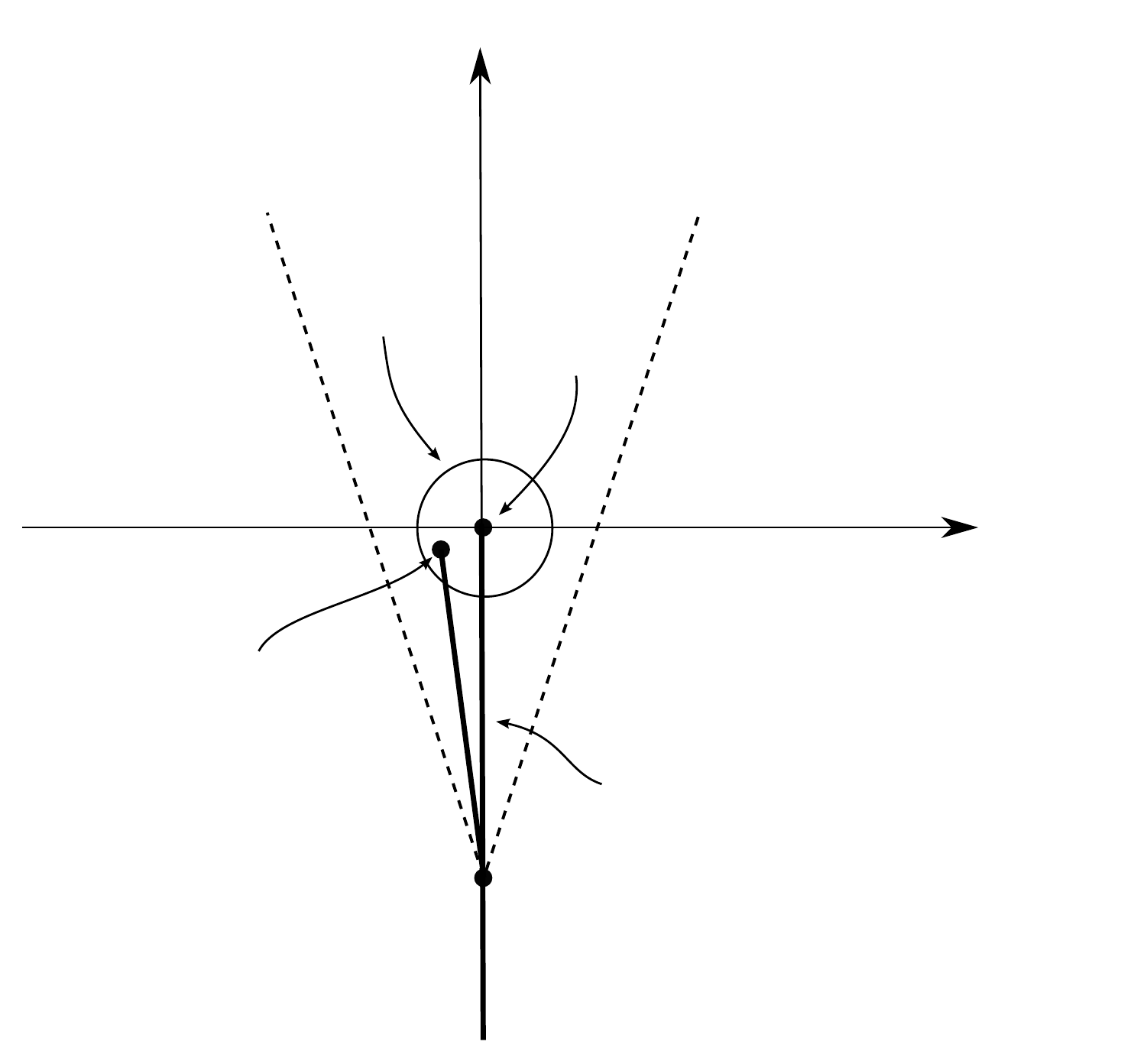}}%
    \put(0.80851443,0.42217369){\color[rgb]{0,0,0}\makebox(0,0)[lb]{\smash{$x_1$}}}%
    \put(0.34856908,0.87944496){\color[rgb]{0,0,0}\makebox(0,0)[lb]{\smash{$x_0$}}}%
    \put(0.45311033,0.13211124){\color[rgb]{0,0,0}\makebox(0,0)[lb]{\smash{$(-\frac{\varepsilon}{2},0,\ldots,0)$}}}%
    \put(0.20599888,0.32811338){\color[rgb]{0,0,0}\makebox(0,0)[lb]{\smash{$x$}}}%
    \put(0.4493286,0.62562961){\color[rgb]{0,0,0}\makebox(0,0)[lb]{\smash{$\varphi(q)$}}}%
    \put(0.54198146,0.23232762){\color[rgb]{0,0,0}\makebox(0,0)[lb]{\smash{$\varphi \circ \gamma$}}}%
    \put(0.28963309,0.66465653){\color[rgb]{0,0,0}\makebox(0,0)[lb]{\smash{$B_\rho(0)$}}}%
  \end{picture}%
\endgroup%

\end{center}
\end{proof}

The next two propositions are easy consequences of Proposition \ref{FutureOpen}.

\begin{proposition}
\label{UnionPasts}
Let $(M,g)$ be a Lorentzian manifold with a continuous metric $g$. Let $\gamma : [0,1] \to M$ be a future directed timelike curve. Then the following holds:
\begin{equation*}
I^-\big(\gamma(1),M\big) = \bigcup_{0 \leq s <1} I^-\big(\gamma(s),M\big) \;.
\end{equation*}
\end{proposition}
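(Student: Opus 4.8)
The plan is to prove the two inclusions separately. The inclusion $\supseteq$ is immediate: for any $s \in [0,1)$ we have $\gamma(s) \ll \gamma(1)$, since $\gamma|_{[s,1]}$ is a future directed timelike curve from $\gamma(s)$ to $\gamma(1)$; hence $I^-(\gamma(s),M) \subseteq I^-(\gamma(1),M)$, and taking the union over $s$ preserves the inclusion. (Strictly, one should note the mild degenerate case $s=0$ versus the union being over $0\le s<1$, but $I^-(\gamma(0),M)$ is certainly contained in $I^-(\gamma(1),M)$, so there is nothing extra to check.)

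The substantive direction is $\subseteq$. Let $q \in I^-(\gamma(1),M)$, so there is a past directed timelike curve from $\gamma(1)$ to $q$; equivalently a future directed timelike curve $\sigma$ from $q$ to $\gamma(1)$. The idea is to ``back up'' slightly along $\gamma$ from the endpoint $\gamma(1)$ and show $q$ still lies in the past of the backed-up point. Concretely, I would apply Lemma \ref{NormalForm} to a timelike curve ending at $\gamma(1)$ — for instance to a reparametrisation of the concatenation of $\sigma$ followed by $\gamma|_{[1-\eta,1]}$ run backwards, or more simply just argue in a near-Minkowskian chart $\varphi : U \to \ed$ adapted to $\gamma$ near $s=1$ (with $\delta = \tfrac12$, say, as in the proof of Proposition \ref{FutureOpen}). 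In such a chart, $\gamma$ is the $x_0$-axis near the endpoint and the light cones are uniformly close to the Minkowski cones, so the point $\gamma(1-\eta)$ — which in coordinates is $(-\eta', 0,\ldots,0)$ for small $\eta'>0$ — can be joined to $\gamma(1) = (0,\ldots,0)$ by the straight timelike segment along the $x_0$-axis. But I need more: I need that $\gamma(1) \in I^+(\gamma(1-\eta),\text{small neighbourhood})$ in a way that also gives $q \ll \gamma(1-\eta)$.

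The cleanest route is the following. By Proposition \ref{FutureOpen} applied in reverse (i.e.\ $I^+$ open), the set $I^+(q,M)$ is open and contains $\gamma(1)$. Since $\gamma$ is continuous, there exists $s_0 < 1$ with $\gamma(s) \in I^+(q,M)$ for all $s \in (s_0, 1]$; that is, $q \ll \gamma(s)$, i.e.\ $q \in I^-(\gamma(s),M)$, for all such $s$. Pick any such $s$; then $q \in \bigcup_{0 \le s < 1} I^-(\gamma(s),M)$. This establishes $\subseteq$ and completes the proof. The key input is simply that $I^+(q,M)$ is open (Proposition \ref{FutureOpen}) together with the continuity of $\gamma$.

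I do not anticipate a serious obstacle here; the only thing to be careful about is that the statement is genuinely a one-line consequence of Proposition \ref{FutureOpen} once phrased correctly — one should resist the temptation to re-derive openness via charts and instead invoke it directly, applied to the \emph{future} of the point $q$ rather than the past of $\gamma(1)$. A secondary point worth a remark is that this is the place where continuity (as opposed to smoothness) of the metric matters only through Proposition \ref{FutureOpen}, which has already been proven in that generality, so nothing new is needed.
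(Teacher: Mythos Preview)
Your proposal is correct and, after the initial detour through near-Minkowskian charts, settles on exactly the argument the paper gives: observe that $\gamma(1) \in I^+(q,M)$, invoke the openness of $I^+(q,M)$ from Proposition~\ref{FutureOpen}, and use continuity of $\gamma$ to find $s$ close to $1$ with $\gamma(s) \in I^+(q,M)$. Your own remark that one should apply openness to the \emph{future} of $q$ rather than fiddle with charts is precisely the paper's line; the earlier paragraph sketching a chart-based argument is unnecessary and can be dropped.
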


\begin{proof}
The inclusion ``$\,\supseteq \,$'' is clear. In order to prove ``$\, \subseteq \,$'', let $q \in I^-\big(\gamma(1),M\big)$. Hence, it follows that $\gamma(1) \in I^+(q,M)$.  By Proposition \ref{FutureOpen}, $I^+(q,M)$ is open and thus contains $\gamma(s)$ for $s$ close enough to $1$. This shows that $q \in I^-\big(\gamma(s),M\big)$ for $s$ close enough to $1$.
\end{proof}

\begin{proposition}
\label{FlowChart}
Let $(M,g)$ be a time oriented Lorentzian manifold with a continuous metric $g$, and let $T$ be a smooth and globally timelike vector field on $M$. 

For every $p \in M$ there exists a chart $\varphi : U \to (-\Delta, \Delta) \times (-E, E)^d$ centred at $p$, where $U$ is an open neighbourhood of $p$ in $M$, and $\Delta, E >0$, such that
\begin{enumerate}[i)]
\item $T$ has the coordinate representation $\frac{\partial}{\partial x_0}$ in the chart $\varphi$
\item every two orbits of $T$ in the chart $\varphi$ can be connected by a future as well as a past directed timelike curve.
\end{enumerate}
We call such a chart $\varphi$, which satisfies i) a \emph{flow chart}, and if it satisfies i) and ii), we call it a \emph{flow chart with timelike connected orbits}.
\end{proposition}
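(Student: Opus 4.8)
The plan is to combine the flow-box theorem with a careful choice of the \emph{shape} of the coordinate box, dictated by the aperture of the light cones of $g$ near $p$. Since a timelike vector is in particular non-zero, $T$ is a smooth nowhere-vanishing vector field, so the straightening (flow-box) theorem provides a chart $\psi : U_0 \to (-\Delta_0,\Delta_0) \times (-E_0,E_0)^d$ centred at $p$ in which $T = \frac{\partial}{\partial x_0}$. This already yields property i) for any sub-box of this form; the content of the proposition is to shrink the box so that ii) holds as well. Note first that, $U_0$ being connected and $T$ a continuous timelike vector field on it, $T$ has a single time orientation throughout $U_0$.

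Next I would use only the continuity of $g$. In these coordinates $g_{00}(0) = g(T,T)(p) < 0$, so the vectors $X_v := \frac{\partial}{\partial x_0} + \sum_{i=1}^d v_i \frac{\partial}{\partial x_i}$ satisfy $g(X_v,X_v)(0) \to g_{00}(0) < 0$ as $v \to 0$. By joint continuity of $(x,v) \mapsto g_x(X_v,X_v)$ there are then a $\lambda > 0$, a $\Delta \in (0,\Delta_0]$ and an open neighbourhood $W \ni 0$ in $\R^d$ with $W \subseteq (-E_0,E_0)^d$, such that $X_v$ is timelike at every point of $(-\Delta,\Delta) \times W$ whenever $|v_i| \leq \lambda$ for all $i$; moreover $X_v$ then lies in the same component of the timelike double cone as $T(p)$, since it is joined to it by the path $s \mapsto X_{sv}$ of timelike vectors. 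Now fix $E > 0$ small enough that $(-E,E)^d \subseteq W$ and $E < \lambda\Delta$, and let $\varphi$ be the restriction of $\psi$ to $U := \psi^{-1}\big((-\Delta,\Delta) \times (-E,E)^d\big)$. Then $\varphi : U \to (-\Delta,\Delta) \times (-E,E)^d$ is a chart centred at $p$ with coordinate representation $\frac{\partial}{\partial x_0}$ for $T$, so i) holds.

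For ii), two orbits of $T$ inside this chart are represented by the vertical segments through points $\ux_1, \ux_2 \in (-E,E)^d$. Writing $L := \max_i |\ux_2^i - \ux_1^i| < 2E$ (assume $\ux_1 \neq \ux_2$, the other case being trivial), I would join them by the affine segment $c(\tau) := \psi^{-1}\big(\tau,\ \ux_1 + (\tfrac{\lambda\tau}{L} + \tfrac12)(\ux_2 - \ux_1)\big)$ for $\tau \in [-\tfrac{L}{2\lambda}, \tfrac{L}{2\lambda}]$, which starts on the orbit through $\ux_1$ (at $\tau = -\tfrac{L}{2\lambda}$) and ends on the orbit through $\ux_2$ (at $\tau = \tfrac{L}{2\lambda}$). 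Its image lies in $U$: the $x_0$-coordinate stays in $(-\Delta,\Delta)$ because $\tfrac{L}{2\lambda} < \tfrac{E}{\lambda} < \Delta$, and the spatial coordinate is a convex combination of $\ux_1$ and $\ux_2$, hence remains in $(-E,E)^d \subseteq W$. The coordinate velocity of $c$ is $X_v$ with $v = \tfrac{\lambda}{L}(\ux_2 - \ux_1)$, so $|v_i| \leq \lambda$ and $c$ is a timelike curve with time orientation that of $T(p)$. Interchanging the roles of $\ux_1$ and $\ux_2$ and reversing the resulting curve produces a timelike curve between the two orbits with the opposite orientation; thus, whatever the time orientation of $T$ on $U_0$, the two orbits can be joined by both a future and a past directed timelike curve, which is ii).

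The only real subtlety — the step to get right — is the constraint $E < \lambda\Delta$: one \emph{cannot} simply take $U$ to be a small cubical neighbourhood of $p$, since in a box that is wide relative to its height the endpoints of the above segment leave the box before the cone condition can be exploited. The box must be long in the $T$-direction compared with its transverse extent, with the ratio governed by the aperture $\lambda$ of the light cones of $g$ near $p$; once this is arranged, the explicit straight-line segments do the job, using nothing beyond the continuity of $g$, as they must.
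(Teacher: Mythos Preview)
Your proof is correct and follows essentially the same idea as the paper's: start from a flow-box chart for $T$ and then shrink the transverse extent so that straight-line segments of small slope are timelike. The only cosmetic difference is that the paper packages the cone-aperture step into its earlier Proposition~\ref{FutureOpen} (openness of $I^\pm$) and connects the two orbits via the origin, whereas you inline that argument and connect the orbits directly by a single affine segment under the explicit constraint $E < \lambda\Delta$; your version is thus self-contained and slightly more direct, but the underlying mechanism is identical.
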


\begin{proof}
Since $T$ is a regular vector field, the first part of the proposition is a standard result - see for example Theorem 17.13 in \cite{LeeSmooth}. In order to prove the second statement, recall from Proposition \ref{FutureOpen} that $I^+\big(0, (-\Delta, \Delta) \times (-E, E)^d\big)$ and $I^-\big(0,  (-\Delta, \Delta) \times (-E, E)^d\big)$ are open. Thus, we can choose $E>0$ smaller such that, firstly, for every $\ux \in  (-E, E)^d$ we can find an $x_0^+ \in [0,\Delta)$ such that $(x_0^+, \ux) \in I^+\big(0, (-\Delta, \Delta) \times (-E, E)^d\big)$, and secondly, we can find an $x_0^- \in (-\Delta ,0]$ such that $(x_0^-, \ux) \in I^-\big(0,  (-\Delta, \Delta) \times (-E, E)^d\big)$. 
\begin{center}
\def\svgwidth{7cm}
\begingroup%
  \makeatletter%
  \providecommand\color[2][]{%
    \errmessage{(Inkscape) Color is used for the text in Inkscape, but the package 'color.sty' is not loaded}%
    \renewcommand\color[2][]{}%
  }%
  \providecommand\transparent[1]{%
    \errmessage{(Inkscape) Transparency is used (non-zero) for the text in Inkscape, but the package 'transparent.sty' is not loaded}%
    \renewcommand\transparent[1]{}%
  }%
  \providecommand\rotatebox[2]{#2}%
  \ifx\svgwidth\undefined%
    \setlength{\unitlength}{560.49741211bp}%
    \ifx\svgscale\undefined%
      \relax%
    \else%
      \setlength{\unitlength}{\unitlength * \real{\svgscale}}%
    \fi%
  \else%
    \setlength{\unitlength}{\svgwidth}%
  \fi%
  \global\let\svgwidth\undefined%
  \global\let\svgscale\undefined%
  \makeatother%
  \begin{picture}(1,0.91485703)%
    \put(0,0){\includegraphics[width=\unitlength]{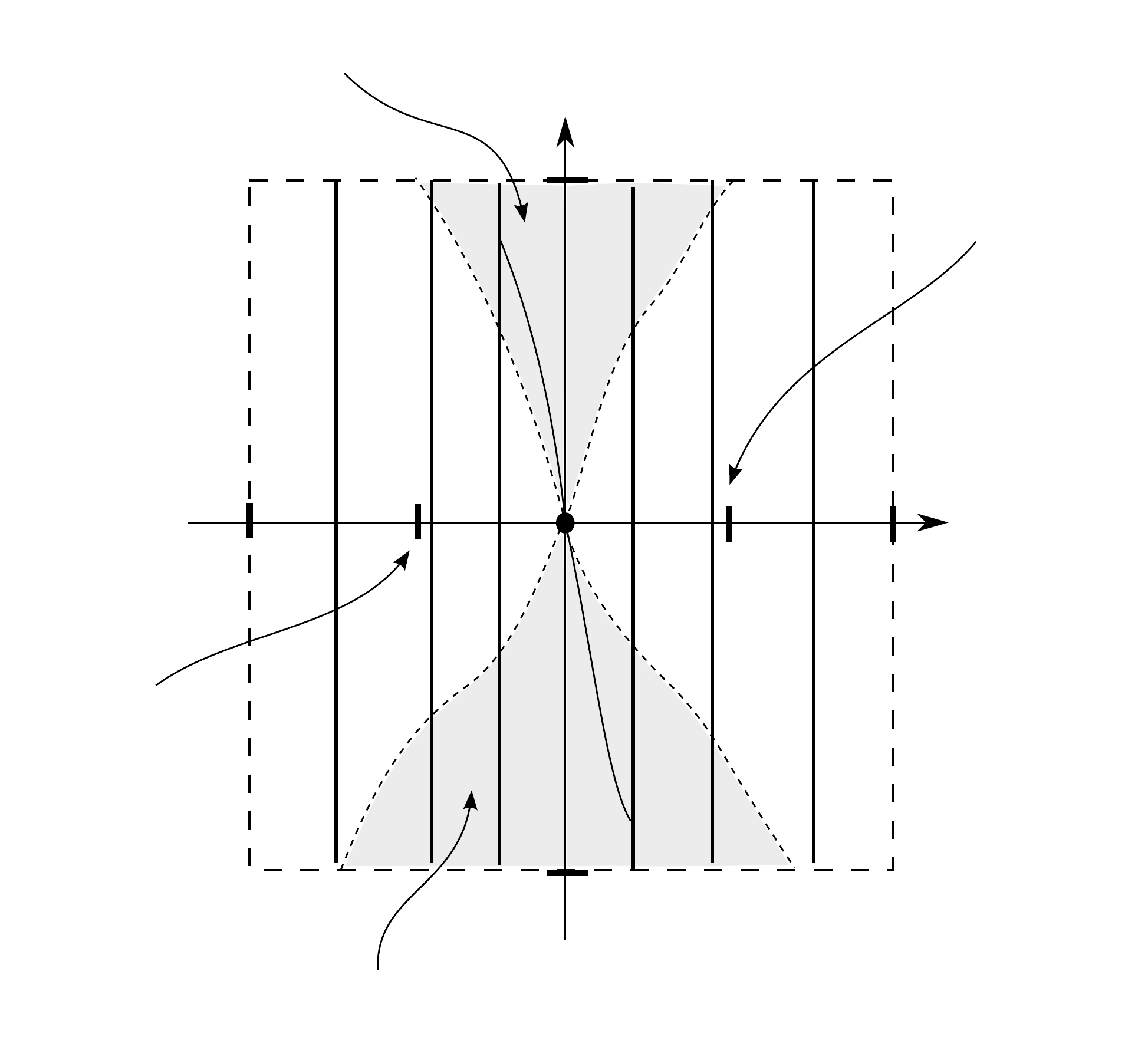}}%
    \put(0.51468873,0.78073916){\color[rgb]{0,0,0}\makebox(0,0)[lb]{\smash{$\Delta$}}}%
    \put(0.49788194,0.09569344){\color[rgb]{0,0,0}\makebox(0,0)[lb]{\smash{$-\Delta$}}}%
    \put(0.79691603,0.4825497){\color[rgb]{0,0,0}\makebox(0,0)[lb]{\smash{$E$}}}%
    \put(0.11889189,0.47931627){\color[rgb]{0,0,0}\makebox(0,0)[lb]{\smash{$-E$}}}%
    \put(0.80795568,0.71499626){\color[rgb]{0,0,0}\makebox(0,0)[lb]{\smash{new $E$}}}%
    \put(0.01012494,0.28068816){\color[rgb]{0,0,0}\makebox(0,0)[lb]{\smash{new $-E$}}}%
    \put(0.18471968,0.86588268){\color[rgb]{0,0,0}\makebox(0,0)[lb]{\smash{$I^+\big(0,(-\Delta, \Delta) \times (-E,E)^d\big)$}}}%
    \put(0.18675869,0.0237735){\color[rgb]{0,0,0}\makebox(0,0)[lb]{\smash{$I^-\big(0,(-\Delta, \Delta) \times (-E,E)^d\big)$}}}%
  \end{picture}%
\endgroup%

\end{center} 
Hence, any two orbits of $T$ in the chart $\varphi$ can be connected (via $0$) by a future as well as a past directed timelike curve. 
\end{proof}

\subsection{Definitions}

This section lays down more terminology used in this paper.

\begin{definition}
Let $(M,g)$ be a time-oriented Lorentzian manifold with a continuous metric $g$. 
\begin{enumerate}
\item We define the \emph{Lorentzian distance function} $d(\cdot, \cdot) : M \times M \to [0,\infty]$ by
\begin{equation*}
d(p,q) = \begin{cases} 0 &\textnormal{ if } q \notin I^+(p,M) \\ \sup\limits_{\substack{\sigma : [0,1] \to M \textnormal{ future directed} \\ \textnormal{with } \sigma(0) = p \textnormal{ and } \sigma(1) =q}} \Big{\{} \int_0^1 \sqrt{-g\big(\dot{\sigma}(s), \dot{\sigma}(s)\big)} \, ds \Big{\}} \qquad &\textnormal{ if } q \in I^+(p,M) \end{cases}
\end{equation*}
Here, $L(\sigma) := \int_0^1 \sqrt{-g\big(\dot{\sigma}(s), \dot{\sigma}(s)\big)} \, ds$ is also called the \emph{Lorentzian length of the timelike curve $\sigma$}.
\item We define the \emph{timelike diameter of $M$} by $\diam_t(M) := \sup\{d(p,q)\;|\; p,q \in M\}$.
\end{enumerate}
\end{definition}

\begin{definition}
Let $(M,g)$ be a time-oriented Lorentzian manifold with a continuous metric, and let $I \subseteq \R$ be one of the following intervals $(a,b)$, $(a,b]$, $[a,b)$, or $[a,b]$, where $a <b$.  A future directed timelike curve $\gamma : I \to M$ is said to be \emph{future (past) extendible} if, and only if, $\gamma$ can be extended to $I \cup \{b\}$ ($I \cup \{a\}$) as a continuous curve. Otherwise, $\gamma$ is called \emph{future (past) inextendible}. Moreover, we call $\gamma$ inextendible if, and only if, $\gamma$ is future and past inextendible.
\end{definition}

\begin{remark}
Note that the notion of future (past) extendibility of timelike curves only requires the extendibility of the timelike curve as a \emph{continuous} curve. It thus may well happen that a timelike curve is future extendible, but we can not extend it to the future as a \emph{timelike} curve.
\end{remark}

\begin{definition} 
Let $(M,g)$ be a time-oriented Lorentzian manifold with a continuous metric. 
\begin{enumerate}
\item
We call a smooth embedded hypersurface $\Sigma$ of $M$ a \emph{Cauchy hypersurface} in $(M,g)$ if, and only if, $\Sigma$ is met exactly once by every inextendible timelike curve.
\item 
We say that a Lorentzian manifold $(M,g)$ is \emph{globally hyperbolic} if, and only if, there exists a Cauchy hypersurface $\Sigma$ in $(M,g)$.
\end{enumerate}
\end{definition}

\begin{definition}
Let $(M,g)$ be a time-oriented Lorentzian manifold with a continuous metric. 
\begin{enumerate}
\item Two future directed timelike curves $\gamma_i : [0,1] \to M$, $i = 0,1$, with $\gamma_0(0) = \gamma_1(0)$ and $\gamma_0(1) = \gamma_1(1)$ are called \emph{timelike homotopic with fixed endpoints} if, and only if, there exists a continuous map $\Gamma : [0,1] \times [0,1] \to M$ such that $\Gamma(t, \cdot)$ is a future directed timelike curve from $\gamma_0(0)$ to $\gamma_0(1)$ for all $t \in [0,1]$  and, moreover, $\Gamma(0, \cdot) = \gamma_0(\cdot)$ and $\Gamma(1,\cdot) = \gamma_1(\cdot)$. The map $\Gamma$ is also called a \emph{timelike homotopy with fixed endpoints between $\gamma_0$ and $\gamma_1$}.
\item We say that $(M,g)$ is \emph{future one-connected} if, and only if, for all $p, q \in M$, any two future directed timelike curves from $p$ to $q$ are timelike homotopic with fixed endpoints.
\end{enumerate}
\end{definition}

The following concept plays an important role in the proof of Theorem \ref{Inex}.

\begin{definition}
Given two sets $A,B \subseteq M$, we say that $A$ and $B$ are \emph{timelike separated by a set $K \subseteq M$} if, and only if, every timelike curve connecting $A$ and $B$ intersects $K$ - i.e., for any timelike curve $\sigma : [0,1] \to M$ with $\sigma(0) \in A$ and $\sigma(1) \in B$ there exists an $s_0 \in [0,1]$ with $\sigma(s_0) \in K$.
\end{definition}
If $K$ is a closed set, then $M \setminus K$ is again a Lorentzian manifold. In this case, we clearly have that $A$ and $B$ are timelike seperated by $K$ if, and only if, $I^+(A,M\setminus K) \cap B = \emptyset$ and $I^-(A, M \setminus K) \cap B = \emptyset$.

\subsection{Extensions of Lorentzian manifolds}

In the following we will tacitly assume that all manifolds are connected.

\begin{definition}
Let $(M,g)$ be a time-oriented Lorentzian manifold with a smooth metric $g$.
\begin{enumerate}
\item Let $k \in \N$. A \emph{$C^k$-extension of $(M,g)$} is a smooth isometric embedding $\iota : M \hookrightarrow \tilde{M}$ of $M$ into a proper subset\footnote{Since we have tacitly assumed that all manifolds are connected, this implies $\overline{\iota(M)} \setminus \iota(M) \neq \emptyset$.} of a Lorentzian manifold $(\tilde{M}, \tilde{g})$, where $\tilde{M}$ is of the same dimension as $M$, and $\tilde{g}$ is a $C^k$-regular metric. 

By slight abuse of terminology, we sometimes also call $\tilde{M}$ the extension of $M$.
\item The Lorentzian manifold $(M,g)$ is called \emph{$C^k$-extendible} if, and only if, there exists a $C^k$-extension of $(M,g)$. Otherwise, $(M,g)$ is called \emph{$C^k$-inextendible}.
\end{enumerate}
\end{definition}

\begin{remark}
\begin{enumerate}
\item Of course, the question of extendibility is also of physical interest for Lorentzian manifolds $(M,g)$ which do not have a smooth metric $g$. However, in this paper, the Lorentzian manifolds under consideration do have a smooth metric.

\item Note that we did not require in the definition of `extendibility' that $(\tilde{M},\tilde{g})$ is also time-orientable. However, this notion of `extendibility' is \emph{not} more general than if one imposed the condition of time-orientability on $(\tilde{M},\tilde{g})$. To see this, recall that every Lorentzian manifold has a time-orientable double cover\footnote{See for example 17 Lemma in Chapter 7 of \cite{ONeill}.}, to which the isometric embedding can be lifted. However, we will not make use of this fact in this paper.

\item The notion of a Lorentzian manifold being extendible is global by nature. Although in Section 3.1 of \cite{HawkEllis} a definition to capture the idea of a `local extension' was suggested, it was shown by Beem in \cite{Beem80} that it suffers from the shortcoming that the Minkowski spacetime then has to be considered as locally extendible. 
\end{enumerate}
\end{remark}

The next Lemma shows that, given an extension, it is always possible find a timelike curve which leaves the original Lorentzian manifold.

\begin{lemma}
\label{ExtensionLemma}
Let $(M,g)$ be a time-oriented Lorentzian manifold with a smooth metric, let $k \in \N$, and let $\iota : M \hookrightarrow \tilde{M}$ be a $C^k$-extension of $(M,g)$. Then, there exists a timelike curve $\tilde{\gamma} : [0,1] \to \tilde{M}$ such that $\tilde{\gamma}\big([0,1)\big) \subseteq \iota(M)$ and $\tilde{\gamma}(1) \in \tilde{M}\setminus \iota(M)$.
\end{lemma}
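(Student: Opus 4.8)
The plan is to exploit the topological fact that $M$ is embedded as a *proper* subset of the connected manifold $\tilde M$, so that the topological boundary $\partial \iota(M) = \overline{\iota(M)} \setminus \iota(M)$ is nonempty (this is exactly the footnote's observation: a connected manifold cannot be both open and closed in a larger connected manifold, so $\iota(M)$ open forces $\partial \iota(M) \neq \emptyset$). Pick a boundary point $\tilde p \in \partial \iota(M)$ and a point $\tilde q \in \iota(M)$ close to it. The idea is to run a short timelike curve in $\tilde M$ starting at $\tilde q$ that reaches $\tilde p$, and then check it has the claimed form (i.e. it leaves $\iota(M)$ exactly at the endpoint). The natural tool for constructing such a short timelike curve is Lemma \ref{NormalForm} applied in $\tilde M$, or more simply the near-Minkowskian coordinates around $\tilde p$.

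More concretely, first I would choose a coordinate chart $\tilde\varphi : \tilde U \to \ed$ centred at $\tilde p$ in which $|\tilde g_{\mu\nu} - m_{\mu\nu}| < \tfrac14$ on $\ed$; this is just the construction in the proof of Lemma \ref{NormalForm} (or Lemma \ref{NormalForm} itself applied to any timelike curve ending at $\tilde p$, if one exists — but it is cleaner to do it directly, since we do not yet have such a curve). In these coordinates the timelike cones are uniformly close to the standard Minkowski cones, so the straight coordinate segment $s \mapsto \tilde\varphi^{-1}(s,0,\ldots,0)$, $s \in (-\varepsilon,0]$, is a past-directed timelike curve ending at $\tilde p$. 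Since $\tilde p \in \overline{\iota(M)}$ and $\iota(M)$ is open, the segment meets $\iota(M)$: choose $s_* \in (-\varepsilon, 0)$ with $\tilde\varphi^{-1}(s_*,0,\ldots,0) \in \iota(M)$. Now I reparametrise the curve $s \mapsto \tilde\varphi^{-1}\big((1-t)s_*,0,\ldots,0\big)$, $t\in[0,1]$: it is timelike, starts in $\iota(M)$, and ends at $\tilde p \in \tilde M \setminus \iota(M)$.

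The remaining point is that $\tilde\gamma([0,1)) \subseteq \iota(M)$ — i.e. that the curve does not exit $\iota(M)$ before the final parameter value. This is the step I expect to need the most care, because a priori the coordinate segment could dip out of the open set $\iota(M)$ and come back. The fix is to restrict attention to the *last* piece of the segment that stays inside: let $t_0 = \sup\{ t \in [0,1) : \tilde\gamma(t) \notin \iota(M)\}$ if this set is nonempty, and $t_0 = 0$ otherwise. Since $\iota(M)$ is open and $\tilde\gamma$ is continuous, for $t > t_0$ we have $\tilde\gamma(t) \in \iota(M)$, while $\tilde\gamma(t_0) \in \tilde M \setminus \iota(M)$ (again because $\iota(M)$ is open, its complement is closed, so the sup of parameters landing in the complement also lands in the complement) — unless $t_0 = 0$ and $\tilde\gamma(0) \in \iota(M)$, in which case the original curve already works. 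Now restrict $\tilde\gamma$ to $[t_0, 1]$ and reparametrise affinely to $[0,1]$; this is again a timelike curve (timelike curves are stable under restriction and affine reparametrisation), its image on $[0,1)$ lies in $\iota(M)$ by construction, and $\tilde\gamma(1) = \tilde p \notin \iota(M)$. Actually it is cleaner to run the argument the other way: work with the *first* exit from $\iota(M)$. Start at the point $\tilde q_* := \tilde\gamma(0) \in \iota(M)$ of the original short segment and set $t_1 = \inf\{ t \in (0,1] : \tilde\gamma(t) \notin \iota(M) \}$; this infimum is attained (complement closed, $t=1$ is in the set since $\tilde p \notin \iota(M)$), so $\tilde\gamma(t_1) \in \tilde M \setminus \iota(M)$ while $\tilde\gamma([0,t_1)) \subseteq \iota(M)$ by minimality, and then restrict-and-reparametrise $\tilde\gamma|_{[0,t_1]}$ to $[0,1]$. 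This produces exactly the curve required by the statement. Note one subtlety worth flagging: the original short segment may consist of several distinct coordinate pieces if the one from Lemma \ref{NormalForm}'s construction is only smooth on a one-sided interval, but here we can take it genuinely smooth on $(-\varepsilon,0]$, so no piecewise-concatenation issue arises, and the resulting curve is honestly a timelike curve in the sense of the paper's definition.
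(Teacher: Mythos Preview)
Your first-exit argument at the end is correct and matches the paper's, but there is a genuine gap earlier. You write: ``Since $\tilde p \in \overline{\iota(M)}$ and $\iota(M)$ is open, the segment meets $\iota(M)$.'' This does not follow. The segment $s \mapsto \tilde\varphi^{-1}(s,0,\ldots,0)$ is one specific curve through $\tilde p$; the fact that $\tilde p$ lies in the closure of the open set $\iota(M)$ only tells you that there are points of $\iota(M)$ in every \emph{neighbourhood} of $\tilde p$, not that there are such points on this particular curve. In your near-Minkowskian chart, the intersection $\iota(M)\cap\tilde U$ could, for instance, lie entirely in the region spacelike-related to the origin (think of a thin wedge approaching $\tilde p$ along the $x_1$-axis), in which case neither the past nor the future $x_0$-axis segment meets $\iota(M)$.

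The paper supplies the missing idea. Pick any $\tilde q \in I^-(\tilde p,\tilde U)$. If $\tilde q \in \iota(M)$, you have your timelike curve from $\iota(M)$ to $\tilde p$ and your first-exit argument finishes the job. If $\tilde q \notin \iota(M)$, then by Proposition~\ref{FutureOpen} the set $I^+(\tilde q,\tilde U)$ is an \emph{open neighbourhood} of $\tilde p$, and since $\tilde p \in \partial\iota(M)$ this neighbourhood must contain some $\tilde r \in \iota(M)$. A timelike curve from $\tilde r$ to $\tilde q$ then runs from $\iota(M)$ into the complement, and your first-exit argument again applies. The point is that one needs an \emph{open} timelike region around $\tilde p$, not just a single timelike ray, to be sure of catching a point of $\iota(M)$; Proposition~\ref{FutureOpen} is exactly what provides this.
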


\begin{proof}
Since $\iota(M)$ is a proper subset of the connected manifold $\tilde{M}$, its boundary $\partial\big(\iota(M)\big)$ is non-empty. Let $\tilde{p} \in \partial\big(\iota(M)\big)$. We can find a small neighbourhood $\tilde{U}$ of $\tilde{p}$ that is time-oriented. Let $\tilde{q} \in I^-(\tilde{p}, \tilde{U})$. We distinguish two cases:
\begin{enumerate}
\item  $\tilde{q} \in \iota(M)$: There exists a timelike curve $\tilde{\gamma} : [0,1] \to \tilde{U}$ with $\tilde{\gamma}(0) = \tilde{q}$ and $\tilde{\gamma}(1) = \tilde{p}$. Let
\begin{equation*}
s_0 := \sup \{ s \in [0,1] \, | \, \tilde{\gamma}\big([0,s)\big) \subseteq \iota(M) \} \;.
\end{equation*}
Since $\iota(M)$ is open in $\tilde{M}$, it follows that $\tilde{\gamma}(s_0) \in \tilde{M} \setminus \iota(M)$. Reparametrising $\tilde{\gamma}|_{[0,s_0]}$ then gives the timelike curve from the statement of the lemma.
\item $\tilde{q} \in \tilde{M}\setminus \iota(M)$: Since $I^+(\tilde{q},\tilde{U})$ is open by Proposition \ref{FutureOpen} and contains $\tilde{p} \in \partial\big(\iota(M)\big)$, it must also contain a point $\tilde{r} \in \iota(M)$. One now considers a timelike curve $\tilde{\gamma} : [0,1] \to \tilde{U}$ with $\tilde{\gamma}(0) = \tilde{r}$ and $\tilde{\gamma}(1) = \tilde{q}$ and proceeds in analogy to the previous case.
\end{enumerate}
\end{proof}

The importance of this lemma stems from the fact that it allows us to `locate' the extension. By this we mean that we can choose a small neighbourhood $\tilde{U}$ of $\tilde{\gamma}(1)$, which then gives us a region $\iota^{-1}(\tilde{U}) $ in the original manifold $M$ through which one extends.  One can then start using the geometry of this region in $M$ to show that no such extension is possible. In particular, the lemma entails that if no timelike curve can leave $M$, then $M$ must be inextendible.

The proof of the $C^0$-inextendibility of the Schwarzschild spacetime presented in this paper is by contradiction. Assuming that there is a $C^0$-extension, the above lemma implies that there must be a timelike curve leaving the Schwarzschild spacetime. It then follows, that such a timelike curve must either leave through the curvature singularity in the interior, or through timelike or null infinity in the exterior. Having located the possible extensions of the Schwarzschild spacetime, we then show that the timelike diameter being infinite in the exterior is a $C^0$-obstruction to extensions (however, one could also use the spacetime volume in a straightforward way as an obstruction), while in the interior, we show that the spacelike diameter being infinite forbids $C^0$-extensions. 

As an instructive introduction to this scheme, we begin by proving the $C^0$-inextendibility of the Minkowski spacetime in the next section. In this case, we only need to capture the obstruction to $C^0$-extensions coming from the infiniteness of the timelike diameter of the Minkowski spacetime.

\section{The $C^0$-inextendibility of the Minkowski spacetime}
\label{SecMink}

Let $d \in \N_{\geq 1}$. The $(d+1)$-dimensional Minkowski spacetime $(\Mmink, m)$ is given by the smooth manifold $\Mmink = \R^{d+1}$ together with the smooth Lorentzian metric $m = -\,dx_0^2 + \, dx_1^2 + \ldots + \, dx_d^2$, where $(x_0, x_1, \ldots, x_d)$ are the canonical coordinates on $\R^{d+1}$. The time orientation is fixed by stipulating that $\frac{\partial}{\partial x_0}$ is future directed.

A standard result in Lorentzian geometry is the $C^2$-inextendibility of the Minkowski spacetime. This follows from combining the fact, that for $C^2$-extensions, one can always find geodesics (even timelike ones) that leave the original Lorentzian manifold, together with the geodesic completeness of $(\Mmink, m)$.


Here, we establish the following 

\begin{theorem}
\label{MinkInex}
For every $d \geq 1$, the $(d+1)$-dimensional Minkowski spacetime $(\Mmink, m)$ is $C^0$-inextendible.
\end{theorem}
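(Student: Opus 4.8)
The plan is to argue by contradiction, following the scheme laid out in the introduction. Suppose $(\Mmink, m)$ admits a $C^0$-extension $\iota : \Mmink \hookrightarrow \tilde M$. By Lemma \ref{ExtensionLemma}, there is a timelike curve $\tilde\gamma : [0,1] \to \tilde M$ with $\tilde\gamma\big([0,1)\big) \subseteq \iota(\Mmink)$ and $\tilde p := \tilde\gamma(1) \in \tilde M \setminus \iota(\Mmink)$. Reparametrise so that $\tilde\gamma$ is defined on $[-1,0]$ with $\tilde\gamma(0) = \tilde p$, and apply Lemma \ref{NormalForm} to $\tilde\gamma$ with a small $\delta$ (say $\delta = \tfrac1{10}$): this gives a near-Minkowskian chart $\varphi : \tilde U \to \ed$ around $\tilde p$ in which the metric is uniformly close to $m$. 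The key geometric fact to extract from this chart is that futures and pasts of points can be sandwiched between Euclidean cones of slope slightly more and slightly less than $1$; in particular, by first shrinking $\varepsilon$, one can pick $s_0 \in (-\varepsilon, 0)$ so that the closure in $\tilde M$ of $V := I^+\big(\tilde\gamma(s_0), \tilde U\big) \cap I^-\big(\tilde\gamma(0), \tilde U\big)$ is a compact subset of $\tilde U$. (This is the cone-estimate step alluded to in the sketch: the forward cone from $\tilde\gamma(s_0)$ and the backward cone from $\tilde\gamma(0)$ only overlap in a small Euclidean diamond whose closure sits inside $\ed$.)

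Next I would transfer the causality comparison to $\Mmink$ itself. For $s_0 < s < 0$ the goal is the identification
\begin{equation*}
I^+\big(\tilde\gamma(s_0), \tilde U\big) \cap I^-\big(\tilde\gamma(s), \tilde U\big) = I^+\big(\tilde\gamma(s_0), \Mmink\big) \cap I^-\big(\tilde\gamma(s), \Mmink\big)\;,
\end{equation*}
where on the right $\tilde\gamma(s_0), \tilde\gamma(s)$ are viewed as points of $\iota(\Mmink)$. The inclusion ``$\subseteq$'' is immediate since $\tilde U \cap \iota(\Mmink) \subseteq \Mmink$ and any timelike curve in $\tilde U$ witnessing a point of the left-hand side, being contained in the compact set $\overline V \subseteq \tilde U \cap \iota(\Mmink)$ (for $V$ as above), is a timelike curve in $\Mmink$. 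For the reverse inclusion ``$\supseteq$'' one uses that Minkowski space is future one-connected: any future directed timelike curve in $\Mmink$ from $\tilde\gamma(s_0)$ to $\tilde\gamma(s)$ is timelike homotopic with fixed endpoints to $\tilde\gamma|_{[s_0,s]}$, and since $\overline V$ is a compact subset of $\tilde U$ and $\tilde\gamma|_{[s_0,s]}$ stays in $V$, the homotopy — being a continuous map into $\tilde M$ with image a connected set containing $\tilde\gamma|_{[s_0,s]}$ — cannot cross the boundary $\partial V$, hence stays inside $\tilde U$; therefore the original $\Mmink$-curve already lay in $\tilde U$, which gives the point on the left. (Here future one-connectedness of Minkowski space is elementary: the space of future directed timelike curves between two causally related points is convex, since a convex combination of timelike curves based at the same endpoints is again timelike for the flat metric.)

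Finally I would derive the contradiction from the timelike diameter. On the left-hand side, $V_s := I^+\big(\tilde\gamma(s_0), \tilde U\big) \cap I^-\big(\tilde\gamma(s), \tilde U\big) \subseteq \overline V$ is contained in a fixed compact subset of $\tilde U$ on which the metric is close to $m$; hence its timelike diameter is bounded by a constant depending only on the Euclidean diameter of $\overline V$ and $\delta$, uniformly in $s \in (s_0, 0)$. On the right-hand side, however, the set $I^+\big(\tilde\gamma(s_0), \Mmink\big) \cap I^-\big(\tilde\gamma(s), \Mmink\big)$ is (the isometric image of) the chronological diamond between two timelike-related points of Minkowski space whose Lorentzian separation $d\big(\tilde\gamma(s_0), \tilde\gamma(s)\big)$ tends to $d\big(\tilde\gamma(s_0), \tilde p\big) > 0$ as $s \nearrow 0$ — wait, that alone is finite; the actual blow-up comes from the fact that $\tilde\gamma(s) \to \tilde p \notin \iota(\Mmink)$, so in $\Mmink$ the points $\tilde\gamma(s)$ escape to infinity, and the timelike diameter of a chronological diamond in Minkowski space between a fixed point and a point receding to timelike/null infinity is unbounded (indeed equal to the unbounded Lorentzian distance, or infinite if the points become null-related "at infinity"). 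Making this precise — identifying how $\tilde\gamma(s)$ behaves in the coordinates of $\Mmink$ as $s \nearrow 0$, and confirming that the Lorentzian diameter of the corresponding diamonds is genuinely unbounded — is the main obstacle, and is exactly where one must rule out the alternative that $\tilde\gamma(s)$ stays bounded in $\Mmink$ (which would contradict $\tilde p \notin \iota(\Mmink)$, using that $\iota$ is a homeomorphism onto its open image). The boundedness of the left-hand side against the unboundedness of the right-hand side then contradicts the identification above, completing the proof.
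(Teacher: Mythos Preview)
Your overall architecture matches the paper's: assume an extension, get $\tilde\gamma$ leaving $\Mmink$, set up near-Minkowskian coordinates via Lemma~\ref{NormalForm}, bound the timelike diameter of the chart-diamond, and contradict this using future one-connectedness of $\Mmink$ together with the divergence of the Lorentzian distance along $\gamma$. Your convexity argument for future one-connectedness of Minkowski space is a nice shortcut compared to the paper's geodesic-straightening homotopy.

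There are, however, two genuine problems.

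\textbf{The ``$\subseteq$'' inclusion is false and not needed.} You assert that $\overline V \subseteq \tilde U \cap \iota(\Mmink)$, but $\overline V$ contains $\tilde\gamma(0) = \tilde p \notin \iota(\Mmink)$; more to the point, for $s<0$ there is no reason the $\tilde U$-diamond $I^+\big(\tilde\gamma(s_0),\tilde U\big)\cap I^-\big(\tilde\gamma(s),\tilde U\big)$ lies inside $\iota(\Mmink)$ at all --- the paper explicitly warns that neither inclusion in your displayed equality holds in general. The paper never proves this set identity. It only uses the homotopy direction you correctly sketch: any timelike curve in $\Mmink$ from $\gamma(s_0)$ to $\gamma(s)$ is timelike homotopic (with fixed endpoints) to $\gamma|_{[s_0,s]}$, and the open/closed argument traps the whole homotopy inside $\tilde U$; hence that curve's length is dominated by the bounded timelike diameter of the chart-diamond. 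You should drop the set equality and argue directly with the homotopy.

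\textbf{The blow-up of the Lorentzian distance is left as a gap.} You explicitly flag ``making this precise'' as ``the main obstacle'' and do not do it; you also momentarily conflate the $\tilde M$-distance $d\big(\tilde\gamma(s_0),\tilde p\big)$ (finite, irrelevant) with the $\Mmink$-distance $d_{\Mmink}\big(\gamma(s_0),\gamma(s)\big)$ (which must be shown to diverge). This is exactly Step~2.2 of the paper: for a future directed, future-inextendible timelike curve $\lambda$ in $\Mmink$, one normalises so that $\lambda(0)=0$ and $\lambda(1)=(\Delta,0,\ldots,0)$, observes that inextendibility forces $\lambda_0(s)\to\infty$, and then uses $m\big(\lambda(s)-\lambda(1),\lambda(s)-\lambda(1)\big)\le 0$ to deduce $-m\big(\lambda(s),\lambda(s)\big)\ge -2m\big(\lambda(s),\lambda(1)\big)+m\big(\lambda(1),\lambda(1)\big)\to\infty$. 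Without this (or an equivalent argument), your contradiction does not close.
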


\begin{proof}
The proof is by contradiction and proceeds in three steps. 
\vspace*{2mm}

\underline{\textbf{Step 1:}}
We assume that there exists a Lorentzian manifold $(\tilde{M}, \tilde{g})$ with a continuous metric $\tilde{g}$ and an isometric embedding $\iota : \Mmink \hookrightarrow \tilde{M}$ such that $\iota(\Mmink)$ is a proper subset of $\tilde{M}$. By Lemma \ref{ExtensionLemma}, we can then find a timelike curve $\tilde{\gamma} : [-1,0] \to \tilde{M}$ such that $\gamma := \iota^{-1} \circ \tilde{\gamma}|_{[-1,0)} : [-1,0) \to \Mmink$ is a timelike curve in $\Mmink$ and $\tilde{\gamma}(0) \in \tilde{M} \setminus \iota(\Mmink)$. Without loss of generality we can assume that $\gamma$ is future directed (otherwise reverse the time orientation), and hence it is future inextendible in $\Mmink$. 
\vspace*{2mm}

\textbf{Step 1.1:}
By Lemma \ref{NormalForm}, after a possible reparametrisation of $\tilde{\gamma}$, there is an $\varepsilon >0$, an open neighbourhood $\tilde{U} \subseteq \tilde{M}$ of $\tilde{\gamma}(0)$, and a chart $\tilde{\varphi} : \tilde{U} \to \ed$ such that
\begin{enumerate}
\item $(\tilde{\varphi} \circ \tilde{\gamma}) (s) = (s,0, \ldots, 0) $ holds for $s \in (-\varepsilon, 0]$
\item $\big| \,\tilde{g}_{\mu \nu}(x) - m_{\mu \nu} \,\big| < \delta $ holds for all $x \in (-\varepsilon , \varepsilon)^{d+1}$\;,
\end{enumerate}
where $\delta >0$ is small and to be fixed in the following.

Let $ 0 < a < 1$ and let $< \cdot, \cdot>_{R^{d+1}}$ denote the Euclidean inner product on $\R^{d+1}$ and $| \cdot |_{\R^{d+1}}$ the associated norm. We introduce the following notation: 
\begin{itemize}
\item $C^+_a := \big{\{} X \in \R^{d+1} \, | \, \frac{<X,e_0>_{\R^{d+1}}}{|X|_{\R^{d+1}}}   > a \big{\}}$
\item $C^-_a := \big{\{} X \in \R^{d+1} \, | \, \frac{<X,e_0>_{\R^{d+1}}}{|X|_{\R^{d+1}}}   < -a \big{\}}$
\item $C^c_a := \big{\{} X \in \R^{d+1} \, | \, -a < \frac{<X,e_0>_{\R^{d+1}}}{|X|_{\R^{d+1}}}   < a \big{\}}$\;.
\end{itemize}
Here, $C^+_a$ is the forward cone of vectors which form an angle of less than $\cos^{-1}(a)$ with the $x_0$-axis, and $C^-_a$ is the corresponding backwards cone. In Minkowski space, the forward and backward cones of timelike vectors correspond to the value $a = \cos(\frac{\pi}{4}) = \frac{1}{\sqrt{2}}$.

Since $\frac{5}{8} < \frac{1}{\sqrt{2}} < \frac{5}{6}$, we can now choose $\delta >0$ such that in the chart $\tilde{\varphi}$ from above all vectors in $C^+_{\nicefrac{5}{6}}$ are future directed timelike, all vectors in $C^-_{\nicefrac{5}{6}}$ are past directed timelike, and all vectors in $C^c_{\nicefrac{5}{8}}$ are spacelike.
\vspace*{2mm}

\textbf{Step 1.2:} We show that for $x \in \ed$ we have the following inclusion relations
\begin{equation}
\label{UpperBoundEstimatesOnPastAndFuture}
\begin{split}
& I^+(x, (-\varepsilon, \varepsilon)^{d+1}) \subseteq \big( x + C^+_{\nicefrac{5}{8}}\big) \cap  \ed \\
& I^-(x, (-\varepsilon, \varepsilon)^{d+1}) \subseteq \big( x + C^-_{\nicefrac{5}{8}}\big) \cap  \ed  \;.
\end{split}
\end{equation}
\vspace*{2mm}

We only prove the first inclusion relation of \eqref{UpperBoundEstimatesOnPastAndFuture}, the second follows by reversing the time orientation. So let $\sigma : [0,L] \to \ed$ be a future directed timelike curve with $\sigma(0) = x$, and we impose that $\sigma$ is parametrised by arc-length with respect to the Euclidean metric on $\R^{d+1}$. Here, $L = L_{\mathrm{Euclidean}}(\sigma) >0$ is the Euclidean length of the curve $\sigma$. Let us first assume that $\sigma$ is smooth. Since $\sigma$ is timelike and future directed, it follows that $\dot{\sigma}(s) \in  C^+_{\nicefrac{5}{8}}$ for all $s \in [0,L]$.

We then compute 
\begin{equation}
\label{RemainsInCone}
\frac{<\sigma(L) - x, e_0>_{\R^{d+1}}}{|\sigma(L) - x|_{\R^{d+1}}} = \frac{\int_0^L<\dot{\sigma}(s'), e_0>_{\R^{d+1}}\,ds'}{|\sigma(L) - x|_{\R^{d+1}}} > \frac{5}{8} \cdot \frac{L}{|\sigma(L) - x|_{\R^{d+1}}} \geq \frac{5}{8} \;,
\end{equation}
where we have used that $L \geq |\sigma(L) - x|_{\R^{d+1}}$. It now follows that $\sigma(L)  \in  x + C^+_{\nicefrac{5}{8}}$. If $\sigma$ is only piecewise smooth, one splits the integral in \eqref{RemainsInCone} into a sum of integrals over the smooth segments of $\sigma$. This proves \eqref{UpperBoundEstimatesOnPastAndFuture}.
\vspace*{2mm}

It now follows from \eqref{UpperBoundEstimatesOnPastAndFuture} that we can choose $s_0 \in (0, \varepsilon)$ such that the closure of 
\begin{equation*}
I^-\big(0, \ed\big) \cap I^+\big((-s_0, 0, \ldots, 0), \ed\big)
\end{equation*}
in $\ed$ is compact.
\vspace*{2mm}

\textbf{Step 1.3:} We show that the timelike diameter of  $I^-\big(0, \ed\big) \cap I^+\big((-s_0, 0, \ldots, 0), \ed\big)$ is bounded. 
\vspace*{2mm}

Clearly, we have 
\begin{equation*}
\diam_t\Big(I^-\big(0, \ed\big) \cap I^+\big((-s_0, 0, \ldots, 0), \ed\big)\Big) = d_{\ed}\big((-s_0, 0, \ldots, 0),0\big) \;,
\end{equation*}
where $d_{\ed}$ is the Lorentzian distance function in $\ed$.
So let $\sigma :[-s_0,0] \to \ed$ be future directed with $\sigma(-s_0) = (-s_0, 0, \ldots, 0)$ and $\sigma(0) = 0$. To simplify notation, we assume again that $\sigma$ is smooth. The general case is not more difficult. 

Since $\dot{\sigma}(s) \in C^+_{\nicefrac{5}{8}}$ for all $s \in [0,1]$, we have $dx_0\big(\dot{\sigma}(s)\big) >0$. It follows that we can reparametrise $\sigma$ so that we can assume without loss of generality that  $\sigma : [-s_0, 0] \to \ed$ is given by
\begin{equation*}
\sigma(s) = \big(s, \underline{\sigma}(s)\big)\;.
\end{equation*}
From $\dot{\sigma}(s) \in C^+_{\nicefrac{5}{8}}$ for all $s \in [-s_0, 0]$, it follows that
\begin{equation*}
\frac{5}{8} < \frac{<\dot{\sigma}(s), e_0>_{\R^{d+1}}}{|\dot{\sigma}(s)|_{\R^{d+1}}} = \frac{1}{\sqrt{1+ |\dot{\underline{\sigma}}(s)|_{\R^d}}} \;.
\end{equation*}
Hence, we obtain $|\dot{\underline{\sigma}}(s)|_{\R^d} < \frac{\sqrt{39}}{5} $ for all $s \in [-s_0,0]$.
It now follows that
\begin{equation*}
\int\limits_{-s_0}^0 \sqrt{-\tilde{g}\big(\dot{\sigma}(s), \dot{\sigma}(s)\big)} \,ds = \int\limits_{-s_0}^0 \sqrt{ - \Big[ \tilde{g}_{00} + 2\sum_{i=1}^d \tilde{g}_{0i} \dot{\underline{\sigma}}_i(s) + \sum_{i,j =1}^d \tilde{g}_{ij} \dot{\underline{\sigma}}_i(s) \dot{\underline{\sigma}}_j(s) \Big]}\, ds
\end{equation*}
is bounded by a constant $C_{td} >0$ which is independent of $\sigma$, since every term in the integral can be bounded by a constant independently of $\sigma$.
\vspace*{2mm}

\underline{\textbf{Step 2:}} We show that for every $C>0$ there exists an $s_1 \in (0, s_0)$ such that, in $\Mmink$, $\gamma|_{[-s_0, -s_1]}$ is timelike homotopic with fixed endpoints to a timelike curve of length greater than $C$.
\vspace*{2mm}

The proof of Step 2 requires two ingredients.
\vspace*{2mm}

\textbf{Step 2.1:} Any future directed timelike curve $\lambda : [0,1] \to \Mmink$  is timelike homotopic with fixed endpoints to the unique Lorentzian length maximising timelike geodesic from $\lambda(0)$ to $\lambda(1)$.  
\vspace*{2mm}

Recall that for $p, q \in \Mmink$ with $p \ll q $, the unique Lorentzian length maximising timelike geodesic from $p$ to $q$ is given by the straight line connecting $p$ with $q$. Also recall that the concatenation $\lambda_0 * \lambda_1 : [0,1] \to \Mmink$ of two curves $\lambda_0 : [0,1] \to \Mmink$ and $\lambda_1 : [0,1] \to \Mmink$ with $\lambda_0(1) = \lambda_1(0)$ is given by
\begin{equation*}
\lambda_0 * \lambda_1 (s) = \begin{cases} \lambda_0(2s) &\textnormal{ for } 0 \leq s \leq \frac{1}{2} \\ \lambda_1(2s -1) \; &\textnormal{ for } \frac{1}{2} \leq s \leq 1 \;.\end{cases} 
\end{equation*}

Without loss of generality we can assume that $\lambda(0) =0$. For $X \in T_0\Mmink \approx \Mmink$, we denote with $\sigma_X : [0,1] \to \Mmink$ the geodesic given by $\sigma_X (s) = s \cdot X$. A timelike homotopy $\Gamma : [0,1] \times [0,1] \to \Mmink$ with fixed endpoints between $\lambda$ and the Lorentzian length maximising geodesic $\sigma_{\lambda(1)}$ is given by
\begin{equation*}
\Gamma(t,s) = \big( \sigma_{\lambda(t)} * \lambda|_{[t,1]}\big) (s) \;,
\end{equation*}
where $\lambda|_{[t,1]}$ is understood to be reparametrised to the interval $[0,1]$ (for example by a linear rescaling).
\begin{center}
\def\svgwidth{7.5cm}
\begingroup%
  \makeatletter%
  \providecommand\color[2][]{%
    \errmessage{(Inkscape) Color is used for the text in Inkscape, but the package 'color.sty' is not loaded}%
    \renewcommand\color[2][]{}%
  }%
  \providecommand\transparent[1]{%
    \errmessage{(Inkscape) Transparency is used (non-zero) for the text in Inkscape, but the package 'transparent.sty' is not loaded}%
    \renewcommand\transparent[1]{}%
  }%
  \providecommand\rotatebox[2]{#2}%
  \ifx\svgwidth\undefined%
    \setlength{\unitlength}{422.03139648bp}%
    \ifx\svgscale\undefined%
      \relax%
    \else%
      \setlength{\unitlength}{\unitlength * \real{\svgscale}}%
    \fi%
  \else%
    \setlength{\unitlength}{\svgwidth}%
  \fi%
  \global\let\svgwidth\undefined%
  \global\let\svgscale\undefined%
  \makeatother%
  \begin{picture}(1,0.63200483)%
    \put(0,0){\includegraphics[width=\unitlength]{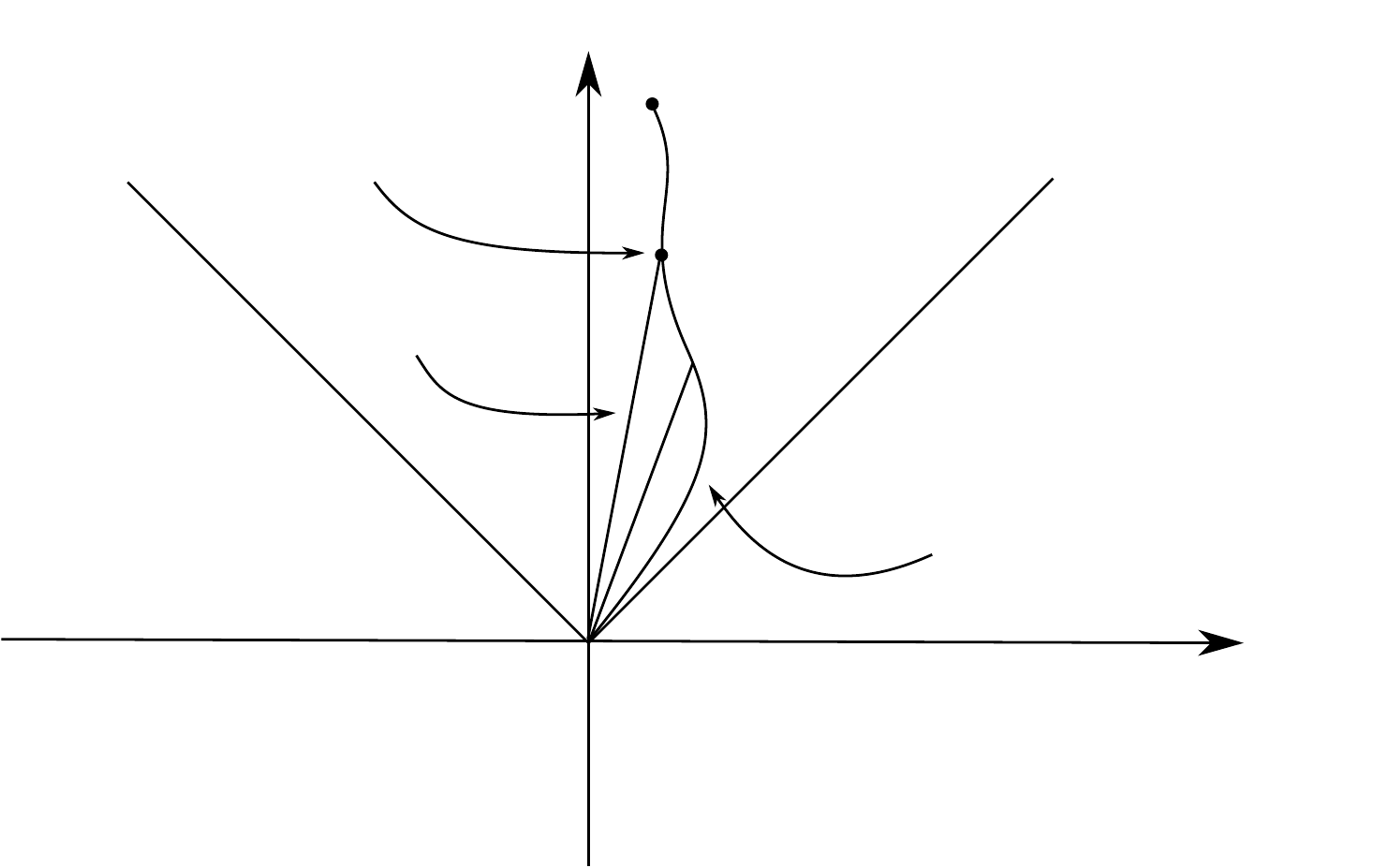}}%
    \put(0.46281847,0.57634398){\color[rgb]{0,0,0}\makebox(0,0)[lb]{\smash{$\lambda(1)$}}}%
    \put(0.2502714,0.50932461){\color[rgb]{0,0,0}\makebox(0,0)[lb]{\smash{$\lambda(t)$}}}%
    \put(0.67153589,0.24316205){\color[rgb]{0,0,0}\makebox(0,0)[lb]{\smash{$\lambda$}}}%
    \put(0.2713346,0.3867749){\color[rgb]{0,0,0}\makebox(0,0)[lb]{\smash{$\sigma_{\lambda(t)}$}}}%
    \put(0.36324688,0.58400335){\color[rgb]{0,0,0}\makebox(0,0)[lb]{\smash{$x_0$}}}%
    \put(0.88599782,0.1225272){\color[rgb]{0,0,0}\makebox(0,0)[lb]{\smash{$\underline{x}$}}}%
  \end{picture}%
\endgroup%

\end{center}
\vspace*{2mm}

\textbf{Step 2.2:} Let $\lambda : [0, \infty) \to \Mmink$ be a future directed and future inextendible timelike curve. Then $d\big(\lambda(0), \lambda(s)\big) \to \infty$ for $s \to \infty$.
\vspace*{2mm}

After a translation of the coordinate system we can again assume that $\lambda(0) = 0$, so that $d\big(\lambda(0), \lambda(s)\big)$ is given by $\sqrt{- m\big(\lambda(s), \lambda(s)\big)}$. Moreover, after an additional Lorentz transformation, we can choose the coordinates on $\Mmink$ such that $\gamma(1) = (\Delta, 0, \ldots, 0)$ for some $\Delta >0$. The future inextendibility of $\lambda = (\lambda_0, \lambda_1, \ldots, \lambda_d)$ implies that
\begin{equation}
\label{FutureInex}
\lambda_0(s) \to \infty \; \textnormal{ for } s \to \infty \;.
\end{equation}
Moreover, since $\lambda$ is a timelike curve, we obtain
\begin{equation}
\label{Timelike}
0 \geq m\big(\lambda(s) - \lambda(1), \lambda(s) - \lambda(1)\big) = m\big(\lambda(s), \lambda(s)\big) - 2 m\big(\lambda(s), \lambda(1)\big) + m\big(\lambda(1), \lambda(1)\big) \;.
\end{equation}
From \eqref{Timelike}, together with \eqref{FutureInex}, it now follows that
\begin{equation*}
-m\big(\lambda(s), \lambda(s)\big) \geq -2 m \big(\lambda(s), \lambda(1)\big) + m\big(\lambda(1), \lambda(1)\big) \to \infty \textnormal{ for } s \to \infty \;,
\end{equation*} 
which proves Step 2.2.
\vspace*{2mm}

We now finish the proof of Step 2. Let $C>0$ be given. By Step 2.2, we can find an $s_1 \in (0, s_0)$ such that $d_{\Mmink}\big(\gamma(-s_0), \gamma(-s_1)\big) > C$, where $d_{\Mmink}$ is the Lorentzian distance function in $(\Mmink, m)$. By Step 2.1, we can find a timelike homotopy $\Gamma : [0,1] \times [0,1] \to \Mmink$ with fixed endpoints between $\gamma|_{[-s_0,-s_1]}$ and the Lorentzian distance maximising geodesic from $\gamma(-s_0)$ to $\gamma(-s_1)$ of Lorentzian length $d_{\Mmink}\big(\gamma(-s_0), \gamma(-s_1)\big) > C$. 
\vspace*{2mm}

\underline{\textbf{Step 3:}} We show that the result of Step 2 contradicts the result of Step 1.3.
\vspace*{2mm}

Let us first point out that, a priori, the bound
\begin{equation}
\label{BoundDiam}
\diam_t\Big(I^-\big(0, \ed\big) \cap I^+\big((-s_0, 0, \ldots, 0), \ed\big)\Big) = d_{\ed}\big((-s_0, 0, \ldots, 0), 0\big) \leq C_{td}
\end{equation} 
from Step 1.3 does \emph{not} rule out the existence of timelike curves \emph{in $\Mmink$} from $\gamma(-s_0)$ to $\gamma(-s_1)$ of Lorentzian length bigger than $C_{td}$, where $s_1 \in (0, s_0)$. However, as we will show now, the bound \eqref{BoundDiam} rules out the existence of timelike curves in $\Mmink$ from $\gamma(-s_0)$ to $\gamma(-s_1)$ of Lorentzian length bigger than $C_{td}$, which, moreover, are timelike homotopic to $\gamma|_{[-s_0, -s_1]}$ with fixed endpoints.

By Step 2, there is an $s_1 \in (0,s_0)$ and a timelike homotopy $\Gamma : [0,1] \times [0,1] \to \Mmink$ with fixed endpoints between $\gamma|_{[-s_0,-s_1]}$ and a timelike curve from $\gamma(-s_0)$ to $\gamma(-s_1)$ of Lorentzian length greater than $C_{td}$. Then, $\iota \circ \Gamma  : [0,1] \times [0,1] \to \tilde{M}$ is a timelike homotopy with fixed endpoints between $\tilde{\gamma}|_{[-s_0,-s_1]}$ and a timelike curve from $\tilde{\gamma}(-s_0)$ to $\tilde{\gamma}(-s_1)$ of Lorentzian length greater than $C_{td}$. 

We claim that $\iota \circ \Gamma$ actually maps into $\tilde{U}$. This is seen as follows: Let $I \subset [0,1]$ denote the set of all $t \in [0,1]$ such that $\iota \circ \Gamma (t, \cdot)$ is a timelike curve in $\tilde{U}$. Clearly, $I$ is non-empty, since $\iota \circ \Gamma (0, \cdot) = \tilde{\gamma}|_{[-s_0, -s_1]}(\cdot)$ (modulo parametrisation). Moreover, the openness of $\tilde{U}$ implies the openness of $I$, and since we have chosen $s_0 \in (0, \varepsilon)$ such that the closure of $I^-\big(\tilde{\gamma}(0), \tilde{U}\big) \cap I^+\big(\tilde{\gamma}(-s_0), \tilde{U}\big)$ in $\tilde{M}$ is contained in $\tilde{U}$, it also follows that $I$ is closed in $[0,1]$. This yields $I = [0,1]$ and hence proves the claim.

Thus, we have shown that $\tilde{\varphi} \circ \iota \circ \Gamma (1, \cdot)$ is, firstly, well-defined, and secondly, is a timelike curve in $\ed$ from $(-s_0, 0, \ldots, 0)$ to $(-s_1, 0, \ldots, 0)$ of length greater than $C_{td}$, which contradicts the bound \eqref{BoundDiam} from Step 1.3. This concludes the proof of Theorem \ref{MinkInex}.
\end{proof}

\section{The $C^0$-inextendibility of the Schwarzschild spacetime}
\label{SecSchw}

\subsection{The Schwarzschild spacetime}

Let $d \in \N_{\geq 3}$, $m>0$, and $r_+ := (2m)^{\frac{1}{d-2}}$. Moreover, we define $D(r) := 1 - \frac{2m}{r^{d-2}} = 1 - \frac{r_+^{d-2}}{r^{d-2}}$.

\subsubsection*{The Schwarzschild exterior}

Consider the smooth manifold $\Mext := \R \times \big((2m)^{\frac{1}{d-2}}, \infty\big) \times \mathbb{S}^{d-1}$, where $m>0$ is a parameter. We denote with $t$ and $r$ the canonical coordinate functions on $\R$ and $\big((2m)^{\frac{1}{d-2}}, \infty\big)$, respectively. A smooth Lorentzian metric $\gext$ is given on $\Mext$  by 
\begin{equation}
\label{SchwarzschildMetricExt}
\gext = -\big(1 - \frac{2m}{r^{d-2}}\big) \,dt^2 + \big(1 - \frac{2m}{r^{d-2}}\big)^{-1}\,dr^2 + r^2\, \mathring{\gamma}_{d-1} \;,
\end{equation}
where $\mathring{\gamma}_{d-1}$ is the standard metric on the unit $(d-1)$-sphere. 
The Lorentzian manifold $(\Mext,\gext)$ is called the \emph{exterior of a $d+1$ dimensional Schwarzschild black hole with mass $m$}. 
It was introduced in 1916 in \cite{Schw16} by Schwarzschild for $d=3$, where he also showed that it is a solution to the vaccum Einstein equations $\mathrm{Ric}(g) =0$. In 1963, Tangherlini generalised Schwarzschild's metric to higher dimensions, cf.\ \cite{Tang63}.

We define a time-orientation on $(\Mext, \gext)$ by stipulating that $\frac{\partial}{\partial t}$ is future directed.

\subsubsection*{The Schwarzschild interior}

Consider the smooth manifold $\Mint := \R \times \big(0,(2m)^{\frac{1}{d-2}}\big) \times \mathbb{S}^{d-1}$, where $m>0$ is a parameter. We denote with $t$ and $r$ the canonical coordinate functions on $\R$ and $\big(0,(2m)^{\frac{1}{d-2}}\big)$, respectively. A smooth Lorentzian metric $\gint$ is given on $\Mint$  by 
\begin{equation}
\label{SchwarzschildMetricInt}
\gint = -\big(1 - \frac{2m}{r^{d-2}}\big) \,dt^2 + \big(1 - \frac{2m}{r^{d-2}}\big)^{-1}\,dr^2 + r^2\, \mathring{\gamma}_{d-1} \;.
\end{equation}
The Lorentzian manifold $(\Mint,\gint)$ is called the \emph{interior of a $d+1$ dimensional Schwarzschild black hole with mass $m$}.  We define a time-orientation on $(\Mint,\gint)$ by stipulating that $-\frac{\partial}{\partial r}$ is future directed.

\subsubsection*{The maximal analytic extension of the Schwarzschild spacetime}
\label{MaxAna}

We first discuss in some more detail the $3+1$-dimensional case. Consider the smooth manifold $\Mmax := H \times \mathbb{S}^2$, where $H := \big{\{} (u,v) \in \R^2 \, | \, uv <1\big{\}}$. We define the four regions
\begin{align*}
I &:= \{u < 0 \} \cap \{ v > 0\} \\
II &:= \{u > 0 \} \cap \{ v > 0\} \\
III &:= \{u < 0 \} \cap \{ v < 0\} \\
IV &:= \{u > 0 \} \cap \{ v < 0\} \;.
\end{align*}
In the following we will define an analytic function $r : H \to (0,\infty)$.

We begin by defining $r^* : (2m, \infty) \to (-\infty, \infty)$ by
\begin{equation}
\label{RStar}
r^*(r) = r + 2m\log\big( \frac{r}{2m} -1\big) \;.
\end{equation}
Note that $r*(r)$ satisfies $\frac{dr^*}{dr} = \frac{1}{D(r)}$. Consider now $F: (2m, \infty) \to (0, \infty)$, given by
\begin{equation}
\label{DefF}
F(r) := e^{\frac{r^*(r)}{2m}} = \big(\frac{r}{2m} -1 \big) e^{\frac{r}{2m}} \;.
\end{equation}
We observe
\begin{enumerate}
\item $F$ is analytic and thus extends by analytic continuation to $\tilde{F} : (0,\infty) \to (-1, \infty)$
\item Also by analytic continuation we obtain from \eqref{DefF} that $\tilde{F}'(r) = \frac{r}{4m^2} e^{\frac{r}{2m}} >0$
\item We have $\tilde{F}(r) \to -1$ for $r \to 0$, and $\tilde{F}(r) \to \infty$ for $r \to \infty$.
\end{enumerate}
Hence, $\tilde{F}$ is bijective and has an analytic inverse $\tilde{F}^{-1} : (-1, \infty) \to (0, \infty)$. We now define $r: H \to (0, \infty)$ by
\begin{equation*}
r(u,v) := \tilde{F}^{-1}(-uv)\;,
\end{equation*}
which is analytic.\footnote{Note that in region $I$ this implies $-uv = e^{\frac{r^*}{2m}}$.}

Finally, the Lorentzian metric on $\Mmax$ is given by
\begin{equation*}
\gmax = -\frac{16m^3}{r} e^{-\frac{r}{2m}} \big(du \otimes dv + dv \otimes du \big) + r^2 \, \mathring{\gamma}_2 \;.
\end{equation*}
We fix the time orientation by demanding that $\partial_v + \partial_u$ is future directed. A Penrose diagram of $(\Mmax, \gmax)$, i.e., $H$, is depicted below.
\begin{center}
\def\svgwidth{4cm}
\begingroup%
  \makeatletter%
  \providecommand\color[2][]{%
    \errmessage{(Inkscape) Color is used for the text in Inkscape, but the package 'color.sty' is not loaded}%
    \renewcommand\color[2][]{}%
  }%
  \providecommand\transparent[1]{%
    \errmessage{(Inkscape) Transparency is used (non-zero) for the text in Inkscape, but the package 'transparent.sty' is not loaded}%
    \renewcommand\transparent[1]{}%
  }%
  \providecommand\rotatebox[2]{#2}%
  \ifx\svgwidth\undefined%
    \setlength{\unitlength}{360.05891113bp}%
    \ifx\svgscale\undefined%
      \relax%
    \else%
      \setlength{\unitlength}{\unitlength * \real{\svgscale}}%
    \fi%
  \else%
    \setlength{\unitlength}{\svgwidth}%
  \fi%
  \global\let\svgwidth\undefined%
  \global\let\svgscale\undefined%
  \makeatother%
  \begin{picture}(1,0.98618783)%
    \put(0,0){\includegraphics[width=\unitlength]{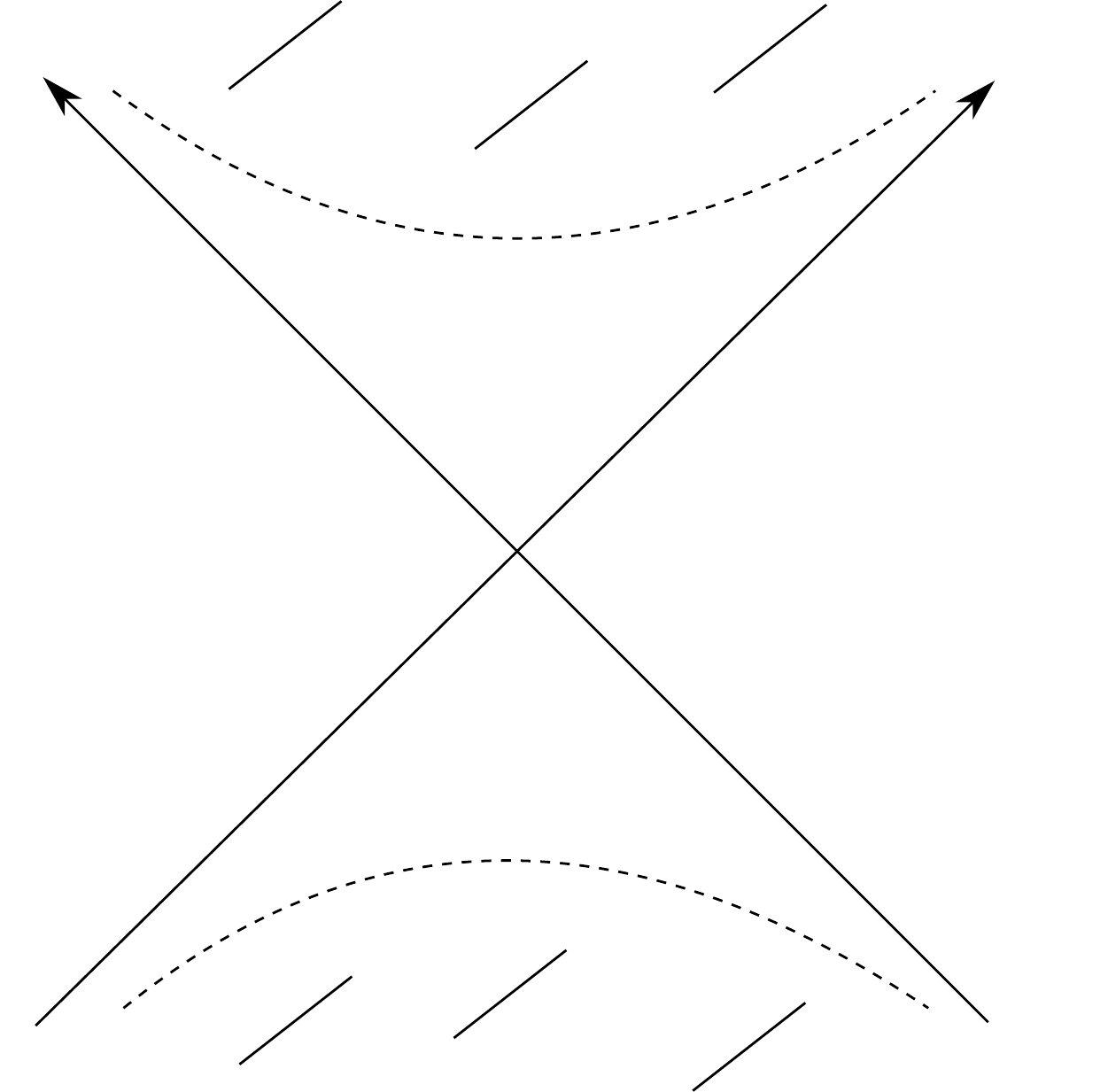}}%
    \put(-0.00590182,0.84700429){\color[rgb]{0,0,0}\makebox(0,0)[lb]{\smash{$u$}}}%
    \put(0.87331934,0.82795976){\color[rgb]{0,0,0}\makebox(0,0)[lb]{\smash{$v$}}}%
    \put(0.73048565,0.46928832){\color[rgb]{0,0,0}\makebox(0,0)[lb]{\smash{$I$}}}%
    \put(0.44164396,0.65021107){\color[rgb]{0,0,0}\makebox(0,0)[lb]{\smash{$II$}}}%
    \put(0.40990308,0.28836544){\color[rgb]{0,0,0}\makebox(0,0)[lb]{\smash{$III$}}}%
    \put(0.10201706,0.46928832){\color[rgb]{0,0,0}\makebox(0,0)[lb]{\smash{$IV$}}}%
  \end{picture}%
\endgroup%

\end{center}
Note that the hypersurfaces $\{u=0\}$ and $\{v=0\}$, which separate the regions $I - IV$, are null hypersurfaces. Moreover, we consider the functions
\begin{equation}
\label{TFunction}
\begin{split}
t_I : I \to \R, \qquad t_I(u,v) &:= 2m \log \big(-\frac{v}{u}\big) \\
t_{II} : II \to \R, \qquad t_{II}(u,v) &:= 2m \log \big(\frac{v}{u}\big) \\
t_{III} : III \to \R, \qquad t_{III}(u,v) &:= 2m \log \big(\frac{v}{u}\big) \\
t_{IV }: IV \to \R, \qquad t_{IV}(u,v) &:= 2m \log \big(-\frac{v}{u}\big) \,.
\end{split}
\end{equation}
It can now be checked that the maps $(u,v, \omega) \mapsto \big(t_{A}(u,v), r(u,v), \omega\big)$, where $\omega \in \mathbb{S}^2$, are time orientation preserving isometries between region $A$ and $\Mext$ for $A \in \{I, IV\}$, and between region $A$ and $\Mint$ for $A \in \{II, III\}$.

The coordinates we have chosen to represent the maximal analytic extension of the Schwarzschild spacetime are due to Kruskal, cf.\ the paper \cite{Krus60} from 1960. Already in 1950, Synge had described this maximal extension in his paper \cite{Synge50}, using, however, a less concise choice of coordinates. In fact, he extended the `metric components' even beyond $\{r=0\}$, however, at $\{r=0\}$ they don't define a Lorentz metric.
\vspace*{3mm}

The form of the maximal analytic Schwarzschild spacetime in dimensions $d >3$ is slightly more complicated due to a slightly more complicated form of the so-called `tortoise coordinate' $r^*(r)$, but otherwise proceeds analogously:

The only zero on the positive real axis of the rational function $D(r) = 1 - \frac{r_+^{d-2}}{r^{d-2}}$ is the simple zero at $r = r_+$. Moreover, note that $D'(r_+) = \frac{d-2}{r_+}$. Hence, we can express $\frac{1}{D(r)}$ as
\begin{equation*}
\frac{1}{D(r)} = \frac{\nicefrac{r_+}{d-2}}{r-r_+} + w(r) \;,
\end{equation*}
where $w(r)$ is an analytic function on $[0,\infty)$. We fix a $c > r_+$ and define an analytic function $r^* : (r_+, \infty) \to (-\infty, \infty)$ by
\begin{equation}
\label{RStar2}
r^*(r) = \int_c^r \frac{1}{D(r')} \, dr' = \int_c^r \frac{\nicefrac{r_+}{d-2}}{r' - r_+} \, dr' + \int_c^r w(r') \, dr' = \frac{r_+}{d-2}\log (r-r_+) + W(r) \;,
\end{equation}
where $W(r)$ is an analytic function on $[0, \infty)$. Note that $r^*(r) \to \infty$ for $r \to \infty$ since $\frac{1}{D(r)} > \frac{1}{2}$ for $r > r_0$, $r_0$ large enough. We proceed by defining an analytic function $F : (r_+ , \infty) \to (0,\infty)$ by
\begin{equation*}
F(r) = e^{\frac{d-2}{r_+} r^*(r)} = (r- r_+) e^{\frac{d-2}{r_+} W(r)} \;.
\end{equation*}
Clearly, $F$ extends analytically to a $\tilde{F} : (0, \infty) \to \big(-r_+ e^{\frac{d-2}{r_+} W(0)}, \infty\big)$, and we compute
\begin{equation*}
\frac{d\tilde{F}}{dr}(r) = \frac{d-2}{r_+} \frac{dr^*}{dr} \tilde{F}(r) = \frac{d-2}{r_+} \frac{1}{D(r)} \tilde{F}(r) = \frac{d-2}{r_+} \frac{r^{d-2}}{r^{d-2} - r_+^{d-2}}(r-r_+) e^{\frac{d-2}{r_+}W(r)} >0 \;,
\end{equation*}
since $r^{d-2} - r_+^{d-2}$ has a simple zero at $r = r_+$.
Hence, $\tilde{F}$ is bijective and has an analytic inverse $\tilde{F}^{-1} : \big(-r_+ e^{\frac{d-2}{r_+} W(0)}, \infty\big) \to (0, \infty)$. 

We now set $H:= \big{\{}(u,v) \in \R^2 \, | \, uv < r_+ e^{\frac{d-2}{r_+} W(0)} \big{\}}$, and define $\Mmax := H \times \Sd$. Moreover, define $r : H \to (0, \infty)$ by $r(u,v) := \tilde{F}^{-1}(-uv)$. Finally, the Lorentzian metric on $\Mmax$ is given by
\begin{equation*}
\gmax = - \frac{2r_+}{(d-2) \tilde{F}'(r)} \big( du \otimes dv + dv \otimes du\big) + r^2 \, \mathring{\gamma}_{d-2} \;.
\end{equation*}
The time orientation is fixed by stipulating that $\partial_u + \partial_v$ is future pointing. We conclude by remarking that $H$ can again be partitioned into four regions $I - IV$, separated by the null hypersurfaces $\{u = 0\}$ and $\{v=0\}$, such that $I$ and $IV$ are isometric to $(\Mext, \gext)$, and $II$ and $III$ are isometric to $(\Mint, \gint)$. The $t$-component of the isometries  is again being given by \eqref{TFunction}, where $2m$ needs to be replaced by $\frac{r_+}{d-2}$. Also recall that we have $r<r_+$ in regions $II$ and $III$, $r > r_+$ in regions $I$ and $IV$, and $r=r_+$ one $\{u=0\}$ and $\{v = 0\}$.

\subsubsection*{The Eddington-Finkelstein $v^*$ coordinate in the exterior}

Using \eqref{RStar}, \eqref{RStar2}, respectively, we define the Eddington-Finkelstein coordinate $v^* : \Mext \to \R$ by
\begin{equation*}
v^*(t,r, \omega) := t + r^*(r)\;.
\end{equation*}
Note that under the above identification (i.e., the one induced by \eqref{TFunction}) of region $I$ with $\Mext$, the Eddington-Finkelstein coordinate $v^*$ is also given by 
\begin{equation}
\label{VRel}
v^* = \frac{2r_+}{d-2} \log v\;. 
\end{equation}
In $(v^*, r)$ coordinates for $\Mext$, the metric \eqref{SchwarzschildMetricExt} takes the form
\begin{equation}
\label{MetricEd}
\gext = -D(r) \, (dv^*)^2 + dv^* \otimes dr + dr \otimes dv^* + r^2 \, \mathring{\gamma}_{d-2} \;.
\end{equation}

\subsection{The main theorem}

\begin{theorem}
\label{MainThm}
The maximal analytic extension $(\Mmax, \gmax)$ of the Schwarzschild spacetime is $C^0$-inextendible.
\end{theorem}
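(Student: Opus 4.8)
The plan is to argue by contradiction along the two-case scheme of the introductory sketch, using the Minkowski argument of Theorem \ref{MinkInex} as the template for the exterior case. Suppose $\iota : \Mmax \hookrightarrow \tilde{M}$ were a $C^0$-extension. By Lemma \ref{ExtensionLemma} there is a timelike curve $\tilde{\gamma} : [0,1] \to \tilde{M}$ with $\tilde{\gamma}([0,1)) \subseteq \iota(\Mmax)$ and $\tilde{\gamma}(1) \in \tilde{M} \setminus \iota(\Mmax)$; reversing the time orientation if necessary we may assume $\gamma := \iota^{-1} \circ \tilde{\gamma}|_{[0,1)}$ is future directed, and then it is future inextendible in $\Mmax$ (otherwise $\tilde{\gamma}(1)$ would lie in $\iota(\Mmax)$). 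The first step is structural: using the Kruskal presentation of $\Mmax$ (the four regions $I,\ldots,IV$ and the functions \eqref{TFunction}) one shows that a future inextendible timelike curve in $\Mmax$ must either (a) enter region $II$ or $III$ with $r \to 0$ along it, or (b) stay in region $I$ or $IV$ with $t \to +\infty$, i.e.\ approach future timelike or null infinity — it cannot, for instance, run off to spatial infinity, since in the exterior $t$ strictly increases along future directed timelike curves and, by \eqref{SchwarzschildMetricExt}, bounded $t$ confines the curve to a compact set, contradicting future inextendibility (the exterior being globally hyperbolic). In both cases one then applies Lemma \ref{NormalForm} to $\tilde{\gamma}$ (reparametrised so that $\tilde{\gamma}(0)$ is the bad point) to obtain a near-Minkowskian chart $\tilde{\varphi} : \tilde{U} \to \ed$ around $\tilde{\gamma}(0)$ with $|\tilde{g}_{\mu\nu} - m_{\mu\nu}| < \delta$ small, so timelike cones of $\tilde{g}$ are trapped between Euclidean cones and one may fix $s_0$ with the closure of $I^+(\tilde{\gamma}(s_0),\tilde{U}) \cap I^-(\tilde{\gamma}(0),\tilde{U})$ compact in $\tilde{U}$, exactly as in Step~1 of Theorem \ref{MinkInex}.

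\textbf{Case (b): escape through timelike or null infinity.} Here I would transcribe the argument of Theorem \ref{MinkInex}, with $\Mmink$ replaced by the Schwarzschild exterior $(\Mext,\gext)$. The two geometric inputs are: first, the exterior is future one-connected — any two future directed timelike curves with the same endpoints are timelike homotopic with fixed endpoints; this can be read off from the $(v^*,r)$-form \eqref{MetricEd}, which exhibits the exterior (after quotienting by the rotations) as a product, or obtained by a direct deformation argument. Second, the timelike distance $d_{\Mext}\big(\gamma(s_0),\gamma(s)\big)$ diverges as $s$ approaches the bad parameter value, because $\gamma$ is future inextendible and escapes to infinity, so arbitrarily long timelike curves join $\gamma(s_0)$ to $\gamma(s)$ (e.g.\ through the nearly Minkowskian region at large $r$, or along static observers) — this is the exterior analogue of Step~2.2 of Theorem \ref{MinkInex}. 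Future one-connectedness then lets one push the timelike homotopy realising such a long curve entirely into $\tilde{U}$, since the homotopy cannot escape the compact closure of the small diamond, contradicting the bound $\diam_t\big(I^+(\tilde{\gamma}(s_0),\tilde{U}) \cap I^-(\tilde{\gamma}(s),\tilde{U})\big) \le C$ furnished by the near-Minkowskian chart, as in Step~3 of Theorem \ref{MinkInex}.

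\textbf{Case (a): escape through $\{r=0\}$.} Here the obstruction is the infiniteness of the spacelike diameter, and one invokes the finiteness theorem for $\diam_s$ stated in the introduction. One must first \emph{localise}: using that the singular set $\{r=0\}$ is spacelike — in the Kruskal picture it is the spacelike hyperbola $uv = r_+ e^{\frac{d-2}{r_+}W(0)}$ — one produces a $\mu > 0$ with $I^+(\tilde{\gamma}(-\mu),\Mmax)$ swallowed by $\tilde{U}$; more precisely, one fixes a spherically symmetric globally hyperbolic region $N \subseteq \Mmax$ sitting between a constant-$t$ slice and $\{r=0\}$, small enough that an open angular segment of $N$ lies inside $I^+(\tilde{\gamma}(-\mu),\Mmax) \subseteq \tilde{U}$. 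On that angular segment the near-Minkowskian chart $\tilde{\varphi}$ restricts to a regular flow chart for $N$: its metric components satisfy $|\tilde{g}_{\mu\nu} - m_{\mu\nu}| < \delta$ with $\tilde{g}_{00}$ bounded away from $0$, its image has the required cylindrical form over a Euclidean-bounded base, and its flow lines $s \mapsto \tilde{\varphi}^{-1}(s,\ux)$ are inextendible in $N$ (to the future they run into $\{r=0\} \notin N$, to the past they exit $N$ through its past boundary). Using the $SO(d)$ action on $\Mmax$ one rotates this chart through finitely many positions to a finite family of regular flow charts covering all of $N$. Finally one checks $\diam_s(N) = \infty$: this is where the divergence of the $dt^2$-coefficient $-D(r) = -(1 - r_+^{d-2}/r^{d-2}) \to -\infty$ as $r \to 0$ in \eqref{SchwarzschildMetricInt} enters — a Cauchy slice of $N$ pushed toward $\{r=0\}$ contains, at a fixed small value of $r$, a curve whose length is at least $\sqrt{|D(r)|}$ times its $t$-coordinate extent, which grows without bound. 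This contradicts the spacelike-diameter finiteness theorem, ruling out Case~(a).

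The main obstacle in both cases — and the conceptual heart of the proof — is the transfer of causal and geometric information between the tiny neighbourhood $\tilde{U} \subseteq \tilde{M}$ and the noncompact manifold $\Mmax$: the relations $I^\pm(\cdot,\tilde{U})$ and $I^\pm(\cdot,\Mmax)$ genuinely differ and the naive identification fails in both inclusions. In Case~(b) this is overcome by future one-connectedness of the exterior together with the compactness of the closure of the small diamond in $\tilde{U}$; in Case~(a) it is overcome by the spacelike character of the singularity (which traps $I^+(\tilde{\gamma}(-\mu),\Mmax)$ near the bad point) together with the construction of the finite regular-flow-chart cover of $N$ from a single near-Minkowskian chart via the rotational symmetry. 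Carrying out these two localisation arguments rigorously — in particular verifying the regular-flow-chart axioms for the restricted chart and establishing $\diam_s(N) = \infty$ — is where the bulk of the work lies; the remaining steps are essentially those already executed for Minkowski space in Theorem \ref{MinkInex}.
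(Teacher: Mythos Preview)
Your proposal is correct and follows essentially the same approach as the paper: the proof of Theorem~\ref{MainThm} is reduced, via a case analysis on how the future-inextendible curve $\gamma$ leaves $\Mmax$, to Theorems~\ref{Exterior} and~\ref{Inex}, which are exactly your cases (b) and (a). The paper carries out the dichotomy more cleanly using the Kruskal coordinates --- since $\partial_u,\partial_v$ are future-directed null, one has $\dot\gamma_u,\dot\gamma_v>0$, and monotonicity forces either $v\to\infty$ (hence $v^*\to\infty$, your case (b)) or $\gamma$ enters region~$II$ with $uv$ approaching the boundary value (hence $r\to 0$, your case (a)); note that a \emph{future}-directed curve cannot enter region~$III$, and the correct exterior criterion is $v^*\to\infty$ rather than $t\to\infty$, though these agree in region~$I$.
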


This theorem can be easily deduced from the following two theorems:

\begin{theorem}
\label{Exterior}
There does not exist a $C^0$-extension $\iota : \Mext \hookrightarrow \tilde{M}$  and a timelike curve $\gamma : [-1, 0) \to \Mext$ with $(v^* \circ \gamma)(s) \to \infty$ for $s \nearrow 0$ such that $\iota \circ \gamma : [-1,0) \to \tilde{M}$ can be extended as a timelike curve to $[-1,0]$.
\end{theorem}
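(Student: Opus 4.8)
The plan is to adapt the three-step strategy behind the proof of Theorem \ref{MinkInex} to the Schwarzschild exterior. Assume, for contradiction, that $\iota$ and $\gamma$ as in the statement exist, write $\tilde\gamma : [-1,0] \to \tilde M$ for the timelike extension of $\iota \circ \gamma$, and assume without loss of generality that $\gamma$ is future directed (the past directed case being analogous by time reversal), so that $\gamma$ is future inextendible in $\Mext$. First I would apply Lemma \ref{NormalForm} to $\tilde\gamma$ to obtain, for a small $\delta > 0$ to be fixed, a near-Minkowskian chart $\tilde\varphi : \tilde U \to \ed$ with $\tilde\varphi \circ \tilde\gamma(s) = (s,0,\ldots,0)$ for $s \in (-\varepsilon,0]$ and $|\tilde g_{\mu\nu}(x) - m_{\mu\nu}| < \delta$ on $\ed$. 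Exactly as in the Minkowski proof, the cone estimates then let me fix $s_0 \in (-\varepsilon,0)$ so that the closure in $\tilde M$ of $I^+(\tilde\gamma(s_0),\tilde U) \cap I^-(\tilde\gamma(0),\tilde U)$ is a compact subset of $\tilde U$, and the uniform bound on the metric components yields a constant $C_{td} > 0$ such that $\diam_t\big(I^+(\tilde\gamma(s_0),\tilde U) \cap I^-(\tilde\gamma(s),\tilde U)\big) \le C_{td}$ for all $s \in (s_0,0)$.

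The crucial step is to transfer this bound to $\Mext$: I claim $d_{\Mext}(\gamma(s_0),\gamma(s)) \le C_{td}$ for every $s \in (s_0,0)$. Let $\sigma$ be any future directed timelike curve in $\Mext$ from $\gamma(s_0)$ to $\gamma(s)$; since $\tilde\gamma|_{[s,0]} \subseteq \tilde U$ is future timelike one has $I^-(\tilde\gamma(s),\tilde U) \subseteq I^-(\tilde\gamma(0),\tilde U)$. Here I would invoke that the Schwarzschild exterior is \emph{future one-connected} — a fact I expect to extract from its static product structure $\Mext \cong \R_t \times \big((r_+,\infty) \times \Sd\big)$ with simply connected spatial slices (this uses $d \geq 3$) — so that $\iota \circ \sigma$ and $\tilde\gamma|_{[s_0,s]}$ are timelike homotopic with fixed endpoints in $\tilde M$. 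The open--closed argument of Step 3 of the Minkowski proof, using the openness of $\tilde U$ together with the compactness of the closure of $I^+(\tilde\gamma(s_0),\tilde U) \cap I^-(\tilde\gamma(0),\tilde U)$, then shows that the whole homotopy, and in particular $\iota \circ \sigma$, stays inside $\tilde U$; hence $\iota \circ \sigma$ is a timelike curve in $\tilde U$ from $\tilde\gamma(s_0)$ to $\tilde\gamma(s)$, so $L(\sigma) = L(\iota \circ \sigma) \le C_{td}$. Taking the supremum over $\sigma$ proves the claim.

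The contradiction comes from the infiniteness of the timelike diameter of the exterior, in the localized form $\diam_t\big(I^+(\gamma(s_0),\Mext) \cap I^-(\gamma(s),\Mext)\big) = d_{\Mext}(\gamma(s_0),\gamma(s)) \to \infty$ as $s \nearrow 0$. This is exactly where the hypothesis $(v^*\circ\gamma)(s) \to \infty$ enters: it forces $\gamma$ to accumulate at future null or timelike infinity rather than at the horizon (along which $v^*$ stays bounded), and, since the retarded coordinate $u^* = t - r^*(r)$ is strictly increasing along future directed timelike curves in the exterior, it follows that $u_\infty := \sup_{s \in [-1,0)} u^*(\gamma(s)) > u^*(\gamma(s_0))$. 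Given $C > 0$, I would produce a point $q \in \Mext$ with $u^*(q) < u_\infty$ and $d_{\Mext}(\gamma(s_0),q) > C$: if $u_\infty = \infty$ ($\gamma$ accumulates at timelike infinity) a long nearly static segment issuing from $\gamma(s_0)$ at a fixed radius already works, since $D(r)$ is bounded away from $0$ there; if $u_\infty < \infty$ ($\gamma$ accumulates at null infinity) one instead runs an almost null, near-radial outgoing timelike curve from $\gamma(s_0)$ out to sufficiently large radius, the positive gap $u_\infty - u^*(\gamma(s_0))$ keeping it substantially timelike, exactly as in Step 2.2 of the Minkowski proof. Since $u^*(q) < u_\infty$ and $(v^*\circ\gamma)(s) \to \infty$, one has $q \ll \gamma(s)$ for all $s$ close enough to $0$, and hence $d_{\Mext}(\gamma(s_0),\gamma(s)) \ge d_{\Mext}(\gamma(s_0),q) > C$, contradicting the previous paragraph.

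I expect the main obstacle to be twofold. The first is establishing future one-connectedness of the Schwarzschild exterior: deforming an arbitrary future directed timelike curve between two fixed points into a reference one \emph{through timelike curves} is not immediate, because the naive spatial straight-line homotopy need not stay timelike, so one has to exploit the static splitting together with the simple connectedness of the slices. The second is the lower bound on $d_{\Mext}(\gamma(s_0),\gamma(s))$: one must first make precise, in terms of the asymptotics of $(v^*, u^*, r)$ along $\gamma$, the statement that $\gamma$ ``leaves through timelike or null infinity'', and then construct the long competitor timelike curves in each case, the null infinity case --- where the competitors have to hug the light cone --- being the more delicate one.
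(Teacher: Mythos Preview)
Your proposal is correct and follows essentially the same three-step architecture as the paper: Step~1 (near-Minkowskian chart, cone estimates, bounded timelike diameter of $I^+(\tilde\gamma(s_0),\tilde U)\cap I^-(\tilde\gamma(0),\tilde U)$) and Step~3 (the open--closed argument trapping the timelike homotopy in $\tilde U$) are reproduced verbatim from the Minkowski proof, and the two genuine Schwarzschild ingredients you isolate---future one-connectedness of $(\Mext,\gext)$ and $d_{\Mext}(\gamma(s_0),\gamma(s))\to\infty$ whenever $(v^*\circ\gamma)(s)\to\infty$---are exactly the paper's Propositions~\ref{ExtFutCon} and~\ref{LengthToInfty}.

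The only notable divergences are in how you propose to establish these two ingredients. For future one-connectedness you invoke the simple connectedness of the spatial slices; the paper instead passes to the conformal static metric $h_{\mathrm{ext}}=D(r)^{-1}\gext$, reduces to a Riemannian ``timelike lifting'' problem on $(\overline M_{\mathrm{ext}},\overline h_{\mathrm{ext}})$, straightens the spherical part via the exponential map on $\Sd$, and finishes in a $2$-dimensional totally geodesic leaf of negative Gauss curvature using Cartan--Hadamard. Simple connectedness alone does not suffice---you correctly flag that keeping the homotopy timelike is the real issue---so you would likely be driven to a similar construction. For the distance blow-up you split on whether $u_\infty:=\sup u^*(\gamma)$ is finite, whereas the paper splits on whether $(r\circ\gamma)(s)\to\infty$; both dichotomies work, but note that in your null-infinity case the competitor curve must also handle the angular directions (your claim ``$u^*(q)<u_\infty$ and $v^*(\gamma(s))\to\infty$ imply $q\ll\gamma(s)$'' is not literally true without adjusting $\omega$), which is precisely why the paper's construction in Case~2 of Proposition~\ref{LengthToInfty} concatenates three pieces $\sigma_1*\sigma_2*\sigma_3$ with an initial angular correction.
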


\begin{theorem}
\label{Inex}
There does not exist a $C^0$-extension $\iota : \Mint \hookrightarrow \tilde{M}$  and a timelike curve $\gamma : [-1, 0) \to \Mint$ with $(r \circ \gamma)(s) \to 0$ for $s \nearrow 0$ such that $\iota \circ \gamma : [-1,0) \to \tilde{M}$ can be extended as a timelike curve to $[-1,0]$.
\end{theorem}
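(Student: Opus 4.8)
As with Theorem \ref{MinkInex}, the proof is by contradiction, but the timelike-diameter obstruction is now replaced by the spacelike-diameter obstruction of the theorem above on regular flow charts. Given $\iota$, the timelike curve $\gamma$ and its timelike extension $\tilde\gamma:[-1,0]\to\tilde M$, I would (1) localise the situation near the boundary point $\tilde p:=\tilde\gamma(0)$ using Lemma \ref{NormalForm}, (2) build a spherically symmetric globally hyperbolic region $N\subseteq\Mint$ hugging the singularity $\{r=0\}$ whose spacelike diameter is infinite, and (3) cover $N$ by finitely many regular flow charts, obtained from the near-Minkowskian chart via the rotational symmetry of $\gmax$, thereby contradicting the finiteness of $\diam_s(N)$.

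\textbf{Step 1: set-up and localisation.} Since $-\partial_r$ is future directed and timelike curves in $\Mint$ are $r$-monotone, the hypothesis $r\circ\gamma\to0$ forces $\gamma$ to be future directed with $r\circ\gamma$ strictly decreasing; moreover $\tilde p\in\partial\big(\iota(\Mint)\big)\setminus\iota(\Mint)$ (if $\tilde p=\iota(q)$ we would get $r\circ\gamma\to r(q)>0$). Apply Lemma \ref{NormalForm} to $\tilde\gamma$ with a small $\delta$ to obtain a near-Minkowskian chart $\tilde\varphi:\tilde U\to\ed$ centred at $\tilde p$, and, exactly as in Step 1.1 of the proof of Theorem \ref{MinkInex}, fix $\delta$ so that future (past) timelike directions in $\tilde U$ lie inside the Euclidean cone $C^+_{\nicefrac{5}{6}}$ (resp. $C^-_{\nicefrac{5}{6}}$) and every spacelike direction in $C^c_{\nicefrac{5}{8}}$. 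The crucial point is to establish
\begin{equation}\label{LocClaim}\parbox{.85\textwidth}{There is a $\mu>0$ such that $\iota\big(I^+(\gamma(-\mu),\Mint)\big)\subseteq\tilde U$.}\end{equation}
This is where the \emph{spacelike} character of $\{r=0\}$ enters quantitatively. Since $g_{tt}=-D(r)=\tfrac{2m}{r^{d-2}}-1>0$ and $g_{rr}=D(r)^{-1}<0$ for $0<r<r_+$, the condition $\gmax(\dot\sigma,\dot\sigma)<0$ along a future directed timelike curve $\sigma$ in $\Mint$, parametrised by the strictly decreasing coordinate $r$, yields $\big|\tfrac{dt}{dr}\big|<\tfrac1{|D(r)|}$ and $\big|\tfrac{d\omega}{dr}\big|<\tfrac1{r\sqrt{|D(r)|}}$; both right-hand sides are integrable on $(0,r_\mu)$ with integrals tending to $0$ as $r_\mu:=r(\gamma(-\mu))\to0$. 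Hence $I^+(\gamma(-\mu),\Mint)$ is contained in a coordinate box around $\gamma(-\mu)$ shrinking onto $\{r=0\}$ as $\mu\to0$; and since $|g_{tt}|$ blows up like $r^{-(d-2)}$ while the $t$-extent of that box shrinks faster, the $\tilde g$-diameter of the image of the box under $\iota$ tends to $0$. Together with $\tilde\gamma(-\mu)\to\tilde p\in\tilde U$ and the cone estimates in $\tilde U$, this gives \eqref{LocClaim} for $\mu$ small enough.

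\textbf{Step 2: a spherically symmetric region of infinite spacelike diameter.} Fix the above $\mu$, write $t_*:=t(\gamma(-\mu))$, $\omega_*:=\omega(\gamma(-\mu))$, and choose $\rho\in(0,r_\mu)$ small and $\chi_0$ with $\int_0^\rho\tfrac{dr}{|D(r)|}<\chi_0<\int_0^{r_\mu}\tfrac{dr}{|D(r)|}-\int_0^\rho\tfrac{dr}{|D(r)|}$ (possible for $\rho$ small since $\int_0^\rho\tfrac{dr}{|D(r)|}\to0$). Let $\Sigma_0$ be the spherically symmetric ball $\{r=\rho\}\cap\{|t-t_*|<\chi_0\}$ inside the spacelike slice $\{r=\rho\}$, and let $N$ be the interior of its future Cauchy development $D^+(\Sigma_0,\Mint)$. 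Then $(N,\gmax|_N)$ is connected and globally hyperbolic, the slices $\{r=c\}\cap N$ for $0<c<\rho$ are Cauchy hypersurfaces of $N$, and since the ingoing null generators of $\partial N$ use only a total ``$t$-budget'' $\int_0^\rho\tfrac{dr}{|D(r)|}<\chi_0$, the $t$-extent of $\{r=c\}\cap N$ is bounded below by $2\chi_0-2\int_0^\rho\tfrac{dr}{|D(r)|}>0$ uniformly in $c$. The metric induced on $\{r=c\}\cap N$ is $|D(c)|\,dt^2+c^2\,\mathring{\gamma}_{d-1}$, so its diameter is at least $\sqrt{|D(c)|}\,\big(2\chi_0-2\int_0^\rho\tfrac{dr}{|D(r)|}\big)$, which tends to $\infty$ as $c\to0$ because $|D(c)|=\tfrac{2m}{c^{d-2}}-1\to\infty$; hence $\diam_s(N)=\infty$. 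Finally, the choices of $\rho$, $\chi_0$ and of a sufficiently small angular width $\eta$ ensure (via the integrability estimates of Step 1) that the open angular segment $\{d_{\Sd}(\omega,\omega_*)<\eta\}\cap N$ of $N$ lies in $I^+(\gamma(-\mu),\Mint)$, and hence, by \eqref{LocClaim}, is mapped by $\iota$ into $\tilde U$.

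\textbf{Step 3: flow charts and the contradiction; main obstacle.} On the angular segment of $\iota(N)$ lying in $\tilde U$, the chart $\tilde\varphi$ precomposed with $\iota$ furnishes, after a routine shrinking so that its domain has the cylindrical form $\bigcup_{\ux\in B}I_{\ux}\times\{\ux\}$ and the $x_0$-lines $s\mapsto(\tilde\varphi\circ\iota)^{-1}(s,\ux)$ are inextendible in $N$, a \emph{regular flow chart} for $N$: property $i)$ holds since $\tilde g$ is $\delta$-close to $m$ in $\tilde\varphi$, property $ii)$ since $B\subseteq(-\varepsilon,\varepsilon)^d$ has finite Euclidean diameter, and property $iii)$ by construction. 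As $\gmax$ is spherically symmetric and this chart covers an open angular segment, applying finitely many rotations $R_{g_1},\dots,R_{g_K}\in SO(d)$ with $\bigcup_iR_{g_i}$ of the segment equal to all of $N$ produces a finite family of regular flow charts covering $N$ — each $R_{g_i}$-rotate is again a regular flow chart since $R_{g_i}$ is an isometry and the three defining properties are isometry-invariant. The theorem above on the finiteness of the spacelike diameter then gives $\diam_s(N)<\infty$, contradicting Step 2; hence no such $C^0$-extension $\iota$ and timelike curve $\gamma$ exist. The decisive difficulty is the localisation claim \eqref{LocClaim}: one must control the chronological future taken \emph{inside the incomplete manifold} $\iota(\Mint)$ using only the $C^0$-closeness of $\tilde g$ to the Minkowski metric near $\tilde p$ and the (degenerating) smooth structure of $\Mint$ away from $\{r=0\}$ — that is, one must make the heuristic ``the Schwarzschild singularity is spacelike'' rigorous at this low regularity. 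Subsidiary technical points are to verify that $N$ is genuinely globally hyperbolic with the $\{r=c\}$ as Cauchy hypersurfaces, and that the near-Minkowskian chart really descends to a regular flow chart on an angular segment of $N$, in particular that after the shrinking its $x_0$-lines are inextendible in $N$.
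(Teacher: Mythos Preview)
Your overall three-step strategy matches the paper's exactly, and your Steps 2 and 3 are essentially correct (modulo the technical points you yourself flag). The genuine gap is your argument for the localisation claim \eqref{LocClaim}.

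You argue that because $I^+(\gamma(-\mu),\Mint)$ sits in a shrinking $(t,r,\omega)$-coordinate box, ``the $\tilde g$-diameter of the image of the box under $\iota$ tends to $0$,'' and conclude that this image must eventually be contained in $\tilde U$. This does not work. There is no Lorentzian ``diameter'' that controls containment in the way a Riemannian distance would; the paper explicitly warns (in the sketch) that identifying subsets of $\tilde U$ with subsets of $\Mint$ ``would be much simpler in Riemannian geometry, since one could use the distance function.'' Concretely, $\iota$ is a \emph{Lorentzian} isometry and preserves no auxiliary Riemannian metric on $\tilde M$, so smallness of any $g_{\mathrm{int}}$-computed length of spacelike curves in the box says nothing about where $\iota$ sends the box relative to $\tilde U$. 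Moreover, even the heuristic numerology fails: on the slice $\{r=c\}$ the induced metric has $|D(c)|\,dt^2$ with $|D(c)|\sim c^{-(d-2)}$, while the $t$-extent of your box is of order $r_\mu^{\,d-1}$; the product $\sqrt{|D(c)|}\cdot r_\mu^{\,d-1}$ is not uniformly small as $c\to 0$ inside the box, so there is no obvious shrinking ``spacelike width'' to exploit even if a diameter argument were available.

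The paper's route to \eqref{LocClaim} is quite different and uses two ingredients you do not mention. First, it proves that $(\Mint,\gint)$ is \emph{future one-connected} (Proposition \ref{IntFutCon}); this is used, exactly as in Step 3 of the Minkowski proof, to show that for $y_0^-<s<0$ the causal diamonds $I^-(\gamma(s))\cap I^+(\gamma(y_0^-))$ computed in $\Mint$ and in the near-Minkowskian chart $\tilde U$ coincide. Second, the paper introduces the notion of \emph{timelike separation}: it constructs a compact set $K\subseteq\Mint$ (contained in one of these diamonds, hence mapped by $\iota$ into $\tilde U$) such that every past-directed timelike curve from $\gamma((y_0^+,0))$ to $I^-(\gamma(y_0^-),\Mint)$ must cross $K$. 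The integrability estimates you cite (Lemma \ref{BoundsOnReach}) are then used, together with the $t$-translation and $SO(d)$ symmetries of $\gint$, to thicken this separation to one for all of $I^+(\gamma(-\mu),\Mint)$, for $\mu$ small. A final contradiction argument---if a future timelike curve from $\gamma(-\mu)$ escaped the relevant cone region in $\tilde U$, one could build a past timelike curve back to $\gamma(x_0^-)$ along the boundary of that cone, avoiding the image of the separating set---yields \eqref{LocClaim}. Your proposal would need to incorporate the future one-connectedness of $\Mint$ and this separation argument; the shrinking-box heuristic cannot substitute for them.
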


\begin{proof}[Proof of Theorem \ref{MainThm} from Theorems \ref{Exterior} and \ref{Inex}:]
The proof is by contradiction, so we assume that there is a $C^0$-extension $(\tilde{M}, \tilde{g})$  of $(\Mmax, \gmax)$, where  $\iota : \Mmax \hookrightarrow \tilde{M}$ is the isometric embedding. By Lemma \ref{ExtensionLemma}, there exists a timelike curve $\tilde{\gamma} : [-1,0] \to \tilde{M}$ such that $\gamma := \iota^{-1} \circ \tilde{\gamma}|_{[-1, 0)} \to \Mmax$ is a timelike curve in $\Mmax$ and $\tilde{\gamma}(0) \in \tilde{M} \setminus \iota(\Mmax)$. Without loss of generality we can assume that $\gamma$ is future directed (otherwise reverse the time orientation of $\Mmax$), and thus it is future inextendible in $\Mmax$.

Since $\partial_u$ and $\partial_v$ are future directed, we obtain from $\gmax(\dot{\gamma}, \partial_u) <0$ and $\gmax(\dot{\gamma}, \partial_v)<0$ that $\dot{\gamma}_u >0$ and $\dot{\gamma}_v >0$. Since $\gamma$ is also future inextendible in $\Mmax$, we obtain that one of the following cases must hold:
\begin{enumerate}[(i)]
\item $\gamma_u(s) \nearrow u_0 \leq 0$ and $\gamma_v(s) \nearrow \infty$ for $s \nearrow 0$
\item $\gamma_u(s) \nearrow \infty$ and $\gamma_v(s) \nearrow v_0  \leq 0$ for $s \nearrow 0$
\item $\gamma_u(s) >0$ and $\gamma_v(s) >0$ for $s$ close enough to $0$ and $\gamma_u(s) \cdot \gamma_v(s) \nearrow r_+ e^{\frac{d-2}{r_+}W(0)}$ (or $\nearrow 1$ for $d=3$) for $s \nearrow 0$.
\end{enumerate} 
We distinguish the following cases:
\begin{itemize}
\item
In the cases (i) and (ii),  $\gamma$ is eventually contained in either region $I$ or in region $IV$. Since both regions are isometric, let us assume without loss of generality, that $\gamma$ is contained in region $I$, which we identify with $\Mext$ as explained above. It then follows from \eqref{VRel}, that $(v^* \circ \gamma) (s) \to \infty$ for $s \nearrow 0$. Moreover, the extension $\tilde{M}$ clearly furnishes a $C^0$-extension of $\Mext$ in which $\gamma$ can be extended to the future as a timelike curve. This, however, is a contradiction to Theorem \ref{Exterior}.
\item
In the case (iii), $\gamma$ is eventually contained in region $II$, which we identify with $\Mint$. It follows that $(r \circ \gamma)(s) \to 0$ for $s \nearrow 0$. Moreover, $\tilde{M}$ then also furnishes a $C^0$-extension of $\Mint$ in which $\gamma$ can be extended to the future as a timelike curve. This is in contradiction to Theorem \ref{Inex}.
\end{itemize}
\end{proof}
It thus remains to prove the Theorems \ref{Exterior} and \ref{Inex}.

\subsection{Proof of Theorem \ref{Exterior}}

The proof of Theorem \ref{Exterior} proceeds in analogy to the proof of Theorem \ref{MinkInex}. However, we need to prove the analogous statement to Step 2 for the case of the Schwarzschild exterior. Step 2.1 will be replaced by Proposition \ref{ExtFutCon}, and Step 2.2 by Proposition \ref{LengthToInfty}.

\begin{proposition}
\label{ExtFutCon}
The exterior of the Schwarzschild spacetime $(\Mext, \gext)$ is future one-connected.
\end{proposition}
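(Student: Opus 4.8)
The plan is to exploit the product structure of the Schwarzschild exterior and reduce the claim to a statement about a topologically simple building block. Recall $\Mext = \R \times (r_+,\infty) \times \Sd$, and observe that the first two factors, carrying the coordinates $(t,r)$, form a contractible (indeed simply connected, and in fact causally very well-behaved) $1+1$ dimensional Lorentzian manifold after one uses the Eddington--Finkelstein form \eqref{MetricEd}, or equivalently the double-null Kruskal coordinates restricted to region $I$. The only source of nontrivial topology or of nontrivial causal ``winding'' is the sphere factor $\Sd$, which has $\pi_1(\Sd) = 0$ for $d \geq 3$; this is precisely where the hypothesis $d \geq 3$ enters. So the first step is to set up a convenient global chart: use $(v^*, r, \omega)$ with $v^* = t + r^*(r)$ and $\omega \in \Sd$, in which the metric is \eqref{MetricEd}, $\gext = -D(r)(dv^*)^2 + dv^*\otimes dr + dr \otimes dv^* + r^2\mathring\gamma_{d-1}$.

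The core step is to produce, for an arbitrary future directed timelike curve $\gamma$ from $p$ to $q$, a \emph{canonical} timelike curve $\gamma_{\mathrm{can}}$ from $p$ to $q$ and a timelike homotopy with fixed endpoints from $\gamma$ to $\gamma_{\mathrm{can}}$; since ``timelike homotopic with fixed endpoints'' is an equivalence relation, this gives the proposition. I would build $\gamma_{\mathrm{can}}$ to lie in a single $(v^*,r)$-plane, e.g.\ the plane through the fixed point $\omega_p$ of $p$ (taking, say, $\gamma_{\mathrm{can}}$ to be an integral curve of $\partial_{v^*}$ adjusted at the ends, which is timelike in the region $\{D(r)>0\}$ since $\gext(\partial_{v^*},\partial_{v^*}) = -D(r) < 0$ in the exterior). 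The homotopy is then assembled in two stages. First, contract the angular part: write $\gamma(s) = (v^*(s), r(s), \omega(s))$ and homotope $\omega(s)$ through the simply connected $\Sd$ to the constant path $\omega_p$, keeping $(v^*(s), r(s))$ fixed; because the $r^2\mathring\gamma_{d-1}$ term enters the metric with a positive sign, \emph{any} reparametrised angular motion (including none) only makes $\gext(\dot\gamma,\dot\gamma)$ more negative, so timelikeness is preserved throughout this stage provided one keeps the $v^*$-derivative bounded below away from zero — which one arranges by first reparametrising $\gamma$ so that $\dot v^* \equiv \mathrm{const} > 0$, legitimate since $\dot v^* > 0$ along any future directed timelike curve here (as $\partial_u, \partial_v$ future directed forces the relevant derivative sign, cf.\ the argument in the proof of Theorem \ref{MainThm}). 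Second, having landed in a fixed $(v^*,r)$-plane, homotope within that plane to $\gamma_{\mathrm{can}}$; this is a $1+1$ dimensional problem on a contractible domain, where one can use a straight-line (in $(v^*,r)$-coordinates) homotopy and check timelikeness of each intermediate curve directly from \eqref{MetricEd}, exactly as in Step 2.1 of the proof of Theorem \ref{MinkInex}.

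The step I expect to be the main obstacle is the second stage: unlike in Minkowski space, the metric coefficients in \eqref{MetricEd} are $r$-dependent, so a naive linear interpolation between two timelike curves in the $(v^*,r)$-plane need not be timelike at every intermediate parameter. The remedy is to choose $\gamma_{\mathrm{can}}$ and the homotopy adapted to the light-cone structure of \eqref{MetricEd} — e.g.\ perform the interpolation in the Kruskal null coordinates $(u,v)$ of region $I$ instead, where the $2$-dimensional conformal factor is positive and the null directions are the coordinate axes, so that convex combinations of the $u$- and $v$-components of two future directed timelike curves (each of which has strictly increasing $u$ and $v$) again have strictly increasing $u$ and $v$ and hence stay timelike; one must only take care that the interpolated curves remain in region $I$, i.e.\ do not hit $\{uv = r_+ e^{(d-2)W(0)/r_+}\}$, which holds automatically since both endpoint curves lie in the open region $I$ and the $uv$-product along a convex combination is controlled by the maximum of the two. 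Reparametrising at the very end restores a genuine timelike (piecewise smooth) curve, and concatenating the two homotopies — angular contraction followed by in-plane interpolation — yields the desired timelike homotopy with fixed endpoints, completing the proof.
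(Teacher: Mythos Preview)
Your overall two-stage strategy --- straighten the angular part, then handle a lower-dimensional problem --- is the same shape as the paper's, but the first stage has a genuine gap.

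You claim that since $r^2\mathring\gamma_{d-1}$ enters with a positive sign, ``any reparametrised angular motion (including none) only makes $\gext(\dot\gamma,\dot\gamma)$ more negative.'' The sign observation cuts the other way: the positive coefficient means angular motion contributes \emph{positively} to $g(\dot\gamma,\dot\gamma)$, so \emph{more} angular speed makes the curve \emph{less} timelike. The target of your homotopy (zero angular motion) is indeed more timelike than the original, but you give no control on the intermediate curves. A generic null-homotopy of a loop on $\Sd$ --- whose existence is all that simple-connectedness guarantees --- can have arbitrarily large angular speed at intermediate stages, rendering those intermediate curves spacelike. Keeping $\dot v^*$ bounded below does not help: the problematic term $r^2|\dot\omega|^2$ can overwhelm any fixed negative contribution from the $(v^*,r)$ part.

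The paper addresses precisely this point. It first passes to the conformal product metric $h_{\mathrm{ext}} = D(r)^{-1}\gext = -dt^2 + \overline h_{\mathrm{ext}}$, so that timelikeness becomes the clean Riemannian condition $\|\dot{\overline\gamma}\|_{\overline h_{\mathrm{ext}}} < 1$. The angular straightening is then carried out via the spherical exponential map together with a specific reparametrisation factor $f(t,u)$ chosen so that the angular speed of every intermediate curve is pointwise bounded by that of the original; this is a real computation, not a formality, and it also requires a preliminary perturbation (Steps 2.1--2.2) to keep the curve away from the antipodal cut locus.

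There is also a secondary issue: your canonical curve sits in the $(v^*,r)$-plane through $\omega_p$, but if $\omega_p \neq \omega_q$ it cannot reach $q$. The target of the angular homotopy must then be a geodesic arc on $\Sd$, so the second stage lives in a $2+1$ dimensional slab $\R\times(r_+,\infty)\times\mathbb S^1$ rather than a plane. The paper handles this (Step 2.5) by showing the associated $2$-dimensional Riemannian factor has negative Gauss curvature and is complete, then invoking Cartan--Hadamard to obtain a global exponential chart in which the shortening homotopy works. Your Kruskal convex-interpolation idea is sound for the purely radial problem, but you would still need an argument for the residual $\mathbb S^1$ direction.
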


Note that the proof of Step 2.1 for the Minkowskian case shows, in particular, that the Minkowski spacetime is future one-connected. Moreover, the proof given there can be transferred whenever the exponential map induces a global chart for the Lorentzian manifold. This, however, is clearly not the case for the Schwarzschild exterior.

\begin{proof}
The proof proceeds in two steps.
\vspace*{2mm}

\underline{\textbf{Step 1:}} Reduction to a Riemannian problem.
\vspace*{2mm}

We first note that whether a Lorentzian manifold is future one-connected or not depends only on the conformal class of the metric.
Thus, we can show instead that $\Mext$ endowed with the Lorentzian metric
\begin{equation*}
h_\mathrm{ext} := \big(1 - \frac{2m}{r^{d-2}}\big)^{-1} \gext = -\,dt^2 + \frac{1}{\big(1 - \frac{2m}{r^{d-2}}\big)^2} \,dr^2 + \frac{r^2}{1 - \frac{2m}{r^{d-2}}} \, \mathring{\gamma}_{d-1} =: -\,dt^2 + \overline{h}_{\mathrm{ext}}
\end{equation*}
is future one-connected. 
Note here, that $(\Mext, \hext)$ is a product of the Riemannian manifold $\big(\overline{M}_{\mathrm{ext}} = \big((2m)^{\frac{1}{d-2}}, \infty\big) \times \mathbb{S}^{d-1}, \overline{h}_{\mathrm{ext}}\big)$ and $(\R, -\,dt^2)$.  The advantage of working with such a product Lorentzian manifold is that the causality relations are determined purely by the geometry of the Riemannian factor. 

We first note that any future directed timelike curve $\gamma$ can be reparametrised by the $t$-coordinate, i.e., by slight abuse of notation, such that $\gamma : [t_0, t_1] \to \Mext$ is given by
\begin{equation*}
\gamma(t) = \big(t, \overline{\gamma}(t)\big)\;,
\end{equation*}
where $t_0, t_1 \in \R$ with $t_0 < t_1$. 

Since 
\begin{equation*}
0 > \hext\big(\dot{\gamma}(t), \dot{\gamma}(t)\big) = -1 + \overline{h}_{\mathrm{ext}}\big(\dot{\overline{\gamma}}(t), \dot{\overline{\gamma}}(t)\big)
\end{equation*}
holds for all $t \in [t_0, t_1]$, it follows that $||\dot{\overline{\gamma}}(t)||_{\overline{M}_{\mathrm{ext}}} <1$.

Vice versa, given a curve $\overline{\gamma} : [t_0, t_1] \to \overline{M}_{\mathrm{ext}}$ which satisfies $||\dot{\overline{\gamma}}(t)||_{\overline{M}_{\mathrm{ext}}} <1$  for all $t \in [t_0, t_1]$, then $\gamma(t) = \big(t, \overline{\gamma}(t)\big)$ is a future directed timelike curve\footnote{Paraphrased as a geometric statement, this shows that for $t_0 < t_1$ and $\overline{p}_0, \overline{p}_1 \in \oMe$ the causal relation $(t_0, \overline{p}_0) \ll (t_1, \overline{p}_1)$ holds if, and only if,  $d_{\oMe}(\overline{p}_0, \overline{p}_1) < |t_1 - t_0|$}.

Similarly, let $\overline{\Gamma} : [0,1] \times [t_0, t_1] \to \oMe$ be a homotopy with fixed endpoints which satisfies in addition
\begin{enumerate}[(i)] 
\item $\overline{\Gamma}(u, \cdot)$ is a piecewise smooth curve for all $u \in [0,1]$ 
\item $||\partial_t \overline{\Gamma}(u,t) ||_{\oMe} < 1$ holds for all $(u,t) \in [0,1] \times [t_0, t_1]$.
\end{enumerate} 
Then $\Gamma : [0,1] \times [t_0, t_1] \to \Mext$, given by $\Gamma (u,t) = \big(t, \overline{\Gamma}(u,t)\big)$ is a timelike homotopy with fixed endpoints. Thus, we say that a homotopy $\overline{\Gamma}$ with fixed endpoints has the \emph{timelike lifting property} if, and only if, it satisfies (i) and (ii).

To prove that $(\Mext, \hext)$ is future one-connected, we need to show that if $\gamma_i : [t_0, t_1] \to \Mext$, $\gamma_i(t) = \big(t, \overline{\gamma}_i(t)\big)$, where $i=0,1$, are two future directed timelike curves with $\gamma_0(0) = \gamma_1(0)$ and $\gamma_0(1) = \gamma_1(1)$, then there exists a timelike homotopy $\Gamma : [t_0,t_1] \times [t_0,t_1] \to \Mext$ with fixed endpoints between $\gamma_0$ and $\gamma_1$. As shown above, the timelike curves $\gamma_0$ and $\gamma_1$ project down to piecewise smooth curves $\overline{\gamma}_0, \overline{\gamma}_1 : [t_0, t_1] \to \overline{M}_{\mathrm{ext}}$ with $||\dot{\overline{\gamma}}_i(t)||_{\overline{M}_{\mathrm{ext}}} < 1$ for all $t \in [t_0, t_1]$, for $i = 0,1$. 
By the above argument, it suffices to show that there exists a homotopy $\overline{\Gamma}$ with fixed endpoints between $\overline{\gamma}_0$ and $\overline{\gamma}_1$ which has the timelike lifting property. Moreover, we can assume that $t_0 = 0$ and $t_1 = T$.

By concatenation of homotopies, this follows from Step 2:
\vspace*{2mm}

\underline{\textbf{Step 2:}} Let $\overline{\gamma} : [0, T] \to \overline{M}_{\mathrm{ext}}$ be a piecewise smooth curve with $||\dot{\overline{\gamma}}(t)||_{\overline{M}_{\mathrm{ext}}} < 1$ for all $t \in [0, T]$. We write $\overline{\gamma}(t) = \big(\ovg_r(t), \ovg_\omega(t)\big)$, where $\ovg_r$ is the projection of $\ovg$ on $\big((2m)^{\frac{1}{d-2}}, \infty\big)$, while $\ovg_\omega$ is the projection on $\mathbb{S}^{d-1}$. 

If $\ovg_\omega (0) \neq - \ovg_\omega(T)$, then there exists a continuous homotopy $\overline{\Gamma} : [0, T] \times [0, T] \to \overline{M}_{\mathrm{ext}}$ with fixed endpoints between $\ovg$ and  the \emph{unique} shortest curve from $\ovg(0)$ to $\ovg(T)$ which, moreover, has the timelike lifting property.
 
If $\ovg_\omega(0) = -\ovg_\omega(T)$, then there exists a continuous homotopy $\overline{\Gamma} : [0, T] \times [0, T] \to \overline{M}_{\mathrm{ext}}$ with fixed endpoints between $\ovg$ and the unique shortest curve $\overline{\sigma}$ from $\ovg(0)$ to $\ovg(T)$, such that the projection $\overline{\sigma}_\omega$ traces out a fixed geodesic arc (half a great circle) on $\mathbb{S}^{d-1}$ connecting $\ovg_\omega(0)$ with $\ovg_\omega(T)$ which, moreover, has the timelike lifting property.
\vspace*{2mm}

We prove the statement of Step 2 in several steps. The idea behind steps 2.1 - 2.4 is to first transform $\ovg_\omega$ to a shortest geodesic arc while leaving $\ovg_r$ unchanged. The homotopy is constructed using the exponential map on the sphere $\Sd$ in a way similar to the argument already encountered in Step 2.1 of the proof of Theorem \ref{MinkInex}. This however requires a small perturbation of the curve such that the image of the curve is disjoint from the point antipodal to the base point of the exponential map. This perturbation is constructed in the steps 2.1 and 2.2. Having straightened out the angular part of $\ovg$, the problem is reduced to shortening the curve even further in a two-dimensional submanifold of $\oMe$, which has negative Gauss curvature. The Cartan-Hadamard Theorem then allows us to shorten the curve using again the exponential map.
\vspace*{2mm}

\textbf{Step 2.1:} Since $\ovg : [0,T] \to \oMe$ is piecewise smooth with $||\dot{\ovg}||_{\oMe} < 1$, there exists a $\delta >0$ such that $\ovg$ is smooth on $[0,\delta]$, and, moreover, there exists an $\varepsilon >0$ such that $||\dot{\ovg}(t)||_{\oMe} < 1 - \varepsilon$ for all  $t \in [0,\delta]$. We define $\lambda : [0, \varepsilon\delta] \times [0, T] \to [0,T]$ by
\begin{equation*}
\lambda(u,t) := \begin{cases} 0 &\textnormal{ for } 0 \leq t \leq u \\
(t-u) \frac{\delta}{\delta - u} \quad &\textnormal{ for } u \leq t \leq \delta \\
t &\textnormal{ for }  \delta \leq t \leq T \;,  \end{cases}
\end{equation*}
and set
\begin{equation*}
\overline{\Gamma}_1 (u,t) := \ovg\big( \lambda(u,t)\big) \;.
\end{equation*}

Note that $\frac{\delta}{\delta - u} \leq \frac{1}{1-\varepsilon}$ for all $u \in [0, \varepsilon \delta]$, and thus we have $|| \partial_t \overline{\Gamma}_1||_{\oMe} < 1$. It now follows that $\overline{\Gamma}_1 : [0, \varepsilon \delta] \times [0, T] \to \oMe$ is a homotopy with fixed endpoints that has the timelike lifting property. We set $\overline{\Gamma}_1(\varepsilon \delta, \cdot) =: \ovg^{(1)}(\cdot)$.
\vspace*{2mm}

\textbf{Step 2.2:} Applying Sard's Theorem to each smooth component of $\ovg_\omega$, we infer that the image $\mathrm{Im}(\ovg_\omega)$ of $\ovg_\omega$ has measure zero in $\Sd$. Let $\rho_0 = \frac{\varepsilon \delta D^{\nicefrac{1}{2}}\big(\ovg_r(0)\big)}{2 \ovg_r(0)}$. It follows that 
\begin{equation}
\label{ChoiceOmega}
\parbox{0.8\textwidth}{
there exists an $\omega_0 \in B_{\rho_0}\big(\ovg_\omega(0)\big)$, such that $\mathrm{Im}(\ovg_\omega) \subseteq \Sd \setminus \{-\omega_0\}$, and such that the closed geodesic arc from $\omega_0$ to $\ovg_\omega(0)$ intersects $- \ovg_\omega(T)$ at most in $\ovg_\omega(0)$.}
\end{equation}
We consider the exponential map $\exp_{\ovg_\omega(0)} : T_{\ovg_\omega(0)}\Sd \supseteq B_\pi(0) \to \Sd$ with base point $\ovg_\omega(0)$. Recall, that it is a diffeomorphism on $B_\pi(0)$. We now define $\overline{\Gamma}_2 : [0,1] \times [0,T] \to \oMe$ by
\begin{equation*}
\overline{\Gamma}_2(u,t) := \begin{cases} \Big( \ovg^{(1)}_r(0), \exp_{\ovg_\omega(0)}\big[\frac{t}{\nicefrac{\varepsilon \delta}{2}} \cdot u \cdot \exp^{-1}_{\ovg_\omega(0)}(\omega_0)\big] \Big)  &\textnormal{ for } 0 \leq t \leq \frac{\varepsilon \delta}{2} \\
\Big( \ovg^{(1)}_r(0), \exp_{\ovg_\omega(0)}\big[\frac{(\varepsilon \delta - t)}{\nicefrac{\varepsilon \delta}{2}} \cdot u \cdot \exp^{-1}_{\ovg_\omega(0)}(\omega_0)\big] \Big) \quad &\textnormal{ for }  \frac{\varepsilon \delta}{2} \leq t \leq \varepsilon \delta \\
\ovg^{(1)}(t) &\textnormal{ for } \varepsilon \delta \leq t \leq T \;. 
\end{cases}
\end{equation*}
We compute for $t \in [0, \frac{\varepsilon \delta}{2}]$ 
\begin{equation*}
\begin{split}
||\partial_t \overline{\Gamma}_2(u,t) ||^2_{\oMe} &= \big[\ovg_r(0)\big]^2 \cdot \frac{1}{D\big( \ovg_r(0)\big)} \cdot \frac{u^2}{\big(\nicefrac{\varepsilon \delta}{2}\big)^2} \cdot || \exp^{-1}_{\ovg_\omega}(0)(\omega_0) ||_{\Sd}^2 \\
&=  \big[\ovg_r(0)\big]^2 \cdot \frac{1}{D\big( \ovg_r(0)\big)} \cdot \frac{u^2}{\big(\nicefrac{\varepsilon \delta}{2}\big)^2} \cdot   d^2_{\oMe}\big( \omega_0, \ovg_\omega(0)\big) \\
&< u^2 \\
&\leq 1 \;,
\end{split}
\end{equation*}
where we have used \eqref{ChoiceOmega}. We set $\ovg^{(2)}(\cdot) := \overline{\Gamma}_2(1, \cdot)$. Hence, $\overline{\Gamma}_2$ is a homotopy with fixed endpoints between $\ovg^{(1)}$ and $\ovg^{(2)}$ that has the timelike lifting property. Note that $\ovg^{(2)}(\frac{\varepsilon \delta}{2}) = \omega_0$.
\vspace*{2mm}

\textbf{Step 2.3:} We now use the exponential map $\exp_{\omega_0} : T_{\omega_0} \Sd \supseteq B_\pi(0) \to \Sd$ based at $\omega_0$ to define $\overline{\Gamma}_3 : [ \frac{\varepsilon \delta }{2} , T] \times [0, T] \to \oMe$ by
\begin{equation*}
\overline{\Gamma}_3(u,t) := \begin{cases} \ovg^{(2)}(t) &\textnormal{ for } 0 \leq t \leq \frac{\varepsilon \delta}{2} \\
\Big(\ovg_r^{(2)}(t), \exp_{\omega_0} \big[f(t,u) \exp^{-1}_{\omega_0}\big(\ovg^{(2)}_\omega(u)\big)\big] \Big) \quad &\textnormal{ for } \frac{\varepsilon \delta}{2} \leq t \leq u \\
\ovg^{(2)}(t) &\textnormal{ for } u \leq t \leq T \;,
\end{cases}
\end{equation*}
where
\begin{equation*}
f(t,u) := \frac{\int_{\nicefrac{\varepsilon \delta}{2}}^t || \dot{\ovg}^{(2)}_\omega(t')||_{\Sd} \, dt'}{\int_{\nicefrac{\varepsilon \delta}{2}}^u || \dot{\ovg}^{(2)}_\omega(t') ||_{\Sd} \, dt' } \;.
\end{equation*}
We compute
\begin{equation*}
\partial_t f(t,u) = \frac{|| \dot{\ovg}^{(2)}_\omega(t)||_{\Sd}}{\int_{\nicefrac{\varepsilon \delta}{2}}^u || \dot{\ovg}^{(2)}_\omega(t') ||_{\Sd} \, dt' } \leq \frac{|| \dot{\ovg}^{(2)}_\omega(t)||_{\Sd}}{d_{\Sd}\big(\omega_0, \ovg^{(2)}_\omega(u)\big)} = \frac{|| \dot{\ovg}^{(2)}_\omega(t)||_{\Sd}}{||\exp^{-1}_{\omega_0}\big( \ovg^{(2)}_\omega(u)\big)||_{\Sd}}\,
\end{equation*}
and, thus, for $t \in [\frac{\varepsilon \delta}{2}, u]$ we obtain
\begin{equation*}
\begin{split}
||\partial_t\overline{\Gamma}_3(u,t) ||^2_{\oMe} &= D^{-2}\big(\ovg^{(2)}_r(t)\big) \cdot  \big[ \dot{\ovg}_r^{(2)}(t)\big]^2 + \big[\ovg_r^{(2)}\big]^2 \cdot D^{-1}\big(\ovg_r^{(2)}(t)\big) \cdot |\partial_t f(t,u)|^2 \cdot ||\exp^{-1}_{\omega_0}\big( \ovg^{(2)}_\omega(u)\big)||^2_{\Sd} \\
&\leq ||\dot{\ovg}^{(2)}(t)||_{\oMe}^2 \\
&< 1 \;.
\end{split}
\end{equation*}
Setting $\ovg^{(3)}(\cdot) := \overline{\Gamma}_3(T,\cdot)$,  it follows that $\overline{\Gamma}_3$ is a homotopy with fixed endpoints between $\ovg^{(2)}$ and $\ovg^{(3)}$ which has the timelike lifting property.
\vspace*{2mm}

\textbf{Step 2.4:} We now consider the exponential map $\exp_{\ovg_\omega(T)} : T_{\ovg_\omega(T)} \Sd \supseteq B_\pi (0) \to \Sd$ based at $\ovg_\omega(T)$. By the choice of $\omega_0$, \eqref{ChoiceOmega}, we have $\mathrm{Im}(\ovg_\omega^{(3)}) \setminus \big{\{} \ovg_\omega^{(3)}(0)\big{\}} \subseteq \Sd \setminus \big{\{} - \ovg_\omega^{(3)}(T)\big{\}}$. Hence, $\overline{\Gamma}_4 : [0,T) \times [0,T] \to \oMe$, defined by
\begin{equation*}
\overline{\Gamma}_4(u,t) := \begin{cases} \ovg^{(3)}(t) &\textnormal{ for } 0 \leq t \leq T-u \\
\Big( \ovg^{(3)}_r (t) , \exp_{\ovg_\omega(T)} \big[ f(t,u) \exp^{-1}_{\ovg_\omega(T)} \big( \ovg_\omega^{(3)} (T-u)\big)\big] \Big) \quad &\textnormal{ for } T- u \leq t \leq T \;, \end{cases}
\end{equation*}
where
\begin{equation*}
f(t,u) = \frac{ \int_t^T || \dot{\ovg}^{(3)}_\omega(t') ||_{\Sd} \, dt'}{ \int_{T-u}^T || \dot{\ovg}_\omega^{(3)}(t')||_{\Sd} \, dt'}\;,
\end{equation*}
is well-defined. If $\ovg_\omega(0) \neq - \ovg_\omega(T)$, then one can extend $\overline{\Gamma}_4$ to $[0,T] \times [0,T]$ by the above definition. In the case $\ovg_\omega(0) = - \ovg_\omega(T)$, we introduce spherical normal coordinates $(\rho, \overline{\theta})$ for $\Sd$ at $\ovg_\omega(T)$, and express $\ovg_\omega^{(3)}(t)$ with respect to these coordinates by $\big(\rho(t), \overline{\theta}(t)\big)$. It now follows that $\overline{\Gamma}_4(u,\cdot)$ converges for $u \nearrow T$ to a piecewise smooth curve such that its projection onto the sphere $\Sd$ traces out the geodesic arc parametrised, in the spherical normal coordinates, by $\rho \mapsto \big(\rho, \lim_{t \searrow 0} \overline{\theta}(t)\big)$, $\rho \in [0,\pi)$. In this case this limit curve furnishes the extension of $\overline{\Gamma}_4$. Moreover, after a homotopy of rotations, we can assume that the projection on the sphere $\Sd$ of this limit curve lies in a fixed two-dimensional plane in $\R^d \supseteq \Sd$ through $0, \ovg_\omega(0)$, and $\ovg_\omega(T)$.

Finally, we set $\ovg^{(4)}(\cdot):= \overline{\Gamma}_4(T, \cdot)$, and as in Step 2.3 one computes that $\overline{\Gamma}_4$ is a homotopy with fixed endpoints between $\ovg^{(3)}$ and $\ovg^{(4)}$ that has the timelike lifting property.
\vspace*{2mm}

\textbf{Step 2.5:} We now introduce standard coordinates on $\Sd$ such that the $\mathbb{S}^1$, in which $\ovg_\omega^{(4)}$ is mapping, is parametrised by $\varphi \in (0, 2\pi)$. We now consider the submanifold $F:= \big((2m)^{\frac{1}{d-2}}, \infty\big) \times \mathbb{S}^1 \subseteq \oMe$ with the induced Riemannian metric
\begin{equation*}
\overline{h}_F := \frac{1}{\big[D(r)\big]^2} \, dr^2 + \frac{r^2}{D(r)} \, d\varphi^2 \;.
\end{equation*}
The Gauss curvature of $(F, \overline{h}_F)$ is computed to be
\begin{equation*}
K_F =  \frac{1}{2} D(r) D''(r) - \frac{1}{4} \big[ D'(r)\big]^2 \;.
\end{equation*}
Moreover, we have $D'(r) = 2m (d-2) r^{1-d} >0$, and $D''(r) =-2m (2-d)(1-d)r^{-d} <0$, so the Gauss curvature of $F$ is negative. Moreover, it is easy to see that $(F, \overline{h}_F)$ is complete: let $\sigma : \big( (2m)^{\frac{1}{d-2}} , a\big) \to F$, $\sigma(r) = \big( r, \varphi(r)\big)$, be a $C^1$ curve, where $a >  (2m)^{\frac{1}{d-2}}$. We compute
\begin{equation*}
L(\sigma) = \int^a_{(2m)^{\frac{1}{d-2}}} \sqrt{ \overline{h}_F \big( \dot{\sigma}(r), \dot{\sigma}(r)\big)} \, dr \geq \int^a_{(2m)^{\frac{1}{d-2}}}  \frac{1}{D(r)} \, dr = \int^a_{(2m)^{\frac{1}{d-2}}}  \frac{r^{d-2}}{r^{d-2} - 2m} \, dr = \infty \;.
\end{equation*}

We now consider the universal cover $\pi_F : \tilde{F} = \big((2m)^{\frac{1}{d-2}}, \infty\big) \times \R \to F$  of $F$ and lift $\ovg^{(4)}$ to a curve $\ovg^{(4)}_{\mathrm{lift}} : [0,T] \to \tilde{F}$. Clearly, $\tilde{F}$ with the induced Riemannian metric $\overline{h}_{\tilde{F}}$ is complete, has negative Gauss curvature, and is simply connected.
Hence, the Cartan-Hadamard Theorem (see for instance Theorem 11.5 in \cite{LeeRiem}) states, in particular, that $\exp_{\ovg_{\mathrm{lift}}^{(4)}(0)} : T_{\ovg_{\mathrm{lift}}^{(4)}(0)} \tilde{F} \to \tilde{F}$ is a diffeomorphism. Define $\overline{\Gamma}_{5, \mathrm{lift}} : [0,T] \times [0,T] \to \tilde{F}$ by
\begin{equation*}
\overline{\Gamma}_{5,\mathrm{lift}} (u,t) := \begin{cases} \exp_{\ovg_{\mathrm{lift}}^{(4)}(0)} \Big( \frac{t}{u} \exp^{-1}_{\ovg_{\mathrm{lift}}^{(4)}(0)}\big[\ovg_{\mathrm{lift}}^{(4)} (u) \big]\Big) \quad &\textnormal{ for } 0 \leq t \leq u \\
\ovg_{\mathrm{lift}}^{(4)}(t) &\textnormal{ for } u \leq t \leq T \;.
\end{cases}
\end{equation*}
Note that for $0 \leq t \leq u$, we have
\begin{equation*}
|| \partial_t \overline{\Gamma}_{5, \mathrm{lift}} (u,t) ||_{\tilde{F}} = \frac{1}{u} || \exp^{-1}_{\ovg_{\mathrm{lift}}^{(4)}(0)} \big( \ovg^{(4)}_{\mathrm{lift}}(u)\big)||_{\tilde{F}} = \frac{1}{u} d_{\tilde{F}} \big( \ovg^{(4)}_{\mathrm{lift}}(0), \ovg^{(4)}_{\mathrm{lift}}(u)\big) < 1 \;,
\end{equation*}
where we have used for the second equality that $\exp_{\ovg_{\mathrm{lift}}^{(4)}(0)}$ is a diffeomorphism, while for the inequality that $||\dot{\ovg}^{(4)} (t) ||_F < 1$ together with $\pi_F$ being a local isometry. We now define $\overline{\Gamma}_5 : [0,T] \times [0,T] \to F$ by $\overline{\Gamma}_5(u,t) := \pi_F \circ \overline{\Gamma}_5(u,t)$, and it is clear that $\overline{\Gamma}_5$ is a homotopy with fixed endpoints between $\ovg^{(4)}$ and the unique shortest curve from $\ovg(0)$ to $\ovg(T)$ (in the case of $\ovg_\omega(0) = - \ovg_\omega(T)$, with the additional requirement as in the statement of Step 2) that has the timelike lifting property. Concatenating the homotopies $\overline{\Gamma}_1$ up to $\overline{\Gamma}_5$ finishes the proof of Step 2, and thus proves Proposition \ref{ExtFutCon}.
\end{proof}

\begin{proposition}
\label{LengthToInfty}
Let $\gamma : [0,\infty) \to \Mext$ be a future directed timelike curve with $(v^* \circ \gamma)(s) \to \infty$ for $s \to \infty$. It then follows that $d_{\Mext}\big(\gamma(0), \gamma(s)\big) \to \infty$ for $s \to \infty$.
\end{proposition}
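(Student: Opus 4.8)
The plan is to deduce the statement from the following two facts. First, $s\mapsto d_{\Mext}\big(\gamma(0),\gamma(s)\big)$ is non-decreasing: for $s'<s$ we have $\gamma(0)\ll\gamma(s')\ll\gamma(s)$, so the reverse triangle inequality gives $d_{\Mext}\big(\gamma(0),\gamma(s)\big)\geq d_{\Mext}\big(\gamma(0),\gamma(s')\big)$; hence it suffices to prove \emph{unboundedness}, i.e.\ to exhibit, for every $C>0$, some $s$ and some future directed timelike curve from $\gamma(0)$ to $\gamma(s)$ of Lorentzian length $>C$. Second, the orbit $t\mapsto(t,r,\omega)$ of the Killing field $\partial_t$ at a fixed radius $r>r_+$, restricted to a $t$-interval of length $L$, is such a curve of length $\sqrt{D(r)}\,L$. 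So the whole problem is to fit, inside a timelike curve from $\gamma(0)$ to $\gamma(s)$, either a sufficiently long static orbit or some other curve of large length. I would reparametrise $\gamma$ by $t$ and pass to the Eddington--Finkelstein null coordinates $v^*=t+r^*(r)$, $u^*=t-r^*(r)$, in which a future directed timelike curve has $\dot u^*>0$ and $\dot v^*>0$; the hypothesis says $v^*\circ\gamma\to\infty$, while $u^*\circ\gamma$ is increasing, hence converges to some $U\in\big(u^*(\gamma(0)),\infty\big]$. The proof then splits according to whether $U=\infty$ or $U<\infty$.

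Case $U=\infty$. Here $\gamma$ runs off with $t\to\infty$ \emph{and} $u^*\to\infty$. Put $r_0:=r(\gamma(0))$, $\omega_0:=\omega(\gamma(0))$, and let $\eta(t):=(t,r_0,\omega_0)$ be the static orbit through $\gamma(0)$. For any $T>0$ one shows $\eta(t_0+T)\ll\gamma(s)$ once $s$ is large, by writing down an explicit connecting curve (go a little way up the static orbit, rotate at radius $r_0$ from $\omega_0$ to $\omega(\gamma(s))$ using the $SO(d)$-symmetry, then move radially to $\gamma(s)$); admissibility of this curve requires only that $v^*(\gamma(s))$ and $u^*(\gamma(s))$ exceed fixed thresholds depending on $T$, which holds for $s$ large because \emph{both} tend to $\infty$. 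Consequently $d_{\Mext}\big(\gamma(0),\gamma(s)\big)\geq d_{\Mext}\big(\gamma(0),\eta(t_0+T)\big)=\sqrt{D(r_0)}\,T$, and letting $T\to\infty$ (choosing $s$ accordingly) finishes this case.

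Case $U<\infty$. Then $r^*\big(r(\gamma(s))\big)=\tfrac12\big(v^*(\gamma(s))-u^*(\gamma(s))\big)\to\infty$, so $r(\gamma(s))\to\infty$ and $\gamma$ approaches null infinity. Now the available ``$u^*$-budget'' $U-u^*(\gamma(0))$ is finite, so the static-orbit trick near $\gamma(0)$ no longer suffices; instead I would build the long curve in three stages: (i) from $\gamma(0)$, adjust the angular position to $\omega(\gamma(s))$; (ii) from there follow a curve hugging the outgoing null cone — $u^*$ increasing very slowly while $v^*$ climbs to $v^*(\gamma(s))$ — out to the radius of $\gamma(s)$; (iii) arrive at $\gamma(s)$. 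At a fixed angular position, stage (ii) is essentially a curve in the two-dimensional $(u^*,v^*)$-quotient with metric $-D(r)\,du^*\,dv^*$, and a straight-line comparison in $(u^*,v^*)$ gives its length $\geq\sqrt{D(r_0)\,\big(v^*(\gamma(s))-v^*_{\mathrm{mid}}\big)\,\big(u^*(\gamma(s))-u^*_{\mathrm{mid}}\big)}$, which tends to $\infty$ because the first factor does while the second stays bounded below by a fixed positive constant (recall $u^*\to U>u^*(\gamma(0))$). The point that needs real care — and the main obstacle — is stage (i) against the finite $u^*$-budget: rotating on the sphere costs $u^*$, and one must be sure enough is left for stage (ii) to still accumulate unbounded length. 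This is where one uses that $\gamma$ itself cannot have drifted far angularly: parametrising by $t$, the timelike condition gives $|\dot\omega|<\sqrt{D\,\dot u^*\dot v^*}/r$, and a Cauchy--Schwarz estimate — using that $\int \tfrac{D}{r^2}\,dv^*$ converges along $\gamma$ near null infinity (the weight $r^{-2}$ is integrable there since $r\gtrsim r^*\gtrsim v^*$, and the factor $D$ cancels against $dr^*=dr/D$ after substitution) — bounds $d_{\Sd}\big(\omega(\gamma(0)),\omega(\gamma(s))\big)$ by a multiple of $\sqrt{U-u^*(\gamma(0))}$; performing the angular adjustment ``while moving outward'', so that its $u^*$-cost is quadratic rather than linear in the angle, then leaves a definite fraction of the budget for stage (ii). One should also separate off the degenerate sub-case in which $\gamma$ stays so close to an outgoing null geodesic that $L(\gamma|_{[0,s]})$ already diverges, where $d_{\Mext}\big(\gamma(0),\gamma(s)\big)\geq L(\gamma|_{[0,s]})\to\infty$ is immediate.

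In summary, the reduction and Case $U=\infty$ are routine once the Eddington--Finkelstein coordinates and the static orbits are in place; the substance of the proof — and where I expect the technical difficulty to concentrate — is the budget bookkeeping in Case $U<\infty$, balancing the $u^*$ consumed by the angular adjustment against the $u^*$ needed for the near-null leg to have divergent Lorentzian length.
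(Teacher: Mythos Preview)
Your case split (on whether $u^*:=t-r^*$ tends to $\infty$ or to a finite limit $U$) is different from the paper's (on whether $r\to\infty$). Your Case $U=\infty$ is correct and is essentially the paper's Case~2 argument: concatenate a long static orbit with a bounded angular rotation and a radial leg, using that both $u^*$ and $v^*$ along $\gamma$ eventually exceed any fixed threshold.

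The difficulty you flag in Case $U<\infty$ is real, and your outline does not close it. Your Cauchy--Schwarz bound gives
\[
d_{\Sd}\big(\omega(\gamma(0)),\omega(\gamma(s))\big)\;\leq\;\sqrt{\Delta}\cdot\sqrt{I_\gamma},\qquad
\Delta:=U-u^*(\gamma(0)),\quad I_\gamma:=\int_\gamma \tfrac{D}{r^2}\,dv^*,
\]
and the same inequality applied to your comparison curve shows that rotating by angle $\theta$ with $u^*$-budget $a$ requires $a\geq\theta^2/J$, where $J$ is the analogous integral along the comparison curve. So you need $a\geq I_\gamma\Delta/J<\Delta$, i.e.\ $J>I_\gamma$. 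But at each value of $v^*$ the comparison curve has \emph{smaller} $u^*$ than $\gamma$ (you are spending less budget), hence larger $r^*=(v^*-u^*)/2$, hence larger $r$, hence \emph{smaller} integrand $D/r^2$. So $J\leq I_\gamma$, and the bookkeeping goes the wrong way. The ``quadratic cost'' heuristic is correct, but the constant in front is exactly the one you cannot beat.

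The paper avoids this entirely. Its Case~1 is $r\to\infty$ (which contains your Case $U<\infty$) and uses a conformal trick: on $\{r>R\}$ with $D>\tfrac12$ one has $L_{\gext}\geq\tfrac{1}{\sqrt2}L_{h}$ for $h:=D^{-1}\gext=-dt^2+\overline{h}$, a static product metric. Parametrising by $t$, the Lorentzian distance in $(M_{\mathrm{ext},R},h)$ is $\sqrt{(s-s_0)^2-d_{\overline{h}}(\overline\gamma(s_0),\overline\gamma(s))^2}$, and a single triangle-inequality step through $\overline\gamma(s_0+1)$ gives $d_{\overline{h}}(\overline\gamma(s_0),\overline\gamma(s))\leq(s-s_0)-\varepsilon$ for a fixed $\varepsilon>0$, so the distance grows like $\sqrt{2\varepsilon(s-s_0)}$. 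No angular budget analysis is needed. I would recommend replacing your Case $U<\infty$ with this argument; your Case $U=\infty$ can stand as is.
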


\begin{proof}
We distinguish the following two cases:
\vspace*{2mm}

\textbf{Case 1:} $(r \circ \gamma)(s) \to \infty$ for $s \to \infty$.
\vspace*{2mm}

Let $R> r_+$ be sufficiently large such that $D(r) > \frac{1}{2}$ for all $r > R$. By assumption there exists an $s_0 > 0$ such that for all $s \geq s_0$ we have $\gamma_r(s) >R$. Let us define $M_{\mathrm{ext}, R} := \Mext \cap \{r>R\}$. Clearly, it suffices to prove $d_{\big(M_{\mathrm{ext},R}, \gext\big)}\big(\gamma(s_0), \gamma(s)\big) \to \infty$ for $s \to \infty$. 

Considering again the conformal metric
\begin{equation*}
h_{\mathrm{ext}} := \frac{1}{D(r)} \gext = -\, dt^2 + \frac{1}{\big[D(r)\big]^2} \, dr^2 +\frac{r^2}{D(r)} \, \mathring{\gamma}_{d-2} = -\, dt^2 + \overline{h}_{\mathrm{ext}}
\end{equation*}
on $M_{\mathrm{ext}, R} =: \R \times \overline{M}_{\mathrm{ext}, R}$, we note that $\gamma : [s_0, \infty) \to M_{\mathrm{ext}, R} $ is also timelike in $\big(M_{\mathrm{ext}, R} , h_{\mathrm{ext}}\big)$. Moreover, for $s_1 > s_0$, we have
\begin{equation*}
L_{(M_{\mathrm{ext}, R} , \gext)}\big(\gamma|_{[s_0, s_1]}\big) > \frac{1}{\sqrt{2}} L_{(M_{\mathrm{ext}, R} , h_{\mathrm{ext}})} \big(\gamma|_{[s_0, s_1]}\big) \;.
\end{equation*}
It thus suffices to show $d_{\big(M_{\mathrm{ext},R}, h_{\mathrm{ext}}\big)}\big(\gamma(s_0), \gamma(s)\big) \to \infty$ for $s \to \infty$.

Without loss of generality we can assume that $\gamma : [s_0, \infty) \to M_{\mathrm{ext},R}$ is parametrised by the $t$-coordinate, i.e.,
\begin{equation*}
\gamma(t) = \big(t, \overline{\gamma}(t)\big) \;.
\end{equation*}
We obtain
\begin{equation}
\label{TDist}
d_{\big(M_{\mathrm{ext},R}, h_{\mathrm{ext}}\big)}\big(\gamma(s_0), \gamma(s)\big) = \sqrt{ (s-s_0)^2 - d_{\overline{M}_{\mathrm{ext}, R}}^2\big(\ovg(s_0), \ovg(s)\big) } \;.
\end{equation}
Since $\gamma$ is timelike, we have $\overline{h}_{\mathrm{ext}}\big(\dot{\ovg}(s), \dot{\ovg}(s)\big) < 1$, and, hence, there exists an $\varepsilon >0$ such that $d_{\overline{M}_{\mathrm{ext}, R}}\big(\ovg(s_0), \ovg(s_0 + 1)\big) = 1 - \varepsilon$. It follows that
\begin{equation*}
\begin{split}
d_{\overline{M}_{\mathrm{ext}, R}}\big(\ovg(s_0), \ovg(s)\big) &\leq d_{\overline{M}_{\mathrm{ext}, R}}\big(\ovg(s_0), \ovg(s_0 + 1) \big) + d_{\overline{M}_{\mathrm{ext}, R}}\big(\ovg(s_0 + 1), \ovg(s)\big) \\
&\leq 1- \varepsilon + (s-s_0) -1 \\
&= (s-s_0) -\varepsilon \;.
\end{split}
\end{equation*}
We now obtain from \eqref{TDist}
\begin{equation*}
d_{\big(M_{\mathrm{ext},R}, h_{\mathrm{ext}}\big)}\big(\gamma(s_0), \gamma(s)\big) \geq \sqrt{(s-s_0)^2 - (s-s_0 - \varepsilon)^2} = \sqrt{2(s-s_0)\varepsilon - \varepsilon^2} \to \infty 
\end{equation*}
for $s \to \infty$.
\vspace*{2mm}

\textbf{Case 2:} $(r \circ \gamma)(s) \not\to \infty$ for $s \to \infty$.
\vspace*{2mm}

We choose the Eddington-Finkelstein coordinates $(v^*, r, \omega)$ to work with. Without loss of generality we can assume that $\gamma : [0, \infty) \to \Mext$ is parametrised by $v^*$, i.e.,
\begin{equation*}
\gamma(v^*) = \big(v^*, \gamma_r(v^*), \gamma_\omega(v^*)\big) \;.
\end{equation*}
By assumption there exists an $R>r_+$ such that for all $n \in \mathbb{N}$ there exists an $s_n > n$ such that $\gamma_r(s_n) < R$. We show that $d_{\Mext}\big(\gamma(0), \gamma(s_n) \big) \to \infty$ for $n \to \infty$.

The following two statements can be easily proved - the first using geodesic arcs on $\Sd$, the second using radial null geodesics:
\begin{equation}
\label{Sphere}
\parbox{0.8\textwidth}{There exists a $v^*_1 >0$ such that for all $\omega_f \in \Sd$, there exists a timelike curve $\sigma_1 : [0, v^*_1] \to \Mext$ with $\sigma_1(0) = \gamma(0)$ and $\sigma_1(v^*_1) = \big(v^*_1, \gamma_r(0), \omega_f\big)$.}
\end{equation}

\begin{equation}
\label{RadialGeod}
\parbox{0.8\textwidth}{There exists a $\Delta v^* >0$ such that for all $s_{n_0} >0$, for all $\omega_f \in \Sd$, and for all $r_f \in (r_+, R)$, there exists a timelike curve $\sigma_3 :[s_{n_0} - \Delta v^*, s_{n_0}] \to \Mext$ with $\sigma_3(s_{n_0} - \Delta v^*) = (s_{n_0} - \Delta v^*, \gamma_r(0), \omega_f)$ and $\sigma_3(s_{n_0}) = (s_{n_0}, r_f, \omega_f)$.}
\end{equation}

Given $C>0$, we then choose $n_0 \in \mathbb{N}$ so that $D\big(\gamma_r(0)\big) [n_0 - \Delta v^* - v_1]^2 > C^2$. Let $\omega_f := \gamma_\omega(s_{n_0})$ and $r_f := \gamma_r(s_{n_0})$. Moreover, define $\sigma_2 : [v_1, s_{n_0} - \Delta v^*] \to \Mext$ by
\begin{equation*}
\sigma_2(v^*) = \big(v^*, \gamma_r(0), \omega_f\big) \;.
\end{equation*}
Note that by \eqref{MetricEd} we have $L(\sigma_2) > C$.
It then follows that $\sigma_1 * \sigma_2 * \sigma_3$ is a future directed timelike curve from $\gamma(0)$ to $\gamma(s_{n_0})$ of Lorentzian length greater than $C$.
\end{proof}

\begin{proof}[Proof of Theorem \ref{Exterior}:]
The proof is analogous to the proof of Theorem \ref{MinkInex}. One first repeats literally Step 1.1 - Step 1.3. The analogous statement to Step 2 is obtained from Proposition \ref{ExtFutCon} (which replaces Step 2.1) and Proposition \ref{LengthToInfty} (which replaces Step 2.2). Step 3 is then virtually identical again.
\end{proof}

\section{The spacelike diameter in Lorentzian geometry}
\label{SecDiam}

Theorem \ref{DiamFinite} of this section will be used in the proof of Theorem \ref{Inex}. The results of this section might, however, be also of independent interest. We begin by introducing the notion of the \emph{spacelike diameter}, a geometric quantity at the level of $C^0$-regular Lorentzian metrics.

\begin{definition}
Let $(N,g)$ be a connected  and globally hyperbolic Lorentzian manifold with a $C^0$-regular metric $g$. The \emph{spacelike diameter}  $\diam_s(N)$ of $N$ is defined by 
\begin{equation*}
\diam_s(N) := \sup_{\substack{\Sigma \textnormal{ Cauchy} \\ \textnormal{hypersurface of } N}} \diam\, \Sigma = \sup_{\substack{\Sigma \textnormal{ Cauchy} \\  \textnormal{hypersurface of } N}} \sup_{p,q \in \Sigma} \inf_{\substack{\gamma : [0,1] \to \Sigma \\ \textnormal{ piecewise smooth curve} \\  \textnormal{with } \gamma(0) = p \textnormal{ and } \gamma(1) = q}} L(\gamma)\;.
\end{equation*}
Here, $L(\gamma) = \int_0^1 \sqrt{g\big(\dot{\gamma}(s), \dot{\gamma}(s)\big)} \, ds $ is the length of the curve $\gamma$. Note that this is well-defined since the tangent space of a Cauchy hypersurface $\Sigma$ does not contain timelike vectors.
\end{definition}

Let us remark, that an attempt to capture the notion of a spacelike diameter in direct analogy to the Riemannian case, i.e., by considering shortest spacelike curves, clearly fails, since one can connect any two points by a spacelike curve of arbitrarily short length by using a nearly null zig-zag path. 

\begin{definition}
\label{DefRegFlowChart}
Let $(N,g)$ be a globally hyperbolic Lorentzian manifold with a $C^0$-regular metric $g$. A chart $\psi : U \to D$ for $N$, where $U \subseteq N$ and $D \subseteq \R^{d+1}$, is called a \emph{regular flow chart for $N$} if, and only if
\begin{enumerate}
\item There exist constants $C, c$ such that the metric components in this chart satisfy the uniform bounds $|g_{\mu \nu} | \leq C < \infty$ and $g_{00} \leq c < 0$.
\item The domain $D$ is of the form $D = \bigcup_{ \ux \in B} I_{\ux} \times \{\ux\} \subseteq \R \times B$, where $B \subseteq \R^d$ and $I_{\ux} \subseteq \R$ is an open and connected interval. Moreover, the \emph{coordinate diameter} $\diam_e(B)$ of $B$ is finite, where $\diam_e(B)$ is the diameter of $B \subseteq \R^d$ with respect to the standard Euclidean metric $e$.
\item The timelike curves $I_{\ux} \ni s \mapsto \psi^{-1}(s, \ux)$ are inextendible in $N$.
\end{enumerate}
\end{definition}

The following theorem is the main theorem of this section - it gives a sufficient criterion for the spacelike diameter to be finite.

\begin{theorem}
\label{DiamFinite}
Let $(N,g)$ be a connected and globally hyperbolic Lorentzian manifold with a $C^0$-regular metric $g$ and let $\psi_k : U_k \to D_k$, $k = 1, \ldots, K$, be a finite collection of regular flow charts for $N$ with $\bigcup_{1 \leq k \leq K} U_k = N$.

Then one has $\diam_s(N) < \infty$.
\end{theorem}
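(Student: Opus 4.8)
The plan is to follow the reduction indicated after the statement. Since the finitely many regular flow charts $U_k$ cover $N$, and $N$ (hence every Cauchy hypersurface $\Sigma$) is connected, it suffices to produce a constant $M$ depending only on the charts, and not on $\Sigma$, such that $\diam(\Sigma\cap U_k)\le M$ for every $k$ and every Cauchy hypersurface $\Sigma$; a bound on $\diam\Sigma$ then follows by chaining, since any two points of $\Sigma$ can be joined within $\Sigma$ through at most $K$ of the connected sets $\Sigma\cap U_k$, whence $\diam\Sigma\le(K+1)M$. So I fix $k$ and a Cauchy hypersurface $\Sigma$ and work in the chart $\psi_k:U_k\to D_k=\bigcup_{\ux\in B_k}I_{\ux}\times\{\ux\}$.

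First I would establish the \emph{graph property}. By property 3 of Definition \ref{DefRegFlowChart}, each coordinate curve $c_{\ux}:I_{\ux}\to N$, $s\mapsto\psi_k^{-1}(s,\ux)$, is an inextendible timelike curve in $N$, hence meets $\Sigma$ exactly once, at some parameter $s=f_k(\ux)\in I_{\ux}$; the intersection point lies in $U_k$ since $c_{\ux}$ does. Conversely, every point of $\Sigma\cap U_k$ lies on exactly one such curve. Thus $\psi_k(\Sigma\cap U_k)$ is precisely the graph $\{(f_k(\ux),\ux):\ux\in B_k\}$ of a function $f_k:B_k\to\R$, and, since $\Sigma$ is closed in $N$, this graph is closed in $D_k$.

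The heart of the proof, and the step I expect to be the main obstacle, is the \emph{uniform slope bound}: $f_k$ is locally Lipschitz with a Lipschitz constant $L=L(C,c,d)$ independent of $\Sigma$. The ingredients are: (a) $\Sigma$ is achronal in $N$ — a timelike curve joining two of its points extends, by the usual prolongation argument, to an inextendible timelike curve meeting $\Sigma$ at least twice, contradicting the Cauchy property; and (b) from property 1 of Definition \ref{DefRegFlowChart} there is $\kappa=\kappa(C,c,d)>0$ such that in the chart every vector $\partial_0+\sum_i v^i\partial_i$ with $\sum_i(v^i)^2<\kappa$ is timelike, and, by continuity and connectedness, all such vectors lie in a single time cone throughout each component of $U_k$, hence are consistently future- or past-directed. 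Combining (b) with the straight-line construction used in Proposition \ref{FutureOpen}: if $P,Q\in D_k$ are such that the straight segment between them lies in $D_k$ and the ratio of the change in the $x^0$-coordinate to the Euclidean distance in the remaining coordinates exceeds $1/\sqrt{\kappa}$, then $P$ and $Q$ are timelike related in $N$. Feeding in $P=(f_k(\ux_0),\ux_0)$ and $Q=(f_k(\ux),\ux)$ together with achronality of $\Sigma$ yields $|f_k(\ux)-f_k(\ux_0)|\le\tfrac{1}{\sqrt{\kappa}}|\ux-\ux_0|$ for $\ux$ near $\ux_0$ — the restriction ``near'' being needed precisely to keep the connecting segment inside the possibly non-convex domain $D_k$. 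Making this rigorous, i.e.\ controlling the global causal structure of $N$ purely through the crude pointwise metric bounds of a regular flow chart while keeping the connecting timelike segments inside $D_k$, is the delicate part; this is where the flow-chart structure $D_k=\bigcup_{\ux}I_{\ux}\times\{\ux\}$ and the closedness of the graph are used, and it also delivers continuity of $f_k$ along the way.

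Given the graph property and the slope bound, the rest is routine. The map $\Phi_k:B_k\to\Sigma\cap U_k$, $\ux\mapsto\psi_k^{-1}(f_k(\ux),\ux)$, is a homeomorphism, and in the coordinates $\ux$ the induced Riemannian metric on $\Sigma\cap U_k$ has components $h_{ij}=g_{00}\,\partial_if_k\,\partial_jf_k+g_{0i}\,\partial_jf_k+g_{0j}\,\partial_if_k+g_{ij}$ evaluated along the graph; since $|\nabla f_k|\le 1/\sqrt{\kappa}$ almost everywhere and $|g_{\mu\nu}|\le C$, one gets $|h_{ij}|\le C':=C(1+1/\sqrt{\kappa})^2$, a bound independent of $\Sigma$. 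Hence every curve in $\Sigma\cap U_k$ has length at most $\sqrt{d\,C'}$ times the Euclidean length of its image in $B_k$, and using $\diam_e(B_k)<\infty$ from property 2 of Definition \ref{DefRegFlowChart} one obtains $\diam(\Sigma\cap U_k)\le\sqrt{d\,C'}\cdot\diam_e(B_k)$ (replacing $\diam_e(B_k)$ by the intrinsic path-diameter of $B_k$ where the two differ, which is harmless in the situations at hand). Setting $M:=\max_k\sqrt{d\,C'_k}\cdot\diam_e(B_k)$ and assembling via the chaining argument above gives $\diam_s(N)\le(K+1)M<\infty$.
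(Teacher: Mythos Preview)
Your overall architecture---graph property, uniform slope bound, uniform bound on the induced metric, then length estimate via $\diam_e(B_k)$ and a chaining argument---is exactly the paper's. The chaining in the paper is phrased as $\diam(\Sigma)\le\sum_k\diam(\Sigma\cap U_k)$ via an inductive ordering of the $V_k=\Sigma\cap U_k$, which is essentially your $(K+1)M$ bound; and the final length estimate is the same computation you wrote.

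The one place you diverge from the paper is the slope bound, and you have made it harder than it needs to be. You argue by achronality plus finite straight timelike segments, and then (correctly) worry about keeping those segments inside the possibly non-convex $D_k$; you flag this as ``the delicate part.'' The paper sidesteps all of this by working \emph{infinitesimally}. The Cauchy hypersurface $\Sigma$ is, by definition in this paper, a smooth embedded hypersurface; since $\partial_0$ is timelike and hence transverse to $\Sigma$, the implicit function theorem gives that $f_k$ is smooth. Then the vector $(\partial_i f_k)\,\partial_0+\partial_i$ is tangent to $\Sigma$, hence not timelike, so
\[
0\;\le\;g_{00}\,(\partial_i f_k)^2+2g_{0i}\,(\partial_i f_k)+g_{ii}\,.
\]
Since $g_{00}\le c<0$ this quadratic inequality confines $\partial_i f_k$ between its two roots, whose absolute values are bounded purely in terms of $C$ and $c$. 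That is the whole slope argument---no segments, no issue of staying inside $D_k$, no need to first establish continuity. Your segment approach can in fact be made to work (for $\ux$ near $\ux_0$ one connects $p_0$ to $(f_k(\ux_0)\pm L|\ux-\ux_0|,\ux)$ rather than to $(f_k(\ux),\ux)$, and uses openness of $D_k$), but it is strictly more work for the same conclusion.
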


\begin{proof}
The proof is divided into two steps.
\vspace*{2mm}

\underline{\textbf{Step 1:}} For all Cauchy hypersurfaces $\Sigma$ of $N$ we have $\diam (\Sigma) \leq \sum_{k =1}^K \diam(\Sigma \cap U_k)$.
\vspace*{2mm}

Let $\Sigma$ be an arbitrary Cauchy hypersurface of $N$ and set $V_k := U_k \cap \Sigma$.
First note that without loss of generality we can assume that the $V_k$ are ordered such that for all $1 \leq m < K$ we have
\begin{equation}
\label{NiceOrder}
\Big(\bigcup_{1 \leq k \leq m} V_k \Big) \cap V_{m+1} \neq \emptyset \;.
\end{equation}
The proof of this is an easy induction: starting with the open cover $\{ V_1, \bigcup_{2 \leq k \leq K} V_k \}$ of $\Sigma$, the connectedness of $\Sigma$ implies that $V_1 \cap \bigcup_{2 \leq k \leq K} V_k \neq \emptyset$. Hence, there is a $k_0 \in \{2, \ldots, K\}$ with  $V_1 \cap V_{k_0} \neq \emptyset$. After relabelling we can assume that $k_0 = 2$. We then  consider the open cover $\{(V_1 \cup V_2, \bigcup_{3 \leq k \leq K}V_k\}$ and proceed analogously. A finite number of iterations proves the claim.

Now assuming \eqref{NiceOrder}, we claim
\begin{equation}
\label{IndCover}
\diam\Big(\bigcup_{1 \leq k \leq m} V_k \Big) \leq \sum_{k=1}^m \diam(V_k) \qquad \textnormal{ holds for all } 1 \leq m \leq K \;.
\end{equation}
The proof is an induction in $m$. For $m=1$ there is nothing to show. Now assume \eqref{IndCover} holds for $m = m_0 < K$. Let $p,q \in \bigcup_{1 \leq k \leq m_0 + 1} V_k$. We need to show that the distance between $p$ and $q$ in $\bigcup_{1 \leq k \leq m_0 + 1} V_k$ is bounded by the right hand side of \eqref{IndCover}. If $p, q \in \bigcup_{1 \leq k \leq m_0} V_k$, this follows by induction hypothesis. If $p,q \in V_{m_0 + 1}$, this is trivial. It remains the case $p \in \bigcup_{1 \leq k \leq m_0} V_k$ and $q \in V_{m_0 +1}$ (or  the other way around). By \eqref{NiceOrder} there is a $r \in \Big(\bigcup_{1 \leq k \leq m} V_k \Big) \cap V_{m+1} $. The distance from $p$ to $r$ is by the induction hypothesis bounded by $\sum_{k=1}^{m_0} \diam(V_k) $, while the distance from $r$ to $q$ is bounded by $\diam(V_{m_0 + 1})$. The triangle inequality then concludes the proof.
\vspace*{2mm}

\underline{\textbf{Step 2:}} We show that for each $1 \leq k \leq K$ there exists a constant $0 < C_k < \infty$ such that $\diam(\Sigma \cap U_k) \leq C_k$ holds for all Cauchy hypersurfaces $\Sigma$ of $N$.
\vspace*{2mm}

Let $\Sigma$ be a Cauchy hypersurface of $N$ and let $\psi_k : U_k \to D_k = \bigcup_{\ux \in B_k} I_{\ux} \times \{\ux\}$ be a regular flow chart for $N$. Again, we set $V_k := \Sigma \cap U_k$, which is a smooth submanifold of $N$. Since the timelike curves $I_{\ux} \ni s \mapsto \psi_k^{-1}(s ,\ux)$ are inextendible in $N$ for each $\ux \in B_k$, they intersect $\Sigma$ - and hence $V_k$ - exactly once. This defines a function $f : B_k \to \R$ with the property that $\psi_k^{-1}\big(f(\ux), \ux\big) \in \Sigma$.  Moreover, we define $\omega : B_k \to \psi_k(V_k)$ by $\omega(\ux) = \big( f(\ux), \ux\big)$.
\vspace*{2mm}

\textbf{Step 2.1:} We show that $\omega^{-1}$ is a global chart for $\psi_k(V_k)$.
\vspace*{2mm}

It suffices to show that $f : B_k \to \R$ is smooth.
Let $\ux_0 \in B_k$. Since $\psi_k(V_k)$ is a smooth submanifold of $D_k$, there exists a smooth submersion $g : W \to \R$, where $W \subseteq D_k$ is an open neighbourhood of $\big(f(\ux_0), \ux_0\big)$, such that $\psi_k(V_k) \cap W = \{ g=0\}$. Since $\Sigma$ is a Cauchy hypersurface, the timelike vector field $\partial_0$ can be nowhere tangent to $\psi_k(V_k)$.\footnote{\label{TangentSpace}Note that the tangent plane of a smooth Cauchy hypersurface cannot contain timelike vectors, since otherwise we could find a small timelike curve lying in the Cauchy hypersurface - which is clearly a contradiction to the definition of a Cauchy hypersurface.} It thus follows that $\partial_0 g|_{\big(f(\ux_0), \ux_0\big)} \neq 0$. By the implicit function theorem, there is now a smooth function $h : X \to \R$, where $X \subseteq B_k$ is an open neighbourhood of $\ux_0$, such that $g\big(h(\ux),\ux\big) =0$. Thus, we must have $f|_X = h$ - and hence, $f$ is smooth.
\vspace*{2mm}

\textbf{Step 2.2:} We show that there exists a $0< C_\mathrm{slope} < \infty$ such that $|\partial_i f(\ux)| \leq C_\mathrm{slope} $ holds for all $\ux \in B_k$  and for all $i = 1, \ldots, d$.
\vspace*{2mm}

Since the tangent space of a Cauchy hypersurface does not contain timelike vectors, we obtain the inequality
\begin{equation}
\label{Slope}
0 \leq g\Big((\partial_i f) \partial_0 + \partial_i, (\partial_i f) \partial_0 + \partial_i\Big) = (\partial_i f)^2 g_{00} + 2 (\partial_i f) g_{0i} + g_{ii} 
\end{equation}
for all $i = 1, \ldots, d$.
Equality in \eqref{Slope} holds for  
\begin{equation*}
(\partial_i f)_\pm =  \frac{-\,g_{0i} \mp \sqrt{(g_{0i})^2 - g_{ii} g_{00}}}{g_{00}} \;.
\end{equation*}
Recall that $g_{00} \leq c  < 0$ and $|g_{\mu \nu}| \leq C$. Hence, we obtain the uniform bound
\begin{equation*}
\max\Big{\{}\big{|}(\partial_i f)_+\big{|} \, , \,\big{|} (\partial_i f)_-\big{|} \Big{\}} \leq C_\mathrm{slope}  \;,
\end{equation*}
where $0< C_\mathrm{slope}  < \infty$ is a constant depending on $C$ and $c$ (but not on $f$).
Moreover, together with $g_{00} <0$, the inequality \eqref{Slope} implies
\begin{equation*}
(\partial_i f)_- \leq (\partial_i f) \leq (\partial_i f)_+ 
\end{equation*}
and thus $|\partial_i f| \leq C_\mathrm{slope} $ for all $i = 1, \ldots, d$.
\vspace*{2mm}

The ambient metric $g$ on $D_k$ induces a metric $\overline{g}$ on $\psi_k(V_k)$.
Note that this metric is not necessarily positive definite - it might be degenerate. Its components with respect to the chart $\omega^{-1} $ are denoted by $\overline{g}_{ij}$, where $i,j \in \{1, \ldots, d\}$.
\vspace*{2mm}

\textbf{Step 2.3:} We show that there exists a constant $0 < C_{\overline{g}} < \infty$ such that for all  $\ux \in B_k$ and for all vectors $Z \in \R^d$ we have $\overline{g}_{ij}(\ux) Z^i Z^j \leq C_{\overline{g}} \cdot e_{ij}Z^i Z^j$, where $e$ is the Euclidean metric on $B_k \subseteq \R^d$.
\vspace*{2mm}

We compute
\begin{equation*}
\overline{g}_{ij} = \big(\omega^*g\big)_{ij} = g_{\mu \nu} \frac{\partial \omega^\mu}{\partial x_i} \frac{\partial \omega^\nu}{\partial x_j} = g_{00} \frac{\partial f}{\partial x_i} \frac{\partial f}{\partial x_j} + g_{0j}\frac{\partial f}{\partial x_i} + g_{i0}\frac{\partial f}{\partial x_j} + g_{ij} \;.
\end{equation*}
It now follows from the uniform bound $|g_{\mu \nu}| \leq C$ and Step 2.2 that there exists a constant $0 < C' < \infty$ such that $|\overline{g}_{ij}| \leq C'$ holds for all  $i,j = 1, \ldots, d$. This concludes Step 2.3.
\vspace*{2mm}

We can now finish Step 2. Let $\omega(\ux), \omega(\underline{y}) \in \psi_k(V_k)$, with $\ux, \underline{y} \in B_k$, be given. There exists a curve $\gamma : [0,1] \to B_k$ with $\gamma(0) = \ux$ and $\gamma(1) = \underline{y}$ of coordinate-length 
\begin{equation*}
L_e(\gamma) = \int_0^1 \sqrt{e\big(\dot{\gamma}(s), \dot{\gamma}(s)\big)} \, ds \leq \diam_e(B_k) + 1 \;.
\end{equation*}
It then follows from Step 2.3 that 
\begin{equation*}
\begin{split}
L_{\overline{g}}(\gamma) &= \int_0^1 \sqrt{\overline{g}\big( \dot{\gamma}(s), \dot{\gamma}(s)\big)} \, ds \\
&\leq \sqrt{C_{\overline{g}}} \int_0^1 \sqrt{e\big( \dot{\gamma}(s), \dot{\gamma}(s)\big)} \, ds \\
&\leq \sqrt{C_{\overline{g}}} \cdot (\diam_e(B_k) + 1) \;.
\end{split}
\end{equation*}
Note that this bound does not depend on the points $\ux, \underline{y} \in B_k$. 
This concludes Step 2 with $C_k = \sqrt{C_{\overline{g}}} \cdot \big(\diam_e(B_k) + 1\big)$ and hence the proof of Theorem \ref{DiamFinite}.
\end{proof}

The next theorem is not needed for the proof of Theorem \ref{Inex}. However, its proof illustrates how an embedding of a globally hyperbolic Lorentzian manifold $N$ into a larger Lorentzian manifold $M$, such that $N$ is precompact in $M$, allows us to construct regular flow charts for $N$. Later, in the proof of Theorem \ref{Inex}, we morally encounter the more complex situation of finitely many embeddings of a globally hyperbolic Lorentzian manifold $N$ into larger Lorentzian manifolds. The image of $N$ is not precompact in any of the larger manifolds, but taking all the embeddings into account, we can recover some sort of compactness and construct regular flow charts. This, however, is done `by hand' in the proof of Theorem \ref{Inex}. Thus, the reader only interested in the proof of Theorem \ref{Inex} can skip directly to Section \ref{SecPf}.

\begin{theorem}
Let $(M,g)$ be a time oriented Lorentzian manifold with a $C^0$-regular metric and $N \subseteq M$ an open and globally hyperbolic subset. Moreover assume that $N$ is precompact in $M$ and that $\psi : \R^d \supseteq B_2(0) \hookrightarrow M$ is a smooth embedding of $B_2(0)$ such that $\psi|_{B_1(0)} : B_1(0) \hookrightarrow N \subseteq M$ is a Cauchy hypersurface in $(N,g)$.

Then one has $\diam_s(N) < \infty$.
\end{theorem}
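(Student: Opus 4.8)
The plan is to deduce the statement from Theorem~\ref{DiamFinite} by exhibiting a finite collection of regular flow charts covering $N$. Since $M$ is time-oriented, fix a smooth, future-directed, globally timelike vector field $T$ on $M$; write $\Phi$ for its flow, and $\Sigma_0 := \psi(B_1(0))$ for the given Cauchy hypersurface of $(N,g)$.

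First I would set up a flow foliation. For $p \in N$ let $I_p \subseteq \R$ be the connected component containing $0$ of the open set $\{s : \Phi_s(p) \text{ is defined and lies in } N\}$, and put $c_p : I_p \to N$, $c_p(s) := \Phi_s(p)$. Each $c_p$ is a future directed timelike curve, and, reparametrised to a bounded interval, it is inextendible in $N$: if $c_p$ were future extendible in $N$ with $c_p(s)\to q\in N$ at the right endpoint $b_p$ of $I_p$, then either $b_p<\infty$, in which case continuity of $\Phi$ and uniqueness of integral curves extend $c_p$ past $b_p$ inside the open set $N$, contradicting maximality of $I_p$; or $b_p=\infty$, which is impossible since $\Phi_s(p)\to q$ would force $T(q)=0$. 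The past end is symmetric. Hence every $c_p$ is an inextendible timelike curve of the globally hyperbolic $N$ and meets $\Sigma_0$ exactly once; in particular $N=\bigcup_{\ux\in B_1(0)}c_{\psi(\ux)}(I_{\psi(\ux)})$, these curves are pairwise disjoint, and each is injective (there are no closed timelike curves in a globally hyperbolic manifold).

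Next comes the chart. The smooth map $\Psi(s,\ux):=\Phi_s(\psi(\ux))$ is an immersion — $T$, being timelike, is transverse to the Cauchy hypersurface $\Sigma_0$, whose tangent spaces contain no timelike vectors, and transversality is preserved by the $\Phi_s$ — and by the previous paragraph it is injective on $D:=\{(s,\ux):\ux\in B_1(0),\ s\in I_{\psi(\ux)}\}$, which is open (if $(s_0,\ux_0)\in D$ the compact segment $\Psi([\min(0,s_0),\max(0,s_0)]\times\{\ux_0\})$ lies in the open set $N$, hence so does a coordinate neighbourhood of it). An injective immersion between manifolds of equal dimension is a diffeomorphism onto its (open) image, so $\Psi|_D:D\to N$ is a diffeomorphism; let $\psi_0:=(\Psi|_D)^{-1}$. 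Then $D=\bigcup_{\ux\in B_1(0)}I_{\psi(\ux)}\times\{\ux\}$ with $B_1(0)$ of finite Euclidean diameter, so condition~2 of Definition~\ref{DefRegFlowChart} holds; the vertical curves of $\psi_0$ are the $c_{\psi(\ux)}$, inextendible in $N$, so condition~3 holds; and $g_{00}=g(T,T)$, continuous and everywhere negative on $M$, is bounded above by some $c<0$ on the compact set $\overline N$, which is half of condition~1.

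The main obstacle is the uniform bound $|g_{\mu\nu}|\le C$. The remaining components are $g_{\mu i}(s,\ux)=g\big(d\Phi_s(\cdot),\,d\Phi_s(\partial_i\psi)\big)$ evaluated at $\Phi_s(\psi(\ux))\in\overline N$, and since $g$ is bounded on the compact set $\overline N$ it is enough to bound the vectors $d\Phi_s|_{\psi(\ux)}(\partial_i\psi)$, in a fixed background Riemannian metric, uniformly over $D$. Covering $\overline N$ by finitely many flow charts for $T$ with precompact domains (Proposition~\ref{FlowChart}) controls $g$ and the flow of $T$ uniformly on each, so the estimate can only fail if the parameter $s$ runs off to $\pm\infty$ along some maximal integral curve that remains in $N$. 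I expect ruling this out to be the crux. I would do it by choosing $T$ adapted to the global hyperbolicity of $N$ — aligning it near $\overline N$ with the gradient of a smooth Cauchy time function on $N$ — so that every maximal integral curve spends only a uniformly bounded parameter interval inside $N$; then $\overline D$ is a compact subset of the domain of $\Phi$, $d\Phi$ is uniformly bounded there, and $\psi_0$ is a single regular flow chart for $N$. Should the merely continuous regularity obstruct such a choice of $T$, one would instead cover $N$ — as is done by hand in the proof of Theorem~\ref{Inex} — by finitely many regular flow charts, each with small coordinate base and each of whose vertical curves is a full inextendible curve of $N$, arranged so that only bounded flow times occur. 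In either case Theorem~\ref{DiamFinite} applies and yields $\diam_s(N)<\infty$.
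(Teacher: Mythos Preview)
Your setup parallels the paper's --- the same flow map $\Psi(s,\ux)=\Phi_s(\psi(\ux))$ onto $N$, with conditions~2, 3, and the $g_{00}$-half of condition~1 of Definition~\ref{DefRegFlowChart} immediate --- and you correctly locate the crux in the uniform bound $|g_{\mu\nu}|\le C$. But your proposed resolutions do not close the gap. Approach~(a), aligning $T$ with the gradient of a smooth Cauchy time function, is not available for a merely $C^0$ metric: $\nabla\tau$ would not be a smooth vector field, and even the existence of a smooth temporal function in this regularity is not something you can simply cite. You acknowledge this, but your fallback~(b) is only a gesture toward the construction in Theorem~\ref{Inex}, not an argument.

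The real omission is that you never use the hypothesis that $\psi$ extends to $B_2(0)$. The paper uses it twice. First, to show each $I_{\psi(\ux)}$ is bounded: one argues by cases on where an accumulation point $p\in\overline N$ of the integral curve lies, and the case $p\in\overline\Sigma\setminus\Sigma=\psi(\partial B_1(0))$ requires the embedded surface to extend past its boundary so it can be straightened in a flow box. Second --- and this is the paper's device for the $|g_{\mu\nu}|$-bound, which avoids estimating $d\Phi_s$ altogether --- the pullback $\chi_2^*g$ is continuous on the \emph{larger} domain $D_{\psi,2}\subseteq\R\times B_2(0)$, and one shows that $\overline{D_{\psi,1}}\subseteq D_{\psi,2}$ is compact by proving the future and past boundary functions extend continuously to $\overline{B_1(0)}$, with value $0$ on $\partial B_1(0)$. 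The bound is then just continuity on a compact set. Your route through $d\Phi_s(\partial_i\psi)$ would also need the extension of $\psi$ (to control $\partial_i\psi$ on $\overline{B_1(0)}$) and would need $\overline D$ compact inside the flow domain of $T$ in $M$ --- which again rests on the continuity of the boundary functions that you do not establish.
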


Let us remark that we have assumed the existence of a global chart, and hence a trivial topology, of the Cauchy hypersurface for $N$ merely for simplicity of exposition. The reader is invited to write down more general formulations of this theorem.

\begin{proof}
The proof is divided into three steps.
\vspace*{2mm}

\underline{\textbf{Step 1:}} The set-up.
\vspace*{2mm}

Since $(M,g)$ is time-oriented, we can find a smooth globally timelike and future directed vector field $T$ on $M$. We denote the flow of $T$ by $\Phi_{(\cdot)} (\cdot) : \R \times M \subseteq \D \to M$, where $\D$ denotes the maximal domain of the flow. 

Let us denote the maximal time interval of existence of the integral curve of $T$ in $M$, starting at $p \in M$, by $I_{p,M}$. It follows that $\D = \bigcup_{p \in M} I_{p,M} \times \{p\}$. Moreover let us define the maximal time interval of existence of the integral curve of $T$ in $N$, starting at $q \in N$, by $I_{q,N}$.

Setting $\D_{\psi, 2} := \bigcup_{\underline{x} \in B_2(0)} I_{\psi(\underline{x}),M} \times \{\underline{x}\} \subseteq \R \times B_2(0)$, we define $\chi_2 : \D_{\psi, 2} \to M$ by 
\begin{equation*}
\chi_2(s,\underline{x}) := \Phi_s\big(\psi(\underline{x})\big) \;.
\end{equation*}
Clearly, $\chi_2$ is smooth. Define now $\D_{\psi, 1} := \bigcup_{\underline{x} \in B_1(0)} I_{\psi(\underline{x}), N} \times \{\underline{x}\} \subseteq \R\times B_1(0) \subseteq \D_{\psi, 2}$. We claim that
\begin{equation*}
\chi_1 := \chi_2|_{\D_{\psi,1}} : \D_{\psi, 1} \to N
\end{equation*}
is a diffeomorphism. This is seen as follows:

\begin{itemize}
\item \textbf{Surjectivity:}  Let $p \in N$. Since $N$ is globally hyperbolic, the maximal integral curve of $T$, starting at $p$, must intersect the Cauchy hypersurface $ \Sigma := \psi\big(B_1(0)\big)$ after time $-s_0 \in \R$ at $\psi(\underline{x}_0) \in \Sigma$ (say). We thus have $p = \chi_1(s_0,\underline{x}_0)$.

\item \textbf{Injectivity:} Assume $\chi_1(s_0,\underline{x}_0) = \chi_1(s_1, \underline{x}_1)$ with $s_0, s_1 \in \R$ and $\underline{x}_0, \underline{x}_1 \in B_1(0)$. It follows that $\psi(\ux_0)$ and $\psi(\ux_1)$ lie on the same integral curve of $T$. Since every integral curve of $T$ intersects $\Sigma$ only once, it follows that $\ux_0 = \ux_1$. Moreover, if $s_0 \neq s_1$, then there is a closed integral curve of $T$ in $N$, which contradicts the global hyperbolicity of $N$. Thus, we have $s_0 = s_1$.

\item \textbf{$\chi_1$ is a local diffeomorphism:} Let $(s_0, \ux_0) \in \D_{\psi, 1}$ be given. We have
\begin{equation*}
(\chi_1)_* |_{(s_0,\ux_0)} \big(\frac{\partial}{\partial s}\big) = T_{\Phi_{s_0}\big(\psi(\ux_0)\big)} = (\Phi_{s_0})_*|_{\psi(\ux_0)} (T)
\end{equation*}
and
\begin{equation*}
(\chi_1)_*|_{(s_0,\ux_0)}\big(\frac{\partial}{\partial x_i} \big) = (\Phi_{s_0})_*|_{\psi(\ux_0)} \big(\psi_*(\frac{\partial}{\partial x_i}) \big) \quad \textnormal{ for } i= 1, \ldots, d \:.
\end{equation*}
Since $\Phi_{s_0}$ is a diffeomorphism and $\big(T, \psi_*(\frac{\partial}{\partial x_1}), \ldots, \psi_*(\frac{\partial}{\partial x_d})\big)$ are linearly independent at $\psi(\ux_0)$, it follows that $(\chi_1)_*|_{(s_0, \ux_0)}$ is surjective.
\end{itemize}
Hence,  we have shown that $\chi_1$ is a diffeomorphism - and thus we obtain a global coordinate system $\chi_1^{-1} : N \to \D_{\psi, 1} \subseteq \R \times B_1(0)$ for $N$.

Pulling back the metric $g$ via $\chi_2$, we obtain a continuous $2$-covariant tensor field $\chi_2^*g$ on $\D_{\psi,2}$. The components $\chi_2^*g(\partial_\mu, \partial_\nu)$ are continuous functions on $\D_{\psi,2}$ and agree on $\D_{\psi,1}$ with the metric components $g_{\mu \nu}$ of the metric $g$ in the chart $\chi_1^{-1}$.
\vspace*{2mm}

\underline{\textbf{Step 2:}} We show that $\overline{\D_{\psi,1}} \subseteq \R^{d+1}$ is compact and that $\overline{\D_{\psi,1}}  \subseteq \D_{\psi, 2}$. Thus, there exists a constant $C_g >0$ such that  $|\chi_2^*g(\partial_\mu, \partial_\nu) (x)| \leq C_g$ holds for all $x \in \overline{\D_{\psi, 1}}$. In particular, the metric components $g_{\mu \nu}$ in the chart $\chi_1^{-1}$ are uniformly bounded.
\vspace*{2mm}

We define the \emph{future boundary function} $h_f : B_1(0) \to [0, \infty] $ by
\begin{equation*}
h_f(\ux) := \sup I_{\psi(\ux),N}
\end{equation*}
and the \emph{past boundary function} $h_p : B_1(0) \to [-\infty, 0] $ by
\begin{equation*}
h_p(\ux) := \inf I_{\psi(\ux),N}\;.
\end{equation*}

\textbf{Step 2.1} The future boundary function $h_f$ actually maps into $[0,\infty)$ and the past boundary function $h_p$ maps into $(-\infty,0]$.
\vspace*{2mm}

So let $\ux \in B_1(0)$. We show that $h_f(\ux) < \infty$ - the analogous statement for the past boundary function follows by reversing the time orientation. 

The proof is by contradiction. Hence, let us assume that the integral curve $s \mapsto \gamma(s) := \Phi_s\big(\psi(\ux)\big)$ is defined for $s \in [0,\infty)$ and is contained in $N$. We define the sequence $\{s_n\}_{n \in \N} \subseteq [0,\infty)$ by $s_n := n$. Since $\overline{N} \subseteq M$ is compact, it follows that, after possibly taking a subsequence, there exists a $p \in \overline{N}$ such that $\gamma(s_n) \to p$ for $n \to \infty$. We distinguish the following three cases:

\begin{enumerate}
\item \textbf{$p \in N$:} By Proposition \ref{FlowChart} there exists a flow chart $\varphi : U \to (-\Delta, \Delta) \times (-E,E)^d$ centred at $p$, with timelike connected orbits. 
For $n$ large enough, we have $\gamma(s_n), \gamma(s_{n+1}) \in U$ - and moreover we can without loss of generality assume that they lie on different orbits of $T$ when restricted to $U$. We now connect the orbit $\gamma(s_{n+1})$ is lying on by a future directed timelike curve with the orbit $\gamma(s_n)$ is lying on. 
\begin{center}
\def\svgwidth{6cm}
\begingroup%
  \makeatletter%
  \providecommand\color[2][]{%
    \errmessage{(Inkscape) Color is used for the text in Inkscape, but the package 'color.sty' is not loaded}%
    \renewcommand\color[2][]{}%
  }%
  \providecommand\transparent[1]{%
    \errmessage{(Inkscape) Transparency is used (non-zero) for the text in Inkscape, but the package 'transparent.sty' is not loaded}%
    \renewcommand\transparent[1]{}%
  }%
  \providecommand\rotatebox[2]{#2}%
  \ifx\svgwidth\undefined%
    \setlength{\unitlength}{530.61474609bp}%
    \ifx\svgscale\undefined%
      \relax%
    \else%
      \setlength{\unitlength}{\unitlength * \real{\svgscale}}%
    \fi%
  \else%
    \setlength{\unitlength}{\svgwidth}%
  \fi%
  \global\let\svgwidth\undefined%
  \global\let\svgscale\undefined%
  \makeatother%
  \begin{picture}(1,0.62612011)%
    \put(0,0){\includegraphics[width=\unitlength]{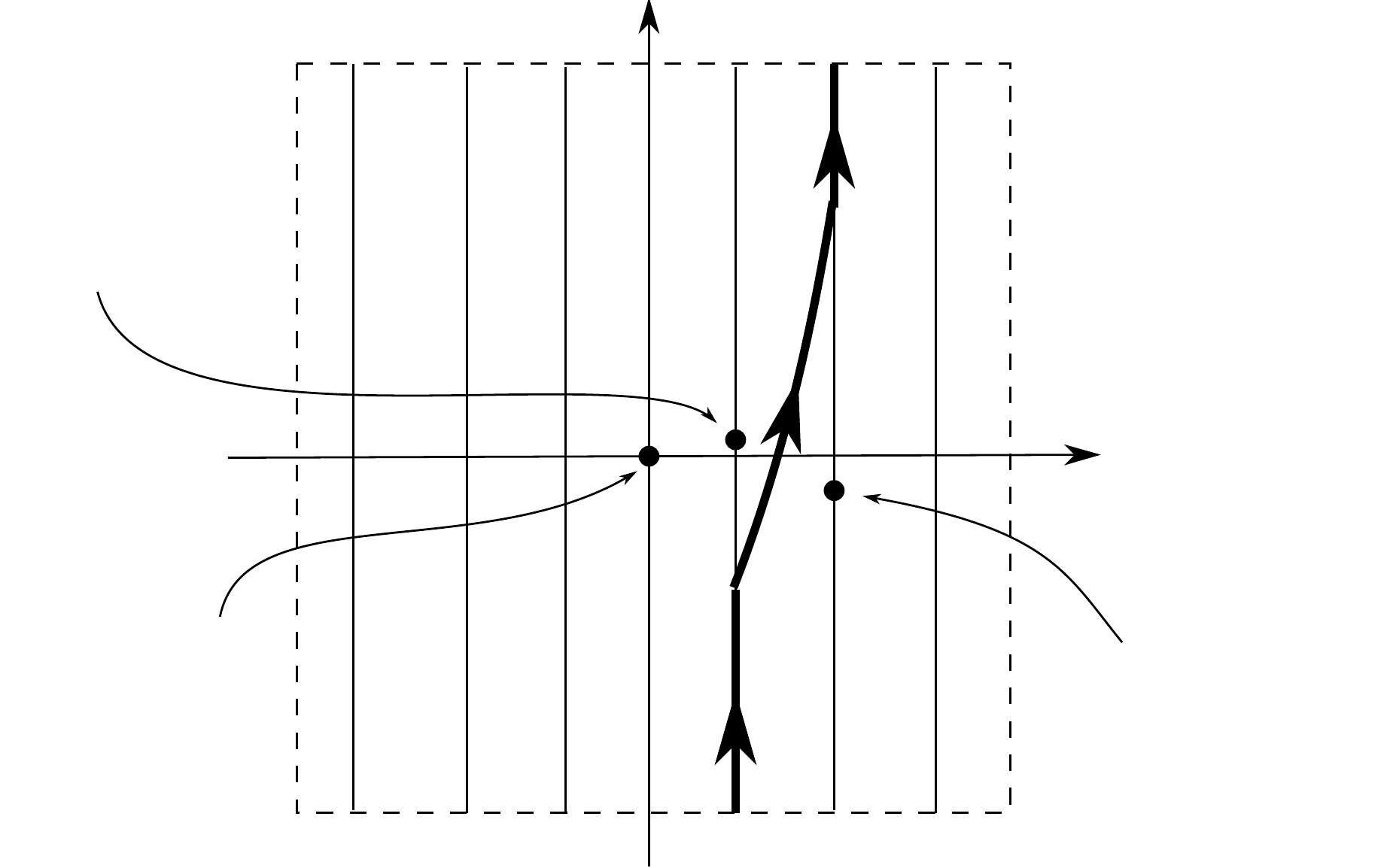}}%
    \put(0.13763345,0.14587558){\color[rgb]{0,0,0}\makebox(0,0)[lb]{\smash{$p$}}}%
    \put(-0.00400479,0.43219801){\color[rgb]{0,0,0}\makebox(0,0)[lb]{\smash{$\gamma(s_{n+1})$}}}%
    \put(0.76053717,0.10780064){\color[rgb]{0,0,0}\makebox(0,0)[lb]{\smash{$\gamma(s_n)$}}}%
  \end{picture}%
\endgroup%

\end{center}
This gives rise to a closed timelike curve in $N$, which contradicts the global hyperblicity of $N$.

\item \textbf{$p \in \partial N \setminus \overline{\Sigma}$:}
Let $\varphi : M \supseteq U \to (-\Delta, \Delta) \times (-E,E)^d$ be a flow chart centred at $p$, with timelike connected orbits, such that $U$ is disjoint from $\Sigma$. For $n$ big enough, we have $\gamma(s_n) \in U$. Clearly, the orbit $\Gamma \subseteq U$ of $T$ restricted to $U$, on which $\gamma(s_n)$ lies, lies in $I^+(\Sigma, N)$. Thus, every past-inextendible timelike curve in $N$, which starts at a point of the orbit $\Gamma$ in $U$, has to intersect $\Sigma$. However, by the proof of Proposition \ref{FlowChart}, we can find a past directed timelike curve from a point of $\Gamma$ to $p$. 
\begin{center}
\def\svgwidth{6cm}
\begingroup%
  \makeatletter%
  \providecommand\color[2][]{%
    \errmessage{(Inkscape) Color is used for the text in Inkscape, but the package 'color.sty' is not loaded}%
    \renewcommand\color[2][]{}%
  }%
  \providecommand\transparent[1]{%
    \errmessage{(Inkscape) Transparency is used (non-zero) for the text in Inkscape, but the package 'transparent.sty' is not loaded}%
    \renewcommand\transparent[1]{}%
  }%
  \providecommand\rotatebox[2]{#2}%
  \ifx\svgwidth\undefined%
    \setlength{\unitlength}{530.61474609bp}%
    \ifx\svgscale\undefined%
      \relax%
    \else%
      \setlength{\unitlength}{\unitlength * \real{\svgscale}}%
    \fi%
  \else%
    \setlength{\unitlength}{\svgwidth}%
  \fi%
  \global\let\svgwidth\undefined%
  \global\let\svgscale\undefined%
  \makeatother%
  \begin{picture}(1,0.62612011)%
    \put(0,0){\includegraphics[width=\unitlength]{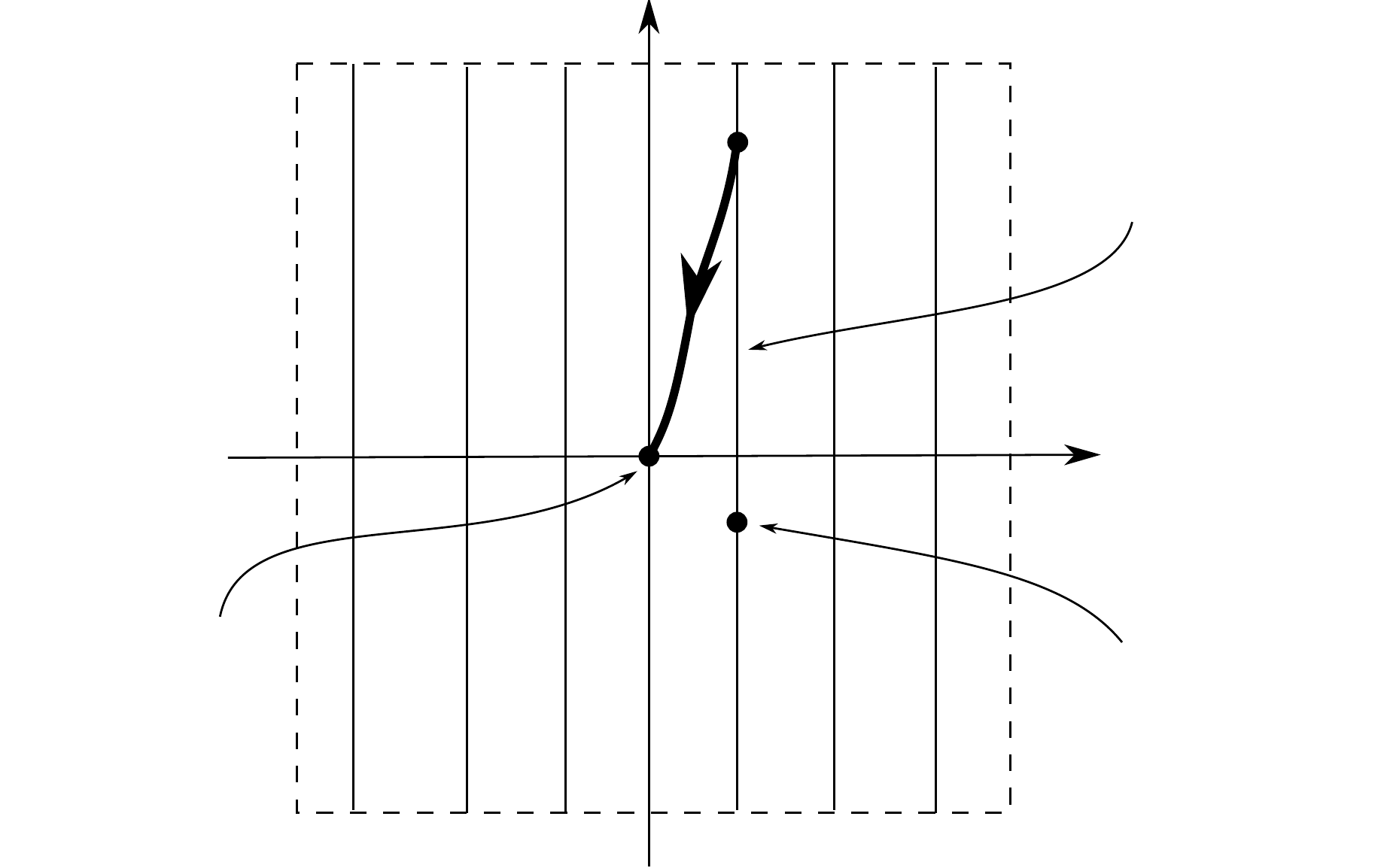}}%
    \put(0.13763345,0.14587558){\color[rgb]{0,0,0}\makebox(0,0)[lb]{\smash{$p$}}}%
    \put(0.76053717,0.10780064){\color[rgb]{0,0,0}\makebox(0,0)[lb]{\smash{$\gamma(s_n)$}}}%
    \put(0.79500182,0.4782195){\color[rgb]{0,0,0}\makebox(0,0)[lb]{\smash{$\Gamma$}}}%
  \end{picture}%
\endgroup%

\end{center}
Since the Cauchy hypersurface $\Sigma$ is disjoint from $U$, this gives rise to an inextendible timelike curve in $N$ which does not intersect $\Sigma$ - a contradiction to the global hyperbolicity of $N$.

\item \textbf{$p \in \overline{\Sigma} \setminus \Sigma$:} There is an $\ux_0 \in \partial B_1(0) \subseteq B_2(0)$ with $\psi(\ux_0) = p$.  As already pointed out in Footnote \ref{TangentSpace}, the tangent space of a smooth Cauchy hypersurface cannot contain timelike vectors. Hence, also the tangent space $T_{\psi(\ux_0)}\psi\big(B_2(0)\big)$ does not contain timelike vectors. Also using that $\psi\big(B_2(0)\big)$ is an embedded submanifold, we can thus find flow coordinates $\varphi : U \to (-\Delta, \Delta) \times (-E,E)^d$, centred at $p$, such that 
\begin{equation}
\label{SliceCoord}
\varphi\big(\psi(B_2(0) \cap U\big) = \{0\} \times (-E,E)^d\;.
\end{equation}
It now follows from the continuity of the flow $\Phi$ that $\gamma(s_n + \frac{\Delta}{2}) = \Phi_{\frac{\Delta}{2}}\big(\gamma(s_n)\big) \to \varphi^{-1}\big((\frac{\Delta}{2}, 0 , \ldots, 0)\big)$ for $n \to \infty$. However, $\varphi^{-1}\big((\frac{\Delta}{2}, 0 , \ldots, 0)\big) \in \overline{N}$ cannot lie in $\overline{\Sigma}$ by \eqref{SliceCoord} - thus it must either lie in $\partial N \setminus \overline{\Sigma}$ or in $N$, which brings us back to the contradiction derived in the previous two cases.
\end{enumerate}
This finishes the proof of $h_f(\ux) < \infty$.
\vspace*{2mm}

\textbf{Step 2.2} We show that for $\ux \in B_1(0)$ we have $\overline{I_{\psi(\ux),N}} \subseteq I_{\psi(\ux), M}$, where the closure is in $\R$.
\vspace*{2mm}

This follows since $\overline{N} \subseteq M$ is compact and an integral curve of $T$ cannot break down without leaving every compact subset of $M$.
\vspace*{2mm}

\textbf{Step 2.3} We show that $h_f : B_1(0) \to [0,\infty)$ is continuous. Moreover, $h_f$ can be continuously extended to $\overline{B_1(0)}$ by defining $h_f(\ux) := 0$ for $\ux \in \partial B_1(0)$. Similarly, after the same extension, $h_p : \overline{B_1(0)} \to (-\infty,0]$ is continuous.
\vspace*{2mm}

Without loss of generality, we only show the statement for the future boundary function $h_f$. We begin with the continuity in $B_1(0)$.

The proof is by contradiction - we assume that $h_f$ is not continuous at $\ux_\infty \in B_1(0)$. Thus, there exists a $\delta >0$ and a sequence $\{\ux_n\}_{n \in \N} \subseteq B_1(0)$ with $\ux_n \to \ux_\infty$ for $n \to \infty$ such that $|h_f(\ux_n) - h_f(\ux_\infty)| > \delta$ holds for all $n \in \N$. 

By Step 2.2, we have that $\big(h_f(\ux_\infty), \ux_\infty\big) \in \D_{\psi,2}$. In the same way we have shown that $\chi_1$ is a local diffeomorphism, it follows that $\chi_2|_{\D_{\psi,2} \cap \big(\R \times B_1(0)\big)}$ is a local diffeomorphism. Thus, there exists a small neighbourhood $U \subseteq B_1(0)$ of $\ux_\infty$ and a $\Delta >0$  such that, if we set $W :=\big(h_f(\ux_\infty) - \Delta, h_f(\ux_\infty) + \Delta) \times U$, we have that $\chi_2|_W : W \to \chi_2(W)$ is a diffeomorphism. After possibly making $\Delta$ smaller, we can assume without loss of generality that $0 < \Delta < \delta$ and that $h_f(\ux_\infty) - \Delta >0$.
We now distinguish two cases:
\begin{enumerate}
\item There exists a subsequence $\{\ux_{n_k}\}_{k \in \N}$ with $n_k \to \infty$ for $k \to \infty$ such that $h_f(\ux_{n_k}) > h_f(\ux_\infty) + \delta$ holds for all $k \in \N$. By Proposition \ref{FutureOpen} it follows that for $k_0 \in \N$ large enough we have 
\begin{equation*}
\big( h_f(\ux_\infty) + \frac{\Delta}{2}, \ux_{n_{k_0}}\big) \in I^+\Big(\big(h_f(\ux_\infty), \ux_\infty\big), W\Big) \;.
\end{equation*}
We can thus find a past directed timelike curve $\sigma : [0,1] \to W$ with $\sigma(0) =\big( h_f(\ux_\infty) + \frac{\Delta}{2}, \ux_{n_{k_0}})\big)$ and $\sigma(1) =(h_f(\ux_\infty), \ux_\infty)$. 
\begin{center}
\def\svgwidth{8cm}
\begingroup%
  \makeatletter%
  \providecommand\color[2][]{%
    \errmessage{(Inkscape) Color is used for the text in Inkscape, but the package 'color.sty' is not loaded}%
    \renewcommand\color[2][]{}%
  }%
  \providecommand\transparent[1]{%
    \errmessage{(Inkscape) Transparency is used (non-zero) for the text in Inkscape, but the package 'transparent.sty' is not loaded}%
    \renewcommand\transparent[1]{}%
  }%
  \providecommand\rotatebox[2]{#2}%
  \ifx\svgwidth\undefined%
    \setlength{\unitlength}{596.52324219bp}%
    \ifx\svgscale\undefined%
      \relax%
    \else%
      \setlength{\unitlength}{\unitlength * \real{\svgscale}}%
    \fi%
  \else%
    \setlength{\unitlength}{\svgwidth}%
  \fi%
  \global\let\svgwidth\undefined%
  \global\let\svgscale\undefined%
  \makeatother%
  \begin{picture}(1,0.78035215)%
    \put(0,0){\includegraphics[width=\unitlength]{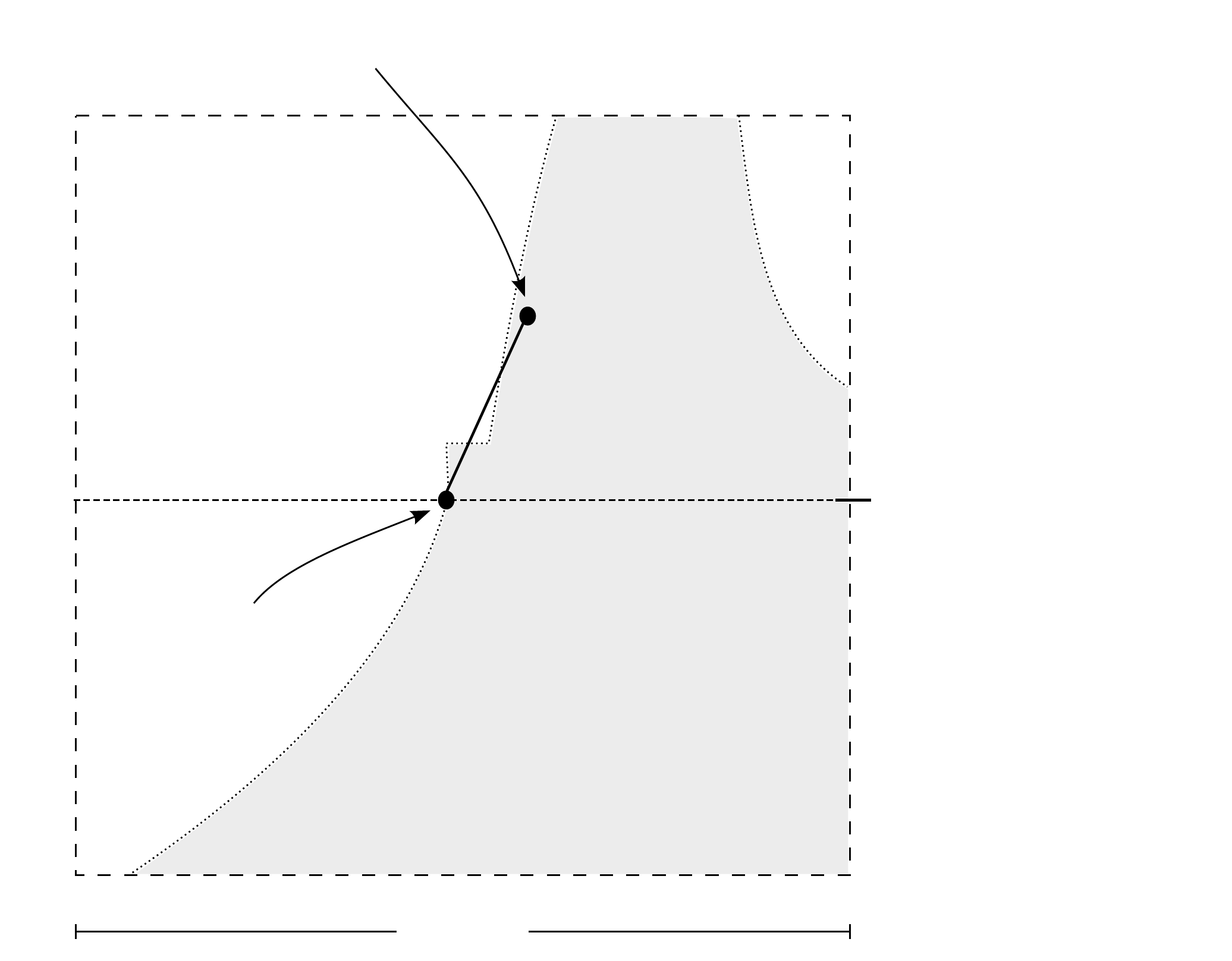}}%
    \put(0.35470418,0.00630741){\color[rgb]{0,0,0}\makebox(0,0)[lb]{\smash{$U$}}}%
    \put(0.72829759,0.36457393){\color[rgb]{0,0,0}\makebox(0,0)[lb]{\smash{$h_f(\underline{x}_\infty)$}}}%
    \put(0.72063413,0.6787755){\color[rgb]{0,0,0}\makebox(0,0)[lb]{\smash{$h_f(\underline{x}_\infty) + \Delta$}}}%
    \put(0.71680235,0.06186739){\color[rgb]{0,0,0}\makebox(0,0)[lb]{\smash{$h_f(\underline{x}_\infty) - \Delta >0 $}}}%
    \put(0.45049738,0.17490333){\color[rgb]{0,0,0}\makebox(0,0)[lb]{\smash{$\mathcal{D}_{\psi,1} \cap W$}}}%
    \put(0.36428349,0.46228289){\color[rgb]{0,0,0}\makebox(0,0)[lb]{\smash{$\sigma$}}}%
    \put(0.09606262,0.25153791){\color[rgb]{0,0,0}\makebox(0,0)[lb]{\smash{$\big(h_f(\underline{x}_\infty), \underline{x}_\infty\big)$}}}%
    \put(0.16120197,0.74774655){\color[rgb]{0,0,0}\makebox(0,0)[lb]{\smash{$\big(h_f(\underline{x}_\infty) + \frac{\Delta}{2}, \underline{x}_{n_{k_0}}\big)$}}}%
    \put(-0.00356231,0.67685961){\color[rgb]{0,0,0}\makebox(0,0)[lb]{\smash{$W$}}}%
  \end{picture}%
\endgroup%

\end{center}
Let $s_0 := \sup\{s \in [0,1] \, | \, \sigma(s') \in \D_{\psi,1} \textnormal{ for all } 0 \leq s' < s\}$. It follows that $\chi_2 \circ \sigma : [0,s_0) \to N$ is a past-inextendible timelike curve in $N$ with $(\chi_2 \circ \sigma)(0) \in I^+(\Sigma, N)$. However, this curve does not intersect $\Sigma$, since $\chi_1^{-1}(\Sigma) = \{0\} \times B_1(0)$, but $\sigma_0 (s) >0$ for all $s \in [0,s_0)$, where $ \sigma = (\sigma_0, \sigma_1, \ldots, \sigma_d)$. This contradicts the global hyperbolicity of $N$.

\item There exists a subsequence $\{\ux_{n_k}\}_{k \in \N}$ with $n_k \to \infty$ for $k \to \infty$ such that $h_f(\ux_{n_k}) < h_f(\ux_\infty) - \delta$ holds for all $k \in \N$. Again, by Proposition \ref{FutureOpen} we can choose $k_0 \in \N$ large enough such that
\begin{equation*}
\big(h_f(\ux_\infty) - \frac{\Delta}{2}, \ux_{n_{k_0}}\big) \in I^-\Big(\big(h_f(\ux_\infty) - \frac{\Delta}{10}, \ux_\infty\big),W\Big) \;.
\end{equation*}
We can thus find a past directed timelike curve $\sigma : [0,1] \to W$ with $\sigma(0) =\big( h_f(\ux_\infty) - \frac{\Delta}{10}, \ux_\infty\big)$ and $\sigma(1) =(h_f(\ux_\infty) - \frac{\Delta}{2}, \ux_{n_{k_0}})$. 

Let $s_0 := \sup\{s \in [0,1] \, | \, \sigma(s') \in \D_{\psi,1} \textnormal{ for all } 0 \leq s' < s\}$. It follows that $\chi_2 \circ \sigma : [0,s_0) \to N$ is a past-inextendible timelike curve in $N$ with $(\chi_2 \circ \sigma)(0) \in I^+(\Sigma, N)$. The same argument as in the previous case shows that this curve does not intersect the Cauchy hypersurface $\Sigma$ - which contradicts again the global hyperbolicity of $N$.
\end{enumerate}
This shows that the future boundary function $h_f$ is continuous in $B_1(0)$.

We proceed by showing that for all sequences $\{\ux_n\}_{n \in \N} \subseteq B_1(0)$ with $\ux_n \to \ux_\infty \in \partial B_1(0)$ for $n \to \infty$ we have $h_f(\ux_n) \to 0$ for $n \to \infty$, which then implies the continuity of $h_f$ on $\overline{B_1(0)}$. 

This proof is also by contradiction - thus assume there is a $\delta >0$  and a sequence $\{\ux_n\}_{n \in \N} \subseteq B_1(0)$ with $\ux_n \to \ux_\infty \in \partial B_1(0)$ for $n \to \infty$ such that $h_f(\ux_n) > \delta$ holds for all $n \in \N$.

As argued in the third case of the proof of Step 2.1, there exists a $\Delta >0$ and a neighbourhood $U \subseteq B_2(0)$ of $\ux_\infty$ such that $\chi_2|_W : W \to \chi_2(W)$ is a diffeomorphism, where $W:= (-\Delta, \Delta) \times U \subseteq \D_{\psi,2}$. Without loss of generality we can again assume that $0 < \Delta < \delta$. By Proposition \ref{FutureOpen} there exists a large $n_0 \in \N$ such that
\begin{equation*}
(\frac{\Delta}{2} , \ux_{n_0}) \in  I^+\big((0,\ux_\infty), W\big) \;.
\end{equation*}
We can thus find a past directed timelike curve $\sigma : [0,1] \to W$ with $\sigma(0) =( \frac{\Delta}{2}, \ux_{n_0})$ and $\sigma(1) =(0, \ux_\infty)$. Moreover, without loss of generality, we can assume that $\sigma(s) \notin \{0\} \times B_1(0)$ for all $s \in [0,1]$, since, by Proposition \ref{FutureOpen}, we can also connect the points $(\frac{\Delta}{3}, \ux_\infty)$ and $(\frac{\Delta}{2}, \ux_{n_0})$ (for $n_0 \in \N$ large enough) by a timelike curve contained in an open set which is disjoint from $\{0\} \times B_1(0)$.

\begin{center}
\def\svgwidth{6cm}
\begingroup%
  \makeatletter%
  \providecommand\color[2][]{%
    \errmessage{(Inkscape) Color is used for the text in Inkscape, but the package 'color.sty' is not loaded}%
    \renewcommand\color[2][]{}%
  }%
  \providecommand\transparent[1]{%
    \errmessage{(Inkscape) Transparency is used (non-zero) for the text in Inkscape, but the package 'transparent.sty' is not loaded}%
    \renewcommand\transparent[1]{}%
  }%
  \providecommand\rotatebox[2]{#2}%
  \ifx\svgwidth\undefined%
    \setlength{\unitlength}{701.00537109bp}%
    \ifx\svgscale\undefined%
      \relax%
    \else%
      \setlength{\unitlength}{\unitlength * \real{\svgscale}}%
    \fi%
  \else%
    \setlength{\unitlength}{\svgwidth}%
  \fi%
  \global\let\svgwidth\undefined%
  \global\let\svgscale\undefined%
  \makeatother%
  \begin{picture}(1,0.66404369)%
    \put(0,0){\includegraphics[width=\unitlength]{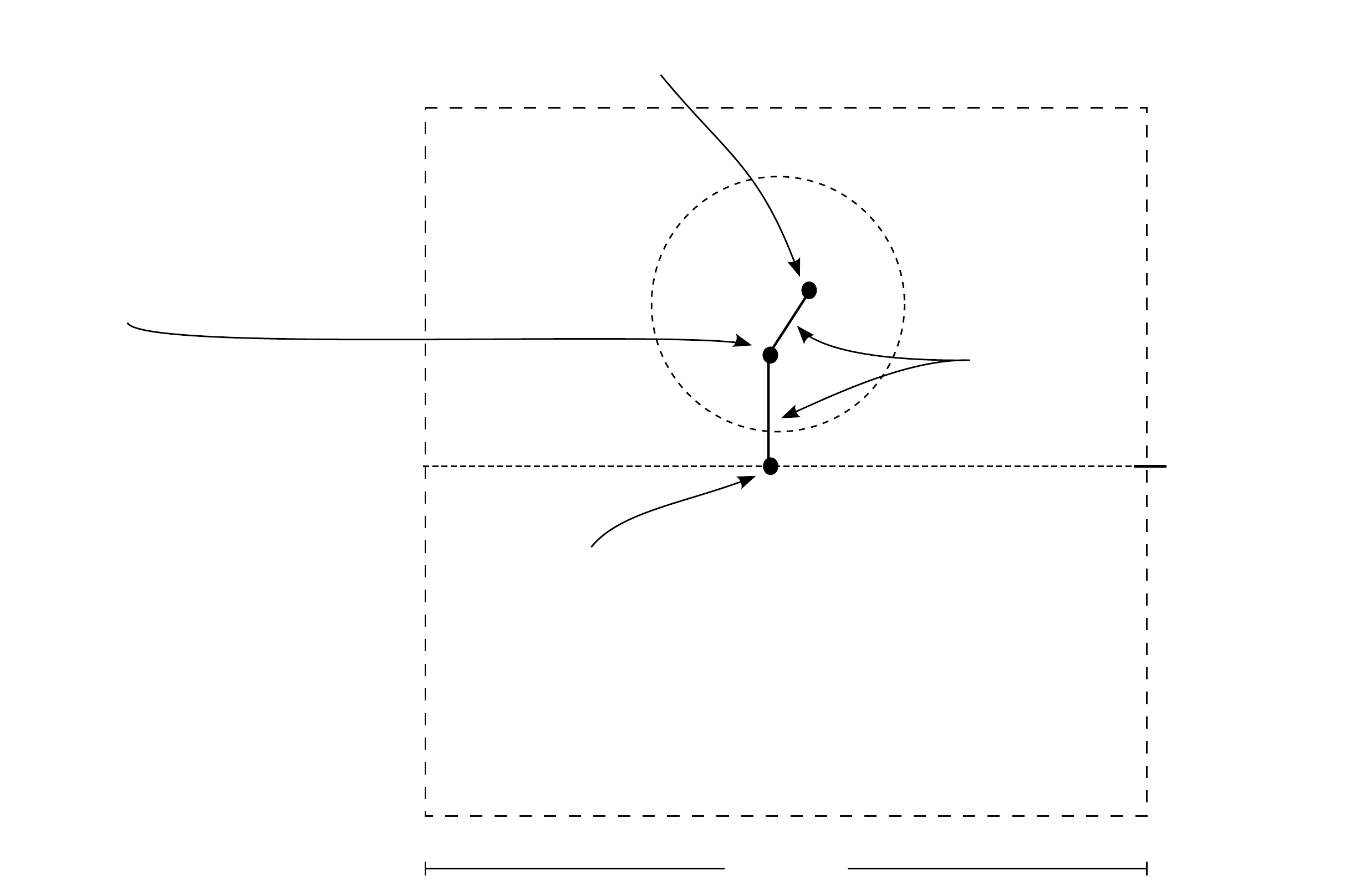}}%
    \put(0.56454632,0.00536731){\color[rgb]{0,0,0}\makebox(0,0)[lb]{\smash{$U$}}}%
    \put(0.88245708,0.3102356){\color[rgb]{0,0,0}\makebox(0,0)[lb]{\smash{$0$}}}%
    \put(0.87593582,0.57760664){\color[rgb]{0,0,0}\makebox(0,0)[lb]{\smash{$\Delta$}}}%
    \put(0.87267516,0.05264629){\color[rgb]{0,0,0}\makebox(0,0)[lb]{\smash{$-\Delta$}}}%
    \put(0.72594714,0.38686016){\color[rgb]{0,0,0}\makebox(0,0)[lb]{\smash{$\sigma$}}}%
    \put(0.34445427,0.21404716){\color[rgb]{0,0,0}\makebox(0,0)[lb]{\smash{$(0, \underline{x}_\infty)$}}}%
    \put(0.37216956,0.63629783){\color[rgb]{0,0,0}\makebox(0,0)[lb]{\smash{$\big(\frac{\Delta}{2}, \underline{x}_{n_0}\big)$}}}%
    \put(0.25967806,0.57597631){\color[rgb]{0,0,0}\makebox(0,0)[lb]{\smash{$W$}}}%
    \put(-0.00303136,0.45485692){\color[rgb]{0,0,0}\makebox(0,0)[lb]{\smash{$\big(\frac{\Delta}{3}, \underline{x}_\infty\big)$}}}%
  \end{picture}%
\endgroup%

\end{center}

As before, this now gives rise to a past-inextendible timelike curve in $N$, starting in $I^+(\Sigma, N)$, which does not intersect the Cauchy hypersurface $\Sigma$ - again a contradiction to the global hyperbolicity of $N$.
This concludes Step 2.3.
\vspace*{2mm}

We now continue with Step 2. Clearly, we have
\begin{equation*}
\D_{\psi,1} = \{(x_0, \ux) \in \R \times B_1(0) \, | \, h_p(\ux) < x_0 < h_f(\ux)\} \;.
\end{equation*}
The continuity of $h_f : \overline{B_1(0)} \to [0, \infty)$ and of $h_p : \overline{B_1(0)} \to (-\infty, 0]$ implies that the closure of $\D_{\psi_1}$ in $\R^{d+1}$ is given by
\begin{equation*}
\overline{\D_{\psi,1}} = \{(x_0, \ux) \in \R \times \overline{B_1(0)} \, | \, h_p(\ux) \leq x_0 \leq h_f(\ux)\} \;.
\end{equation*}
Moreover, it follows from the continuity of $h_f$ and $h_p$ on $\overline{B_1(0)}$ that there is a constant $0< C < \infty$ such that $h_f \leq C$ and $h_p \geq -C$. Hence, we have $\overline{\D_{\psi,1}} \subseteq [-C,C] \times \overline{B_1(0)}$ and thus $\overline{\D_{\psi,1}}$ is compact.

Finally, it follows from 
\begin{equation*}
\overline{\D_{\psi,1}} = \Big(\bigcup_{\ux \in B_1(0)} \overline{I_{\psi(\ux),N}} \times \{\ux\}\Big) \cup \big(\{0\} \times \partial B_1(0)\big)
\end{equation*}
and Step 2.2, that  $\overline{\D_{\psi,1}} \subseteq \D_{\psi,2}$.
This concludes Step 2.
\vspace*{2mm}

\underline{\textbf{Step 3:}} We appeal to Theorem \ref{DiamFinite}.
\vspace*{2mm}

Since the vector field $T$ is timelike on $\overline{N}$, there exists a constant $c$ such that the metric component $g_{00}$ in the chart $\chi_1^{-1}$ satisfies $g_{00} = g(T,T) \leq c <0$. It is now easy to see that $\chi_1^{-1} : N \to \D_{\psi,1}$ is a global regular flow chart for $N$. The theorem now follows from Theorem \ref{DiamFinite}.
\end{proof}

\section{Proof of Theorem \ref{Inex}}
\label{SecPf}

Before we start with the proof of Theorem \ref{Inex}, let us make some preliminary observations.
First note that if $\sigma : (-s_0, 0) \to \Mint$ is a future directed timelike curve, where $s_0 >0$, we have  $0 > \gint\big(\dot{\sigma}, -\frac{\partial}{\partial r}\big) = -(1 -\frac{2m}{r^{d-2}})^{-1} \dot{\sigma}_r $. It follows that 
\begin{equation}
\label{RVel}
\dot{\sigma}_r <0
\end{equation}
and hence we can parametrise $\sigma$ by the $r$-coordinate. We will often use in the arguments in the proof of Theorem \ref{Inex} (without spelling it out explicitly) that the coordinate value of $r$ along future directed timelike curves can only decrease, while along past directed timelike curves it can only increase. In particular, let us remark that this observation, together with the bound \eqref{BoundTVel} proved below, shows that the surfaces of constant $r$ are Cauchy hypersurfaces of $\Mint$.

Furthermore, the following lemma and proposition are needed in the proof of Theorem \ref{Inex}.

\begin{lemma}
\label{BoundsOnReach}
Let $0<r_0< (2m)^{\frac{1}{d-2}}$. For every $\varepsilon >0$ we can find $0<\tilde{r}_0 < r_0 $ such that for any future directed timelike curve $\sigma : (-r_0,0) \to \Mint$,
\begin{equation*}
\sigma(s) = \big(\sigma_t(s), -s, \sigma_{\omega}(s)\big) \;,
\end{equation*}
where $\sigma_{\omega}$ is the canonical projection of $\sigma$ on the sphere $\Sd$, the following holds:
\begin{equation*}
d_{\Sd}\big(\sigma_{\omega}(s), \sigma_{\omega}(s') \big) < \varepsilon     
\end{equation*}
and
\begin{equation*}
|\sigma_t(s) - \sigma_t(s')| < \varepsilon
\end{equation*}
for all  $-\tilde{r}_0 \leq s,s' <0$.
\end{lemma}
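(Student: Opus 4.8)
The idea is to read off, from the single inequality $\gint(\dot\sigma,\dot\sigma)<0$, pointwise bounds on $|\dot\sigma_t|$ and on $|\dot\sigma_\omega|$ that decay as $r=-s\to 0$, and then to integrate them. Since $\sigma$ is presented in the $r$-parametrisation $\sigma(s)=(\sigma_t(s),-s,\sigma_\omega(s))$ (legitimate by \eqref{RVel}), we have $\dot\sigma_r\equiv -1$; and throughout $\Mint$ one has $r<(2m)^{\frac1{d-2}}$, so $D(r)=1-\frac{2m}{r^{d-2}}<0$. Substituting into the metric \eqref{SchwarzschildMetricInt} and using $-D(r)=|D(r)|$ together with $D(r)^{-1}=-|D(r)|^{-1}$, the timelike condition becomes
\[
|D(r)|\,\dot\sigma_t(s)^2+r^2\,\mathring{\gamma}_{d-1}\big(\dot\sigma_\omega(s),\dot\sigma_\omega(s)\big)<\frac{1}{|D(r)|}\,,\qquad r=-s\,.
\]
As both terms on the left are non-negative, this gives $|\dot\sigma_t(s)|<|D(r)|^{-1}$ and $|\dot\sigma_\omega(s)|_{\mathring{\gamma}_{d-1}}<r^{-1}|D(r)|^{-1/2}$, i.e., using $|D(r)|=(2m-r^{d-2})/r^{d-2}$,
\[
|\dot\sigma_t(s)|<\frac{r^{d-2}}{2m-r^{d-2}}\,,\qquad
|\dot\sigma_\omega(s)|_{\mathring{\gamma}_{d-1}}<\frac{r^{(d-4)/2}}{(2m-r^{d-2})^{1/2}}\,.
\]

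Next I would integrate these bounds. For $-\tilde r_0\le s'\le s<0$ (the other ordering being symmetric), splitting the integral over the finitely many smooth segments of the piecewise smooth curve $\sigma$ and substituting $r=-u$ yields
\[
|\sigma_t(s)-\sigma_t(s')|\le\int_{s'}^{s}|\dot\sigma_t(u)|\,du\le\int_0^{\tilde r_0}\frac{r^{d-2}}{2m-r^{d-2}}\,dr\,,
\]
while, since the length of a curve in $\Sd$ dominates the distance between its endpoints,
\[
d_{\Sd}\big(\sigma_\omega(s),\sigma_\omega(s')\big)\le\int_{s'}^{s}|\dot\sigma_\omega(u)|_{\mathring{\gamma}_{d-1}}\,du\le\int_0^{\tilde r_0}\frac{r^{(d-4)/2}}{(2m-r^{d-2})^{1/2}}\,dr\,.
\]
On $[0,r_0]$ the denominators are bounded below by $2m-r_0^{d-2}>0$, and the numerators are integrable at $r=0$ since $d-2\ge 1$ and $(d-4)/2\ge-\tfrac12>-1$; hence the two right-hand sides are bounded by $C_{r_0}\,\tilde r_0^{\,d-1}$ and $C_{r_0}'\,\tilde r_0^{\,(d-2)/2}$ respectively, and in particular both tend to $0$ as $\tilde r_0\searrow 0$. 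Choosing $\tilde r_0\in(0,r_0)$ small enough that both integrals are $<\varepsilon$ then finishes the proof.

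I do not expect any serious obstacle. The points that need a little care are the sign bookkeeping (the coefficient $D(r)$ is negative in the interior, which flips several of the inequalities above), the reduction to smooth segments when $\sigma$ is merely piecewise smooth, and a quick verification that the angular integrand $r^{(d-4)/2}(2m-r^{d-2})^{-1/2}$ is genuinely integrable at $r=0$ in the borderline dimension $d=3$, where it behaves like $r^{-1/2}$.
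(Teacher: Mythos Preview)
Your proposal is correct and follows essentially the same approach as the paper: use $\gint(\dot\sigma,\dot\sigma)<0$ with $\dot\sigma_r\equiv -1$ to obtain the pointwise bounds $|\dot\sigma_t|<r^{d-2}/(2m-r^{d-2})$ and $|\dot\sigma_\omega|_{\mathring\gamma_{d-1}}<r^{(d-4)/2}/(2m-r^{d-2})^{1/2}$, observe these are integrable near $r=0$ for $d\ge 3$, and integrate. Your write-up is slightly more detailed (explicit sign bookkeeping, the piecewise-smooth remark, and the $d=3$ borderline check), but the argument is the same.
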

Note that this lemma implies, in particular, that $\sigma_t(s)$ and $\sigma_\omega(s)$ converge for $s \nearrow 0$.

\begin{proof}
Let $\sigma : (-r_0,0) \to \Mint$ be a timelike curve, parametrised as above. We obtain for all $s \in (-r_0,0)$
\begin{equation*}
0 > \gint(\dot{\sigma}(s),\dot{\sigma}(s)) = -\big(1- \frac{2m}{(-s)^{d-2}}\big) (\dot{\sigma}_t(s))^2 + \big(1- \frac{2m}{(-s)^{d-2}}\big)^{-1} + s^2 \, \mathring{\gamma}_{d-1}\big(\dot{\sigma}_{\omega}(s), \dot{\sigma}_{\omega}(s)\big) 
\end{equation*}
and hence
\begin{equation*}
\frac{(-s)^{d-2}}{2m - (-s)^{d-2}}  >  \underbrace{\frac{2m - (-s)^{d-2}}{(-s)^{d-2}} (\dot{\sigma}_t(s))^2}_{\geq 0}  + \underbrace{ s^2 \, \mathring{\gamma}_{d-1}\big(\dot{\sigma}_{\omega}(s), \dot{\sigma}_{\omega}(s)\big)}_{\geq 0} \;.
\end{equation*}
It follows that
\begin{equation}
\label{BoundTVel}
|\dot{\sigma}_t(s)| < \frac{(-s)^{d-2}}{2m - (-s)^{d-2}} 
\end{equation}
and
\begin{equation}
\label{BoundAngVel}
||\dot{\sigma}_\omega(s)||_{\Sd} < \frac{(-s)^{\nicefrac{d}{2} - 2}}{\big[2m - (-s)^{d-2}\big]^{\nicefrac{1}{2}}}
\end{equation}
holds for all $s \in (-r_0,0)$. Since $d \geq 3$, it follows that both upper bounds are integrable on $(-r_0,0)$. The lemma now follows from integration.
\end{proof}

\begin{proposition}
\label{IntFutCon}
The interior of the Schwarzschild spacetime $(\Mext, \gext)$ is future one-connected.
\end{proposition}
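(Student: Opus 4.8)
The plan is to adapt the strategy of Proposition \ref{ExtFutCon} (reading, of course, $\Mint$ in place of the misprinted $\Mext$). Since future one-connectedness depends only on the conformal class of the metric, we are free to rescale $\gint$. In the interior $-D(r) = \tfrac{2m}{r^{d-2}}-1>0$; introducing $\rho$ by $d\rho = \frac{dr}{-D(r)}$, which — after fixing the additive constant — ranges over a half-line $(\rho_*,\infty)$ with $\rho\to\rho_*$ as $r\to 0$, $\rho\to\infty$ as $r\to r_+$, and, by \eqref{RVel}, the future direction being that of decreasing $\rho$, one computes $\gint = -D(r)\big(-d\rho^2+dt^2\big)+r(\rho)^2\,\mathring{\gamma}_{d-1}$. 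Hence it suffices to prove future one-connectedness of $(\Mint,h_{\mathrm{int}})$, where
\[
h_{\mathrm{int}} := \frac{1}{-D(r)}\,\gint = -d\rho^2 + dt^2 + f(\rho)^2\,\mathring{\gamma}_{d-1}, \qquad f(\rho)^2 := \frac{r(\rho)^2}{-D(r(\rho))} = \frac{r^d}{2m-r^{d-2}}.
\]
This is a warped product over the half-plane $\{\rho>\rho_*\}\times\R_t\subseteq\R^{1,1}$ ($\rho$ timelike) with round-sphere fibre and warping $f$, where $f$ is smooth and positive on $(\rho_*,\infty)$ and is therefore bounded above and below by positive constants on any compact $\rho$-interval. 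In contrast with the static exterior, no conformal rescaling makes the interior a metric product — the relative sizes of the $t$- and $\mathbb{S}^{d-1}$-directions genuinely vary with $\rho$ — and this is the root of the difficulty below.

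Given two future-directed timelike curves $\gamma_0,\gamma_1$ from $p$ to $q$ in $(\Mint,h_{\mathrm{int}})$, one parametrises them by $\rho$ (legitimate since $\rho$ strictly decreases along them), writing $\gamma_c(\rho)=(\rho,\tau_c(\rho),\omega_c(\rho))$ over a common compact interval $[\rho_1,\rho_0]$; the timelike condition becomes $\dot{\tau}_c^2+f(\rho)^2\|\dot{\omega}_c\|_{\mathbb{S}^{d-1}}^2<1$, i.e.\ the spatial velocity of $\gamma_c$ is subcritical for the $\rho$-dependent Riemannian metric $dt^2+f(\rho)^2\mathring{\gamma}_{d-1}$ on the slice $\R_t\times\mathbb{S}^{d-1}$. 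One then builds the homotopy in two stages. In \emph{Stage 1} one straightens the angular components onto a single minimising geodesic arc of $\mathbb{S}^{d-1}$, following Steps 2.1--2.4 of the proof of Proposition \ref{ExtFutCon}: a preliminary reparametrisation creating a strictly-subcritical smooth initial segment, Sard's theorem to avoid an antipodal point, and sweeping homotopies built from the sphere exponential maps $\exp_{(\cdot)}:B_\pi(0)\to\mathbb{S}^{d-1}$; the borderline case $\omega_c(\rho_0)=-\omega_c(\rho_1)$ is closed off by a rotation homotopy among half great circles, which is admissible since timelikeness forces $\rho_0-\rho_1>\pi=\diam\mathbb{S}^{d-1}$. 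There is no analogue of Step 2.5, since after Stage 1 no negatively-curved two-dimensional factor remains. In \emph{Stage 2} one straightens the remaining data — the functions $\tau_c$ and the monotone parameter along the common arc — by viewing the curves as subcritical curves in a flat plane $\R_t\times(\text{arc})$ lying over the two-dimensional half-plane base; since the relevant causal diamonds in $\R^{1,1}$ are geodesically convex and contained in $\{\rho>\rho_*\}$, one straightens by a concatenation homotopy of the Minkowski type used in Step 2.1 of the proof of Theorem \ref{MinkInex}. After Stage 2 both curves are in the same canonical form, which completes the proof.

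The hard part will be maintaining the constraint $\dot{\tau}^2+f(\rho)^2\|\dot{\omega}\|^2<1$ throughout. Because it couples the $t$- and sphere-directions through the $\rho$-dependent weight $f(\rho)^2$, one cannot straighten one coordinate while freezing the others, and one cannot import the exterior homotopies verbatim: the argument of Proposition \ref{ExtFutCon} controls a single \emph{fixed} Riemannian metric on the spatial slice, whereas here the slice metric $dt^2+f(\rho)^2\mathring{\gamma}_{d-1}$ depends on the curve parameter. The remedy I have in mind is to run every homotopy together with a reparametrisation of the type of the map $\lambda(u,t)$ in Step 2.1 of Proposition \ref{ExtFutCon}, which first creates and then never overspends a uniform ``speed budget'' $1-\varepsilon$; one then checks — using $|t_1-t_0|<\rho_0-\rho_1$, the pointwise bounds $|\dot{\tau}|<1$ and $f(\rho)\|\dot{\omega}\|<1$, the comparability of $f(\rho)^2\mathring{\gamma}_{d-1}$ with a fixed metric over the compact $\rho$-range, and the convexity estimate $\big((1-u)a+ub\big)^2\le(1-u)a^2+ub^2$ for the interpolations of Stage 2 — that the sweeping homotopies of both stages consume the budget at a controlled rate. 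Carrying out this $\rho$-dependent refinement of the bookkeeping in the proof of Proposition \ref{ExtFutCon} is where most of the work will go.
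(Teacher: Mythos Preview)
Your proposal is correct in outline and would work, but it is considerably more elaborate than the paper's argument, and the difficulty you single out as ``hard'' largely dissolves under the paper's more direct approach.

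The paper does \emph{not} conformally rescale at all. It works directly with $\gint$ in $(t,r,\omega)$-coordinates and parametrises both timelike curves by $r$ (equivalent to your $\rho$). The angular straightening is then a single homotopy $\Gamma_i$ built from $\exp_{\omega_1}$ with a reparametrisation factor $f_i(u,r) = \big(\int_{r_1}^r \|(\dot\gamma_i)_\omega\|\big)/\big(\int_{r_1}^u \|(\dot\gamma_i)_\omega\|\big)$, while the $t$-component $(\gamma_i)_t(r)$ is \emph{left completely untouched}. The computation of $\gint(\partial_r\Gamma_i,\partial_r\Gamma_i)$ then has exactly the same $-D(r)(\dot\gamma_i)_t^2$ and $1/D(r)$ terms as before, and the new angular term is bounded by $r^2\|(\dot\gamma_i)_\omega\|^2$ times the ratio $d_{\Sd}^2/\big(\int\|(\dot\gamma_i)_\omega\|\big)^2\le 1$, so the whole expression is $\le \gint(\dot\gamma_i,\dot\gamma_i)<0$. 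In other words, your worry that ``one cannot straighten one coordinate while freezing the others'' is exactly what the paper does successfully: freezing $t$ and reparametrising only the sphere speed avoids any coupling, and no ``speed budget'' or preliminary $\varepsilon$-slack is needed.

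For Stage 2 the paper also uses convex interpolation, but in the original coordinates: once both curves map into the $3$-dimensional slice $\R_t\times(0,r_+)\times\mathbb{S}^1$, it interpolates \emph{both} $(\sigma_i)_t$ and $(\sigma_i)_\varphi$ linearly in $u$ simultaneously, and a single application of convexity of $x\mapsto x^2$ (separately to the $dt^2$ and $d\varphi^2$ terms, with their $r$-dependent coefficients carried along as nonnegative weights) gives $g_N(\partial_r\Gamma,\partial_r\Gamma)\le(1-u)\,g_N(\dot\sigma_1,\dot\sigma_1)+u\,g_N(\dot\sigma_2,\dot\sigma_2)<0$. This is your Stage 2, but without any detour through Minkowski diamonds.

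In short: your conformal rescaling and the ensuing bookkeeping are correct but unnecessary. The paper's trick is that parametrising by $r$ makes the metric diagonal with one fixed term $1/D(r)$, so straightening the sphere part alone never increases the norm, and convexity handles the rest.
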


\begin{proof}
Let $(t_0, r_0, \omega_0)$, $(t_1, r_1, \omega_1) \in \Mint$ with $(t_0, r_0, \omega_0) \ll (t_1, r_1, \omega_1)$, and let $\gamma_i : [r_1, r_0] \to \Mint$, 
\begin{equation*}
\gamma_i(r) = \big( (\gamma_i)_t(r) , r, (\gamma_i)_\omega(r) \big)\;,
\end{equation*}
be past directed timelike curves with $\gamma_i (r_1) = (t_1, r_1, \omega_1)$ and $\gamma_i(r_0) = (t_0, r_0, \omega_0)$, where $i \in  \{1,2\}$. We need to show that there exists a timelike homotopy with fixed endpoints between $\gamma_1$ and $\gamma_2$.

We give the detailed proof under the assumption
\begin{equation}
\label{Asump}
\mathrm{Im}\big((\gamma_i)_\omega\big) \subseteq \Sd \setminus \{ - \omega_1\}\;.
\end{equation} 
Note that \eqref{Asump} is implied by Lemma \ref{BoundsOnReach} if we choose $r_0$ sufficiently close to $0$. Moreover, in this paper we will only need the statement that $\Mint \cap \{ 0 < r < \tilde{r}\}$ is future one-connected for \emph{some} $\tilde{r} >0$, and hence one can choose $\tilde{r} >0$ small enough such that \eqref{Asump} is satisfied. However, the general case can be proven by a perturbation argument analogous to the one presented in the proof of Proposition \ref{ExtFutCon}.

Consider the exponential map $\exp_{\omega_1} : T_{\omega_1} \Sd \supseteq B_\pi (0) \to \Sd$ based at $\omega_1$, and define $\Gamma_i : [r_1, r_0] \times [r_1, r_0] \to \Mint$, $i \in \{1,2\}$, by
\begin{equation*}
\Gamma_i(u,r) := \begin{cases} \Big( (\gamma_i)_t(r) , r , \exp_{\omega_1} \big[ f_i(u,r) \exp_{\omega_1}^{-1} \big( (\gamma_i)_\omega(u)\big) \big] \Big) \quad &\textnormal{ for } r_1 \leq r \leq u \\
\gamma_i(r) &\textnormal{ for } u \leq r \leq r_0\;,\end{cases}
\end{equation*}
where
\begin{equation*}
f_i(u,r) = \frac{\int_{r_1}^r || (\dot{\gamma}_i)_\omega(r') ||_{\Sd} \, dr' }{ \int_{r_1}^u || (\dot{\gamma}_i)_\omega(r')||_{\Sd} \, dr'} \;.
\end{equation*}
We compute for $r_1 \leq r \leq u$
\begin{equation*}
\begin{split}
\gint(\partial_r \Gamma_i, \partial_r \Gamma_i)(r,u) &= - D(r)\big[(\dot{\gamma}_i)_t(r)\big]^2 + \frac{1}{D(r)} + r^2 | \partial_r f_i(u,r)|^2 \cdot ||\exp_{\omega_1}^{-1}\big((\gamma_i)_\omega(u)\big)||^2_{\Sd} \\
&=- D(r)\big[(\dot{\gamma}_i)_t(r)\big]^2 + \frac{1}{D(r)} + r^2 ||(\dot{\gamma}_i)_\omega(r)||^2_{\Sd} \cdot \underbrace{\frac{d^2_{\Sd}\big((\gamma_i)_\omega(u), (\gamma_i)_\omega(r_1)\big)}{\Big[\int_{r_1}^u || (\dot{\gamma}_i)_\omega(r') ||_{\Sd}\, dr' \Big]^2}}_{\leq 1}  \\
&\leq \gint\big(\dot{\gamma}_i(r), \dot{\gamma}_i(r)\big) \\
&< 0 \;.
\end{split}
\end{equation*}
Hence, $\Gamma_i$ is a timelike homotopy with fixed endpoints between $\gamma_i$ and a timelike curve $\sigma_i$ whose projection on $\Sd$ lies on the geodesic arc connecting $\omega_0$ with $\omega_1$. We now introduce coordinates on $\Sd$ such that the geodesic $\mathbb{S}^1$ through $\omega_0$ and $\omega_1$ is parametrised by $\varphi \in (0,2\pi)$. It follows that $\sigma_i$ maps into the submanifold $N:= \R \times \big(0, (2m)^{\frac{1}{d-2}}\big) \times \mathbb{S}^1$ of $\Mint$, which carries the induced Lorentzian metric $g_N = - D(r) \, dt^2 + \frac{1}{D(r)} \, dr^2 + r^2 \, d\varphi^2 $. In coordinates, $\sigma_i : [r_1, r_0] \to N \subseteq \Mint$, $i \in \{1,2\}$, is given by 
\begin{equation*}
\sigma_i(r) = \big((\sigma_i)_t (r), r , (\sigma_i)_\varphi(r)\big) \;.
\end{equation*}
We now define $\Gamma : [0,1] \times [r_1, r_0] \to N$ by
\begin{equation*}
\Gamma(u,r) := \Big( (1-u)(\sigma_1)_t (r) + u (\sigma_2)_t(r), r, (1-u) (\sigma_1)_\varphi(r) + u (\sigma_2)_\varphi(r)\Big) \;.
\end{equation*}
Using the convexity of $x \mapsto x^2$, we compute
\begin{equation*}
\begin{split}
g_N\big( \partial_r \Gamma(u,r), \partial_r \Gamma(u,r)\big) &= \underbrace{-D(r)}_{\geq 0}\big[(1-u)(\dot{\sigma}_1)_t(r) + u(\dot{\sigma}_2)_t(r)\big]^2 + \frac{1}{D(r)} + r^2\big[(1-u)(\dot{\sigma}_1)_\varphi(r) + u (\dot{\sigma}_2)_\varphi(r)\big]^2 \\
&\leq - D(r)\Big( (1-u)\big[(\dot{\sigma}_1)_t(r)\big]^2 + u \big[(\dot{\sigma}_2)_t(r)\big]^2\Big) + \frac{1}{D(r)} + r^2 \Big((1-u)\big[(\dot{\sigma}_1)_\varphi(r)\big]^2 + u \big[(\dot{\sigma}_2)_\varphi(r)\big]^2\Big) \\
&=(1-u)\cdot g_N\big(\dot{\sigma}_1(r), \dot{\sigma}_1(r)\big) + u \cdot g_N\big(\dot{\sigma}_2(r), \dot{\sigma}_2(r)\big) \\
&<0 \;,
\end{split}
\end{equation*}
and, hence, $\Gamma$ is a timelike homotopy with fixed endpoints between $\sigma_1$ and $\sigma_2$. This concludes the proof.
\end{proof}

We are now well-prepared to start with the proof of Theorem \ref{Inex}.

\begin{proof}[Proof of Theorem \ref{Inex}:]
The proof is by contradiction. So assume that there exists a $C^0$-extension $\iota : \Mint \hookrightarrow \tilde{M}$ and a timelike curve $\tilde{\gamma} : [-1,0] \to \tilde{M}$ such that $\gamma := \iota^{-1} \circ \tilde{\gamma}|_{[-1, 0)} : [-1, 0) \to \Mint$ is a timelike curve in $\Mint$ with $(r \circ \gamma)(s) \to 0$ for $s \nearrow 0$.
The proof is divided into three main steps.
\vspace*{2mm}

\underline{\textbf{Step 1:}} We construct a neighbourhood $\tilde{U} \subseteq \tilde{M}$ of $\tilde{\gamma}(0)$ together with a chart $\tilde{\psi} : \tilde{U} \to (-\varepsilon, \varepsilon) \times B^d_{\rho + \delta}(0)$, where $\rho, \delta >0$,  that have the following properties:
\begin{enumerate}
\item The metric components in this chart satisfy the following uniform bounds: $|\tilde{g}_{\mu \nu}| \leq C < \infty$ and $\tilde{g}_{00} \leq c < 0$, where $C$ and $c$ are constants.
\item There exists a $\mu >0$ such that $(\tilde{\psi} \circ \iota) \Big( I^+\big( \gamma(-\mu), \Mint\big)\Big) \subseteq (-\varepsilon, \varepsilon) \times B^d_\rho(0)$
\item $\tilde{\psi}^{-1}\Big(( - \varepsilon, - \frac{19}{20} \varepsilon] \times B^d_{\rho + \delta}(0)\Big) \subseteq \iota\Big(I^-\big(\gamma(-\mu), \Mint\big)\Big)$ 
\item For all $\underline{x} \in B^d_{\rho + \delta}(0)$ we have $\sup \big{\{} s_0 \in (-\varepsilon, \varepsilon) \; \big| \; \tilde{\psi}^{-1} (s,\underline{x}) \in \iota(\Mint) \;\; \forall  s \in (-\varepsilon, s_0)\big{\}} < \varepsilon$.
\end{enumerate}
\vspace*{2mm}

\textbf{Step 1.1}
Consider a hypersurface $\{r = r_0\}$ with $0< r\big(\gamma(-1)\big) < r_0 < r_+$ and choose $t_0, t_1 \in \R$ such that every past-inextendible timelike curve in $\Mint$ that goes through a point $q \in I^+\Big( I^-\big(\gamma((-1,0)),\Mint\big) \cap \{ r < r_0\}\Big)$ intersects $\{r=r_0\} \cap \{t_0 < t < t_1\} =:S_\mathrm{aux}$.
Note that $\tilde{\gamma}(0) \notin \overline{\iota({S_\mathrm{aux}})}$. By Lemma \ref{NormalForm}, and after a possible reparametrisation of $\tilde{\gamma}$, there is an $\varepsilon >0$, an open neighbourhood $\tilde{O} \subseteq \tilde{M} \setminus \overline{\iota(S_\mathrm{aux})}$ of $\tilde{\gamma}(0)$, and a chart $\tilde{\varphi} : \tilde{O} \to \ed$ such that
\begin{enumerate}
\item $(\tilde{\varphi} \circ \tilde{\gamma}) (s) = (s,0, \ldots, 0) $ holds for $s \in (-\varepsilon, 0]$
\item $\big| \,\tilde{g}_{\mu \nu}(x) - m_{\mu \nu} \,\big| < \delta $ holds for all $x \in (-\varepsilon , \varepsilon)^{d+1}$\;,
\end{enumerate}
where $\delta >0$ is chosen such that all vectors in $C^+_{\nicefrac{5}{6}}$ are future directed timelike, all vectors in $C^-_{\nicefrac{5}{6}}$ are past directed timelike, and all vectors in $C^c_{\nicefrac{5}{8}}$ are spacelike, where we use the notation introduced in Step 1.1 of the proof of Theorem \ref{MinkInex}. In particular, this fixes a time orientation on $\tilde{O}$ with respect to which $\tilde{\gamma}$ is future directed. Moreover, our choice of $\delta >0$ yields\footnote{Cf.\ Step 1.2 in the proof of Theorem \ref{MinkInex}.} the following inclusion relations for the timelike past and future of a point $x \in (-\varepsilon, \varepsilon)^{d+1}$:
\begin{equation}
\label{EstimatesOnPastAndFuture}
\begin{split}
&\big( x + C^+_{\nicefrac{5}{6}}\big) \cap  \ed \subseteq I^+(x, (-\varepsilon, \varepsilon)^{d+1}) \subseteq \big( x + C^+_{\nicefrac{5}{8}}\big) \cap  \ed \\
&\big( x + C^-_{\nicefrac{5}{6}}\big) \cap  \ed \subseteq I^-(x, (-\varepsilon, \varepsilon)^{d+1}) \subseteq \big( x + C^-_{\nicefrac{5}{8}}\big) \cap  \ed  \;.
\end{split}
\end{equation}
Furthermore, since we have chosen the neighbourhood $\tilde{O}$ to be disjoint from $S_\mathrm{aux}$, the following holds:
\begin{equation}
\label{PastPreserving}
\begin{split}
&\textnormal{Let } \iota(q) \in \tilde{O} \cap \iota(\Mint) \textnormal{ be in the same connectedness component of } \tilde{O} \cap \iota(\Mint) \\ &\textnormal{as }  \tilde{\gamma}\big((-\varepsilon, 0)\big) \textnormal{ and assume that } q \in I^+ \Big( I^-\big(\gamma((-1,0)), \Mint\big) \cap \{r < r_0\}, \Mint\Big) \;. \\ &\textnormal{Then } I^-\big(\iota(q),\tilde{O}\big) \subseteq \iota\big(I^-(q,\Mint) \big) \cap  \tilde{O} \;.
\end{split}
\end{equation}
\begin{center}
\def\svgwidth{8cm}
\begingroup%
  \makeatletter%
  \providecommand\color[2][]{%
    \errmessage{(Inkscape) Color is used for the text in Inkscape, but the package 'color.sty' is not loaded}%
    \renewcommand\color[2][]{}%
  }%
  \providecommand\transparent[1]{%
    \errmessage{(Inkscape) Transparency is used (non-zero) for the text in Inkscape, but the package 'transparent.sty' is not loaded}%
    \renewcommand\transparent[1]{}%
  }%
  \providecommand\rotatebox[2]{#2}%
  \ifx\svgwidth\undefined%
    \setlength{\unitlength}{535.85180664bp}%
    \ifx\svgscale\undefined%
      \relax%
    \else%
      \setlength{\unitlength}{\unitlength * \real{\svgscale}}%
    \fi%
  \else%
    \setlength{\unitlength}{\svgwidth}%
  \fi%
  \global\let\svgwidth\undefined%
  \global\let\svgscale\undefined%
  \makeatother%
  \begin{picture}(1,0.61001677)%
    \put(0,0){\includegraphics[width=\unitlength]{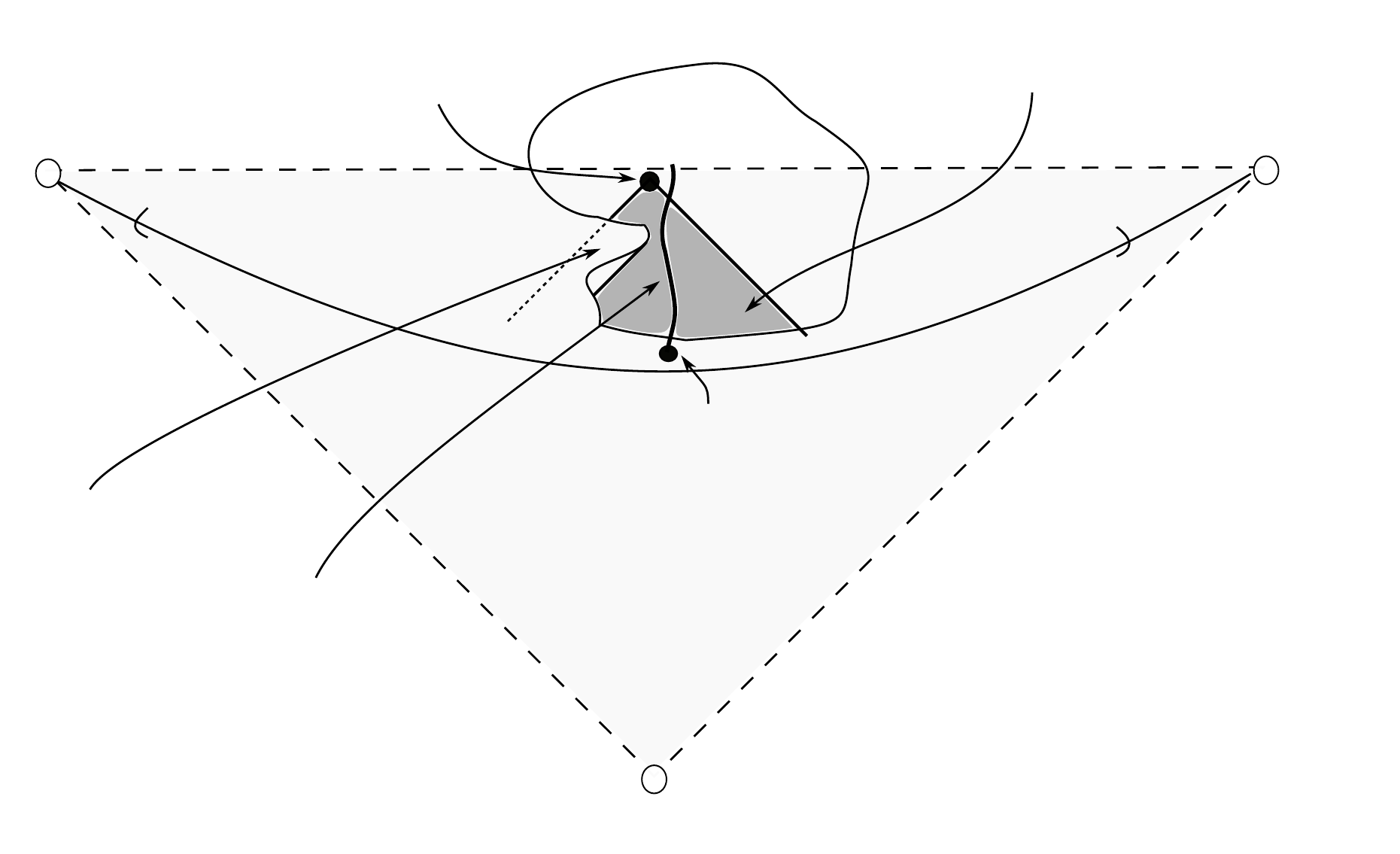}}%
    \put(0.41436352,0.56684792){\color[rgb]{0,0,0}\makebox(0,0)[lb]{\smash{$\tilde{O}$}}}%
    \put(0.29290188,0.54526138){\color[rgb]{0,0,0}\makebox(0,0)[lb]{\smash{$q$}}}%
    \put(0.63404045,0.56409047){\color[rgb]{0,0,0}\makebox(0,0)[lb]{\smash{$\iota^{-1}\Big(I^-\big(\iota(q),\tilde{O}\big)\Big)$}}}%
    \put(0.48506697,0.29036193){\color[rgb]{0,0,0}\makebox(0,0)[lb]{\smash{$\gamma(-1)$}}}%
    \put(0.6922579,0.35586134){\color[rgb]{0,0,0}\makebox(0,0)[lb]{\smash{$S_\mathrm{aux}$}}}%
    \put(0.56303393,0.11271609){\color[rgb]{0,0,0}\makebox(0,0)[lb]{\smash{$M_\mathrm{int}$}}}%
    \put(0.20173559,0.1611012){\color[rgb]{0,0,0}\makebox(0,0)[lb]{\smash{$\gamma$}}}%
    \put(0.00960732,0.22400539){\color[rgb]{0,0,0}\makebox(0,0)[lb]{\smash{$I^-(q,M_\mathrm{int})$}}}%
  \end{picture}%
\endgroup%

\end{center}
To prove \eqref{PastPreserving}, let $q \in I^+ \Big( I^-\big(\gamma((-1,0)), \Mint\big) \cap \{r < r_0\}, \Mint\Big)$ be such that $\iota(q) \in \tilde{O} \cap \iota(\Mint)$ is in the same connectedness component of $\tilde{O} \cap \iota(\Mint)$ as $\tilde{\gamma}\big((-\varepsilon, 0)\big)$, and let $\sigma : [0,1] \to \tilde{O}$ be a past directed timelike curve with $\sigma(0) = \iota(q)$. Initially, this curve is contained in $\iota(\Mint)$ and since $\iota(q) \in \tilde{O} \cap \iota(\Mint)$ is in the same connectedness component of $\tilde{O} \cap \iota(\Mint)$ as $\tilde{\gamma}\big((-\varepsilon, 0)\big)$, the curve is also past directed with respect to the time orientation of $\Mint$. Thus, it can only leave $\iota(\Mint)$ if its $r$ value tends to $r_+$. However, since this curve is contained in $\tilde{O}$, the $r$-value of the curve never exceeds $r_0 < r_+$. Thus, the curve must be contained completely in $\iota(\Mint)$.
\vspace*{2mm}

\textbf{Step 1.2}
Let us define $x^+ :=(\frac{3}{4} \varepsilon, 0, \ldots, 0)$ and $x^- :=(-\frac{3}{4} \varepsilon, 0, \ldots, 0)$. Note that the closure of $\big(x^+ + C^-_{\nicefrac{5}{6}}\big) \cap \big(x^- + C^+_{\nicefrac{5}{6}}\big)$ in $\ed$ is compact.
Now choose $y^- := (y^-_0, 0, \ldots, 0)$ with $ - \frac{1}{5}\varepsilon < y^-_0 < 0$ so that the closure of $C^-_{\nicefrac{5}{8}} \cap \big(y^- + C^+_{\nicefrac{5}{8}} \big)$ in $(-\varepsilon, \varepsilon)^{d+1}$ is contained in  $x^+ + C^-_{\nicefrac{6}{7}} \cap x^- + C^+_{\nicefrac{6}{7}}$.
\begin{center}
\def\svgwidth{8.5cm}
\begingroup%
  \makeatletter%
  \providecommand\color[2][]{%
    \errmessage{(Inkscape) Color is used for the text in Inkscape, but the package 'color.sty' is not loaded}%
    \renewcommand\color[2][]{}%
  }%
  \providecommand\transparent[1]{%
    \errmessage{(Inkscape) Transparency is used (non-zero) for the text in Inkscape, but the package 'transparent.sty' is not loaded}%
    \renewcommand\transparent[1]{}%
  }%
  \providecommand\rotatebox[2]{#2}%
  \ifx\svgwidth\undefined%
    \setlength{\unitlength}{484.35976563bp}%
    \ifx\svgscale\undefined%
      \relax%
    \else%
      \setlength{\unitlength}{\unitlength * \real{\svgscale}}%
    \fi%
  \else%
    \setlength{\unitlength}{\svgwidth}%
  \fi%
  \global\let\svgwidth\undefined%
  \global\let\svgscale\undefined%
  \makeatother%
  \begin{picture}(1,0.92620726)%
    \put(0,0){\includegraphics[width=\unitlength]{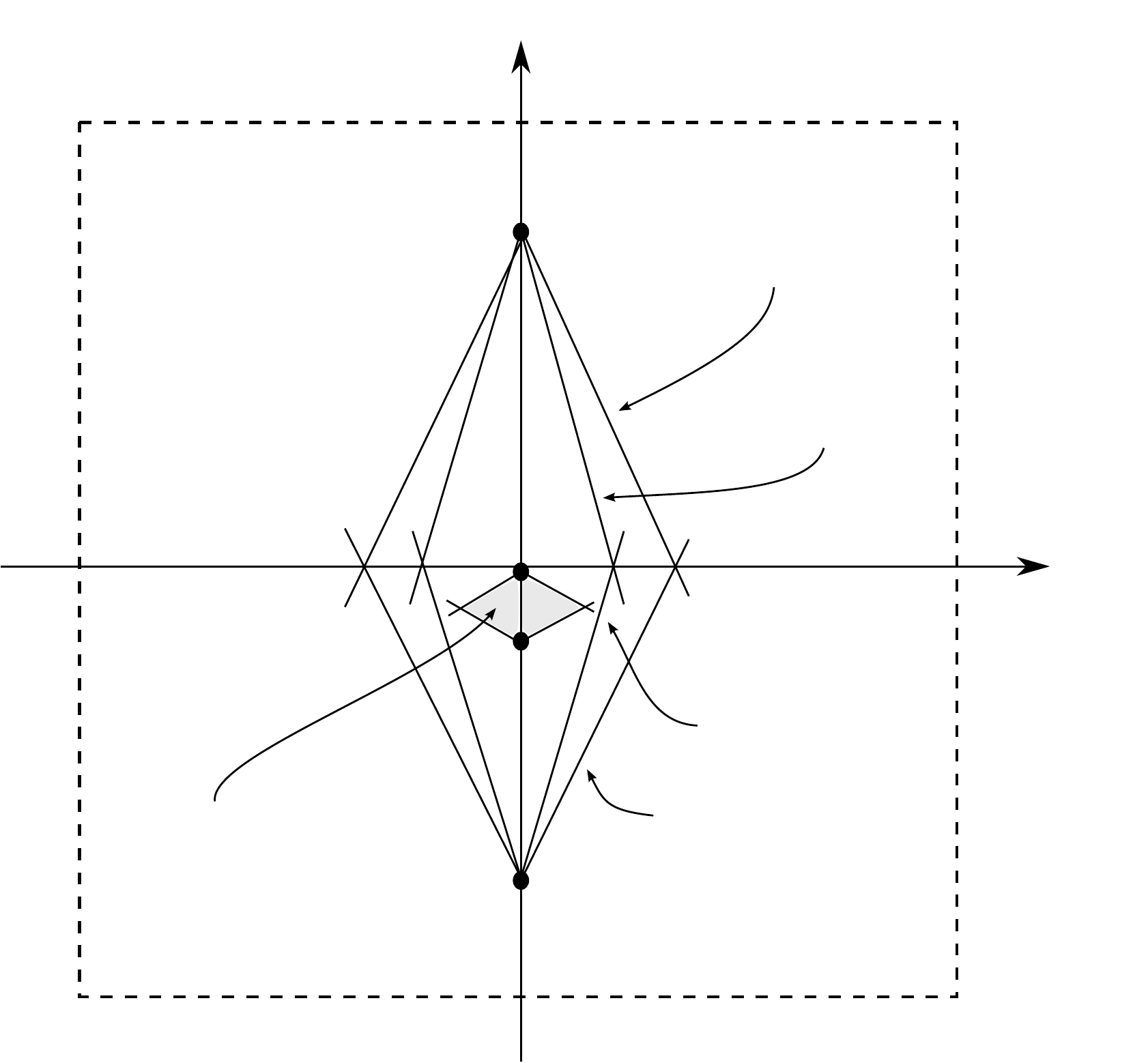}}%
    \put(0.74879405,0.84789396){\color[rgb]{0,0,0}\makebox(0,0)[lb]{\smash{$(-\varepsilon, \varepsilon)^{d+1}$}}}%
    \put(0.47048912,0.88605115){\color[rgb]{0,0,0}\makebox(0,0)[lb]{\smash{$x_0$}}}%
    \put(0.90101789,0.39336514){\color[rgb]{0,0,0}\makebox(0,0)[lb]{\smash{$\underline{x}$}}}%
    \put(0.46246695,0.33127055){\color[rgb]{0,0,0}\makebox(0,0)[lb]{\smash{$y^-$}}}%
    \put(0.46413534,0.13105841){\color[rgb]{0,0,0}\makebox(0,0)[lb]{\smash{$x^-$}}}%
    \put(0.46747224,0.72835759){\color[rgb]{0,0,0}\makebox(0,0)[lb]{\smash{$x^+$}}}%
    \put(0.60595225,0.69498889){\color[rgb]{0,0,0}\makebox(0,0)[lb]{\smash{$x^+ + C^-_{\nicefrac{5}{6}}$}}}%
    \put(0.64766309,0.55650893){\color[rgb]{0,0,0}\makebox(0,0)[lb]{\smash{$x^+ + C^-_{\nicefrac{6}{7}}$}}}%
    \put(0.57758881,0.20280109){\color[rgb]{0,0,0}\makebox(0,0)[lb]{\smash{$x^- + C^+_{\nicefrac{5}{6}}$}}}%
    \put(0.6159628,0.28288586){\color[rgb]{0,0,0}\makebox(0,0)[lb]{\smash{$x^- + C^+_{\nicefrac{6}{7}}$}}}%
    \put(0.08039568,0.17610619){\color[rgb]{0,0,0}\makebox(0,0)[lb]{\smash{$C^-_{\nicefrac{5}{8}} \cap \big(y^- + C^+_{\nicefrac{5}{8}}\big)$}}}%
  \end{picture}%
\endgroup%

\end{center}
We claim that for all $y^-_0 < s < 0$ we have 
\begin{equation}
\label{FuturePastPreserving}
\begin{split}
I^-\big((s,0, \ldots, 0), \ed\big) &\cap I^+\big(y^-, \ed\big)  \\
&= (\tilde{\varphi} \circ \iota) \Big[I^-\big(\gamma(s), \Mint \big) \cap I^+\big( \gamma(y^-_0), \Mint\big) \Big] \;.
\end{split}
\end{equation}

The inclusion ``$\, \subseteq \,$'' follows from \eqref{PastPreserving}, since if $\sigma$ is a past directed timelike curve in $\ed$ from $(s,0, \ldots, 0)$ to $y^-$, then \eqref{PastPreserving} states that $\tilde{\varphi}^{-1} \circ \sigma$ is contained in $\iota(\Mint)$.

To prove ``$\, \supseteq \,$'', let $\sigma : [y_0^-, s] \to \Mint$ be a future directed timelike curve from $\gamma(x_0^-)$ to $\gamma(s)$. By Proposition \ref{IntFutCon}, there exists a timelike homotopy $\Gamma :  [0,1]\times [y^-_0,s] \to \Mint$ with fixed endpoints between $\gamma |_{[y^-_0,s]}$ and $\sigma$. It follows that $\iota \circ \Gamma: [0,1] \times [y^-_0,s]  \to \tilde{M}$ is a timelike homotopy with fixed endpoints in $\tilde{M}$. We need to show that $(\iota \circ \sigma) (\cdot)= (\iota \circ \Gamma)(1, \cdot)$ maps into $\tilde{O}$. We argue by continuity, i.e., we show that the interval $J := \{t \in [0,1] \,|\, \iota \circ \Gamma\big(t, [y^-_0,s]\big) \subseteq \tilde{O}\}$ is non-empty, open and closed in $[0,1]$.

Clearly, we have $0 \in J$, since $\iota \circ \Gamma(0, \cdot) = \tilde{\gamma}|_{[y^-_0,s]}$. The openness follows from the openness of $\tilde{O}$, and the closedness follows since $I^-\big(\tilde{\gamma}(s), \tilde{O}\big) \cap I^+ \big(\tilde{\gamma}(y^-_0),\tilde{O}\big)$ is precompact in $\tilde{O}$, i.e., in particular its closure in $\tilde{M}$ is contained in $\tilde{O}$.
This finishes the proof of \eqref{FuturePastPreserving}.

Together with Proposition \ref{UnionPasts}, we now deduce from \eqref{FuturePastPreserving} that
\begin{equation}
\label{FuturePastPreserving2}
\begin{split}
I^-\big(0, \ed\big) &\cap I^+\big(y^-, \ed\big)  \\
&= (\tilde{\varphi} \circ \iota) \Big[\Big(\bigcup_{-\varepsilon < s < 0} I^-\big(\gamma(s), \Mint \big)\Big) \cap I^+\big( \gamma(y^-_0), \Mint\big) \Big] \;.
\end{split}
\end{equation}
\vspace*{2mm}

\textbf{Step 1.3} In this step we switch back to the manifold $(\Mint,\gint)$. Choose a $y^+_0$ with $y^-_0 < y^+_0 <0$ and define 
\begin{equation}
\label{DefK}
K:=\Big[\Big(\bigcup_{-\varepsilon < s < 0} I^-\big(\gamma(s), \Mint \big)\Big) \cap I^+\big( \gamma(y^-_0), \Mint\big) \Big] \setminus I^+\big( \gamma(y^+_0), \Mint\big)  \;.
\end{equation}
\vspace*{2mm}

\textbf{Step 1.3.1}  We show that the set $K$ timelike separates the set $\gamma\big((y^+_0,0)\big)$ from $I^-\big(\gamma(y^-_0),\Mint\big)$.
\vspace*{2mm}

So let $\sigma : [0,1] \to \Mint$ be a past directed timelike curve with $\sigma (0) \in \gamma\big((y^+_0,0)\big)$ and $\sigma(1) \in I^-\big(\gamma(y^-_0),\Mint\big)$. 

\begin{center}
\def\svgwidth{4.8cm}
\begingroup%
  \makeatletter%
  \providecommand\color[2][]{%
    \errmessage{(Inkscape) Color is used for the text in Inkscape, but the package 'color.sty' is not loaded}%
    \renewcommand\color[2][]{}%
  }%
  \providecommand\transparent[1]{%
    \errmessage{(Inkscape) Transparency is used (non-zero) for the text in Inkscape, but the package 'transparent.sty' is not loaded}%
    \renewcommand\transparent[1]{}%
  }%
  \providecommand\rotatebox[2]{#2}%
  \ifx\svgwidth\undefined%
    \setlength{\unitlength}{461.55961914bp}%
    \ifx\svgscale\undefined%
      \relax%
    \else%
      \setlength{\unitlength}{\unitlength * \real{\svgscale}}%
    \fi%
  \else%
    \setlength{\unitlength}{\svgwidth}%
  \fi%
  \global\let\svgwidth\undefined%
  \global\let\svgscale\undefined%
  \makeatother%
  \begin{picture}(1,0.87301212)%
    \put(0,0){\includegraphics[width=\unitlength]{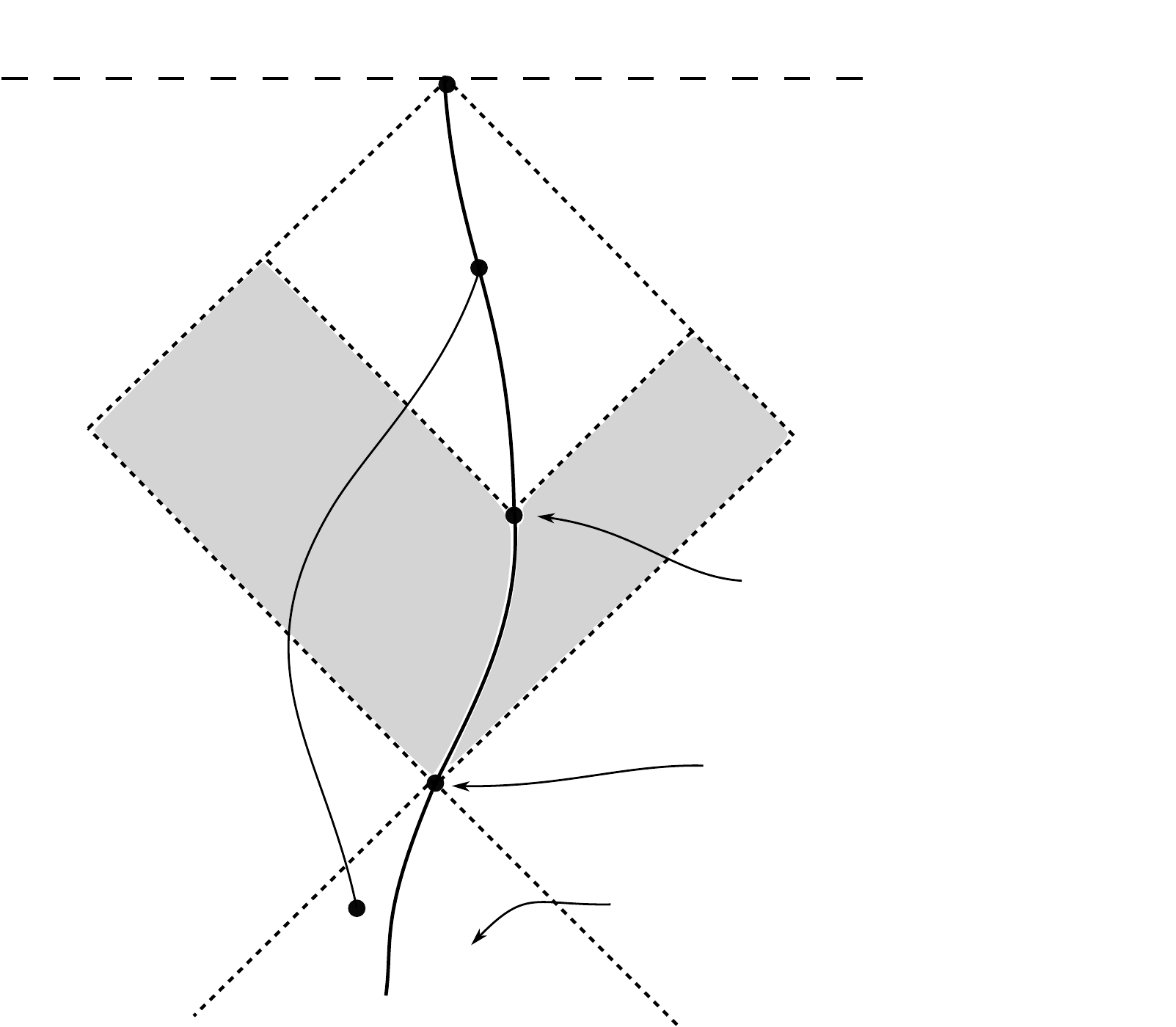}}%
    \put(0.15242731,0.50898242){\color[rgb]{0,0,0}\makebox(0,0)[lb]{\smash{$K$}}}%
    \put(0.44955655,0.52879096){\color[rgb]{0,0,0}\makebox(0,0)[lb]{\smash{$\gamma$}}}%
    \put(0.20937706,0.20690094){\color[rgb]{0,0,0}\makebox(0,0)[lb]{\smash{$\sigma$}}}%
    \put(0.53621919,0.09052538){\color[rgb]{0,0,0}\makebox(0,0)[lb]{\smash{$I^-\big(\gamma(y_0^-),M_\mathrm{int}\big)$}}}%
    \put(0.63773837,0.36289379){\color[rgb]{0,0,0}\makebox(0,0)[lb]{\smash{$\gamma(y_0^+)$}}}%
    \put(0.60802545,0.20937715){\color[rgb]{0,0,0}\makebox(0,0)[lb]{\smash{$\gamma(y_0^-)$}}}%
    \put(0.47926947,0.83087239){\color[rgb]{0,0,0}\makebox(0,0)[lb]{\smash{$\{r=0\}$}}}%
  \end{picture}%
\endgroup%

\end{center}

We claim that there exists a $\Delta_- \in (0,1)$ such that 
\begin{equation*}
\sigma^{-1}\big[I^+\big(\gamma(y^-_0),\Mint\big)\big] = [0,\Delta_-)\;.
\end{equation*} 
This is seen as follows: To begin with, it is clear that $0 \in \sigma^{-1}\big[I^+\big(\gamma(y^-_0),\Mint\big)\big]$. Moreover, by the continuity of $\sigma$ and the openness of $I^+\big(\gamma(y^-_0),\Mint\big)$, we know that $\sigma^{-1}\big[I^+\big(\gamma(y^-_0),\Mint\big)\big]$ is open in $[0,1]$. Moreover, since $\sigma$ is a past directed timelike curve, it follows that if $s_0 \in \sigma^{-1}\big[I^+\big(\gamma(y^-_0),\Mint\big)\big]$, then we also have $[0,s_0] \in \sigma^{-1}\big[I^+\big(\gamma(y^-_0),\Mint\big)\big]$. And finally, since $(\Mint,\gint)$ satisfies the chronology condition, there are no closed timelike curves in $\Mint$, and hence $I^-\big(\gamma(y^-_0),\Mint\big)$ is disjoint from $I^+\big(\gamma(y^-_0),\Mint\big)$. This implies that $\Delta_- <1$.

In the same way we deduce that there exists a $\Delta_+ \in (0,1)$ such that
\begin{equation*}
\sigma^{-1}\big[I^+\big(\gamma(y^+_0),\Mint\big)\big] = [0,\Delta_+)\;.
\end{equation*} 
In the following we show that $\Delta_+ < \Delta_-$.

Since $(\Mint,\gint)$ is globally hyperbolic and $\gint$ is smooth (!), we have $\overline{I^+\big(\gamma(y^+_0),\Mint\big)} = J^+\big(\gamma(y^+_0),\Mint\big)$.\footnote{This follows from 6.\ Lemma and 22.\ Lemma of Chapter 14 of \cite{ONeill}. Also note that $J^+\big(\gamma(y^+_0),\Mint\big)$ denotes the causal future of $\gamma(y^+_0)$ in $\Mint$, the definition of which, together with that of a null and causal curve, can be also found in Chapter 14 of \cite{ONeill}. Let us moreover emphasise here that for the purposes of this paper there is no need to define the causal future for Lorentzian manifolds with merely continuous metric.} Together with $\gamma(y^+_0) \in I^+\big(\gamma(y^-_0),\Mint \big)$, we now obtain\footnote{See 1.\ Corollary in Chapter 14 of \cite{ONeill}.}
\begin{equation*}
\overline{I^+\big(\gamma(y^+_0),\Mint\big)} = J^+\big(\gamma(y^+_0),\Mint\big) \subseteq J^+\Big(I^+\big(\gamma(y^-_0),\Mint\big),\Mint\Big) = I^+\big(\gamma(y^-_0),\Mint\big) \;.
\end{equation*}
Hence, we have $\sigma(\Delta_+) \in I^+\big(\gamma(y^-_0),\Mint\big)$, from which it follows that $\Delta_+ < \Delta_-$.

Choosing $s_0 \in (\Delta_+,\Delta_-)$, it follows that $\sigma(s_0) \in   I^+\big(\gamma(y^-_0),\Mint\big)  \setminus I^+\big(\gamma(y^+_0),\Mint\big)$. Moreover, it is clear that $\sigma(s_0) \in \bigcup_{-\varepsilon < s < 0} I^-\big(\gamma(s), \Mint \big)$, which concludes Step 1.4.1.
\vspace*{2mm}

\textbf{Step 1.3.2} We show that $\overline{K}$ is compact.
\vspace*{2mm}

We claim that for any $s_0 \in (y^+_0,0)$ there exists a $\delta >0$ and a neighbourhood $V$ of $\mathrm{id} \in SO(d)$ such that
\begin{equation*}
\big(\gamma_t(s) - \delta, \gamma_t(s) + \delta\big) \times \big{\{}\gamma_r(s)\big{\}} \times \big{\{}f \cdot \gamma_\omega(s) \, | \, f \in V\big{\}} \subseteq I^+\big(\gamma(y^+_0),\Mint\big)
\end{equation*}
holds for all $s \in (s_0,0)$.

In order to prove this claim, we first note that since $\gamma$ is timelike and $\gint$ is continuous, there exists a $\mu >0$ such that 
\begin{equation}
\label{UniformTimelike}
\gint\big(\dot{\gamma}(s), \dot{\gamma}(s)\big) < - \mu
\end{equation}
holds for all $s \in [y^+_0,s_0]$.

Let now $\lambda \in C^\infty\big( [y^+_0,s_0], \R\big)$ and $ h \in C^\infty\big([y^+_0,s_0], SO(d) \subseteq \mathrm{Mat}(d \times d, \R)\big)$, and define $\sigma : [y^+_0,s_0] \to \Mint$ by
\begin{equation*}
\sigma(s) := \Big(\gamma_t(s) + \lambda(s), \gamma_r(s), h(s)\big(\gamma_\omega(s)\big)\Big) \;.
\end{equation*}
We compute
\begin{equation}
\label{NormSigma}
\begin{split}
\gint\big(\dot{\sigma},\dot{\sigma}\big) = - \Big(1 - \frac{2m}{\big(\gamma_r(s)\big)^{d-2}}\Big)\, \big(\dot{\gamma}_t(s) + \dot{\lambda}(s)\big)^2 &+ \Big(1 - \frac{2m}{\big(\gamma_r(s)\big)^{d-2}}\Big)^{-1} \, \big(\dot{\gamma}_r(s)\big)^2 \\
&+ \big(\gamma_r(s)\big)^2 \, \big|\big|\dot{h}(s) \gamma_\omega(s) + h(s) \dot{\gamma}_\omega(s)\big|\big|^2_{\R^d} \;,
\end{split}
\end{equation}
where $|| \cdot ||_{\R^d}$ denotes the Euclidean norm on $\R^d$, and we think of $\gamma_\omega$ as mapping into $\Sd \subseteq \R^d$.
Since $\gamma_r(s)$ is bounded away from $0$ for $s \in [y^+_0, s_0]$, we can infer from \eqref{UniformTimelike} and \eqref{NormSigma} that there exists an $\eta >0$ such that whenever 
\begin{equation}
\label{BoundsOnSlopes}
||\dot{\lambda}||_{L^\infty\big([y^+_0,s_0]\big)} + ||\dot{h}||_{L^\infty\big([y^+_0,s_0]\big)} < \eta
\end{equation}
holds, the curve $\sigma : [y^+_0,s_0] \to \Mint$ is timelike\footnote{One can for example define $||\dot{h}||_{L^\infty\big([y^+_0,s_0]\big)} := \sup\limits_{s \in [y^+_0,s_0]} ||\dot{h}(s)||_{\R^{d\times d}}$.}. This in turn implies the existence of a $\delta >0$ and a neighbourhood $V$ of $\mathrm{id} \in SO(d)$ such that for every $\lambda_{s_0} \in (-\delta, \delta)$ and for every $h_{s_0} \in V$ there are smooth functions 
\begin{equation*}
\lambda \in C^\infty\big( [y^+_0,s_0], \R\big) \quad \textnormal{ with } \; \lambda(y^+_0) = 0 \; \textnormal{ and } \; \lambda(s_0) = \lambda_{s_0}
\end{equation*}
and
\begin{equation*}
h \in C^\infty\big([y^+_0,s_0], SO(d) \subseteq \mathrm{Mat}(d \times d, \R)\big) \quad \textnormal{ with }\; h(y^+_0) = \mathrm{id} \; \textnormal{ and } \; h(s_0) = h_{s_0}
\end{equation*}
such that moreover \eqref{BoundsOnSlopes} is satisfied. The claim now follows from concatenating $\sigma$ with the timelike\footnote{Recall that the Schwarzschild metric \eqref{SchwarzschildMetricInt} is spherically symmetric and invariant under translations in $t$.} curve  $\tau : [s_0, 0) \to \Mint$ given by
\begin{equation*}
\tau(s) = \Big(\gamma_t(s) + \lambda_{s_0}, \gamma_r(s), h_{s_0} \big(\gamma_\omega(s)\big)\Big) \;.
\end{equation*}

We now fix $s_0 \in (y^+_0, 0)$ and obtain $\delta >0$ and a neighbourhood $V \subseteq SO(d)$ of $\mathrm{id} \in SO(d)$ as in the claim. It follows from Lemma \ref{BoundsOnReach} that there exists $s_1 \in (s_0 ,0)$ (close to $0$) such that
\begin{equation*}
\begin{split}
I^-\big(\gamma(s),\Mint\big) \cap \Big{\{} r \leq \gamma_r(s_1)\Big{\}} &\subseteq \bigcup_{s_0  \leq s' <0} \Big[ \big(\gamma_t(s') - \delta, \gamma_t(s') + \delta\big) \times \big{\{}\gamma_r(s')\big{\}} \times \big{\{}f \cdot \gamma_\omega(s') \, | \, f \in V\big{\}} \Big] \\
&\subseteq I^+\big(\gamma(y^+_0),\Mint\big)
\end{split}
\end{equation*}
holds for all $s \in (-\varepsilon, 0)$. This implies 
\begin{equation*}
K \subseteq \R \times \big(\gamma_r(s_1), \gamma_r(y^-_0)\big) \times \Sd \;.
\end{equation*}
Moreover, the bound \eqref{BoundTVel} implies that there are $t_0, t_1 \in \R$ such that 
\begin{equation*}
I^+\big(\gamma(y^-_0),\Mint\big) \subseteq (t_0,t_1) \times \big(0, \gamma_r(y^-_0)\big) \times \Sd \;.
\end{equation*}
It follows that 
\begin{equation*}
K \subseteq (t_0,t_1) \times \big(\gamma_r(s_1), \gamma_r(y^-_0)\big) \times \Sd\;,
\end{equation*}
which implies that $\overline{K}$ is compact.
\vspace*{2mm}

\textbf{Step 1.4} First note that by the continuity of $\tilde{\varphi} \circ \iota$ we have\footnote{Indeed, since $\overline{K} \subseteq \Mint$ is compact, we actually have equality.}
\begin{equation}
\label{ClosureRel}
(\tilde{\varphi}\circ \iota)(\overline{K})  \subseteq \overline{(\tilde{\varphi} \circ \iota)(K)} \;.
\end{equation}
Moreover, it follows from the definition of $K$, \eqref{DefK}, that $K \subseteq \Big(\bigcup_{-\varepsilon < s < 0} I^-\big(\gamma(s), \Mint \big)\Big) \cap I^+\big( \gamma(y^-_0), \Mint\big)$, and thus, together with \eqref{FuturePastPreserving2}, we obtain
\begin{equation*}
\overline{(\tilde{\varphi} \circ \iota)(K)}  \subseteq \overline{I^-\big(0, \ed\big) \cap I^+\big(y^-, \ed\big)} \;.
\end{equation*}
By the choice of $y^- \in \ed$ in Step 1.3, together with \eqref{EstimatesOnPastAndFuture}, it follows that
\begin{equation}
\label{Neighbourhood}
\overline{(\tilde{\varphi} \circ \iota)(K)}  \subseteq  \Big(x^+ + C^-_{\nicefrac{6}{7}}\Big) \cap \Big(x^- + C^+_{\nicefrac{6}{7}}\Big)\;.
\end{equation}
Hence, from \eqref{ClosureRel} and \eqref{Neighbourhood}, it follows that 
\begin{equation*}
U:= (\tilde{\varphi} \circ \iota)^{-1}\Big(\big[x^+ + C^-_{\nicefrac{6}{7}}\big] \cap \big[x^- + C^+_{\nicefrac{6}{7}}\big]\Big) \subseteq \Mint
\end{equation*}
is an open neighbourhood of $\overline{K} \subseteq \Mint$.
\vspace*{2mm}

\textbf{Step 1.5} We show that there exists a $\mu >0$ such that $I^+\big(\gamma(-\mu), \Mint\big)$ is timelike separated from $\gamma(x^-_0)$ by $U$.
\vspace*{2mm}

We consider $\Mint = \R \times \big(0,(2m)^{\frac{1}{d-2}}\big) \times \Sd$ with the metric $d_{\Mint} : \Mint \times \Mint \to [0,\infty)$ given by
\begin{equation*}
d_{\Mint}\Big((t_1, r_1, \omega_1), (t_2, r_2, \omega_2)\Big) := |t_1 - t_2| + |r_1 - r_2| + d_{\Sd}\big(\omega_1, \omega_2\big) \;,
\end{equation*}
where $(t_i, r_i, \omega_i) \in \Mint$ for $i=1,2$. Since  
\begin{equation*}
\Mint \ni (t,r,\omega) \mapsto d_{\Mint}\big((t,r,\omega),\Mint \setminus U\big) = \inf\limits_{(t',r',\omega') \in \Mint \setminus U} d_{\Mint}\big((t,r,\omega),(t',r',\omega')\big)
\end{equation*}
is continuous, and $\overline{K}$ is compact and disjoint from the closed set $\Mint \setminus U$, we infer that $d_{\Mint}(\cdot, \Mint \setminus U)$ must attain its minimum on $\overline{K}$, which is moreover strictly positive. It follows that there exists a $\delta >0$ such that 
\begin{equation*}
\overline{K}_\delta := \Big{\{} (t,r,\omega) \in \Mint \, \big| \, d_{\Mint}\big((t,r,\omega), \overline{K}\big) < \delta \Big{\}} \subseteq U \;.
\end{equation*}
Moreover, by choosing $\delta $ slightly smaller if necessary, we can also assume that 
\begin{equation}
\label{RestrictionOnDelta}
B_\delta\big(\gamma(x^-_0)\big) \subseteq I^-\big(\gamma(y^-_0),\Mint\big)\;.
\end{equation}
\vspace*{2mm}

\textbf{Step 1.5.1}
We define a metric $d_{SO(d)} : SO(d) \times SO(d) \to [0,\infty)$ on $SO(d)$ by
\begin{equation*}
d_{SO(d)}(f,h) := \sup\limits_{\omega \in \Sd} d_{\Sd}\big(f (\omega), h(\omega)\big)\;,
\end{equation*}
where $f, h  \in SO(d)$, and denote with $B_{\eta}(\mathrm{id}) \subseteq SO(d)$ the ball of radius $\eta >0 $, centred at $\mathrm{id}$, with respect to this metric. Moreover, it is easy to see that 
\begin{equation}
\label{ChooseRotation}
\textnormal{for } \omega_0, \omega_1 \in \Sd \textnormal{ with } d_{\Sd}(\omega_0, \omega_1) < \eta \textnormal{,  there exists an } h \in B_\eta(\mathrm{id})  \textnormal{ with } h(\omega_0) = \omega_1\;. 
\end{equation}
In particular, $h$ can be defined as a rotation purely in the plane $\mathrm{span}\{\omega_0, \omega_1\} \subseteq \R^d$.
\vspace*{2mm}

Continuing the proof of Step 1.5, Lemma \ref{BoundsOnReach} implies that there exists a $\mu \in (0, -y^+_0)$ such that for all $(t_0,r_0, \omega_0) \in I^+\big(\gamma(-\mu),\Mint\big)$, we have
\begin{equation}
\label{ChoiceMu}
|t_0 - \gamma_t(-\mu)| < \frac{\delta}{2} \qquad \textnormal{ and } \qquad d_{\Sd}\big(\omega_0, \gamma_\omega(-\mu)\big) < \frac{\delta}{2} \;.
\end{equation}
In the following we will show that $I^+\big(\gamma(-\mu),\Mint\big)$ is timelike separated from $\gamma(x^-_0)$ even by $\overline{K}_\delta$ (which, of course, implies Step 1.5).

So let $\sigma : [0,1] \to \Mint$ be a past directed timelike curve with $\sigma(0) \in I^+\big(\gamma(-\mu),\Mint\big)$ and $\sigma (1) = \gamma(x^-_0)$. Also let $s_0 \in (-\mu ,0)$ be such that $\gamma_r(s_0) = \sigma_r (0)$.
 By \eqref{ChoiceMu} we have 
 \begin{equation*}
 |\sigma_t(0) - \gamma_t(s_0)| < \delta \qquad \textnormal{ and } \qquad d_{\Sd}\big(\sigma_\omega(0), \gamma_\omega(s_0)\big) < \delta \;.
 \end{equation*}
Thus, by \eqref{ChooseRotation} there exists an $h \in B_\delta(\mathrm{id}) \subseteq SO(d)$ such that $h\big(\sigma_\omega(0)\big) = \gamma_\omega(s_0)$. It now follows that the curve $\hat{\sigma} : [0,1] \to \Mint$, given by 
\begin{equation*}
\hat{\sigma}(s) = \Big(\sigma_t(s) + [\gamma_t(s_0) - \sigma_t(0)], \sigma_r(s), h\big(\sigma_\omega(s)\big)\Big)\;,
\end{equation*}
is past directed timelike with $\hat{\sigma}(0) = \gamma(s_0)$ and, using the fact that $h \in B_\delta(\mathrm{id})$ together with \eqref{RestrictionOnDelta},  $\hat{\sigma}(1) \in I^-\big(\gamma(y^-_0),\Mint\big)$. By Step 1.4.1, there exists an $\hat{s} \in [0,1]$ with $\hat{\sigma}(\hat{s}) \in K$. It now follows that $\sigma(\hat{s}) \in \overline{K}_\delta$, which concludes Step 1.5.
\vspace*{2mm}

We now finish the proof of Step 1. We set $w := \gamma(-\mu)$. We will first show that $(\tilde{\varphi} \circ \iota) \big(I^+(w, \Mint)\big) \subseteq  \big[x^+ + C^-_{\nicefrac{6}{7}}\big] \cap \big[x^- + C^+_{\nicefrac{6}{7}}\big]$.  

The proof is by contradiction. So let $\sigma : [0,1] \to \Mint$ be a future directed timelike curve with $\sigma(0) = \gamma(-\mu)$ and assume that there exists $\tilde{s} \in [0,1]$ such that $(\tilde{\varphi} \circ \iota \circ \sigma)(\tilde{s}) \notin \big[x^+ + C^-_{\nicefrac{6}{7}}\big] \cap \big[x^- + C^+_{\nicefrac{6}{7}}\big]$. Let
\begin{equation*}
s_0 := \sup \big{\{}s' \in [0,1] \, | \, (\tilde{\varphi} \circ \iota \circ \sigma)(s) \in \big[x^+ + C^-_{\nicefrac{6}{7}}\big] \cap \big[x^- + C^+_{\nicefrac{6}{7}}\big] \textnormal{ for all } s \in [0,s')\big{\}} \;.
\end{equation*}
Clearly, we have $0 < s_0 \leq 1$ and from our assumption it follows that $(\tilde{\varphi} \circ \iota \circ \sigma)(s_0)  \in \partial \big(\big[x^+ + C^-_{\nicefrac{6}{7}}\big] \cap \big[x^- + C^+_{\nicefrac{6}{7}}\big]\big)$. 
Since all vectors in $C^-_{\nicefrac{5}{6}}$ are past directed timelike, we can find a past directed timelike curve $\tau : [0,1] \to \ed$ with $\tau(0) = (\tilde{\varphi} \circ \iota \circ \sigma) (s_0)$ and $\tau(1) = x^-$, which does not intersect $\big[x^+ + C^-_{\nicefrac{6}{7}}\big] \cap \big[x^- + C^+_{\nicefrac{6}{7}}\big]$. For example, this curve can be chosen to lie in $\partial \Big(\big[x^+ + C^-_{\nicefrac{6}{7}}\big] \cap \big[x^- + C^+_{\nicefrac{6}{7}}\big]\Big)$.
\begin{center}
\def\svgwidth{12cm}
\begingroup%
  \makeatletter%
  \providecommand\color[2][]{%
    \errmessage{(Inkscape) Color is used for the text in Inkscape, but the package 'color.sty' is not loaded}%
    \renewcommand\color[2][]{}%
  }%
  \providecommand\transparent[1]{%
    \errmessage{(Inkscape) Transparency is used (non-zero) for the text in Inkscape, but the package 'transparent.sty' is not loaded}%
    \renewcommand\transparent[1]{}%
  }%
  \providecommand\rotatebox[2]{#2}%
  \ifx\svgwidth\undefined%
    \setlength{\unitlength}{505.09819336bp}%
    \ifx\svgscale\undefined%
      \relax%
    \else%
      \setlength{\unitlength}{\unitlength * \real{\svgscale}}%
    \fi%
  \else%
    \setlength{\unitlength}{\svgwidth}%
  \fi%
  \global\let\svgwidth\undefined%
  \global\let\svgscale\undefined%
  \makeatother%
  \begin{picture}(1,0.88817884)%
    \put(0,0){\includegraphics[width=\unitlength]{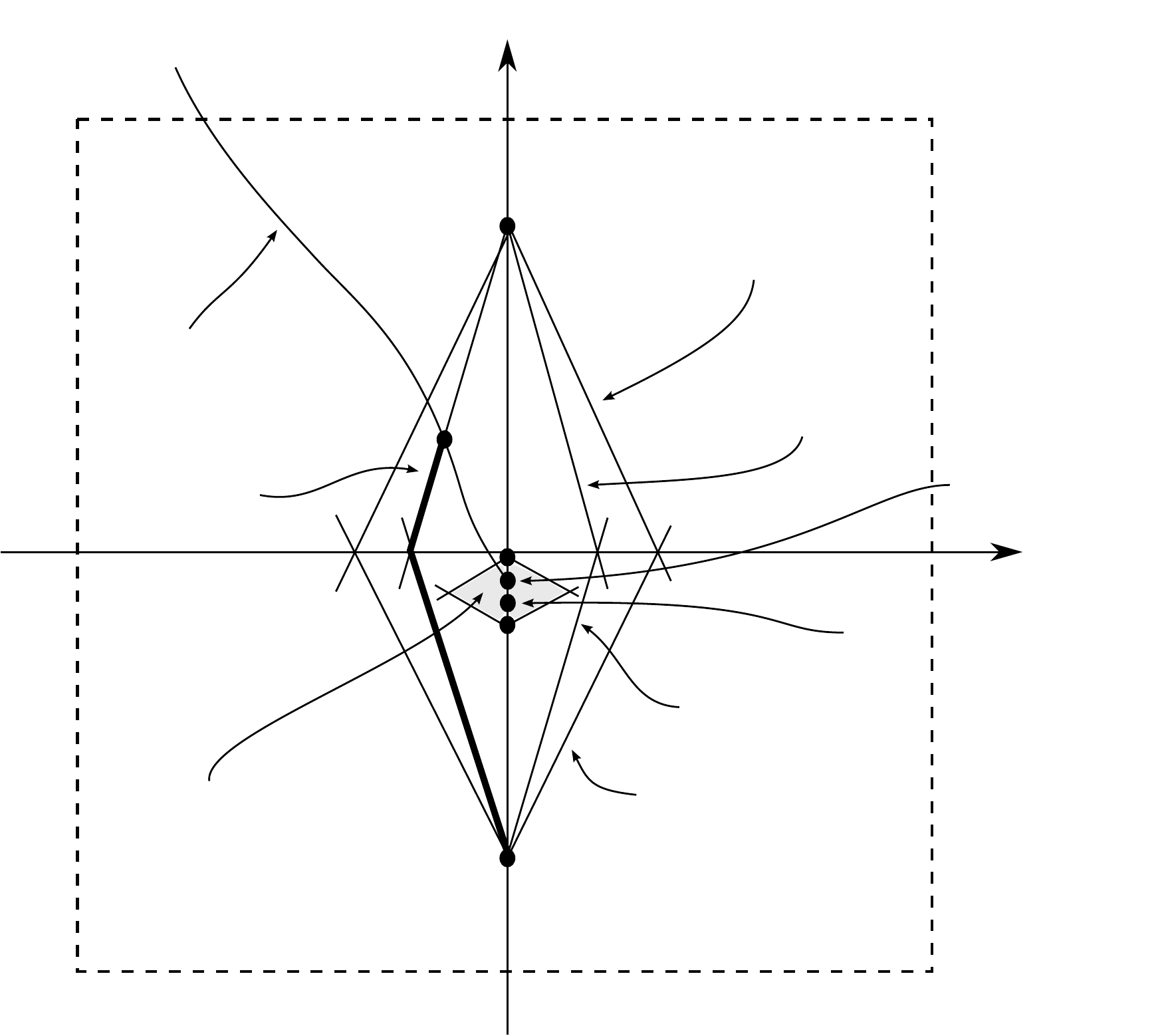}}%
    \put(0.7180499,0.81308095){\color[rgb]{0,0,0}\makebox(0,0)[lb]{\smash{$(-\varepsilon, \varepsilon)^{d+1}$}}}%
    \put(0.45117168,0.84967148){\color[rgb]{0,0,0}\makebox(0,0)[lb]{\smash{$x_0$}}}%
    \put(0.8640237,0.37721428){\color[rgb]{0,0,0}\makebox(0,0)[lb]{\smash{$\underline{x}$}}}%
    \put(0.44347888,0.31766917){\color[rgb]{0,0,0}\makebox(0,0)[lb]{\smash{$y^-$}}}%
    \put(0.44507878,0.12567738){\color[rgb]{0,0,0}\makebox(0,0)[lb]{\smash{$x^-$}}}%
    \put(0.44827867,0.69845253){\color[rgb]{0,0,0}\makebox(0,0)[lb]{\smash{$x^+$}}}%
    \put(0.58107294,0.66645389){\color[rgb]{0,0,0}\makebox(0,0)[lb]{\smash{$x^+ + C^-_{\nicefrac{5}{6}}$}}}%
    \put(0.62107121,0.53365967){\color[rgb]{0,0,0}\makebox(0,0)[lb]{\smash{$x^+ + C^-_{\nicefrac{6}{7}}$}}}%
    \put(0.55387405,0.19447444){\color[rgb]{0,0,0}\makebox(0,0)[lb]{\smash{$x^- + C^+_{\nicefrac{5}{6}}$}}}%
    \put(0.59067247,0.27127108){\color[rgb]{0,0,0}\makebox(0,0)[lb]{\smash{$x^- + C^+_{\nicefrac{6}{7}}$}}}%
    \put(0.07709478,0.17905751){\color[rgb]{0,0,0}\makebox(0,0)[lb]{\smash{$C^-_{\nicefrac{5}{8}} \cap \big(y^- + C^+_{\nicefrac{5}{8}}\big)$}}}%
    \put(0.82020826,0.46044797){\color[rgb]{0,0,0}\makebox(0,0)[lb]{\smash{$(\tilde{\varphi} \circ \iota)\big(\gamma(-\mu)\big)$}}}%
    \put(0.73309648,0.33826513){\color[rgb]{0,0,0}\makebox(0,0)[lb]{\smash{$y^+$}}}%
    \put(0.09503103,0.5792367){\color[rgb]{0,0,0}\makebox(0,0)[lb]{\smash{$\tilde{\varphi}\circ \iota \circ \sigma$}}}%
    \put(0.1911934,0.45479129){\color[rgb]{0,0,0}\makebox(0,0)[lb]{\smash{$\tau$}}}%
  \end{picture}%
\endgroup%

\end{center}
It now follows from $\eqref{PastPreserving}$ that $\tau$ maps in $\tilde{\varphi}\big(\iota(\Mint)\big)$ (since $\sigma(s_0) \in \Mint$). Hence, $(\tilde{\varphi} \circ \iota)^{-1} \circ \tau$ is a past directed timelike curve in $\Mint$ with $\big((\tilde{\varphi} \circ \iota)^{-1} \circ \tau\big)(0) = \sigma(s_0) \in I^+\big(\gamma(-\mu),\Mint\big)$ and $\big((\tilde{\varphi} \circ \iota)^{-1} \circ \tau\big)(1) =\gamma(x^-_0)$, which does not intersect $U= (\tilde{\varphi} \circ \iota)^{-1}\Big(\big[x^+ + C^-_{\nicefrac{6}{7}}\big] \cap \big[x^- + C^+_{\nicefrac{6}{7}}\big]\Big)$. This, however, is a contradiction to Step 1.5. Hence, we have shown that $(\tilde{\varphi} \circ \iota) \big(I^+(w, \Mint)\big) \subseteq  \big[x^+ + C^-_{\nicefrac{6}{7}}\big] \cap \big[x^- + C^+_{\nicefrac{6}{7}}\big]$. 

We now define $\rho >0$ by $\big(x^+ + C^-_{\nicefrac{6}{7}}\big) \cap \{x_0 = 0\} = \{0\} \times B^d_\rho(0)$.
Recalling that $-\frac{1}{5} \varepsilon < y_0^- < -\mu < 0$, elementary geometry shows that there exists a $\delta >0$ such that
\begin{equation*}
\{-\frac{19}{20} \varepsilon\} \times B^d_{\rho + \delta}(0) \subseteq \big((-\frac{1}{5}\varepsilon, 0, \ldots, 0) + C^-_{\nicefrac{5}{6}}\big) \cap \{x_0 = -\frac{19}{20} \varepsilon\} \;,
\end{equation*}
and thus by  \eqref{EstimatesOnPastAndFuture} and \eqref{PastPreserving} we obtain $\tilde{\psi}^{-1} \Big((-\varepsilon, - \frac{19}{20} \varepsilon] \times B^d_{\rho + \delta}(0) \Big) \subseteq I^-\big(\gamma(-\mu), \Mint\big)$.

We now set $\tilde{U} := \tilde{\varphi}^{-1} \big( (-\varepsilon, \varepsilon) \times B_{\rho + \delta}^d(0)\big)$ and $\tilde{\psi} := \tilde{\varphi}\big|_{\tilde{U}}$. Clearly, the first three properties of Step 1 are satisfied.
It remains to prove the fourth property. In fact, for $\underline{x} \in B^d_{\rho + \delta}(0)$ we cannot have $\tilde{\varphi}^{-1}(s,\underline{x}) \in \iota(\Mint)$ for all $s \in (-\varepsilon, \frac{3}{4} \varepsilon]$. Since otherwise we can connect $(\frac{3}{4} \varepsilon,  \underline{x})$ to $0$ via a past directed timelike curve, which, by \eqref{PastPreserving}, would imply the contradiction $\tilde{\varphi}^{-1}(0) \in \iota(\Mint)$.
This concludes the proof of Step 1.
\vspace*{2mm}

\underline{\textbf{Step 2:}} We construct a connected and globally hyperbolic subset $N \subseteq \Mint$ with $\diam_s(N) = \infty$.
\vspace*{5mm}

Let $\omega_0$ be the projection of $\gamma(-\frac{\mu}{2})$ to $\Sd$, $t_0 := t\big(\gamma(-\frac{\mu}{2})\big)$, and $r_0 := r\big(\gamma(-\frac{\mu}{2})\big)$. By the openness of $I^+\big(\gamma(-\mu), \Mint\big)$ there exist $\lambda, \kappa >0$ such that $[t_0 - \lambda, t_0 + \lambda] \times \{r_0\} \times B^{\Sd}_\kappa(\omega_0) \subseteq I^+\big(\gamma(-\mu), \Mint\big)$, where $B^{\Sd}_\kappa(\omega_0)$ denotes the ball of radius $\kappa$ around $\omega_0$ in $\Sd$.  Since $-\partial_r$ is future directed timelike, we also have
\begin{equation}
\label{BlockContained}
[t_0 - \lambda, t_0 + \lambda] \times (0,r_0] \times B^{\Sd}_\kappa(\omega_0)  \subseteq I^+\big(\gamma(-\mu), \Mint\big) \;.
\end{equation}
We define $r^*_\mathrm{int}(r) := \int_0 ^r \frac{1}{D(r')} \, dr'$ and set $v^*_\mathrm{int} := r^*_\mathrm{int} + t$ and $u^*_\mathrm{int} := r^*_\mathrm{int} - t$. It is easy to check that $v^*_\mathrm{int}$ and $u^*_\mathrm{int}$ are null coordinates - indeed, one could add a suitable constant $b \in \R$ to the definition of $r^*_\mathrm{int}(r)$ to obtain the relations $v^*_\mathrm{int} = \frac{2r_+}{d-2} \log (v)$ and $u^*_\mathrm{int} = \frac{2r_+}{d-2} \log(u)$ with the null coordinates $u,v$ defined in Section \ref{MaxAna}. We choose $r_1 \in (0,r_0)$ such that $0>r^*_\mathrm{int}(r_1) > -\frac{\lambda}{2}$ and set
\begin{equation*}
\begin{split}
N:=\Big[\{r \leq r_1\} &\cap  \{u^*_\mathrm{int} < r^*_\mathrm{int}(r_1) - (t_0 - \lambda)\} \cap \{v^*_\mathrm{int} < r^*_\mathrm{int}(r_1) +(t_0 + \lambda)\}\Big] \\
&\cup \Big[ \{r_1 < r < r_0\} \cap \{u^*_\mathrm{int} > r^*_\mathrm{int}(r_1) - (t_0 + \lambda)\} \cap \{v^*_\mathrm{int} > r^*_\mathrm{int}(r_1) +(t_0 - \lambda)\}\Big] \;.
\end{split}
\end{equation*}
Clearly $N$ is connected and it is also not difficult to verify that $N$ is globally hyperbolic with Cauchy hypersurface $\Sigma = (t_0 - \lambda, t_0 + \lambda) \times \{r_1\} \times \Sd$. It remains to show that $\diam_s(N) = \infty$.

For $n > \frac{1}{r_1}$ let $f_n : \R \to (0, r_1]$ be a smooth function with
\begin{equation*}
f_n(t) = \begin{cases} r_1 \textnormal{ for } t < t_0 - \frac{7}{8}\lambda \\
\frac{1}{n} \textnormal { for } t_0 - \frac{1}{4}\lambda \leq t \leq t_0 + \frac{1}{4}\lambda \\
r_1 \textnormal{ for } t_0 + \frac{7}{8}\lambda < t \end{cases}
\end{equation*}
and such that $\{r = f_n(t)\}$ is spacelike. Again, it is straightforward to verify that $\Sigma_n := \{r = f_n(t)\} \cap \{t_0 - \lambda < t < t_0 + \lambda\}$ is a Cauchy hypersurface for $N$.
\begin{center}
\def\svgwidth{12cm}
\begingroup%
  \makeatletter%
  \providecommand\color[2][]{%
    \errmessage{(Inkscape) Color is used for the text in Inkscape, but the package 'color.sty' is not loaded}%
    \renewcommand\color[2][]{}%
  }%
  \providecommand\transparent[1]{%
    \errmessage{(Inkscape) Transparency is used (non-zero) for the text in Inkscape, but the package 'transparent.sty' is not loaded}%
    \renewcommand\transparent[1]{}%
  }%
  \providecommand\rotatebox[2]{#2}%
  \ifx\svgwidth\undefined%
    \setlength{\unitlength}{775.23037109bp}%
    \ifx\svgscale\undefined%
      \relax%
    \else%
      \setlength{\unitlength}{\unitlength * \real{\svgscale}}%
    \fi%
  \else%
    \setlength{\unitlength}{\svgwidth}%
  \fi%
  \global\let\svgwidth\undefined%
  \global\let\svgscale\undefined%
  \makeatother%
  \begin{picture}(1,0.28349404)%
    \put(0,0){\includegraphics[width=\unitlength]{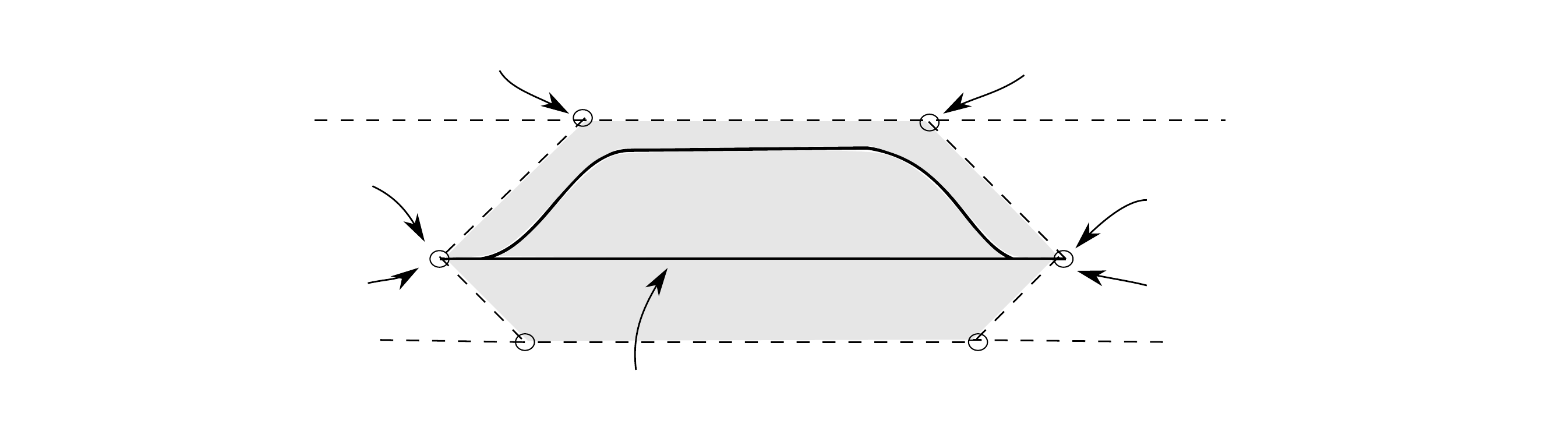}}%
    \put(0.16974215,0.24808521){\color[rgb]{0,0,0}\makebox(0,0)[lb]{\smash{$t = - u^*_\mathrm{int} > t_0 - \frac{\lambda}{2}$}}}%
    \put(0.54271887,0.24661101){\color[rgb]{0,0,0}\makebox(0,0)[lb]{\smash{$t= v^*_\mathrm{int} > t_0 + \frac{\lambda}{2}$}}}%
    \put(-0.07940037,0.15078692){\color[rgb]{0,0,0}\makebox(0,0)[lb]{\smash{$u^*_\mathrm{int} = r^*_\mathrm{int}(r_1) - (t_0 - \lambda)$}}}%
    \put(-0.07940037,0.09329255){\color[rgb]{0,0,0}\makebox(0,0)[lb]{\smash{$v^*_\mathrm{int} = r^*_\mathrm{int}(r_1) + (t_0 - \lambda)$}}}%
    \put(0.74616067,0.14931272){\color[rgb]{0,0,0}\makebox(0,0)[lb]{\smash{$u^*_\mathrm{int} = r^*_\mathrm{int}(r_1) - (t_0 + \lambda)$}}}%
    \put(0.74173803,0.09329248){\color[rgb]{0,0,0}\makebox(0,0)[lb]{\smash{$v^*_\mathrm{int} = r^*_\mathrm{int}(r_1) + (t_0 + \lambda)$}}}%
    \put(0.25819513,0.01810752){\color[rgb]{0,0,0}\makebox(0,0)[lb]{\smash{$(t_0 - \lambda, t_0 + \lambda) \times \{r_1\} \times \mathbb{S}^{d-1}$}}}%
    \put(0.66950144,0.03874654){\color[rgb]{0,0,0}\makebox(0,0)[lb]{\smash{$\{r = r_0\}$}}}%
    \put(0.4557401,0.15963227){\color[rgb]{0,0,0}\makebox(0,0)[lb]{\smash{$\Sigma_n$}}}%
    \put(0.48817284,0.08002457){\color[rgb]{0,0,0}\makebox(0,0)[lb]{\smash{$N$}}}%
  \end{picture}%
\endgroup%

\end{center}
Let $\overline{g}_n$ denote the induced metric on $\Sigma_n$ and, using \eqref{SchwarzschildMetricInt}, we see that in the region $\Sigma_n \cap \{t_0 - \frac{1}{4}\lambda < t < t_0 + \frac{1}{4} \lambda\}$ the metric $\overline{g}_n$ is given by
\begin{equation*}
\overline{g}_n = -\big(1 - 2m\cdot n^{(d-2)}\big) \,dt^2 + n^{-2}\, \mathring{\gamma}_{d-1} \;.
\end{equation*}
We now consider the pairs of points $p_n = (t_0 - \frac{1}{4}\lambda, \frac{1}{n}, \omega_0) \in \Sigma_n$ and $q_n = (t_0 + \frac{1}{4}\lambda, \frac{1}{n}, \omega_0) \in \Sigma_n$.
The shortest curve connecting $p_n$ with $q_n$ in $\Sigma_n$ is given by
 $\gamma_n : [-\frac{1}{4}\lambda, \frac{1}{4}\lambda] \to \Sigma_n$, 
\begin{equation*}
\gamma_n (s) = ( t_0 + s, \frac{1}{n}, \omega_0) \;.
\end{equation*}
The length $L(\gamma_n)$ of $\gamma_n$ is given by
\begin{equation*}
\begin{split}
L(\gamma_n) &= \int_{-\frac{1}{4}\lambda}^{\frac{1}{4}\lambda} \sqrt{\overline{g}_n \big(\dot{\gamma}_n(s), \dot{\gamma}_n(s)\big)} \, ds \\
&= \int_{-\frac{1}{4}\lambda}^{\frac{1}{4}\lambda} \sqrt{2m \cdot n^{(d-2)} -1} \, ds \\ 
&= \frac{1}{2} \lambda \sqrt{2m \cdot n^{(d-2)} -1} \;.
\end{split}
\end{equation*}
Since $d \geq 3$, it thus follows that
\begin{equation*}
\diam(\Sigma_n) \geq d_{\Sigma_n}(p_n, q_n) = L(\gamma_n) = \frac{1}{2} \lambda \sqrt{2m \cdot n^{(d-2)} -1} \to \infty \quad \textnormal{ for } n \to \infty \;,
\end{equation*}
and hence
\begin{equation*}
\diam_s(N) = \sup_{\substack{\Sigma \textnormal{ Cauchy} \\ \textnormal{hypersurface of } N}} \diam(\Sigma) \; \geq \; \sup_{n \in \N_{> \frac{1}{r_1}}} \,\diam (\Sigma_n) = \infty \;.
\end{equation*}
\vspace*{2mm}

\underline{\textbf{Step 3:}} We show that Step 1 implies that $N$ can be covered by finitely many regular flow charts for $N$. A contradiction then follows from Theorem \ref{DiamFinite}.
\vspace*{5mm}

Consider the chart $\tilde{\psi} : \tilde{U} \to (-\varepsilon, \varepsilon) \times B^d_{\rho + \delta} (0)$ constructed in Step 1. For $\underline{x} \in B^d_{\rho + \delta}(0)$ let \begin{equation*}
I_{\underline{x}} := \big(- \varepsilon , \sup \big{\{} s_0 \in (-\varepsilon, \varepsilon) \; \big| \; \tilde{\psi}^{-1} (s,\underline{x}) \in \iota(\Mint) \;\; \forall  s \in (-\varepsilon, s_0)\big{\}}\big)
\end{equation*}
and set 
\begin{equation*}
D_{\rho + \delta} := \bigcup_{\underline{x} \in B^d_{\rho + \delta}(0)} I_{\ux} \times \{ \ux\} \subseteq (-\varepsilon, \varepsilon) \times B^d_{\rho + \delta}(0)\;.
\end{equation*} 
Clearly, $D_{\rho + \delta}$ is a connected component of $\tilde{\psi}\big(\iota(\Mint) \cap \tilde{U}\big)$ and hence open. Similarly set $D_\rho := \bigcup_{\underline{x} \in B^d_{\rho}(0)} I_{\ux} \times \{ \ux\}$. The corresponding regions in $\Mint$ we denote with $V_{\rho + \delta} := (\iota^{-1} \circ \tilde{\psi}^{-1})(D_{\rho + \delta})$ and $V_\rho := (\iota^{-1} \circ \tilde{\psi}^{-1})(D_{\rho})$. Setting $\psi_{\rho + \delta} := (\tilde{\psi} \circ \iota)|_{V_{\rho + \delta}}$ and $\psi_\rho := (\tilde{\psi} \circ \iota)|_{V_\rho}$, this yields the charts
\begin{equation*}
\psi_{\rho + \delta} : V_{\rho + \delta} \to D_{\rho + \delta} \quad \textnormal{ and } \quad \psi_{\rho} : V_{\rho} \to D_{\rho } 
\end{equation*}
of $\Mint$.  For $\ux \in B^d_{\rho + \delta}(0)$ the future directed timelike curves $\sigma_{\ux} : I_{\ux} \to \Mint$, $\sigma_{\ux}(s) = \psi^{-1}_{\rho + \delta}(s, \ux)$ are initially in $I^-\big(\{r = r_0\}, \Mint\big)$ (by property 3 of Step 1) and are future inextendible in $\Mint$ (by property 4 of Step 1). Hence, each of these curves intersects $\{r = r_1\}$ exactly once. As in Step 2 of the proof of Theorem \ref{DiamFinite} it follows that there exists a smooth $f : B^d_{\rho + \delta}(0) \to (-\varepsilon, \varepsilon)$ such that $\omega(\ux) = \big(f(\ux), \ux\big)$ is a smooth parametrisation of $\psi_{\rho + \delta}(\{r = r_1\} \cap V_{\rho + \delta})$. 

Recall now the definition of the Cauchy hypersurface $\Sigma$ of $N$, i.e., $\Sigma := (t_0 - \lambda, t_0 + \lambda) \times \{r_1\} \times \Sd$. In the following we construct finitely many charts $\mathring{\varphi}_\ell : W_\ell \to \mathring{A}_1$ for $\Sigma$, where each chart is contained in $\Sigma \cap V_{\rho + \delta}$ and together they cover $\Sigma \cap V_\rho$. Moreover, the charts extend regularly to their closures, which will guarantee that the diameter of $\psi_{\rho + \delta}(\mathring{W}_\ell)$ with respect to the Euclidean metric on $D_{\rho + \delta}$ is finite. Restricting $\psi_{\rho + \delta} : V_{\rho + \delta} \to D_{\rho + \delta}$ to the flow-out of $\mathring{W}_\ell$ in $N$ under the vector field $\partial_0$ will give a collection of regular flow charts for $N$ which cover $N \cap V_\rho$. 

First we observe that $\overline{\Sigma}$ is a smooth manifold with boundary which is, moreover, compact. It follows that $\overline{\Sigma} \cap \psi^{-1}_{\rho + \delta} \big( \bigcup_{\ux \in \overline{B^d_{\rho}(0)}} I_{\ux} \times \{\ux\}\big) = \overline{\Sigma} \cap \overline{V_\rho}$ is a compact subset of $\overline{\Sigma}$ and $\overline{\Sigma} \cap V_{\rho + \delta}$ is an open neighbourhood thereof (open in $\overline{\Sigma}$). 
\begin{center}
\def\svgwidth{8cm}
\begingroup%
  \makeatletter%
  \providecommand\color[2][]{%
    \errmessage{(Inkscape) Color is used for the text in Inkscape, but the package 'color.sty' is not loaded}%
    \renewcommand\color[2][]{}%
  }%
  \providecommand\transparent[1]{%
    \errmessage{(Inkscape) Transparency is used (non-zero) for the text in Inkscape, but the package 'transparent.sty' is not loaded}%
    \renewcommand\transparent[1]{}%
  }%
  \providecommand\rotatebox[2]{#2}%
  \ifx\svgwidth\undefined%
    \setlength{\unitlength}{725.99462891bp}%
    \ifx\svgscale\undefined%
      \relax%
    \else%
      \setlength{\unitlength}{\unitlength * \real{\svgscale}}%
    \fi%
  \else%
    \setlength{\unitlength}{\svgwidth}%
  \fi%
  \global\let\svgwidth\undefined%
  \global\let\svgscale\undefined%
  \makeatother%
  \begin{picture}(1,0.64956455)%
    \put(0,0){\includegraphics[width=\unitlength]{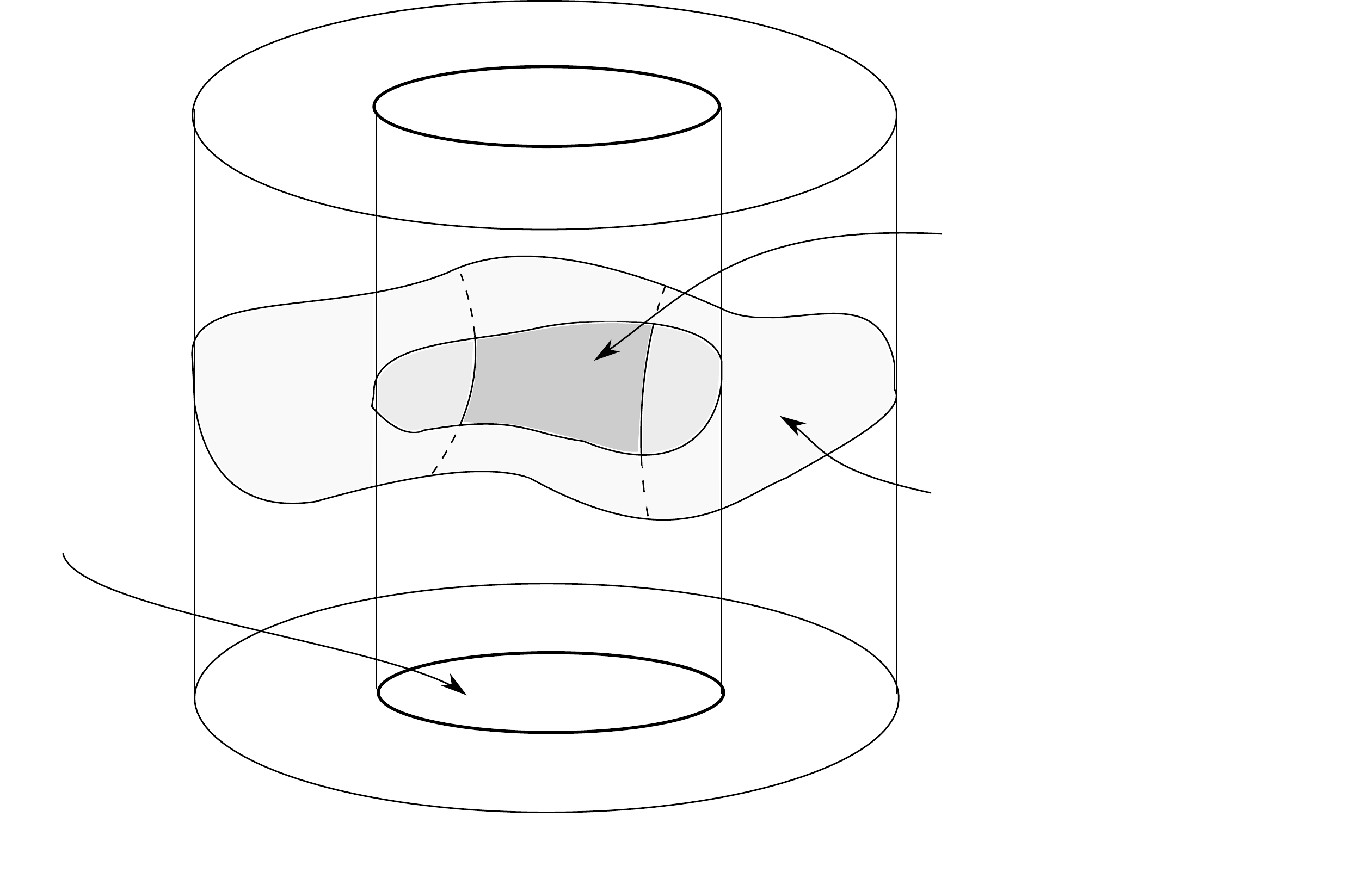}}%
    \put(0.69758976,0.46955536){\color[rgb]{0,0,0}\makebox(0,0)[lb]{\smash{$\psi_{\rho + \delta}(\overline{\Sigma} \cap \overline{V_\rho})$}}}%
    \put(0.68971876,0.28222615){\color[rgb]{0,0,0}\makebox(0,0)[lb]{\smash{$\psi_{\rho + \delta}\big(\{r = r_1\} \cap V_{\rho + \delta}\big)$}}}%
    \put(0.46460888,0.01933553){\color[rgb]{0,0,0}\makebox(0,0)[lb]{\smash{$B^d_{\rho + \delta}(0)$}}}%
    \put(-0.00292702,0.26491007){\color[rgb]{0,0,0}\makebox(0,0)[lb]{\smash{$B^d_{\rho}(0)$}}}%
  \end{picture}%
\endgroup%

\end{center}
Let $A_i$ stand either for the open ball $B_i^d(0)$ or for the upper hemisphere $B_i^d(0) \cap \{x_d \geq 0\}$, where $i = 1, 2$. Consider now a collection of charts
\begin{equation*}
\varphi_\alpha : W_{2, \alpha} \to A_2
\end{equation*}
for $\overline{\Sigma}$ such that
\begin{enumerate}[i)]
\item $W_{2, \alpha} \subseteq \overline{\Sigma} \cap V_{\rho + \delta}$
\item $\overline{\Sigma} \cap \overline{V_{\rho}} \subseteq \bigcup_{\alpha} W_{1,\alpha}$, where $W_{1, \alpha} := \varphi_{\alpha}^{-1}(A_1)$.
\end{enumerate}
By the compactness of $\overline{\Sigma} \cap \overline{V_\rho}$ we can extract a finite subfamily $\alpha_1, \ldots, \alpha_L$ such that $ \overline{\Sigma} \cap \overline{V_{\rho}}\subseteq \bigcup_{1 \leq \ell \leq L} W_{1, \alpha_\ell}$. We set $\varphi_{\alpha_\ell} =: \varphi_\ell$.

Let $\mathring{A}_1 \subseteq A_1$ now stand for either $B^d_1(0)$ or $B^d_1(0) \cap \{x_d >0\}$ and define $\mathring{W}_\ell := \varphi_\ell^{-1}(\mathring{A}_1) \subseteq \Sigma$. By the definition of a manifold with boundary we have 
\begin{equation}
\label{Cover}\Sigma \cap V_\rho \subseteq \bigcup_{1 \leq \ell \leq L} \mathring{W}_\ell\;.
\end{equation} 
By slight abuse of notation, we have thus shown the existence of finitely many charts
\begin{equation*}
\varphi_\ell : \mathring{W}_\ell \to \mathring{A}_1\;, \quad  1 \leq \ell \leq L \;,
\end{equation*}
for $\Sigma$ that cover $\Sigma \cap V_\rho$ and extend regularly to $\overline{\varphi}_\ell : \overline{\mathring{W}_\ell} \to \overline{\mathring{A}_1}$, where the first closure is in $\{r = r_1\}$ and the latter in $\R^d$.

Since $\mathring{W}_\ell$ is an open subset of $\Sigma \subseteq \{r = r_1\}$, $B_\ell := \omega^{-1} \big( \psi_{\rho + \delta}(\mathring{W}_\ell)\big)$ is an open subset of $B^d_{\rho + \delta}(0)$. For $\ux \in B_\ell$ denote with $J_{\ux}$ the maximal open subinterval of $I_{\ux}$ on which $\sigma_{\ux}$ is contained in $N$. Since $\mathring{W}_\ell \subseteq \Sigma \subseteq N$, we have $J_{\ux} \neq \emptyset$. We now set $D_\ell := \bigcup_{\ux \in B_\ell} J_{\ux} \times \{\ux\}$ and $U_\ell := \psi_{\rho + \delta}^{-1}(D_\ell) \subseteq N$ and claim that
\begin{equation}
\label{RegFlowChartApp}
\psi_{\rho + \delta}\big|_{U_\ell} : U_\ell \to D_\ell
\end{equation} 
is a regular flow chart for $N$.

To prove this claim we need to show that the coordinate diameter $\diam_e(B_\ell)$ is finite - it is obvious that $\psi_{\rho + \delta}\big|_{U_\ell}$ satisfies the other properties of Definition \ref{DefRegFlowChart}. Clearly, the coordinate diameter of $B_\ell$ is less or equal to the diameter of $\psi_{\rho + \delta}(\mathring{W}_\ell)$ with respect to the metric $\eta$ induced by the Euclidean metric on $(-\varepsilon, \varepsilon) \times B_{\rho + \delta}^d(0)$ - since one ignores the distance traversed in the $x_0$-direction. To estimate the latter diameter, consider $\psi_{\rho + \delta} \circ \varphi_\ell^{-1} : \mathring{A}_1 \to \psi_{\rho + \delta}(\mathring{W}_\ell)$ and recall that $\varphi_\ell$ extends regularly to $\overline{\varphi}_\ell : \overline{\mathring{W}}_\ell \to \overline{\mathring{A}_1}$. Hence, the metric components of the pull-back metric $(\psi_{\rho + \delta} \circ \varphi_\ell^{-1})^*\eta$ with respect to the standard coordinates on $\mathring{A}_1$ are uniformly bounded on $\mathring{A}_1$. This, together with the finite coordinate diameter of $\mathring{A}_1$, implies, as in the last paragraph of the proof of Theorem \ref{DiamFinite}, the finiteness of the diameter of $\psi_{\rho + \delta}(\mathring{W}_\ell)$ with respect to the metric $\eta$. Hence, \eqref{RegFlowChartApp} is indeed a regular flow chart for $N$.
\vspace*{2mm}

\textbf{Step 3.1:} We show that $ N \cap \{ \omega \in B^{\Sd}_\kappa(\omega_0)\} \subseteq \bigcup_{1 \leq \ell \leq L} U_\ell$.
\vspace*{2mm}

By property 2 of Step 1 and \eqref{BlockContained} we have
\begin{equation*}
N \cap \{\omega \in B^{\Sd}_\kappa (\omega_0)\} \subseteq [t_0 - \lambda, t_0 + \lambda] \times (0, r_0] \times B^{\Sd}_\kappa(\omega_0) \subseteq V_\rho \;.
\end{equation*}
Given a $p \in N \cap \{ \omega \in B^{\Sd}_\kappa(\omega_0)\}$ consider the timelike curve $\sigma_{\underline{x}_0}$ that goes through $p$. The curve $\sigma_{\underline{x}_0}$ has to intersect $\Sigma$ in $V_\rho$, and hence, by \eqref{Cover}, $\sigma_{\underline{x}_0}$ intersects $W_\ell$ for some $1 \leq \ell \leq L$. Thus, $p$ is contained in $U_\ell$.
\vspace*{2mm}

We now construct a cover of regular flow charts for $N$. There is a finite number of rotations $R_k \in SO(d)$, $1 \leq k \leq K$, such that $\Sd \subseteq \bigcup_{1 \leq k \leq K} R_k \big(B^{\Sd}_\kappa(\omega_0)\big)$. Let us denote the symmetry action of $R_k$ on $\Mint$ by the same symbol. It follows that 
\begin{equation*}
\big{\{}\psi_{\rho \circ \delta}\big|_{U_\ell} \circ R_k^{-1} : R_k(U_\ell) \to D_\ell \quad \big| \quad 1 \leq \ell \leq L, \; 1 \leq k \leq K\big{\}}
\end{equation*}
is a finite cover of regular flow charts for $N$. The contradiction to Step 2 is now obtained from Theorem \ref{DiamFinite}.
This concludes the proof of Theorem \ref{Inex}.
\end{proof}

\section*{Acknowledgements}
I would like to thank Mihalis Dafermos for bringing this problem to my attention, and also for helpful comments on a preliminary version of this paper. Moreover, I am grateful to  Jo\~ao Costa, Dejan Gajic, and David Garfinkle for stimulating remarks. Special thanks goes to Paul Klinger and Melanie Graf for pointing out a mistake in a first version of this manuscript.
Finally, I would like to thank Magdalene College, Cambridge, for their financial support.

\bibliographystyle{acm}
\bibliography{Bibly}

\end{document}